\documentclass[a4paper,onecolumn,11pt,accepted=2024-05-03]{quantumarticle}
\pdfoutput=1

\usepackage{amsfonts}
\usepackage{sidecap}
\usepackage{subcaption}
\usepackage{graphics}
\usepackage[braket, qm]{qcircuit}
\usepackage{enumerate}
\usepackage{hyperref}
\usepackage{ulem}
\usepackage{mathtools}
\usepackage{lipsum}
\mathtoolsset{showonlyrefs}
\usepackage{amsmath}
\usepackage{graphicx}
\usepackage{enumerate}
\usepackage{amssymb}
\usepackage{color}
\usepackage{amsthm}
\usepackage[braket]{qcircuit}
\usepackage{comment}
\usepackage{tikz}
\usepackage{float}
\usepackage{authblk}
\usepackage{multirow}
\usepackage{cite}
\usepackage[noend]{algpseudocode}
\usepackage[framemethod=tikz]{mdframed}
\usepackage{array, makecell, cellspace}
\usepackage[left=0.75in,right=0.75in,top=0.75in,bottom=0.75in]{geometry}

\newcommand{\appref}[1]{\hyperref[#1]{{Appendix~\ref*{#1}}}}
\newcommand{\be}{\begin{eqnarray} \begin{aligned}}
\newcommand{\ee}{\end{aligned} \end{eqnarray} }
\newcommand{\benn}{\begin{eqnarray*} \begin{aligned}}
\newcommand{\eenn}{\end{aligned} \end{eqnarray*}}
\newcommand*{\textfrac}[2]{{{#1}/{#2}}}

\newcommand*{\bbR}{\mathbb{R}}
\newcommand*{\bbC}{\mathbb{C}}

\newcommand*{\cB}{\mathcal{B}}

\newcommand*{\cE}{\mathcal{E}}

\newcommand*{\cG}{\mathcal{G}}
\newcommand*{\cH}{\mathcal{H}}

\newcommand*{\cM}{\mathcal{M}}

\newcommand*{\cD}{\mathcal{D}}

\newcommand*{\tr}{\mathop{\mathrm{tr}}\nolimits}
\newcommand*{\sbin}{\{0,1\}}

\newcommand{\bc}{\begin{center}}
\newcommand{\ec}{\end{center}}


\newtheorem{theorem}{Theorem}[section]
\newtheorem{lemma}[theorem]{Lemma}
\newtheorem{claim}[theorem]{Claim}
\newtheorem{definition}[theorem]{Definition}

\newtheorem{corollary}[theorem]{Corollary}


\usepackage{amsfonts}

\def\01{\{0,1\}}

\newcommand{\proj}[1]{|#1\rangle\langle#1|}


\newcommand*{\ExpE}{\mathbb{E}}

\newcommand*{\motimes}{\tilde{\otimes}}
\newcommand*{\Pf}{\mathsf{Pf}}

\newcommand*{\mult}{\mathsf{Mult}}
\newcommand*{\sbineven}{\{0,1\}_+}
\newcommand*{\cHeven}{\cH_+}
\newcommand*{\cGeven}{\cG^+}
\newcommand*{\stab}{\mathsf{STAB}}
\newcommand*{\parity}{\sigma}
\newcommand*{\cov}{\Gamma}
\newcommand*{\Gammaset}{E}

\DeclarePairedDelimiterX\lrangle[1]{\langle}{\rangle}{#1}

\newcommand*{\Adescribe}{\mathsf{describe}}
\newcommand*{\Aevolve}{\mathsf{evolve}}
\newcommand*{\Aoverlap}{\mathsf{overlap}}
\newcommand*{\Aoverlaptriple}{\mathsf{overlaptriple}}
\newcommand*{\Ameasure}{\mathsf{postmeasure}}
\newcommand*{\Apostmeasure}{\mathsf{postmeasure}}
\newcommand*{\desc}{\mathsf{Desc}}
\newcommand*{\Asupport}{\mathsf{findsupport}}
\newcommand*{\Arelate}{\mathsf{relatebasiselements}}
\newcommand*{\Aconvert}{\mathsf{convert}}
\newcommand*{\Ameasprob}{\mathsf{measureprob}}
\newcommand*{\Anorm}{\mathsf{norm}}
\newcommand*{\runtime}{\mathsf{time}}

\newcommand*{\Asampletheta}{\mathsf{samplestate}}
\newcommand*{\Anaivenorm}{\mathsf{naivenorm}}
\newcommand*{\Afastnorm}{\mathsf{fastnorm}}
\newcommand*{\Asamplestate}{\mathsf{samplestate}}
\newcommand*{\approxAmeasprob}{\mathsf{approx}\Ameasprob}
\newcommand*{\approxAevolve}{\mathsf{approx}\Aevolve}
\newcommand*{\approxApostmeasure}{\mathsf{approx}\Apostmeasure}

\begin{document}

\title{Classical simulation of non-Gaussian fermionic circuits}
\author[1,2]{Beatriz Dias}
\author[1,2]{Robert K{\"o}nig}
\affil[1]{Department of Mathematics, School of Computation, Information and Technology, Technical University of Munich, Garching, Germany}
\affil[2]{Munich Center for Quantum Science and Technology, Munich, Germany}
\maketitle

\begin{abstract}
We propose efficient algorithms for classically simulating fermionic linear optics operations applied to non-Gaussian initial states. By gadget constructions, this provides algorithms for fermionic linear optics with non-Gaussian operations. We argue that this  problem is analogous to that of simulating Clifford circuits with   non-stabilizer initial states: Algorithms for the latter problem immediately translate to the fermionic setting. Our construction is based on an extension of the covariance matrix formalism which permits to efficiently track relative phases in superpositions of Gaussian states. It yields simulation algorithms with polynomial complexity in the number of fermions, the desired accuracy, and certain quantities capturing the degree of non-Gaussianity of the initial state. We study one such quantity, the fermionic Gaussian extent, and show that it is multiplicative on tensor products 
when the so-called fermionic Gaussian fidelity is. We establish this property for the tensor product of two arbitrary pure states of four fermions with positive parity. 
\end{abstract}

\tableofcontents

\section{Introduction}

While universal polynomial-time quantum computation is believed to
exceed the capabilities of efficient classical algorithms, restricted  classes of quantum computations are amenable to efficient classical simulation. Identifying such models and corresponding simulation algorithms is a central goal in the study of quantum computing. On the one hand, a good characterization of the boundary between the computational power of classical and quantum computational models provides insight into potential quantum advantages. On the other hand, efficient classical simulation methods can be used to assess the merits and scalability of quantum information-processing proposals. For example, the resilience of certain quantum codes against restricted noise models has successfully been studied by means of classical simulation methods giving threshold estimates for large-scale systems, see e.g.,~\cite{DennisKitaevPreskill02,Katzgraberetal12,bravyiDisorderAssistedErrorCorrection2012,darmawanpoulin17,bravyienglbrechtetal,tuckettdarmawanetal19} for an incomplete list of relevant references.

\subsection{Efficiently simulable quantum computations\label{sec:efficientlysimulable}}

Most known examples of efficiently simulable quantum computations can be summarized by the following ingredients: 
\begin{enumerate}[(i)]
\item
A set~$\cD$ of states with the property that each element~$\Psi\in\cD$ has a succinct classical description~$d_\Psi$.
In the following, we will refer to~$\cD$ as a  dictionary. 
\item
A set~$\cE$ of operations (unitary or non-unitary evolutions), again with the property that each element~$E\in\cE$ has a succinct classical description~$d_E$. Following resource-theoretic conventions, we call~$\cE$ the set of free operations.
\item
A set~$\cM$ of measurements (quantum instruments) with an efficient classical descriptions~$d_M$ for each~$M\in\cM$, and the property that every post-measurement state (associated with different measurement outcomes) obtained by applying~$M\in\cM$ to a state~$\Psi\in\cD$ belongs to~$\cD$.
\end{enumerate}
A triple~$(\cD,\cE,\cM)$ gives rise to a (generally restricted) quantum computational model by composing these ingredients. 
A typical (non-adaptive) computation proceeds by preparing an initial state~$\Psi\in\cD$, applying a sequence~$\{E_t\}_{t=1}^T\subset\cE$ of operations, and performing measurements~$\{M_k\}_{k=1}^L\subset\cM$ in succession. Assuming for simplicity that~$\cE$ consists of a set of unitaries, and that for each~$k\in [L] = \{1, \dots, L\}$, the measurement~$M_k$ realizes a POVM~$M_k=\{M^{(k)}_m\}_{m\in \cM_k}$ with outcomes in a set~$\cM_k$, such a computation produces a sample~$m=(m_1,\ldots,m_L)\in \cM_1\times\cdots\times \cM_L$ from the distribution
\begin{align}
p(m_1,\ldots,m_L)&=\langle \Psi_T, M^{(L)}_{m_L}\cdots M^{(1)}_{m_1}\Psi_T\rangle\qquad\textrm{ where }\qquad \Psi_T= E_T \circ \cdots \circ E_1(\Psi)\ .\label{eq:distributionmeasurementresultscomputation}
\end{align}
More generally, one may consider circuits where operations are chosen adaptively depending on intermediate measurement results, assuming that the dependence is given by an efficiently computable function.

The task of classically simulating the computational model associated with~$(\cD,\cE,\cM)$ comes in two flavors. The input in both cases is the collection~$(d_\Psi,\{d_{E_t}\}_{t=1}^T,\{d_{M_k}\}_{k=1}^L)$ of descriptions of the initial state, the set of operations applied, and the measurements. The problem of weak simulation then consists in producing a sample~$m\in \cM_1\times\cdots\times\cM_L$ drawn according to the (ideal) output distribution~$p(m)$ of the circuit given by Eq.~\eqref{eq:distributionmeasurementresultscomputation}.  In contrast, the problem of strong simulation consists of computing the output probability~$p(m)$ for a given (potential) measurement outcome~$m$. 

Relaxing the requirements of weak and strong simulation, one may allow for an approximation error. For weak simulation, this is typically formalized by demanding that the (probabilistic) classical algorithm outputs a sample~$m$ drawn from a distribution~$\tilde{p}$ which is~$\delta$-close in~$L^1$-distance (for a chosen error parameter~$\delta>0$) to the ideal output distribution~$p$. Similarly, for strong simulation, the output~$\tilde{p}$ is required to be close to the value~$p(m)$ with a controlled (additive or multiplicative) error.

In cases where a computational model specified by~$(\cD,\cE,\cM)$ is amenable to efficient classical simulation, associated classical simulation algorithms are typically constructed by considering evolution and measurement separately. 
The basic problem then consists in constructing efficient classical algorithms with the following functionalities:
\begin{enumerate}[(a)]
\item\label{it:simulationalgorihma}
An algorithm~$\Aevolve$ which, given classical descriptions~$d_\Psi$ of a state~$\Psi\in\cD$ and~$d_E$ of an evolution operation~$E\in\cE$, computes a classical  description~$d_{E(\Psi)}$ of  the evolved state~$E(\Psi)$.
\item\label{it:simulationalgorithmb} Given a classical description~$d_\Psi$ of a state~$\Psi\in\cD$, a classical description~$d_M$ of a measurement~$M\in\cM$ (with associated set of measurement outcomes~$\cM_M$) and a measurement outcome~$m\in\cM_M$, 
    \begin{enumerate}[($\text{b}$1)]
        \item \label{it:simulationalgorithmb1} an algorithm~$\Ameasprob$ which outputs the probability~$p(m)$ (determined by Born's rule) of obtaining measurement outcome~$m$, and
        \item \label{it:simulationalgorithmb2} an algorithm~$\Ameasure$ which outputs a classical description of the post-measurement state
    \end{enumerate}
    when applying the measurement~$M$ to~$\Psi$.
\end{enumerate}
It is clear that a triple~$(\Aevolve,\Ameasprob,\Ameasure)$ of such algorithms immediately gives rise to an efficient algorithm for strong simulation of the model~$(\cD,\cE,\cM)$, with a runtime
\begin{align}
    T\cdot \mathsf{time}(\Aevolve)+L\cdot (\mathsf{time}(\Ameasprob)+\mathsf{time}(\Apostmeasure))
\end{align} which is linear in the number~$T$ of operations applied, and linear in the number~$L$ of measurements. Assuming that for any measurement~$M\in\cM$, the set of measurement outcomes~$\cM_M$ associated with~$M$ is of constant cardinality, the 
 triple~$(\Aevolve,\Ameasprob, \allowbreak\Ameasure)$ also gives rise to a randomized algorithm for weak simulation:  Such an algorithm is obtained by using~$\Ameasprob$ to compute the entire distribution~$\{p(m)\}_{m\in \cM_M}$ of measurement outcomes (when applying a measurement~$M$), and then drawing~$m\in\cM_M$ randomly according to this distribution.  The runtime of this probabilistic algorithm is
 \begin{align}
     T\cdot \mathsf{time}(\Aevolve)+L\cdot (\mathsf{time}(\Ameasprob)\cdot w+\mathsf{time}(\Apostmeasure))
 \end{align} where~$w=\max_{M\in\cM} |\cM_M|$ bounds the maximal cardinality of the set of measurement outcomes. 

\subsubsection{Clifford circuits / Stabilizer computations}
Perhaps the most well-known example of  a computational model~$(\cD,\cE,\cM)$ where efficient algorithms~$(\Aevolve,\Ameasprob,\Ameasure)$  can be provided is the Gottesman-Knill-theorem for stabilizer computations on~$n$~qubits. Here~$\cD$ is the set~$\stab_n$ of~$n$-qubit stabilizer states (whose elements can be specified by their stabilizer generators, i.e., corresponding stabilizer tableaux), $\cE$ is the set of Clifford unitaries (described by symplectic matrices), and~$\cM$ are measurements of single-qubit Pauli~$Z$ operators (described by an index~$j\in[n]$).
In this case,  there are efficient algorithms with runtimes given in Table~\ref{tab:algruntimestabilizer}.

\begin{table}[h]
\centering
\setlength\extrarowheight{-2pt}
\renewcommand\arraystretch{1.7}
\setlength\tabcolsep{15pt}
\vspace{3mm}
\begin{tabular}{c|c}
\hline
algorithm    & $\mathsf{time}$\\
\hline\hline
$\Aevolve$ & $O(n)$\\ \hline 
$\Ameasprob$ & $O(n)$ \\ \hline 
$\Apostmeasure$ & $O(n^2)$ \\
\hline
\end{tabular}
\caption{Runtimes of building blocks~$\Aevolve$, $\Ameasprob$, $\Ameasure$ for classical simulation of~$n$-qubit stabilizer circuits as given in~\cite{aaronsonImprovedSimulationStabilizer2004}.
Evolution corresponds to application of an~$n$-qubit Clifford unitary, and each measurement is that of a Pauli observable~$Z_j$ with~$j\in[n]$. \label{tab:algruntimestabilizer}}
\end{table}

\subsubsection{Fermionic linear optics / Fermionic Gaussian computations\label{sec:fermioniclinearoptics}}
A different class of efficiently simulable computations -- the one we are interested in here -- is that of fermionic linear optics on~$n$ fermions. We focus on pure-state computations: Here the dictionary~$\cD$ consists of the set~$\cG_n$ of pure fermionic Gaussian states. An element~$\Psi\in\cG_n$ in the dictionary can be described by its covariance matrix~$\cov_\Psi$, an antisymmetric~$2n\times 2n$~matrix with real entries. The set~$\cE=\cE_{\textrm{Gauss}}$ can be taken as the set of Gaussian unitary operations. Each such unitary~$U=U_R$ is fully determined by an  element~$R\in O(2n)$ of the orthogonal group on~$\mathbb{R}^{2n}$, where~$R\mapsto U_R$ defines a (projective) unitary representation of~$O(2n)$ on the space~$\cH^n$ of~$n$~fermions. The set~$\cM=\cM_{\textrm{number}}$ consists of all occupation number measurements. As in the case of stabilizer states, there are polynomial-time algorithms~$(\Aevolve,\Ameasprob,\Ameasure)$
 for classical simulation with runtimes summarized in Table~\ref{tab:fermionicruntimesimulation}. In particular, the covariance matrix~$\cov_{U_R\Psi}$ of a Gaussian state~$\Psi$ evolved under a Gaussian unitary~$U_R$ can be computed in time~$O(n^3)$ from~$\cov_\Psi$ and~$R$. 
The outcome probability of observing~$0$ (respectively~$1$) when performing an occupation number measurement can be computed in  time~$O(1)$, and the covariance matrix of the post-measurement state can be computed in time~$O(n^2)$ \cite{10.1145/380752.380785,PhysRevA.65.032325,knill2001fermionic} (see also~\cite{bravyiDisorderAssistedErrorCorrection2012}).

\begin{table}[h]
\centering
\setlength\extrarowheight{-2pt}
\renewcommand\arraystretch{1.7}
\setlength\tabcolsep{15pt}
\vspace{3mm}
\begin{tabular}{c|c}
\hline
algorithm    & $\mathsf{time}$\\
\hline\hline
$\Aevolve$ & $O(n^3)$\\ \hline
$\Ameasprob$ & $O(1)$\\ \hline
$\Apostmeasure$ & $O(n^2)$ \\
\hline
\end{tabular}
\caption{Runtimes of building blocks~$\Aevolve$, $\Ameasprob$, $\Ameasure$ for classical simulation of~$n$-fermion linear optics circuits
as proposed in~\cite{10.1145/380752.380785,PhysRevA.65.032325,knill2001fermionic}, see also~\cite{bravyiDisorderAssistedErrorCorrection2012}.
Evolution amounts to application of a fermionic Gaussian unitary. Measurement corresponds to measuring an observable~$a_j^\dagger a_j$ (occupation number) for~$j\in [n]$.
\label{tab:fermionicruntimesimulation}}
\end{table}

\subsection{Classical simulation algorithms and measures of magic} 

A natural way of extending the power of a quantum computational model specified by~$(\cD,\cE,\cM)$
consists in providing resources/capabilities  that do not belong to the specified sets. ``Magic states'' are a prime example: Here a state~$\Psi\not\in\cD$ not belonging to the dictionary is provided as an (initial) state  in the quantum computation, thereby providing additional capabilities to the computational model. For example, non-Clifford unitaries can be realized by certain stabilizer-computations (sometimes referred to as ``gadgets'') applied to so-called magic states~\cite{PhysRevA.71.022316}. Similarly, non-Gaussian initial states can be combined with fermionic linear optics operations  to realize non-Gaussian operations~\cite{PhysRevA.73.042313,PhysRevLett.123.080503}. While such a magic state  can even promote the computational model to universal quantum computation, this is generally not the case for all states~$\Psi$. It is thus a natural question to quantify the degree of ``magicness'' provided by a state~$\Psi\not\in\cD$. For the set~$\stab_n$ of~$n$-qubit stabilizer states, corresponding magic monotones considered in the literature include the robustness of magic~\cite{PhysRevLett.118.090501,Heinrich2019robustnessofmagic}, the exact and approximate stabilizer rank~
\cite{bravyiTradingClassicalQuantum2016a,bravyiImprovedClassicalSimulation2016,bravyiSimulationQuantumCircuits2019a}, the stabilizer extent~\cite{bravyiSimulationQuantumCircuits2019a,heimendahlStabilizerExtentNot2021}, the stabilizer nullity~\cite{Beverland_2020}, the generalized robustness~\cite{PRXQuantum.2.010345} and the magic entropy~\cite{bu2023stabilizer}. 

The maximum overlap of a given state~$\Psi$ with an element of the dictionary~$\cD$, i.e., the quantity
\begin{align}
    F_{\cD}(\Psi)&=\sup_{\varphi\in\cD} |\langle \varphi,\Psi\rangle|^2\ ,
    \label{eq:defmaxoverlap}
\end{align}
is arguably one of the most direct ways of quantifying how far~$\Psi$ is from a ``free'' state, i.e., a state belonging to~$\cD$. Motivated by the analogously defined notion of stabilizer fidelity in Ref.~\cite{bravyiSimulationQuantumCircuits2019a}, we call~$F_{\cD}(\Psi)$ the~$\cD$-fidelity of~$\Psi$ in the following. This quantity plays an important role in our arguments when considering multiplicativity properties. However, the~$\cD$-fidelity~$F_{\cD}(\Psi)$ is  not a good quantifier of hardness of classical simulation because simply replacing~$\Psi$ by an element of~$\cD$ typically leads to a significant approximation error.

From the point of view of classical simulation, a relevant magicness measure for a state~$\Psi\not\in\cD$ relates to the (added) complexity when trying to simulate a quantum computation with initial state~$\Psi$, built from a triple~$(\cD,\cE,\cM)$ allowing for efficient classical simulation.
One such measure, introduced in Ref.~\cite{bravyiTradingClassicalQuantum2016a} for the case of stabilizer computations, is the~$\cD$-rank~$\chi_\cD(\Psi)$ of~$\Psi$. (For~$\cD=\stab_n$, this is called the stabilizer rank of~$\Psi$.) It is defined as the minimum number of terms when decomposing~$\Psi$ as a linear combination of elements of~$\cD$, i.e.,
\begin{align}
\chi_\cD(\Psi)=\min\left\{\chi\in \mathbb{N}\ |\ \exists \{\varphi_j\}_{j=1}^\chi\subset\cD, \{\gamma_j\}_{j=1}^\chi\subset\mathbb{C}\textrm{ such that }\Psi=\sum_{j=1}^\chi \gamma_j\varphi_j\ \right\}\ .\label{eq:drankdefinition}
\end{align}
In the context of signal processing, the corresponding optimization problem is  referred to as a sparse approximation problem. The~$\cD$-rank~$\chi_\cD(\Psi)$ appears naturally when constructing and analyzing simulation algorithms, but it suffers from a number of shortcomings: On the one hand, the set of states~$\Psi\in\cH$
whose~$\cD$-rank is less than the dimension of the Hilbert space~$\cH$ is a set of zero Lebesgue measure~\cite[Proposition 4.1]{troppone}. On the other hand,
the quantity~$\chi_\cD(\Psi)$ relates to the classical simulation cost of exactly simulating dynamics involving the state~$\Psi$. In practice, some approximation error is typically acceptable, and corresponding simulations can be achieved with lower cost. In other words, the quantity~$\chi_\cD(\Psi)$ does not accurately reflect the cost of approximate simulation.

A more operationally relevant quantity is the~$\delta$-approximate~$\cD$-rank~$\chi_\cD^\delta(\Psi)$ of~$\Psi$ introduced in Ref.~\cite{bravyiImprovedClassicalSimulation2016}, again for stabilizer computations. For a fixed approximation error~$\delta>0$, this is given by the minimum~$\cD$-rank of any state~$\Psi'$ that is~$\delta$-close to~$\Psi$, i.e., 
\begin{align}
    \chi_\cD^\delta(\Psi) = \min \left\{ \chi_{\cD}(\Psi') \ | \ \Psi'\in\cH \text{ such that } \|\Psi - \Psi'\| \leq \delta  \right\} \ .\label{eq:approximaterankdefinition}
\end{align}
An exact classical simulation algorithm whose complexity scales with the exact~$\cD$-rank $\chi_\cD(\Psi)$ provides an approximate simulation at  a cost with an identical scaling  in the approximate (instead of exact)~$\cD$-rank~$\chi_\cD^\delta(\Psi)$ of~$\Psi$. Here approximate weak simulation means that instead of 
sampling from the ideal output distribution~$P$ of a circuit, the simulation samples from a distribution~$P'$ whose~$L^1$-distance from~$P$ is bounded by~$O(\delta)$. Similarly, in approximate (strong) simulation, output probabilities are approximately computed with a controlled approximation error.

A different quantity of interest is obtained by replacing the ill-behaved rank function (i.e., size of the support) in the definition of the~$\cD$-rank~$\chi_\cD(\Psi)$ by the~$L^1$-norm of the coefficients when representing~$\Psi$ as a linear combination. In the context of stabilizer states the corresponding quantity was introduced by Bravyi et al.~\cite{bravyiSimulationQuantumCircuits2019a} under the term stabilizer extent: For a state~$\Psi\in(\mathbb{C}^2)^{\otimes n}$ it is defined as
\begin{align}
    \xi_{\stab_n}(\Psi)&=\inf\left\{
    \|\gamma\|_1^2\ |\ \gamma:\stab_n\rightarrow\mathbb{C}\textrm{ such that }\Psi=\sum_{\varphi\in\stab_n}\gamma(\varphi)\varphi\right\}\ ,
\end{align}
where~$\|\gamma\|_1=\sum_{\varphi\in\stab_n}|\gamma(\varphi)|$ denotes the~$1$-norm of~$\gamma$. The corresponding convex optimization problem is known as the basis pursuit problem~\cite{chenetal} (when~$\stab_n$ is replaced by e.g., a finite dictionary~$\cD$). Sufficient conditions for when the basis pursuit problem yields a solution of the sparse approximation problem where investigated in a series of works culminating in Fuchs' condition~\cite{fuchs} (see also~\cite{troppRecoveryShortComplex2005}). More importantly for (approximate) simulation, feasible solutions of the basis pursuit problem provide upper bounds on the sparse approximation problem.  For the stabilizer rank, a sparsification result  (see~\cite[Theorem~1]{bravyiSimulationQuantumCircuits2019a})  gives an upper bound on the~$\delta$-approximate stabilizer rank~$\chi_{\stab_n}(\Psi)$ in terms of the stabilizer extent~$\xi_{\stab_n}(\Psi)$, for any~$\delta>0$ (see  Section~\ref{sec:sparsification}).

Building on earlier results~\cite{bravyiImprovedClassicalSimulation2016}, it was shown in Ref.~\cite{bravyiSimulationQuantumCircuits2019a}
that a stabilizer circuit on~$n$~qubits with~$L$~Clifford gates initialized in a state~$\Psi$ can be weakly simulated with error~$\delta$ in a time scaling as~$O(\textfrac{\xi_{\stab_n}(\Psi)}{\delta^2}\cdot \mathsf{poly}(n,L))$. The error~$\delta$ expresses the~$L^1$-norm distance of the distribution of simulated measurement outcomes from the output distribution of the actual quantum computation. Here we are not accounting for the time required to perform classical computations when adaptive quantum circuits are considered. 
In addition, Ref.~\cite{PRXQuantum.3.020361} provided a classical algorithm for strong simulation of a circuit~$U$ with~$L$~Clifford gates and~$t$~$T$-gates initialized in a stabilizer state~$\Psi$ with an additive error~$\delta$.
Their algorithm outputs an estimate of the probability~$|\lrangle{x, U \Psi}|^2$ of obtaining measurement outcome~$x\in \{0,1\}^n$ up to an additive error~$\delta$, with success probability greater than~$1-p_f$. It has  runtime~$O(\xi_{\stab_n}(\ket{T}^{\otimes t})  \log(1/p_f) \cdot \mathsf{poly}(n, L, \delta^{-1}))$,
scaling linearly with the stabilizer extent~$\xi_{\stab_n}(\ket{T}^{\otimes t})$ of~$t$ copies of the single-qubit magic state~$\ket{T}$ associated with a~$T$-gate~\cite{PhysRevA.71.022316}.

\subsection{The fermionic Gaussian extent}

In the following, we generalize the notion of the extent beyond stabilizer computations to any dictionary~$\cD$. We  refer to the corresponding quantity as the~$\cD$-extent~$\xi_\cD(\Psi)$ of~$\Psi$.  We assume throughout that we are interested in pure state quantum computations on a Hilbert space~$\cH$, and that the dictionary~$\cD$ is a subset of pure states on~$\cH$. Then the~$\cD$-extent~$\xi_{\cD}(\Psi)$ of~$\Psi\in\cH$ is defined as
\begin{align}
\xi_{\cD}(\Psi)&=\inf_{N\in \mathbb{N}}\inf_{\varphi_1,\ldots,\varphi_N\in \cD} \left\{\|\gamma\|^2_1\ \big|\ \gamma\in\mathbb{C}^N\textrm{ such that }\sum_{j=1}^N \gamma_j\varphi_j=\Psi\right\}\  .\label{eq:Loneminimization}
\end{align}
Here~$\|\gamma\|_1=\sum_{j=1}^N |\gamma_j|$ is the~$L^1$-norm of the vector~$\gamma$. That is, the~$\cD$-extent~$\xi_{\cD}(\Psi)$ is the~$L^1$-norm of the coefficients minimized over all decompositions of~$\Psi$ into a finite linear combination of elements of the dictionary~$\cD$. As  mentioned above, quantities of the form~\eqref{eq:Loneminimization} are well-studied in the context of signal-processing.

When the dictionary~$\cD$ is a finite subset of a Hilbert space~$\cH\cong\mathbb{C}^d$, the~$\cD$-extent~$\xi_\cD(\Psi)$ of a state~$\Psi\in\cH$ can be expressed as a second-order cone program~\cite{alizadehSecondorderConeProgramming2003} (see also e.g.,  \cite{boyd2004convex}), as in Appendix~A of Ref.~\cite{heimendahlStabilizerExtentNot2021}. Second-order cone programs can be solved in time polynomial in~$\max(d, |\cD|)$. We are typically interested in cases where
$\cD$ contains a basis of~$\cH$ (such that every state can indeed be represented as a linear combination of dictionary elements): Here this runtime is at least polynomial in~$d$. For example, in the case~$\cD=\stab_n$ of stabilizer states on~$n$~qubits, this leads to an exponential scaling in~$n^2$. Beyond algorithmic considerations related to the evaluation of the extent, the fact that~$\xi_\cD(\Psi)$ is given by a second-order cone program  provides useful analytical insight by convex programming duality. Indeed, this fact has previously been exploited both for showing multiplicativity of the stabilizer extent for states of small dimension~\cite{bravyiSimulationQuantumCircuits2019a}, as well as to show non-multiplicativity in high dimensions~\cite{heimendahlStabilizerExtentNot2021}. In Section~\ref{sec:multoverlapimpmultextent}, we also exploit this connection to relate the~$\cD$-fidelity~$F_{\cD}(\Psi)$ with the~$\cD$-extent~$\xi_\cD(\Psi)$.

In contrast, the~$\cD$-extent~$\xi_\cD(\Psi)$  for an infinite, i.e., continuously parameterized, dictionary~$\cD$ constitutes additional mathematical challenges as an optimization problem. This is the case of interest here as we are considering the dictionary~$\cD=\cG_n$ consisting of all~$n$-fermion Gaussian states in the following. We call the associated quantity~$\xi_{\cG_n}(\Psi)$ the (fermionic) Gaussian extent of an~$n$-fermion state~$\Psi$. Our focus here is on discussing the role of the quantity~$\xi_{\cG_n}(\Psi)$ in the context of classically simulating fermionic linear optics, and its behavior on tensor products. A detailed discussion of the algorithmic problem of computing~$\xi_{\cG_n}(\Psi)$ for an arbitrary state~$\Psi$, and finding a corresponding optimal decomposition of~$\Psi$ into a linear combination of Gaussian states is beyond the scope of this work. We refer to e.g.,~\cite{Chandrasekaranetal} where semidefinite relaxations are given for the related atomic norm minimization problem in cases where the atomic set (corresponding to the dictionary) has algebraic structure. Similar techniques may be applicable to the fermionic Gaussian extent.

\subsection{On the (sub)multiplicativity of the extent}

Consider a situation where an operation~$E\not\in \cE$
not belonging to the set~$\cE$ of efficiently simulable operations
is implemented by using a ``magic'' resource state~$\Psi\not\in\cD$. For example, if~$\cD=\stab_n$ is the set of stabilizer states, $\cE$ the set of Clifford unitaries and~$\cM$ the set of single-qubit Pauli-$Z$-measurements, then a non-Clifford gate (such as the~$T$-gate) can be realized by an (adaptive) Clifford circuit at the cost of consuming a non-Clifford state (such as the state~$\ket{T}$)~\cite{PhysRevA.71.022316}. Similar ``gadget constructions'' exist for fermionic linear optics, where non-Gaussian unitaries are realized by Gaussian unitaries and non-Gaussian states~\cite{PhysRevA.73.042313,PhysRevLett.123.080503}. A natural question arising  in this situation is to characterize the cost of simulating the application of two  independent magic gates~$E_1,E_2\not\in\cE$, each realized by efficiently simulable operations (belonging to~$\cE$) using magic states~$\Psi_1,\Psi_2$. For any reasonable simulation algorithm, we expect the required simulation effort to increase at most multiplicatively. Indeed, this feature is reflected in the submultiplicativity property
\begin{align}
    \xi_{\cD_{3}}(\Psi_1\otimes\Psi_2) & \leq \xi_{\cD_{1}}(\Psi_1)\xi_{\cD_{2}}(\Psi_2)\qquad\textrm{ for all }\qquad \Psi_1\in \cH_1\textrm{ and }\Psi_2\in\cH_2\ \label{eq:submultiplicativity}
\end{align}
of the~$\cD$-extent. In Eq.~\eqref{eq:submultiplicativity}, we are considering  Hilbert spaces~$\cH_1$, $\cH_2$ and their tensor product~$\cH_3=\cH_1\otimes\cH_2$, as well as dictionaries~$\cD_j\subset\cH_j$ for~$j\in[3]$. The submultiplicativity property~\eqref{eq:submultiplicativity}  follows immediately from the definition of the extent if the three dictionaries satisfy the inclusion property
\begin{align}
    \cD_1\otimes\cD_2\subset \cD_3\ .\label{eq:cdtwoz}
\end{align}
In particular, this is satisfied e.g., when the dictionary~$\cD_j=\stab_{n_j}\subset (\mathbb{C}^2)^{\otimes n_j}$ is the set of~$n_j$-qubit stabilizer states for~$j\in [3]$, with~$n_3=n_1+n_2$, or when considering the set of (even) Gaussian states (see below).

While the submultiplicativity property~\eqref{eq:submultiplicativity} is a trivial consequence of Eq.~\eqref{eq:cdtwoz}, the question of whether or not the stronger multiplicativity property
\begin{align}
\xi_{\cD_{3}}(\Psi_1\otimes\Psi_2) & = \xi_{\cD_{1}}(\Psi_1)\xi_{\cD_{2}}(\Psi_2)\qquad\textrm{ for all }\qquad \Psi_1\in \cH_1\textrm{ and }\Psi_2\in\cH_2\ \label{eq:multiplicativityclaim}
\end{align}
holds for the~$\cD$-extent is a much less trivial problem. If the multiplicativity property~\eqref{eq:multiplicativityclaim} is satisfied,  then computing the extent of a product state can be broken down into several smaller optimization problems: It suffices to compute the extent of each factor in the tensor product. Furthermore,  the classical simulation cost (with typical algorithms) 
when applying several non-free (``magic") gates constructed by gadgets increases at an exponential rate determined by the individual gates. In contrast, if the extent is not multiplicative (i.e., the equality in~\eqref{eq:multiplicativityclaim} is not satisfied for some states~$\Psi_j\in \cH_j$, $j\in[2]$), then such a simplification is not possible. More surprisingly, such a violation of multiplicativity implies that the classical simulation cost of applying certain non-free gates can be reduced by treating these jointly instead of individually. We note that in the slightly different context of so-called circuit knitting, similar savings in complexity have been shown to be significant~\cite{piveteausutter}.

Previous work established that the stabilizer extent is multiplicative even for multiple factors, that is, 
\begin{align}
\xi_{\stab_{n_1+\cdots +n_r}}(\Psi_1\otimes \cdots \otimes \Psi_r)&=\prod_{j=1}^r \xi_{\stab_{n_j}}(\Psi_j)\qquad\textrm{ for all }\qquad \Psi_j\in (\mathbb{C}^2)^{\otimes n_j}, j\in [r]
\end{align}
if the factors are single-qubit, $2$- or~$3$-qubit states, i.e., $n_j\in[3]$, see Ref.~\cite{bravyiSimulationQuantumCircuits2019a}. An example is the stabilizer extent of a tensor product of~$t$ copies of the magic (single-qubit) state~$\ket{T} = (\ket{0} + e^{i \pi/4} \ket{1})/\sqrt{2}$  associated with the~$T$-gate. Multiplicativity for qubit states gives~$\xi_{\stab_1}(\ket{T}^{\otimes t}) = \xi_{\stab_t}(\ket{T})^t$, where~$\xi_{\stab_1}(\ket{T})$ is known to be approximately~$1.17$ \cite{bravyiImprovedClassicalSimulation2016}. This translates to an overhead exponential in~$t$ in the runtime of stabilizer computations supplemented with~$t$~$T$-gates. 
Surprisingly, the stabilizer extent has been shown not to be multiplicative (for all pairs of  states) in high dimensions~\cite{heimendahlStabilizerExtentNot2021}.

For (pure) Gaussian states, the Gaussian extent of a~$1$-, $2$- and~$3$-mode pure fermionic state is trivially one because any~$1$-, $2$- and~$3$-mode pure fermionic state is Gaussian \cite{meloPowerNoisyFermionic2013a} and is thus an element of the dictionary. Hence the Gaussian extent is (trivially) multiplicative if the factors are~$1$-, $2$- or~$3$-mode fermionic  states. The simplest non-trivial case is that of~$n=4$ fermionic modes in each factor.

\subsection{Our contribution\label{sec:ourcontribution}}

Our results concern fermionic linear optics, the computational model introduced in  Section~\ref{sec:fermioniclinearoptics}
described by the triple~$(\cG_n,\cE_{\textrm{Gauss}},\cM_{\textrm{number}})$ of fermionic Gaussian pure states on~$n$ fermions, Gaussian unitary operations and number state measurements. We propose classical simulation algorithms for the case where the initial state~$\Psi\in\cH_n$ is an arbitrary pure state in the~$n$-fermion Hilbert space~$\cH_n$ (instead of belonging to the set~$\cG_n\subset\cH_n$ of Gaussian states). Our results are two-fold:

{\bf New simulation algorithms. } We give algorithms  realizing the functionalities described in Section~\ref{sec:efficientlysimulable} exactly for the triple~$(\cH_n,\cE_{\textrm{Gauss}},\cM_{\textrm{number}})$.
This immediately gives rise to efficient algorithms for  weak and strong simulation of circuits with non-Gaussian initial states. The corresponding runtimes of these building blocks, which we refer to as~$(\chi\Aevolve, \chi\Ameasprob,\allowbreak \chi\Apostmeasure)$,  depend on the Gaussian rank~$\chi=\chi_{\cG_n}(\Psi)$ of the initial state~$\Psi$ and are summarized in Table~\ref{tab:runtimeextendedalgo}.

\begin{table}[h]
\centering
\setlength\extrarowheight{-2pt}
\renewcommand\arraystretch{1.7}
\setlength\tabcolsep{15pt}
\vspace{3mm}
\begin{tabular}{c|c} 
\hline
algorithm    & $\mathsf{time}$\\
\hline\hline
$\chi\Aevolve$ & $O(\chi n^3)$\\ \hline 
$\chi\Ameasprob$ & $O(\chi^2 n^3)$ \\ \hline 
$\chi\Apostmeasure$ & $O(\chi n^3)$ \\
\hline
\end{tabular}
\caption{Runtimes of the building blocks~$\chi\Aevolve$, $\chi\Ameasprob$, $\chi\Ameasure$ for exact  simulation of~$n$-qubit fermionic linear optics circuits with a non-Gaussian initial state~$\Psi$ of Gaussian rank~$\chi=\chi_{\cG_n}(\Psi)$. 
Evolution corresponds to the application of a Gaussian unitary from a set~$\cE_{\textrm{Gauss}}$ of generators (specified below), and the set of measurements is given by occupation number measurements on each of the modes.\label{tab:runtimeextendedalgo}}
\end{table}

Key to the construction of these algorithms is a novel way of keeping track of relative phases  in superpositions of Gaussian states, see Section~\ref{sec:trackingphases}. We argue that our techniques can be applied more generally to adapt simulation procedures developed, e.g., for Clifford circuits, to the setting of fermionic linear optics. 
In order to illustrate this procedure, we apply it to the simulation algorithms  of~\cite{bravyiImprovedClassicalSimulation2016,bravyiSimulationQuantumCircuits2019a} for efficient (approximate) classical simulation algorithms. In this way, we obtain new approximate simulation algorithms with runtimes depending linearly on the fermionic Gaussian extent~$\xi=\xi_{\cG_n}(\Psi)$ of the initial state~$\Psi$, see Table~\ref{tab:approximateruntimes} for a summary of the corresponding runtimes.
\begin{table}[h]
\centering
\setlength\extrarowheight{-2pt}
\renewcommand\arraystretch{1.7}
\setlength\tabcolsep{15pt}
\vspace{3mm}
\begin{tabular}{c|c}
\hline
algorithm    & $\mathsf{time}$\\
\hline\hline
$\approxAevolve$ & $O(\xi \delta^{-2}n^3)$\\ \hline 
$\approxAmeasprob$ & $O(\xi \delta^{-2}\epsilon^{-2}p_f^{-1} n^{7/2})$ \\ \hline 
$\approxApostmeasure$ & $O(\xi \delta^{-2}n^3)$\\
\hline
\end{tabular}
\caption{Runtimes of building blocks~$\approxAevolve$, $\approxAmeasprob$, $\approxApostmeasure$ for approximate  simulation of~$n$-qubit fermionic linear optics circuits with a non-Gaussian initial state~$\Psi$ of Gaussian extent~$\xi=\xi_{\cG_n}(\Psi)$. 
The parameters~$(\epsilon,\delta,p_f)$ determine the quality of the approximation.
\label{tab:approximateruntimes}}
\end{table}
They depend inverse-polynomially on parameters~$(\delta,\epsilon,p_f)$ determining the accuracy of the simulation. 
The error~$\delta$ describes a  certain ``offset'', i.e., a systematic error: Instead of simulating the dynamics of the circuit with the (ideal) initial state~$\Psi$, the simulation algorithm emulates the dynamics when using a different starting state~$\tilde{\Psi}$  which is~$\delta$-close to~$\Psi$, i.e., which satisfies~$\|\Psi-\tilde{\Psi}\|\leq \delta$. The algorithm~$\approxAevolve$ computes evolution exactly on the state used in the simulation (i.e., it preserves the approximation error~$\delta$ relative to the ideal  initial state). In contrast, the procedure~$\approxAmeasprob$ 
can fail with probability~$p_f$, and both~$\approxAmeasprob$ and~$\approxApostmeasure$ introduce an additional error quantified by~$\epsilon$ (if~$\approxAmeasprob$ succeeds): Instead of returning the probability~$p(0)$ of obtaining zero occupation number when measuring the state,  the output  of~$\approxAmeasprob$ is a value~$\tilde{p}$ which satisfies~$|\tilde{p}-p(0)|\leq O(\epsilon)$.  Similarly, the output of~$\approxApostmeasure$
is a description of a  state that is~$O(\epsilon)$-close to the actual post-measurement state.
These parameters and runtimes are analogous to those obtained in~\cite{bravyiSimulationQuantumCircuits2019a} for simulating Clifford circuits with non-stabilizer initial states. In particular, they imply that a circuit with initial state~$\Psi$ involving~$T$~Gaussian unitaries and~$L$~occupation number measurements can be weakly simulated 
in time~$\tilde{O}(\epsilon^{-2}\xi)$, such that the sampled measurement outcomes are~$\epsilon$-close in~$L^1$-distance to the ideal (joint) output distribution of all measurements.  Here the notation~$\tilde{O}(\cdot)$ suppresses a factor polynomial in~$n,T,L$ and~$\log(\epsilon^{-1})$, see~\cite{bravyiImprovedClassicalSimulation2016} for details.  

{\bf On the multiplicativity of the Gaussian extent and the Gaussian fidelity.}
Motivated by the relevance of the Gaussian extent~$\xi_{\cG_n}(\Psi)$ for characterizing the complexity of classical simulation, we study multiplicativity properties of both the~$\cD$-fidelity~$F_{\cD}(\Psi)$ as we well as the~$\cD$-extent~$\xi_{\cD}(\Psi)$ for a general infinite, i.e., continuously parameterized, dictionary~$\cD$.  We show that multiplicativity of the~$\cD$-fidelity is closely related to that of the~$\cD$-extent: For a general family of (discrete or continuous) dictionaries~$\cD_{j}\subset \cH_{j}$ for~$j\in[3]$ with the property 
\begin{align}
    \cD_{1} \otimes \cD_{2} \subset \cD_{3} \ ,
\end{align}
multiplicativity of the~$\cD$-fidelity, i.e.,
\begin{align}
    F_{\mathcal{D}_3}\left(\Psi_1 \otimes \Psi_2\right)=F_{\mathcal{D}_1}\left(\Psi_1\right) F_{\mathcal{D}_2}\left(\Psi_2\right) \quad \text { for all } \quad \Psi_j \in \mathcal{H}_j \text { for } j\in[2]
\end{align}
implies multiplicativity of the~$\cD$-extent, i.e.
\begin{align}
    \xi_{\mathcal{D}_3}\left(\Psi_1 \otimes \Psi_2\right)=\xi_{\mathcal{D}_1}\left(\Psi_1\right) \xi_{\mathcal{D}_2}\left(\Psi_2\right) \quad \text { for all } \quad \Psi_j \in \mathcal{H}_j \text { for } j\in[2] \ .
\end{align}
We note that for stabilizer states~$\cD=\stab_n$, 
a similar 
route was followed  in Ref.~\cite{bravyiSimulationQuantumCircuits2019a} to show multiplicativity of the stabilizer extent~$\xi_{\stab_n}$ with respect to the tensor product of~$1$-, $2$- and~$3$-qubit states. Our main contribution is an extension of this  connection to the case of infinite dictionaries by the use of nets. We expect this connection to be helpful in proving or disproving multiplicativity of the extent more generally.

We subsequently make use of this connection to the Gaussian fidelity to show that the fermionic Gaussian extent is multiplicative for the tensor product of any two 4-mode fermionic states with positive parity, i.e., 
\begin{align}
    \xi_{\cG_8} (\Psi_1 \otimes \Psi_2) = 
    \xi_{\cG_4} (\Psi_1)
    \xi_{\cG_4} (\Psi_2) 
    \qquad \textrm{ for all }\qquad
    \Psi_1, \Psi_2 \in \cH^4_+
    \  .\label{eq:tensorproductfidelityadditivitym}
\end{align}
Here~$\cH^4_+$ denotes the set of~$4$-mode fermionic states with positive parity. The proof of~\eqref{eq:tensorproductfidelityadditivitym} relies on the Schmidt decomposition of fermionic Gaussian states and specific properties of~$4$-mode (positive parity) fermionic states.

The result~\eqref{eq:tensorproductfidelityadditivitym} gives the first non-trivial example of multiplicativity of the Gaussian extent. Multiplicativity for more general cases such as that of multiple~$4$-mode fermionic factors remains an open problem.

\subsection{Prior and related work} 
The starting point of our work is the fact that fermionic Gaussian operations acting on Gaussian states can be efficiently simulated classically as shown in pioneering work by Terhal and DiVincenzo~\cite{PhysRevA.65.032325} and Knill~\cite{knill2001fermionic}. The model and its simulability are closely related to that of matchgate computations  introduced by Valiant~\cite{10.1145/380752.380785}, where so-called matchgates  correspond to a certain  certain subset of Gaussian operations (see also~\cite{jozsamiyake08}).  In analogy to the fermionic context, the efficient simulability of bosonic Gaussian circuits was recognized at around the same time~\cite{bartlettsanders02b,Bartlettetal02}.  In an effort to identify commonalities between simulation algorithms for a variety of quantum computational models, Somma et al.~\cite{sommaetalbarnumknill}
provided a unifying  Lie algebraic treatment which gives a counterpart to the  Gottesman-Knill theorem for the simulability of Clifford circuits~\cite{gottesman1997stabilizer,aaronsonImprovedSimulationStabilizer2004}. 

While matchgate circuits, fermionic and bosonic linear optics, and Clifford circuits provide  rich classes of efficiently simulable models for the study of many-body dynamics associated with quantum circuits, it is desirable to extend the applicability of such simulation methods. There has been significant interest in this problem resulting in a range of approaches. We only briefly discuss these here to give an overview, without attempting to give an  exhaustive treatment.

A first prominent approach  is the use of quasi-probability distributions to describe states and corresponding dynamics. Such a description typically applies to a subset of density operators: For example, it has been shown in~\cite{VeitchFerriGrossEmerson,MariEisert12} in the context of continuous-variable systems that circuits applying bosonic Gaussian operations to initial states with a positive Wigner function (a strict superset of the set of bosonic Gaussian states) can be simulated efficiently.
Negativity of the Wigner function (both in the continuous-variable as well as the qubit context) thus serves as a resource for quantum computation,  also see e.g.,~\cite{PhysRevLett.115.070501,PhysRevA.101.012350}.  It is also closely related to contextuality, see~\cite{RaussendorfBrownOkayBermejo17},  and thus connects contextuality to the complexity of classical simulation~\cite{PhysRevLett.119.120505,Frembs_2018}. Not unlike the notorious sign problem in  quantum Monte-Carlo methods applied in many-body physics, the runtimes of corresponding (randomized)  simulation algorithms scale with certain measures of ``negativity'' of the initial state.

The concept of a convex-Gaussian state was introduced and studied in~\cite{meloPowerNoisyFermionic2013a} 
to extend the range of fermionic linear optics simulation methods. This is related to quasi-probability representations in the sense that initial states of a particular form are shown to lead to efficient simulability.  Here a density operator is called convex-Gaussian if it is a convex combination of fermionic Gaussian states.
The utility of this concept was illustrated in~\cite{meloPowerNoisyFermionic2013a} by showing a converse to the fault-tolerance threshold theorem: Sufficiently noisy quantum circuits can be simulated classically because the corresponding states turn out to be convex-Gaussian. A detailed characteriziation of convex-Gaussianity is necessary to translate this into explicit (numerical) threshold estimates. An infinite hierarchy of semidefinite programs was constructed in~\cite{meloPowerNoisyFermionic2013a} to detect convex-Gaussianity, and this was subsequently shown to be complete~\cite{VershyninaPhysRevA.90.062329}. This hierarchy also provides a way of
determining whether a state is close to being convex-Gaussian~\cite{VershyninaPhysRevA.90.062329}.

A second important class of approaches are rank-based methods. Here the non-free resource (either a state or an operation) is decomposed into a linear combination of free (i.e., efficiently simulable) resources. Our work follows this approach, which is detailed in  Section~\ref{sec:extendingtosuperpos} for pure states. For Clifford computations, this involves writing general states as superpositions of stabilizer states. The development of such simulators was pioneered by Bravyi, Smith, and Smolin~\cite{bravyiTradingClassicalQuantum2016a} with subsequent work dealing with approximate stabilizer decompositions~\cite{bravyiImprovedClassicalSimulation2016}. 

The concept of low-rank (approximate) decompositions of quantum states or operations into more easily treatable basic objects appears in a variety of forms: For example, 
the work~\cite{bravyiSimulationQuantumCircuits2019a} also discusses -- in addition to state vector decompositions -- decompositions of non-Clifford unitaries into sums of Clifford operations. In Ref.~\cite{PhysRevResearch.4.043100}, a similar approach was taken to approximately decompose non-Gaussian fermionic unitary operations into linear combinations of Gaussian channels. In all these cases, the main challenge is to identify optimal (or simply good) decompositions (e.g., in terms of rank or an extent-like quantity).

In more recent work, Mocherla, Lao and Browne~\cite{mocherla2023extending} study the problem of simulating matchgate circuits using universality-enabling gates. They provide a simulation algorithm and associated runtime estimates  for estimating expectation values of single-qubit observables in output states obtained by applying a matchgate circuit to a product state. This problem is closely related to the problem considerd in this work as matchgate circuits efficiently describe evolution under a quadratic fermionic Hamiltonian. The approach taken in~\cite{mocherla2023extending} is quite different from ours, however: The classical simulator keeps track of the density operator by tracking its coefficients in the Pauli (operator) basis, using the structure of corresponding linear maps associated with matchgates. The effect of a specific set of universality-enabling gates is then analyzed in detail. This extends the sparse simulation method for matchgate circuits to circuits augmented with such gates. The runtime estimates of~\cite{mocherla2023extending} apply to certain universality-providing gates. In contrast, our  constructions can in principle also be applied to (gadget-based) constructions of arbitrary gates and provide gate-specific information.  For gates close to the identity, for example, this may provide additional resource savings (in terms of e.g., the rate of growth for several uses of such a gate).

Near the completion of our work, we became aware of concurrent independent work on fermionic circuits with non-Gaussian operations, see the papers~\cite{reardon-smithImprovedClassicalSimulation2022,StrelchukCudby} which were posted simultaneously with our work to the arXiv. Reference~\cite{reardon-smithImprovedClassicalSimulation2022} proposes an alternative classical simulation algorithm
for such circuits whose basic building blocks have identical runtime as our algorithm. In particular, when  applying a generator~$\exp(\vartheta/2 c_j c_k)$ of the Gaussian unitary group,  the runtime of the corresponding state update routine is~$O(n^3)$ as in our work. Specifying a circuit in terms of a sequence of generators is a common assumption when discussing restricted gate sets. If instead,  a general Gaussian unitary is specified by an element~$R\in SO(2n)$, the runtimes are different:  By decomposing~$R$ into~$O(n^2)$~Givens rotations (generators),  our techniques yield a runtime of order~$O(n^5)$, whereas the algorithm of~\cite{reardon-smithImprovedClassicalSimulation2022} runs in time~$O(n^4)$ in this case. A more significant distinction is the underlying approach: While we rely on a minimal extension of the covariance matrix formalism, the approach of~\cite{reardon-smithImprovedClassicalSimulation2022} involves an explicit decomposition of fermionic Gaussian unitaries inspired by a certain canonical form of Clifford circuits~\cite{bravyiSimulationQuantumCircuits2019a}. This decomposition of a Gaussian unitary~$U$ is subsequently used to specify a Gaussian state~$U|0\rangle$ including its global phase.
            
The work~\cite{StrelchukCudby} analyses the set of fermionic Gaussian states to investigate properties of the (exact and approximate) Gaussian rank, the Gaussian fidelity and the Gaussian extent. A major result obtained in~\cite{StrelchukCudby} is
the multiplicativity of the extent for a tensor product state with any number of~$4$-mode factors. This mirrors what is known for two-qubit and three-qubit factors in the stabilizer context~\cite{bravyiSimulationQuantumCircuits2019a}. The proof in Ref.~\cite{StrelchukCudby} is inspired by Ref.~\cite{heimendahlStabilizerExtentNot2021}
and makes use of tools from convex optimization. It involves the use of a net of Gaussian states to establish an upper bound on the Gaussian fidelity. While we only establish multiplicativity for two factors (i.e., a special case of their result), our approach is different and may be of independent interest: We use a net to show that multiplicativity of the Gaussian fidelity implies that of the Gaussian extent, and employ the Schmidt decomposition~\cite{BOTERO200439} of fermionic Gaussian states.

\subsection{Outline}

The paper is structured as follows. In Section~\ref{sec:background}, we give some background on fermionic linear optics, reviewing fermionic Gaussian operations and states, inner product formulas for Gaussian states and tensor products of fermionic systems. In Sections~\ref{sec:trackingphases} and \ref{sec:classicalsimulationalgorithms} we describe classical algorithms for simulation of Gaussian and non-Gaussian fermionic circuits, respectively. Specifically, in Section~\ref{sec:trackingphases} we provide an algorithm
$\Aoverlap$ for computing the overlap of two Gaussian states, an algorithm~$\Aevolve$ to simulate unitary evolution of a Gaussian state,  and algorithms~$\Ameasprob$ and~$\Ameasure$ to simulate
measurements of occupation numbers. All these algorithms keep track of the phase of the state.   In Section~\ref{sec:classicalsimulationalgorithms} we extend the simulation described in Section~\ref{sec:trackingphases} to allow for non-Gaussian input states. The remainder of this work is focused on the multiplicativity of the fermionic Gaussian extent. In Section~\ref{sec:gaussianoverlap}, we prove the multiplicativity of the fermionic Gaussian fidelity for the tensor product of any two 4-mode fermionic states with positive parity. Section~\ref{sec:multoverlapimpmultextent} is devoted to showing that the multiplicativity of the~$\cD$-fidelity implies multiplicativity of the~$\cD$-extent for general (finite and infinite, i.e., continuously parameterized) dictionaries. Finally, the results from Sections~\ref{sec:gaussianoverlap} and \ref{sec:multoverlapimpmultextent} are used to prove the main result in Section~\ref{sec:multextent}, namely the multiplicativity of the fermionic Gaussian extent for the tensor product of any two 4-mode fermionic states with positive parity.

\section{Background \label{sec:background}}

In this section, we give some background on fermionic linear optics to fix notation. 

\subsection{Dirac and Majorana operators}

Throughout, we consider fermionic systems composed of~$n$ modes, with (Dirac) creation- and annihilation operators~$a_j^\dagger, a_j$, $j\in [n]$, satisfying the canonical anticommutation relations
\begin{align}
    \{ a_j, a_k^\dagger \} =  \delta_{j,k} I\qquad \textrm{ and }\qquad \{ a_j, a_k \} =  \{ a_j^\dagger, a_k^\dagger \} =  0\qquad\textrm{ for all }\qquad
    j,k\in [n]\ .
\end{align}
The fermionic vacuum state~$\ket{0_F}$ is the (up to a phase) unique unit vector satisfying $a_j\ket{0_F}=0$ for all~$j\in [n]$. For~$x=(x_1,\ldots,x_n)\in \{0,1\}^n$, we define the number state~$\ket{x}$ by 
\begin{align}
\label{eq:fermionicBasicStatesDef}
	\ket{x} = (a_1^\dagger)^{x_1} \cdots (a_n^\dagger)^{x_n} \ket{0_F} \ . 
\end{align}
The states~$\{\ket{x}\}_{x\in \{0,1\}^n}$ are an orthonormal basis of the underlying Hilbert space~$\cH^n\cong (\mathbb{C}^2)^{\otimes n}$. A state~$\ket{x}$ is a simultaneous eigenstate of the occupation number operators~$a_j^\dagger a_j$, $j\in [n]$, where~$x_j$ is the eigenvalue of~$a_j^\dagger a_j$.  For later reference, we note that 
\begin{align}
	a_j \ket{x} = (-1)^{\eta_j(x)} x_j  \ket{x \oplus e_j} 
    \qquad\text{and}\qquad
	a^\dagger_j \ket{x} = (-1)^{\eta_j(x)} \overline{x_j}  \ket{x \oplus e_j} \label{eq:ajOnState}\ ,
\end{align}
with the definition 
\begin{align}
\label{eq:etajdef}
    \eta_j(x) = \sum_{k=1}^{j-1} x_k\qquad\textrm{ for }\qquad j\in [n]\ ,
\end{align}
where we write~$\overline{0}=1$ and~$\overline{1}=0$, where~$e_j\in \{0,1\}^{n}$ is given by~$(e_j)_k=\delta_{j,k}$ for~$k\in [n]$, and where~$\oplus$ denotes bitwise addition modulo~$2$.

It will be convenient to work with Majorana operators~$\{c_j\}_{j=1}^{2n}$ defined by 
\begin{align}
    c_{2 j-1} = a_j + a_j^{\dagger} 
    \qquad\text{and}\qquad
    c_{2 j} = i\left(a_j - a_j^{\dagger} \right) \ .  \label{eq:MajoranaToDiracFermion}
\end{align}
Majorana operators are self-adjoint
and satisfy the  relations
\begin{align}
    \{ c_j, c_k \}  = 2 \delta_{jk} I\qquad\textrm{ and }\qquad c_j^2 = I\qquad\textrm{ for }\qquad j,k\in [2n]\ .
\end{align}
For~$\alpha\in \sbin^{2n}$, we call the self-adjoint operator
\begin{align}
    c(\alpha)&=i^{|\alpha|\cdot (|\alpha|-1)/2}
    c_1^{\alpha_1}\cdots c_{2n}^{\alpha_{2n}}\ \label{eq:MajoranaMonoialDef}
\end{align}
a Majorana monomial. Here~$|\alpha|=\sum_{j=1}^{2n}\alpha_j$ denotes the Hamming weight of~$\alpha\in\{0,1\}^{2n}$. The set~$\{c(\alpha)\}_{\alpha\in \{0,1\}^n}$
constitutes an orthonormal basis of the real vector space of self-adjoint operators on~$\cH^n$ equipped with the (normalized) Hilbert-Schmidt inner product~$\langle A,B\rangle=2^{-n} \tr(A^\dagger B)$.  The Majorana monomials satisfy 
\begin{align}
	c(y)c(x) = (-1)^{|x|\cdot|y|+x\cdot y} c(x)c(y)
    \qquad\text{with}\qquad x,y \in \{0,1\}^{2n}
 \ ,
\end{align}
where~$x\cdot y=\sum_{j=1}^n x_jy_j$. 
In particular, if either~$x$ or~$y$ have even Hamming weight then~$c(x)c(y) = (-1)^{x \cdot y} c(y)c(x)$.
In the following, we will denote the set of even- and odd-weight~$2n$-bit strings by~$\{0,1\}_+^{2n}$ and~$\{0,1\}_-^{2n}$, respectively.

The parity operator 
\begin{align}
    P= i^{n} c_1 c_2 \cdots c_{2n}\ 
\end{align}
is the Majorana monomial associated with~$\alpha=1^{2n}=(1,\ldots,1)$.  The parity operator commutes with every even-weight Majorana monomial and anti-commutes with every odd-weight Majorana monomial, i.e., we have 
\begin{align}
Pc(\alpha)&=(-1)^{|\alpha|} c(\alpha) P\qquad\textrm{ for every }\qquad \alpha\in \{0,1\}^{2n}\ . \label{eq:evenweightcommuteparity}
\end{align}
The Hilbert space~$\cH^n=\cH^n_+\oplus \cH^n_-$ associated with~$n$ fermions decomposes into a direct sum of positive- and negative-parity vectors
\begin{align}
\cH^n_+ &=\left\{\Psi\in \cH^n\ |\ P\Psi=\Psi\right\} \ \text{ and }\\
\cH^n_- &=\left\{\Psi\in \cH^n\ |\ P\Psi=-\Psi\right\} \ .
\end{align}
We call a state~$\Psi\in \cH^n$  of definite parity if either~$\Psi\in\cH^n_+$ or~$\Psi\in \cH^n_-$.
An element~$X\in \cB(\cH^n)$ belonging to the set~$\cB(\cH^n)$ of linear operators on~$\cH^n$ is called  even (odd) if it is a linear combination of Majorana monomials~$c(\alpha)$ with~$\alpha \in \{0,1\}^{2n}$ of even (odd) weight. An immediate consequence of these definitions is that a state~$\Psi\in\cH^n$ has definite parity if and only if~$\proj{\Psi}$ is even (see e.g.,~\cite[Proposition 1]{amosovfillippov} for a proof).

\subsection{Gaussian unitaries \label{sec:gaussianunitaries}} 

A unitary operator~$U$ on~$\cH^n$ is Gaussian if and only if it maps a Majorana operator~$c_j$ to a linear combination of Majorana operators, i.e.
\begin{align}
    \label{eq:UGaussianRRelation}
	U c_j U^\dagger = \sum_{k=1}^{2n} R_{jk} c_k \ ,
\end{align}
where~$R \in O(2n)$ is a real orthogonal matrix. Ignoring overall phases, the group of Gaussian unitary operators is generated by 
operators of the form
\begin{align}
     U_{j,k}(\vartheta)=\exp(\vartheta/2 c_j c_k)\qquad\textrm{ with }\qquad \vartheta\in [0,2\pi )\textrm{ and } j < k\in [2n]
\end{align} and by operators 
\begin{align}
    U_j=c_j\qquad\textrm{ with }\qquad j\in [2n]\ .
\end{align}
The operator~$U_{j,k}(\vartheta)$ implements the rotation 
\begin{equation}\label{eq:ujkthetaaction}
  \begin{alignedat}{2}
    U_{j,k}(\vartheta) c_j  U_{j,k}(\vartheta)^\dagger &= \cos (\vartheta) c_j-\sin (\vartheta) c_k&& \\
     U_{j,k}(\vartheta) c_k  U_{j,k}(\vartheta)^\dagger &= \sin (\vartheta) c_j+\cos (\vartheta) c_k&& \\
     U_{j,k}(\vartheta) c_\ell  U_{j,k}(\vartheta)^\dagger &= c_{\ell} 
     &&\qquad\textrm{for }\qquad\ell\not\in\{j,k\}\ .
  \end{alignedat}
\end{equation}
The operator~$U_j=c_j$ leaves~$c_j$ invariant and flips the sign of each~$c_k$ with~$k \neq j$, i.e., it implements the reflection 
\begin{align}
        U_j c_j U_j^{\dagger} &= c_j\\
        U_j c_k U_j^{\dagger} &= -c_k \qquad\textrm{ for }\qquad k\neq j\ .\label{eq:ujdef}
\end{align}
We note that by relation~\eqref{eq:evenweightcommuteparity}, every generator~$U_{j,k}(\vartheta)$ is parity-preserving, whereas every generator~$U_j$ reverses the parity, i.e., 
\begin{equation}\label{eq:UjkvarthetaP}
  \begin{alignedat}{2}
    U_{j,k}(\vartheta)PU_{j,k}(\vartheta)^\dagger &=P
    &&\qquad\text{ for all }\qquad k > j \in [n], \vartheta\in [0,2\pi) \ ,
    \\
    U_jPU_j^\dagger &=-P
    &&\qquad\text{ for all }\qquad j \in [n] \ .
  \end{alignedat}
\end{equation}

Every orthogonal matrix~$R$ gives rise to a  Gaussian unitary~$U_R$ satisfying~\eqref{eq:UGaussianRRelation}. The unitary~$U_R$ is unique up to a global phase, and~$R\mapsto U_R$ is called the metaplectic representation. We can fix the global phase of~$U_R$ uniquely, e.g.,  by the following procedure. Every element~$R\in O(2n)$ can be uniquely decomposed into a product
\begin{align}
    R&=S_0S_1\cdots S_L\label{eq:soL}
\end{align}
with~$L\leq \frac{2n(2n-1)}{2}$ and where
\begin{align}
S_0&=\begin{cases}
    I\qquad\textrm{ if } R\in SO(2n)\\
    R_1\qquad\textrm{ otherwise }\ ,   
\end{cases}
\end{align}
where for each~$r\in [L]$, the matrix~$S_r$ is of the form
\begin{align}
    S_r&=R_{j_r,k_r}(\vartheta_r)\qquad\textrm{ for some }\qquad j_r < k_r\in [2n], \vartheta_r\in [0,2\pi)\ .
\end{align}
Here~$R_1\in O(2n)$ is associated with the unitary~$U_1$ by Eq.~\eqref{eq:ujdef}, whereas~$R_{j,k}(\vartheta)\in SO(2n)$ is associated with~$U_{j,k}(\vartheta)$ according to Eq.~\eqref{eq:ujkthetaaction}. We note that~$R_{j,k}(\vartheta)\in SO(2n)$ is a so-called Givens rotation, introduced in Ref.~\cite{doi:10.1137/0106004}, and a decomposition
of the form~\eqref{eq:soL} can be found by a deterministic algorithm with runtime~$O(n^3)$ (see e.g., Section 5.2.3 in Ref.~\cite{10.5555/248979}).  In particular, application of this algorithm defines a function taking an arbitrary element~$R\in O(2n)$ to a unique product of the form~\eqref{eq:soL}.
Given the (unique) decomposition~\eqref{eq:soL} of~$R\in O(2n)$, we can then define~$U_R$ as the product
\begin{align}
    U_R&=U_1 U_{j_1,k_1}(\vartheta_1)\cdots U_{j_L,k_L}(\vartheta_L)\ .
\end{align}
Overall, this defines a function~$R\mapsto U_R$ from~$O(2n)$ to the set of Gaussian unitaries, fixing the phase ambiguity.  Throughout the remainder of this work, $U_R$ will denote the Gaussian unitary uniquely fixed by~$R$.

\subsection{Fermionic Gaussian (pure) states}
 
The set of pure fermionic Gaussian  states is the orbit of the vacuum state~$\ket{0_F}$ under the action of~$O(2n)$ defined by the metaplectic representation, i.e., fermionic Gaussian states are of the form~$U_R\ket{0_F}$ with~$U_R$ a fermionic Gaussian unitary. In more detail, every fermionic Gaussian state~$e^{i\theta}U_R\ket{0_F}$ is uniquely specified by a pair~$(\theta,R)$ with~$\theta\in [0,2\pi)$ and~$R\in O(2n)$. We will denote the set of all fermionic Gaussian states by
\begin{align}
    \cG_n=\left\{ e^{i\theta}U_R\ket{0_F}\ |\ \theta\in [0,2\pi), R\in O(2n)\right\}\ .
\end{align}
By Eq.~\eqref{eq:UjkvarthetaP} and because~$P\ket{0_F}=\ket{0_F}$,  every pure fermionic  Gaussian state~$\Psi$ has a fixed parity, i.e., it is an eigenvector of the parity operator~$P$. This defines a disjoint partition~$\cG_n=\cG_n^+\cup \cG_n^-$ of the set of fermionic Gaussian states into positive- and negative-parity states.

\subsection{Gaussianity condition}

In Ref.~\cite{bravyiLagrangianRepresentationFermionic2004} Bravyi established a necessary and sufficient condition to determine if a (possibly mixed) state~$\rho\in\cB(\cH^n)$ is Gaussian (see Theorem~1 therein). Here Gaussianity of a density operator~$\rho$ is defined by the condition that~$\rho$ has the form
\begin{align}
\rho &= K\exp\left(i\sum_{j,k=1}^{2n} A_{j,k}c_jc_k\right)\label{eq:rhogaussiandensityoperator}
\end{align}
for an antisymmetric matrix~$A=-A^T\in \mathsf{Mat}_{2n\times 2n}(\mathbb{R})$ and  a constant~$K>0$. We note that a pure state~$\Psi\in\cH^n$ is Gaussian if and only if the associated density operator~$\rho=\proj{\Psi}$ is Gaussian. (This follows from the fact that~$\proj{0_F}=\frac{1}{2^{n/2}}\exp\left(i\frac{\pi}{4}\sum_{j=1}^n c_{2j-1}c_{2j}\right)$. Indeed, if~$\rho=\proj{\Psi}$ is a rank-one projection of the form~\eqref{eq:rhogaussiandensityoperator}, then it follows from Williamson's normal form for antisymmetric matrices  that there is~$R\in O(2n)$ such that~$RAR^T=\bigoplus_{j=1}^n \left(\begin{smallmatrix}
    0 & 1\\
    -1 & 0
    \end{smallmatrix}\right)$. This implies that~$U_R\proj{\Psi}U_R^\dagger=\proj{0_F}$ and thus~$\ket{\Psi}=U^\dagger_R\ket{0_F}$. 
Conversely, for a Gaussian state~$\ket{\Psi}=e^{i\theta}U_R\ket{0_F}\in \cG_n$, we can use the expression for~$\proj{0_F}$ to argue that~$\proj{\Psi}$ is of the form~\eqref{eq:rhogaussiandensityoperator}, i.e., Gaussian.)
The characterization of Gaussian density operators established in~\cite{bravyiLagrangianRepresentationFermionic2004} is the following. 
\begin{theorem}[Theorem~1 in \cite{bravyiLagrangianRepresentationFermionic2004}]
Define~$\Lambda = \sum_{j=1}^{2n} c_j \otimes c_j$. An even state~$\rho\in\cB(\cH^n)$ is Gaussian if and only if 
$[\Lambda,\rho\otimes \rho]=0$.
\end{theorem}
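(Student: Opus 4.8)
The plan is to prove the two implications separately, using throughout that the operator $\Lambda$ is invariant under conjugation by $U_R\otimes U_R$ for every $R\in O(2n)$: since $U_Rc_jU_R^\dagger=\sum_k R_{jk}c_k$ we get $(U_R\otimes U_R)\Lambda(U_R\otimes U_R)^\dagger=\sum_{j,k,\ell}R_{jk}R_{j\ell}\,c_k\otimes c_\ell=\sum_k c_k\otimes c_k=\Lambda$, because $R^\top R=I$. As both Gaussianity of $\rho$ and the condition $[\Lambda,\rho\otimes\rho]=0$ are preserved under $\rho\mapsto U_R\rho U_R^\dagger$, we may always assume the covariance matrix of $\rho$ has been brought to Williamson normal form.

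\emph{Only if.} Suppose $\rho$ is Gaussian, so $\rho=K\exp(i\sum_{j,k}A_{jk}c_jc_k)$ with $A$ real antisymmetric. By Williamson's normal form there is $R\in O(2n)$ with $U_R\rho U_R^\dagger=\rho_0:=\bigotimes_{j=1}^n \tfrac12\bigl(I+i\lambda_j c_{2j-1}c_{2j}\bigr)$ for some $\lambda_j\in[-1,1]$ (the pure/rank-deficient case being the limit $\lambda_j=\pm1$), so it suffices to check $[\Lambda,\rho_0\otimes\rho_0]=0$. Writing $\Lambda=\sum_{j=1}^n\Lambda_j$ with $\Lambda_j=c_{2j-1}\otimes c_{2j-1}+c_{2j}\otimes c_{2j}$, each $\Lambda_j$ is an even operator supported on the $j$-th mode of both copies, hence commutes with all the other (even) tensor factors of $\rho_0\otimes\rho_0$; the claim thus reduces to a single-mode identity, which follows from the elementary relations $c_{2j-1}\sigma=\tfrac12(c_{2j-1}+i\lambda_jc_{2j})$, $\sigma c_{2j-1}=\tfrac12(c_{2j-1}-i\lambda_jc_{2j})$ for $\sigma=\tfrac12(I+i\lambda_jc_{2j-1}c_{2j})$ (and the analogues with $2j-1\leftrightarrow 2j$ and a sign flip), giving $\sum_{\ell\in\{2j-1,2j\}}(c_\ell\sigma)\otimes(c_\ell\sigma)=\sum_{\ell}(\sigma c_\ell)\otimes(\sigma c_\ell)=\tfrac{1-\lambda_j^2}{4}\Lambda_j$.

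\emph{If.} Expand $[\Lambda,\rho\otimes\rho]=\sum_j\bigl[(c_j\rho)\otimes(c_j\rho)-(\rho c_j)\otimes(\rho c_j)\bigr]$. Suppose first that $\rho$ is invertible; then $\{c_j\rho\}_j$ and $\{\rho c_j\}_j$ are each linearly independent, and equating these two symmetric tensors forces $\mathrm{span}\{c_j\rho\}=\mathrm{span}\{\rho c_j\}$ together with $\rho c_j\rho^{-1}=\sum_k O_{jk}c_k$ for a (complex) orthogonal matrix $O$ — that is, conjugation by $\rho$ preserves $\mathrm{span}\{c_1,\dots,c_{2n}\}$. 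Writing $\rho=e^{-H}$ with $H$ Hermitian and even, differentiating $e^{-tH}c_je^{tH}=\sum_k(O^{t})_{jk}c_k$ at $t=0$ gives $[H,c_j]\in\mathrm{span}\{c_k\}$ for all $j$. Expanding $H$ in the Majorana monomial basis and using that $[c(\gamma),c_j]$ is, up to a phase, a monomial of weight $|\gamma|-1$ precisely when $j\in\gamma$ (here $|\gamma|$ is even since $H$ is even), one concludes that $H$ can contain only monomials of weight $0$ and $2$, i.e. $H=h_0 I+\sum_{k<l}h_{kl}c_kc_l$; Hermiticity forces each $h_{kl}$ to be purely imaginary, so $\rho=e^{-h_0}\exp\bigl(i\sum_{j,k}A_{jk}c_jc_k\bigr)$ with $A$ real antisymmetric, hence $\rho$ is Gaussian.

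I expect the main obstacle to be the non-invertible case — which includes all pure Gaussian states, for there $\{c_j\rho\}_j$ need not be linearly independent and the ``divide by $\rho^{-1}$'' step is unavailable. The plan there is to first bring the covariance matrix to Williamson form, then show that every mode $j$ with $|\lambda_j|=1$ decouples as a pure factor $\tfrac12(I\pm ic_{2j-1}c_{2j})$, so that $\rho$ splits as a tensor product and one inducts on the number of remaining modes, on which $\rho$ is non-degenerate and the argument above applies; alternatively one argues directly with the Majorana correlators of $\rho$, showing by induction on the Hamming weight that $[\Lambda,\rho\otimes\rho]=0$ forces the higher correlators to equal the Pfaffian (Wick) values fixed by the two-point functions. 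Finally, note that the ``even'' hypothesis is essential: it is what makes $\rho\otimes\rho$ and the commutator $[\Lambda,\rho\otimes\rho]$ sign-unambiguous in the graded tensor product of two copies of $\cH^n$, and it is what restricts the monomial analysis of $H$ to even weights.
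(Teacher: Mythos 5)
First, a point of comparison: the paper does not prove this statement at all --- it is imported verbatim as Theorem~1 of Bravyi's 2004 paper --- so there is no in-paper argument to measure yours against; I can only assess the proposal on its own terms. Your ``only if'' direction is complete and correct: the invariance $(U_R\otimes U_R)\Lambda(U_R\otimes U_R)^\dagger=\Lambda$, the reduction to Williamson normal form, and the single-mode identity $\sum_{\ell}(c_\ell\sigma)\otimes(c_\ell\sigma)=\sum_{\ell}(\sigma c_\ell)\otimes(\sigma c_\ell)=\tfrac{1-\lambda_j^2}{4}\Lambda_j$ all check out, including in the rank-deficient limit $\lambda_j=\pm1$.

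The ``if'' direction has two genuine gaps. First, for invertible $\rho$, the passage from $\rho c_j\rho^{-1}=\sum_k O_{jk}c_k$ to $[H,c_j]\in\mathrm{span}\{c_k\}$ is not justified: knowing that $\mathrm{Ad}_\rho=e^{-\mathrm{ad}_H}$ maps $W=\mathrm{span}\{c_k\}$ onto itself at ``time'' $t=1$ does not by itself give the formula $e^{-tH}c_je^{tH}=\sum_k(O^t)_{jk}c_k$ for all $t$, which is what you differentiate --- a superoperator can return $W$ to itself while its generator does not preserve $W$. The step is repairable (for Hermitian $H$, $\mathrm{ad}_H$ is self-adjoint in the Hilbert--Schmidt inner product, so $e^{-\mathrm{ad}_H}$ is positive definite; an invariant subspace of a self-adjoint operator has invariant orthogonal complement, and the principal logarithm then preserves $W$), but that argument must actually be made. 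Second, and more seriously, the singular case is not a deferrable corner case: it contains every pure state, which is precisely the case this paper uses downstream (Lemma~\ref{lem:fermionicgaussianityconditionm} and Lemma~\ref{lem:gaussianitylambdapurestates} are applied to pure states). Your proposed reduction presumes that a singular $\rho$ with $[\Lambda,\rho\otimes\rho]=0$ has a Williamson eigenvalue $|\lambda_j|=1$ whose mode splits off as a tensor factor. But a priori a singular, non-Gaussian even state can have all Williamson eigenvalues strictly inside $(-1,1)$ --- e.g.\ $\tfrac12(\proj{00}+\proj{11})$ has identically vanishing covariance matrix yet rank $2$ --- so you must first derive from the hypothesis that this cannot happen (for instance, by showing that singularity forces a linear dependence $\sum_j\alpha_jc_j\rho=0$ and extracting a decoupled pure mode from it) before the induction on the number of modes can start. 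As written, the ``if'' direction is an outline with an acknowledged hole rather than a proof.
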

Based on this characterization~\cite{bravyiLagrangianRepresentationFermionic2004}, the following was shown in~\cite{meloPowerNoisyFermionic2013a}.
\begin{lemma}[Corollary~1 in~\cite{meloPowerNoisyFermionic2013a}]\label{lem:fermionicgaussianityconditionm}
    Let~$\rho\in\cB(\cH^n)$ be an even state. Then~$\rho$ is a Gaussian  pure state if and only if~$\Lambda \left(\rho\otimes\rho\right)=0$.
    \end{lemma}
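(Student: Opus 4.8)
The plan is to deduce Lemma~\ref{lem:fermionicgaussianityconditionm} from Bravyi's characterization (Theorem~1 in~\cite{bravyiLagrangianRepresentationFermionic2004}), which states that an even state~$\rho$ is Gaussian if and only if~$[\Lambda,\rho\otimes\rho]=0$. One direction is essentially immediate: if~$\Lambda(\rho\otimes\rho)=0$, then certainly~$[\Lambda,\rho\otimes\rho]=(\rho\otimes\rho)\Lambda-\Lambda(\rho\otimes\rho)$. I would argue that the first term also vanishes by taking adjoints: since~$\Lambda=\sum_j c_j\otimes c_j$ is self-adjoint and~$\rho\otimes\rho$ is self-adjoint, $(\Lambda(\rho\otimes\rho))^\dagger=(\rho\otimes\rho)\Lambda$, so~$\Lambda(\rho\otimes\rho)=0$ forces~$(\rho\otimes\rho)\Lambda=0$ as well, hence~$[\Lambda,\rho\otimes\rho]=0$. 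By Bravyi's theorem~$\rho$ is then a Gaussian state; it remains to see it is \emph{pure}. This should follow from a trace computation: since~$\Lambda(\rho\otimes\rho)=0$, also~$\tr\!\big(\Lambda(\rho\otimes\rho)\big)=\sum_{j=1}^{2n}\tr(c_j\rho)^2=0$, but this is automatic for any even state and doesn't give purity directly. Instead I expect one computes~$\tr_2\!\big(\Lambda(\rho\otimes\rho)\big)$ or~$\tr\!\big((\Lambda(\rho\otimes\rho))^\dagger\Lambda(\rho\otimes\rho)\big)$ and relates it to~$\tr(\rho^2)$ together with the covariance matrix~$M_{jk}=\tfrac{i}{2}\tr(\rho[c_j,c_k])$; for Gaussian~$\rho$ purity is equivalent to~$M^TM=I$, and the condition~$\Lambda(\rho\otimes\rho)=0$ (stronger than the commutator vanishing) should pin down exactly this.

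For the converse, suppose~$\rho$ is a Gaussian pure state. Then~$\rho=U_R\proj{0_F}U_R^\dagger$ for some~$R\in O(2n)$, and since~$\Lambda$ is invariant under the diagonal action~$c_j\mapsto\sum_k R_{jk}c_k$ (because~$R$ is orthogonal: $\sum_j(\sum_k R_{jk}c_k)\otimes(\sum_\ell R_{j\ell}c_\ell)=\sum_{k,\ell}(\sum_j R_{jk}R_{j\ell})c_k\otimes c_\ell=\sum_k c_k\otimes c_k=\Lambda$), we have~$(U_R\otimes U_R)\Lambda(U_R\otimes U_R)^\dagger=\Lambda$. Hence~$\Lambda(\rho\otimes\rho)=(U_R\otimes U_R)\,\Lambda(\proj{0_F}\otimes\proj{0_F})\,(U_R\otimes U_R)^\dagger$, so it suffices to check~$\Lambda(\proj{0_F}\otimes\proj{0_F})=0$. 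This is a direct calculation: write~$c_{2j-1}=a_j+a_j^\dagger$ and~$c_{2j}=i(a_j-a_j^\dagger)$, and observe~$c_{2j-1}\ket{0_F}=a_j^\dagger\ket{0_F}$ while~$c_{2j}\ket{0_F}=-ia_j^\dagger\ket{0_F}$, so that~$(c_{2j-1}\otimes c_{2j-1}+c_{2j}\otimes c_{2j})\ket{0_F}\otimes\ket{0_F}=(a_j^\dagger\otimes a_j^\dagger)\ket{0_F}\otimes\ket{0_F}-(a_j^\dagger\otimes a_j^\dagger)\ket{0_F}\otimes\ket{0_F}=0$, and summing over~$j$ gives~$\Lambda\,\ket{0_F}\otimes\ket{0_F}=0$, hence~$\Lambda(\proj{0_F}\otimes\proj{0_F})=0$.

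The main obstacle is the purity claim in the forward direction — extracting ``pure'' (not just ``Gaussian'') from the algebraic identity~$\Lambda(\rho\otimes\rho)=0$. My approach would be to note that~$\Lambda(\rho\otimes\rho)=0$ is genuinely stronger than~$[\Lambda,\rho\otimes\rho]=0$: a general (mixed) Gaussian state~$\rho=K\exp(i\sum A_{jk}c_jc_k)$ satisfies the commutator condition but not the stronger one unless the symplectic eigenvalues of its covariance matrix are all extremal. Concretely, after diagonalizing via Williamson/Bravyi normal form one can assume~$\rho=\bigotimes_{j=1}^n\rho_j$ with each~$\rho_j$ a single-mode Gaussian state with covariance parameter~$\lambda_j\in[-1,1]$ on the~$(2j-1,2j)$ block; then~$\Lambda(\rho\otimes\rho)=0$ reduces to a per-mode identity that is satisfied if and only if~$\lambda_j^2=1$ for all~$j$, i.e.\ each~$\rho_j$ is a pure single-mode Gaussian state, whence~$\rho$ is pure. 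I would carry this out by reducing to~$n=1$ via the tensor-product structure and the~$O(2n)$-covariance of~$\Lambda$, then doing the one-mode computation explicitly in the occupation-number basis.
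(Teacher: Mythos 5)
The paper does not actually prove this lemma: it is imported verbatim as Corollary~1 of~\cite{meloPowerNoisyFermionic2013a} (itself derived there from Bravyi's commutator criterion), so there is no in-paper argument to compare against. Judged on its own merits, your proposal is essentially correct and follows the same route as the cited literature. The converse direction is complete: $\Lambda$ is invariant under $U_R\otimes U_R$ because $R$ is orthogonal, and $\Lambda(\ket{0_F}\otimes\ket{0_F})=0$ by the cancellation $(c_{2j-1}\otimes c_{2j-1}+c_{2j}\otimes c_{2j})(\ket{0_F}\otimes\ket{0_F})=(1+(-i)^2)\,a_j^\dagger\ket{0_F}\otimes a_j^\dagger\ket{0_F}=0$. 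The adjoint trick in the forward direction ($\Lambda$ and $\rho\otimes\rho$ self-adjoint, so $\Lambda(\rho\otimes\rho)=0$ forces $(\rho\otimes\rho)\Lambda=0$) is also correct and cleanly reduces Gaussianity to Bravyi's theorem.

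The only part that is a plan rather than a proof is the purity step, but the plan does go through, and it is worth recording why. After conjugating into normal form $\rho=\bigotimes_j\rho_j$ with $\rho_j=\tfrac12(I+i\lambda_j c_{2j-1}c_{2j})=\mathrm{diag}(p_j,q_j)$, $p_j=(1+\lambda_j)/2$, $q_j=(1-\lambda_j)/2$, the contribution of mode $j$ to $\Lambda(\rho\otimes\rho)$ is (up to the nonvanishing diagonal Jordan--Wigner prefactors on modes $k<j$) proportional to $(X\rho_j)\otimes(X\rho_j)+(Y\rho_j)\otimes(Y\rho_j)$, whose $\ket{0}\!\bra{1}\otimes\ket{1}\!\bra{0}$ entry is $2p_jq_j$; so it vanishes iff $p_jq_j=0$, i.e.\ $\lambda_j^2=1$. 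You should also note explicitly that contributions from different $j$ cannot cancel one another, since the mode-$j$ term is off-diagonal precisely in mode $j$ (in both tensor factors) and diagonal elsewhere, so the supports are disjoint. With that observation the per-mode reduction is airtight and purity follows. Your instinct that $\tr\bigl(\Lambda(\rho\otimes\rho)\bigr)=0$ gives nothing is also right; an alternative that avoids the mode-by-mode bookkeeping is to evaluate $\tr\bigl((\Lambda(\rho\otimes\rho))^\dagger\Lambda(\rho\otimes\rho)\bigr)$, which in normal form reduces to $\sum_j\lambda_j^2/(1+\lambda_j^2)^2=n/4$ and again forces $\lambda_j^2=1$ for all $j$.
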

In the following, we only use the statement of Lemma~\ref{lem:fermionicgaussianityconditionm} applied to pure states in order to 
 distinguish between Gaussian and non-Gaussian pure states. We formulate this as follows:

\begin{lemma}
\label{lem:gaussianitylambdapurestates}
    Let~$\Psi\in\cH^n$ be a pure state with fixed parity. Then~$\Psi$ is Gaussian  if and only if 
    \begin{align}
        \Lambda (\ket{\Psi}\otimes\ket{\Psi})&=0\ .
    \end{align}
\end{lemma}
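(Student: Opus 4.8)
The plan is to derive Lemma~\ref{lem:gaussianitylambdapurestates} directly from Lemma~\ref{lem:fermionicgaussianityconditionm} by exploiting the fact that for a pure state with fixed parity the associated density operator~$\rho=\proj{\Psi}$ is even. First I would recall the remark made just above Section~\ref{sec:gaussianunitaries}: a pure state~$\Psi$ has definite parity if and only if~$\rho=\proj{\Psi}$ is an even operator (cited there from~\cite{amosovfillippov}). Hence the hypothesis of Lemma~\ref{lem:gaussianitylambdapurestates} puts us exactly in the situation covered by Lemma~\ref{lem:fermionicgaussianityconditionm} with~$\rho=\proj{\Psi}$: such a~$\rho$ is a Gaussian pure state if and only if~$\Lambda(\rho\otimes\rho)=0$. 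Since~$\rho=\proj{\Psi}$ is automatically a pure state, ``$\rho$ is a Gaussian pure state'' is equivalent to ``$\Psi$ is Gaussian'' (using, as noted in the excerpt, that a pure state is Gaussian iff its rank-one projection is Gaussian).

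The remaining content is therefore to show that the operator condition~$\Lambda(\proj{\Psi}\otimes\proj{\Psi})=0$ on~$\cH^n\otimes\cH^n$ is equivalent to the vector condition~$\Lambda(\ket{\Psi}\otimes\ket{\Psi})=0$ in~$\cH^n\otimes\cH^n$. One direction is immediate: if~$\Lambda\ket{\Psi}\otimes\ket{\Psi}=0$ then~$\Lambda(\proj{\Psi}\otimes\proj{\Psi})=\bigl(\Lambda\ket{\Psi}\otimes\ket{\Psi}\bigr)\bra{\Psi}\otimes\bra{\Psi}=0$. For the converse, note that~$\proj{\Psi}\otimes\proj{\Psi}=\ketbra{\Psi\otimes\Psi}{\Psi\otimes\Psi}$ is the orthogonal projection onto the one-dimensional subspace spanned by~$\ket{\Psi}\otimes\ket{\Psi}$, so~$\Lambda(\proj{\Psi}\otimes\proj{\Psi})=0$ forces~$\Lambda$ to annihilate that subspace, i.e.~$\Lambda(\ket{\Psi}\otimes\ket{\Psi})=0$. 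Putting these together with the previous paragraph yields the claimed equivalence.

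I do not anticipate a genuine obstacle here — the statement is essentially a specialization/repackaging of Lemma~\ref{lem:fermionicgaussianityconditionm}. The only point requiring a little care is making explicit the two small reductions invoked above: (i) that ``fixed parity of~$\Psi$'' is exactly what guarantees evenness of~$\rho=\proj{\Psi}$, so that Lemma~\ref{lem:fermionicgaussianityconditionm} applies; and (ii) the passage between the operator equation~$\Lambda(\rho\otimes\rho)=0$ and the vector equation~$\Lambda(\ket{\Psi}\otimes\ket{\Psi})=0$, which is just the observation that~$\rho\otimes\rho$ is a rank-one projector onto~$\myspan\{\ket{\Psi}\otimes\ket{\Psi}\}$. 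Both are one-line arguments, and no new fermionic computation (e.g.\ with the canonical anticommutation relations or Williamson's normal form) is needed beyond what is already recorded in the excerpt.
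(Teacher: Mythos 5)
Your proposal is correct and follows essentially the same route as the paper, which disposes of this lemma in one sentence by appealing to Lemma~\ref{lem:fermionicgaussianityconditionm} together with the fact that $\proj{\Psi}$ is even for a fixed-parity $\Psi$. You simply make explicit the two small steps the paper leaves implicit — evenness of $\rho=\proj{\Psi}$ and the equivalence of the operator equation $\Lambda(\rho\otimes\rho)=0$ with the vector equation $\Lambda(\ket{\Psi}\otimes\ket{\Psi})=0$ via the rank-one structure — and both of those one-line arguments are sound.
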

\begin{proof}
This follows immediately from the equivalence of the concepts of Gaussianity of pure states (vectors) and density operators because the density operator~$\proj{\Psi}$ is even for any fixed-parity state~$\Psi$.
\end{proof}
We note that there is an elegant representation-theoretic interpretation of this characterization of Gaussianity~\cite{lichtenstein}. It is derived from the fact that Gaussian states are the orbit of the vacuum state~$\ket{0_F}$ (a highest weight state) under the action of the metaplectic group, cf.~\cite[Section IV]{kusClassicalQuantumStates2009} and~\cite{oszmanieckus}. We use a version of this reformulation for~$4$~fermions, see Lemma~\ref{lem:thetamapGaussianity} below, that has first been obtained in~\cite{oszmaniecClassicalSimulationFermionic2014a}.

\subsection{Covariance matrices, Gaussian states and Wick's theorem
\label{sec:covmatrixwick}} 

The covariance matrix~$\cov=\cov(\Psi)\in\mathsf{Mat}_{2n\times 2n}(\mathbb{R})$ of a state~$\Psi\in\cH^n$ is the antisymmetric matrix with entries
\begin{align}
\cov_{j,k} (\Psi) &=
\begin{cases}
\lrangle{ \Psi,  i c_j c_k \Psi}  	&\quad\text{for } j \neq k \\
0 				&\quad\text{for } j = k 
\end{cases} 
\label{eq:covarianceMatrixDef}
\end{align}
with~$j,k \in [2n]$. It satisfies~$\cov \cov^T = I$ for any state~$\Psi\in\cH^n$.

The expectation value of a Hermitian operator with respect to a Gaussian state~$\Psi$ is fully determined by its covariance matrix~$\cov=\cov(\Psi)$. This is because  the expectation value of a Majorana monomial~$c(\alpha)$, $\alpha\in \{0,1\}^{2n}$, is given by Wick's theorem
\begin{align}
    \lrangle{\Psi, c(\alpha) \Psi} =
    \begin{cases}
         \Pf(\cov[\alpha]) \qquad & \textrm{ if } |\alpha| \textrm{ is even }\\
         0 & \textrm{ otherwise }
    \end{cases}\label{eq:wickstheorem} \ .
\end{align}
Here~$\cov[\alpha]\in \mathsf{Mat}_{|\alpha|\times|\alpha|}(\mathbb{R})$ is the submatrix of~$\cov$ which includes all rows and columns with index~$j\in [2n]$ such that~$\alpha_j=1$. Evaluating such expectation values, i.e., computing Pfaffians of~$|\alpha|\times |\alpha|$-matrices (with~$|\alpha|$ even), takes time~$O(|\alpha|^3)$.
(Here and below we use the number of elementary arithmetic operations to quantify the time complexity of algorithms.)

\subsection{Inner product formulas for Gaussian states \label{sec:innerPrpductFormulas}}

The modulus of the inner product of two Gaussian states~$\Phi_1,\Phi_2$ with identical parity~$\sigma\in \{\pm 1\}$ and covariance matrices~$\cov_1,\cov_2$ is given by the expression~\cite{Loewdin55}
\begin{align}
    |\langle \Phi_1,\Phi_2\rangle |^2 &=\sigma 2^{-n}\Pf(\cov_1+\cov_2)\ .\label{eq:innerproductoftwomatrices}
\end{align}

For three Gaussian states~$\Phi_0,\Phi_1,\Phi_2$,  the expression
$\left\langle\Phi_0 , \Phi_1\right\rangle \cdot\left\langle\Phi_1 , \Phi_2\right\rangle\cdot \langle \Phi_2,\Phi_0\rangle$ is invariant under a change of the global phase of any of the states, and can therefore be computed by the covariance matrix formalism. An explicit expression was derived by L\"owdin in~\cite{Loewdin55}. In Ref.~\cite{bravyiComplexityQuantumImpurity2017a} Bravyi and Gosset gave  the formula
\begin{align}
\label{eq:threestateexpr}
\left\langle\Phi_0 , \Phi_1\right\rangle \cdot\left\langle\Phi_1 , \Phi_2\right\rangle \cdot\left\langle\Phi_2 , \Phi_0\right\rangle=\sigma 4^{-n} i^n \operatorname{Pf}\left(\begin{array}{ccc}
i \cov_0 & -I & I \\
I & i \cov_1 & -I \\
-I & I & i \cov_2
\end{array}\right)
\end{align}
for three Gaussian states~$\{\Phi_j\}_{j=0}^2$ of identical parity~$\sigma\in \{\pm 1\}$, where~$\cov_j=\cov(\Phi_j)$ is the covariance matrix of~$\Phi_j$ for~$j=0,1,2$.  More generally, they 
obtained the formula
\begin{align}
\label{eq:threestatemajoranaexpr}
\left\langle\Phi_0 , \Phi_1\right\rangle \cdot\left\langle\Phi_1, c(\alpha) \Phi_2\right\rangle \cdot\left\langle\Phi_2 , \Phi_0\right\rangle=\sigma 4^{-n} 
i^{n+|\alpha|\cdot (|\alpha|-1)/2}
\operatorname{Pf}\left(R_\alpha\right) \ ,
\end{align}
for any even-weight Majorana monomial~$c(\alpha)$, $\alpha\in \sbin^{2n}_+$, where
\begin{align}
R_\alpha=\left(\begin{array}{cccc}
i \cov_0 & -I & I & 0\\
 I & i \cov_1 & -I & 0\\
-I & I & i D_\alpha \cov_2 D_\alpha & J_\alpha^T+i D_\alpha \cov_2 J_\alpha^T \\
0 & 0& -J_\alpha+i J_\alpha \cov_2 D_\alpha & i J_\alpha \cov_2 J_\alpha^T
\end{array}\right)\in \mathsf{Mat}_{(6n+|\alpha|)\times(6n+|\alpha|)}(\mathbb{R})\ .\label{eq:Ralphadefinitionm}
\end{align}
Here
$D_\alpha=\mathsf{diag}(\{1-\alpha_j\}_{j=1}^{2n})$ is a diagonal matrix, whereas~$J_\alpha\in\mathsf{Mat}_{|\alpha|\times 2n}(\mathbb{R})$
has entries
defined in terms of the indices~$\{i\in [2n]\ |\ \alpha_i\neq 0\}=\{i_1< \cdots <i_r\}$ associated with non-zero entries of~$\alpha$, that is,
\begin{align}
    (J_\alpha)_{j,k}=\begin{cases}
     \delta_{i_j,k}\qquad &\textrm{ if }j\leq r\\
     0 &\textrm{ otherwise }\ .
    \end{cases}
\end{align}
In other words, $\left(J_\alpha\right)_{j, k}=1$ if  and only if~$k$ is the position of the~$j$-th nonzero element of~$\alpha$.

As argued in~\cite{bravyiComplexityQuantumImpurity2017a}, 
expressions~\eqref{eq:threestateexpr} and~\eqref{eq:threestatemajoranaexpr} determine the
inner product~$\langle \Phi_1,\Phi_2\rangle$ and an expression of the form~$\langle \Phi_1,c(\alpha)\Phi_2\rangle$ entirely in terms of covariance matrices, assuming that the remaining two overlaps~$\langle \Phi_0,\Phi_1\rangle$, $\langle \Phi_2,\Phi_0\rangle$ with a Gaussian reference state~$\Phi_0$ are given and non-zero. In this situation, these quantities can be computed in time~$O(n^3)$.

\subsection{Gaussian evolution and occupation number measurement
\label{sec:backgroundFermionicGaussianClassicalSimulation}
}

Underlying the known classical simulation algorithms for fermionic linear optics is the fact that Gaussian unitaries and occupation number measurements preserve Gaussianity. Explicitly, this can be described as follows: Given a Gaussian state~$\Psi$ with covariance matrix~$\cov(\Psi)$
\begin{enumerate}[(i)]
\item the covariance matrix~$\cov(U_R\Psi)$ of~$\Psi$ evolved under the Gaussian unitary~$U_R$, $R\in O(2n)$,  is given by~$\cov(U_R \Psi) = R \cov(\Psi) R^T$. 
\item measurement of the observable~$a_j^\dagger a_j$ for~$j\in[n]$ gives the outcome~$s\in\{0,1\}$ with probability
\begin{align}
\label{eq:probNumberMeas}
    P_j(s) &= \|\Pi_j(s)\Psi\|^2 \\ 
    &= \langle \Psi, \Pi_j(s) \Psi \rangle \\
    &= \frac{1}{2} \left( 1  + (-1)^{s} \langle \Psi,  i c_{2j-1} c_{2j} \Psi \rangle \right) \\
    &= \frac{1}{2} (1+(-1)^{s}\Gamma_{2j-1,2j}) \ .
\end{align}
where the third identity follows from the definition
\begin{align}
    \Pi_j(s) = \frac{1}{2}(1+(-1)^{s}ic_{2j-1} c_{2j})
\end{align}
and the last identity follows from the definition of the covariance matrix in Eq.~\eqref{eq:covarianceMatrixDef}. Given that the  measurement outcome is~$s\in \{0,1\}$, the post-measurement state 
\begin{align}
\Psi(s)&=\left( \Pi_j\left(s\right) \ket{\Psi} \right)/ \sqrt{P_j\left(s\right)}
\end{align}
is Gaussian with covariance matrix~$\cov(\Psi(s))$ defined by the lower-diagonal entries (see e.g.,~\cite[Proof of Lemma~4]{bravyiDisorderAssistedErrorCorrection2012})
\begin{align}
\label{eq:MpostMeasurement}
\cov(\Psi(s))_{k, \ell} = 
\begin{cases}
(-1)^{s}      &\text{if } (k,\ell)=(2j,2j-1)\\
\cov_{k,\ell}+ \frac{(-1)^{s}}{2 P_j\left(s\right)} \left(\cov_{2 j-1, \ell} \cov_{2 j, k}  -\cov_{2 j-1, k} \cov_{2 j, \ell}\right) &\text{otherwise}
\end{cases}
\end{align}
for~$k>\ell$. 
\end{enumerate}
In particular, the corresponding resulting covariance matrices can be computed in time $O(n^3)$ \cite{10.1145/380752.380785,PhysRevA.65.032325,knill2001fermionic,bravyiLagrangianRepresentationFermionic2004} and~$O(n^2)$ \cite{bravyiDisorderAssistedErrorCorrection2012} for unitary evolution and measurement, respectively.

\subsection{The tensor product of two fermionic states\label{sec:tensorproductevenf}}

Two density operators~$\rho_j\in\cB(\cH^{n_j})$, $j\in[2]$, have a joint extension if and only if there is an element~$\rho\in\cB(\cH^{n_1+n_2})$ such that
\begin{align}
\label{eq:jointextension}
\tr(c(\alpha_1\|\alpha_2)\rho)&=\tr(c(\alpha_1)\rho_1)\tr(c(\alpha_2)\rho_2)\qquad\textrm{ for all }\qquad \alpha_j\in \{0,1\}^{2n_j}, j\in[2]\ .
\end{align} 
Here~$\alpha_1\|\alpha_2\in \{0,1\}^{2(n_1+n_2)}$ denotes the concatenation of~$\alpha_1$ and~$\alpha_2$. 
Theorem~1 in~\cite{arakimoriya} implies that if either~$\rho_1$ or~$\rho_2$ is even, then a unique joint extension~$\rho$ of~$(\rho_1,\rho_2)$ exists. Furthermore, this extension is even if and only if  both~$\rho_1$ and~$\rho_2$ are even.  Theorem~2 in~\cite{arakimoriya} shows that if~$\rho$ is even and~$\rho_1$ and~$\rho_2$ are pure, then~$\rho$ is also pure.

In particular, this means that for states~$\Psi_1,\Psi_2$ of definite parity, there is a unique joint pure extension~$\rho=\proj{\Psi}$ of~$\left(\proj{\Psi_1},\proj{\Psi_2}\right)$. Since~$\rho$ is pure, this also means that~$\Psi$ is of definite parity.
We will write~$\Psi=\Psi_1\motimes \Psi_2$ for this state in the following, and we call $\tilde{\otimes}$ the fermionic tensor product. Note that~$\Psi$ is only defined up to a global phase. 
It follows immediately from these definitions that 
\begin{align}
\left|\langle x,y| \Psi_1\motimes\Psi_2\rangle\right| &=  \left|\langle x,\Psi_1\rangle \cdot \langle y,\Psi_2\rangle\right|\qquad\textrm{ for all }\qquad x\in \{0,1\}^{n_1}\textrm{ and }y\in \{0,1\}^{n_2}\ .\label{eq:tensorproductproperty}
\end{align}
\begin{proof}
Let~$x\in \{0,1\}^{n_1}$ and~$y\in \{0,1\}^{n_2}$ be arbitrary.
By definition, we have
\begin{align}
\proj{x,y} &=\left(
\prod_{j=1}^{n_1} \frac{1}{2}(I+(-1)^{x_j} i c_{2j-1}c_{2j})
\right)\left(
\prod_{k=1}^{n_2} \frac{1}{2}(I+(-1)^{y_k} i c_{2n_1+2k-1}c_{2n_1+2k})
\right)\\
&=\left(\sum_{\alpha \in \{0,1\}^{2n_1}_+} \gamma_{x}(\alpha) c(\alpha || 0^{2n_2})\right)
\left(\sum_{\beta \in \{0,1\}^{2n_2}_+} \gamma_{y}(\beta) c( 0^{2n_1} ||\beta)\right)\ .
\end{align}
for certain coefficients~$\{\gamma_x(\alpha)\}_{\alpha\in \{0,1\}^{2n_1}_+}$ and~$\{\gamma_y(\beta)\}_{\beta\in \{0,1\}^{2n_2}_+}$.
Since~$\rho=\proj{\Psi_1\motimes\Psi_2}$ is an extension of~$(\proj{\Psi_1},\proj{\Psi_2})$ and~$c(\alpha || 0^{2n_2}) c( 0^{2n_1} ||\beta ) =i^{-|\alpha|\cdot |\beta|}c(\alpha||\beta)= c(\alpha || \beta)$
for (even-weight)~$\alpha\in \{0,1\}^{2n_1}_+$ and~$\beta\in \{0,1\}^{2n_2}_+$, it follows that
\begin{align}
\left|\langle x,y| \Psi_1\motimes\Psi_2\rangle\right|^2 &=
\sum_{\alpha\in \{0,1\}^{2n_1}_+}\gamma_x(\alpha)\sum_{\beta\in \{0,1\}^{2n_2}_+}\gamma_y(\beta)
\tr(c(\alpha\|\beta)\rho)\\
&=\langle \Psi_1,\left(\sum_{\alpha\in \{0,1\}^{2n_1}_+}\gamma_x(\alpha) c(\alpha)\right)\Psi_1\rangle\cdot \langle\Psi_2,\left(\sum_{\beta\in \{0,1\}^{2n_2}_+}\gamma_y(\beta) c(\beta)\right)\Psi_2\rangle\\
&=|\langle x,\Psi_1\rangle|^2\cdot |\langle y,\Psi_2\rangle|^2\ .
\end{align}
\end{proof}
Refining Eq.~\eqref{eq:tensorproductproperty}, (relative) phase information between these matrix elements can be obtained from 
the explicit construction of~$\Psi_1\motimes \Psi_2$ given in~\cite[Section~3.1]{arakimoriya} (see also~\cite[Proof of Theorem~1]{amosovfillippov}): Consider 
the isometry
\begin{align}
    \begin{matrix}
     U: & \cH^{n_1+n_2} & \rightarrow &\cH^{n_1}\otimes \cH^{n_2}\\
      & \ket{x_1,\ldots,x_{n_1+n_2}} & \mapsto &\ket{x_1,\ldots,x_m}\otimes \ket{x_{n_1+1},\ldots,x_{n_1+n_2}}
    \end{matrix}
\end{align}
whose action is given by
\begin{align}
Ua_j U^\dagger &=\begin{cases}
    a_j\otimes I \qquad & \textrm{ if }j \in [n_1]\\
    P_1 \otimes a_{j-n_1} \qquad &\textrm{ if } j\in \{n_1+1,\ldots,n_1+n_2\}\ .
\end{cases}
\end{align}
where~$P_1$, the parity operator acting on~$\cH_{n_1}$, introduces phases.
Then 
\begin{align}
    \Psi_1\motimes\Psi_2&=U^\dagger (\Psi_1\otimes \Psi_2)\ .
\end{align}
It is straightforward from this definition to check that~$\Psi_1\motimes\Psi_2$ is the extension of~$(\Psi_1,\Psi_2)$ and 
\begin{align}
\langle x,y|\Psi_1\motimes\Psi_2\rangle &=(-1)^{|x|}
\cdot \langle x,\Psi_1\rangle\cdot \langle y,\Psi_2\rangle\qquad\textrm{ for all }\qquad x\in \{0,1\}^{n_1}\textrm{ and }y\in \{0,1\}^{n_2}\ .\label{eq:phaseintensorproduct}
\end{align}

We note that the fermionic tensor product preserves Gaussianity in the following sense.
\begin{lemma}
Let~$\Psi_j\in\cG_{n_j}^+$ be positive-parity fermionic Gaussian states for~$j\in[2]$. 
Then $\Psi_1 \tilde{\otimes} \Psi_2\in\cG_{n_1+n_2}^+$, i.e., it is an even fermionic Gaussian state.
\end{lemma}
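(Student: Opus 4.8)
The plan is to reduce Gaussianity of $\Psi:=\Psi_1\motimes\Psi_2$ to the $\Lambda$-characterization of Lemma~\ref{lem:gaussianitylambdapurestates}. First I would record that $\Psi$ is a well-defined pure state: since $\Psi_1,\Psi_2$ have definite parity, $\proj{\Psi_1}$ and $\proj{\Psi_2}$ are even, so by the Araki--Moriya results recalled in Section~\ref{sec:tensorproductevenf} the joint extension $\proj{\Psi}$ exists, is even, and is pure. By Eq.~\eqref{eq:phaseintensorproduct} the only nonzero amplitudes $\langle x,y|\Psi\rangle$ occur for $|x|$ and $|y|$ both even (positive parity of $\Psi_1,\Psi_2$), hence $|x|+|y|$ is even, so $\Psi$ has positive parity --- in particular the definite parity required to invoke Lemma~\ref{lem:gaussianitylambdapurestates}. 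It then remains to check $\Lambda(\ket{\Psi}\otimes\ket{\Psi})=0$ with $\Lambda=\sum_{j=1}^{2(n_1+n_2)}c_j\otimes c_j$.

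For this step I would transport everything through the isometry $U\colon\cH^{n_1+n_2}\to\cH^{n_1}\otimes\cH^{n_2}$ of Section~\ref{sec:tensorproductevenf}, so that $\ket{\Psi}\otimes\ket{\Psi}$ becomes $(\ket{\Psi_1}\otimes\ket{\Psi_2})\otimes(\ket{\Psi_1}\otimes\ket{\Psi_2})$ on $\cH^{n_1}_A\otimes\cH^{n_2}_B\otimes\cH^{n_1}_{A'}\otimes\cH^{n_2}_{B'}$. The intertwining relations $U c_j U^\dagger=c_j\otimes I$ for $j\in[2n_1]$ and $U c_{2n_1+k}U^\dagger=P_1\otimes c_k$ for $k\in[2n_2]$ then give
\begin{align}
(U\otimes U)\,\Lambda\,(U^\dagger\otimes U^\dagger)=\Lambda_{n_1}^{AA'}+P_1^{A}P_1^{A'}\,\Lambda_{n_2}^{BB'}\ ,
\end{align}
where $\Lambda_{n_1}^{AA'}=\sum_{j=1}^{2n_1}c_j\otimes c_j$ acts on the $\cH^{n_1}$-factors $A,A'$, $\Lambda_{n_2}^{BB'}=\sum_{k=1}^{2n_2}c_k\otimes c_k$ acts on the $\cH^{n_2}$-factors $B,B'$, and $P_1^{A},P_1^{A'}$ are the parity operators on the former. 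The first summand annihilates the state because $\Lambda_{n_1}^{AA'}(\ket{\Psi_1}\otimes\ket{\Psi_1})=0$ by Lemma~\ref{lem:gaussianitylambdapurestates} applied to the Gaussian state $\Psi_1$; the second summand annihilates it because $P_1^{A}\ket{\Psi_1}=\ket{\Psi_1}=P_1^{A'}\ket{\Psi_1}$ (positive parity) and then $\Lambda_{n_2}^{BB'}(\ket{\Psi_2}\otimes\ket{\Psi_2})=0$. Hence $\Lambda(\ket{\Psi}\otimes\ket{\Psi})=0$, and since $\Psi$ has definite parity, Lemma~\ref{lem:gaussianitylambdapurestates} shows $\Psi$ is Gaussian; being even, $\Psi\in\cG_{n_1+n_2}^+$.

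The main care needed is the bookkeeping in the displayed identity: one must track that the parity operators $P_1$ from $U c_{2n_1+k}U^\dagger$ land on the $\cH^{n_1}$-slots $A,A'$ of the two-copy picture, and that their product $P_1^{A}P_1^{A'}$ acts trivially on $\ket{\Psi_1}\otimes\ket{\Psi_1}$. (Since $P_1^{A}P_1^{A'}$ acts as $(\pm1)^2=1$ on two copies of any definite-parity state, the same argument in fact shows that $\Psi_1\motimes\Psi_2$ is Gaussian whenever $\Psi_1,\Psi_2$ are Gaussian of definite parity, the parity of the product being the product of the parities.) If one prefers to avoid Lemma~\ref{lem:gaussianitylambdapurestates}, a more hands-on alternative is to write $\Psi_j=e^{i\theta_j}U_{R_j}\ket{0_F}$ with $R_j\in SO(2n_j)$ (using positive parity), to note that $U$ sends the combined vacuum to $\ket{0_F^{(n_1)}}\otimes\ket{0_F^{(n_2)}}$, and to verify that $U$ intertwines $U_{R_1\oplus R_2}$ with $U_{R_1}\otimes U_{R_2}$ up to a global phase; then $\Psi_1\motimes\Psi_2$ agrees up to a phase with $U_{R_1\oplus R_2}\ket{0_F^{(n_1+n_2)}}\in\cG_{n_1+n_2}$.
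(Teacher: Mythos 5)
Your proof is correct, but it takes a genuinely different route from the paper's. The paper argues via Wick's theorem: by the defining property~\eqref{eq:jointextension} of the joint extension, the expectation values $\langle \Psi, c(\alpha_1\|\alpha_2)\Psi\rangle$ factorize into $\Pf(\cov_1[\alpha_1])\Pf(\cov_2[\alpha_2])$ (vanishing unless both weights are even), and since the Pfaffian is multiplicative over block-diagonal matrices, $\Psi_1\motimes\Psi_2$ satisfies Wick's theorem~\eqref{eq:wickstheorem} with covariance matrix $\cov_1\oplus\cov_2$ and is therefore Gaussian. That argument has the benefit of explicitly identifying the covariance matrix of the product state, which is used elsewhere, though it implicitly relies on the converse of Wick's theorem (that a pure state whose moments are Pfaffians of an antisymmetric $\cov$ with $\cov\cov^T=I$ must be Gaussian). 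Your argument instead reduces everything to the $\Lambda$-criterion of Lemma~\ref{lem:gaussianitylambdapurestates}, conjugating $\Lambda$ through the Araki--Moriya isometry $U$; the computation $(U\otimes U)\Lambda(U^\dagger\otimes U^\dagger)=\Lambda_{n_1}^{AA'}+P_1^AP_1^{A'}\Lambda_{n_2}^{BB'}$ is correct, the parity bookkeeping with $P_1^AP_1^{A'}$ is handled properly, and your preliminary check that $\Psi_1\motimes\Psi_2$ has definite (positive) parity is exactly what is needed to invoke the lemma in the backward direction. What your route buys is a fully self-contained verification given Lemma~\ref{lem:gaussianitylambdapurestates}, together with the observation (which you note) that the claim extends verbatim to definite-parity Gaussian factors of either sign, with the parity of the product being the product of the parities; what it gives up is the explicit formula $\cov(\Psi_1\motimes\Psi_2)=\cov_1\oplus\cov_2$. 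Your sketched alternative via the metaplectic representation is also viable and closest in spirit to the definition of $\cG_n$ as an orbit, though it requires checking the intertwining of $U_{R_1\oplus R_2}$ with $U_{R_1}\otimes U_{R_2}$, which is more work than either of the other two arguments.
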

\begin{proof}
By definition  of an extension (see Eq.~\eqref{eq:jointextension})
and Wick's theorem (Eq.~\eqref{eq:wickstheorem}), the tensor product~$\Psi=\Psi_1\motimes \Psi_2$ 
satisfies
\begin{align}
\langle \Psi,c(\alpha_1\|\alpha_2)\Psi\rangle&=
\begin{cases}
 \Pf(\cov_1[\alpha_1])
 \Pf(\cov_1[\alpha_1]) \qquad & \textrm{ if both } |\alpha_1|\textrm{ and }|\alpha_2| \textrm{ are even }\\
         0 & \textrm{ otherwise }\ 
   \end{cases}\label{eq:blockexpectation}
\end{align}
for all~$\alpha_j\in \{0,1\}^{2n_j}$, where~$\cov_j$ is the covariance matrix of~$\Psi_j$ for~$j\in[2]$. Because the Pfaffian satisfies
\begin{align}
    \Pf(A_1\oplus A_2)&=\Pf\begin{pmatrix}
A_1 & 0\\
0 & A_2
    \end{pmatrix}=\Pf(A_1)\Pf(A_2)
\end{align}
for block-matrices, it follows from~\eqref{eq:blockexpectation} that the tensor product~$\Psi$ satisfies Wick's theorem~\eqref{eq:wickstheorem} with covariance matrix~$\cov_1\oplus \cov_2$. In particular, it is Gaussian.
\end{proof}

\section{Tracking relative phases in fermionic linear optics\label{sec:trackingphases}}

The covariance matrix~$\cov(\Psi)$ of a fermionic Gaussian state~$\ket{\Psi}=e^{i\theta}U_R\ket{0_F}\in\cG_n$ fully determines expectation values by Wick's theorem, which is why Gaussian states and dynamics are amenable to efficient classical simulation (see Section~\ref{sec:backgroundFermionicGaussianClassicalSimulation}). However, the description of~$\Psi$ in terms of the covariance matrix~$\cov(\Psi)$ does not capture information on the global phase~$e^{i\theta}$ of the state.
For processes involving  non-Gaussian states expressed as superpositions of Gaussian states, such phase information needs to be available for computing norms, expectation values and overlaps. 
 
Here we provide an extended (classical) description of fermionic Gaussian states  that incorporates phase information. A central feature of our construction is the fact that this description permits to compute overlaps (including relative phases, i.e., not only the absolute value) of Gaussian states in an efficient manner.
 
Our construction is motivated by and relies on expression~\eqref{eq:threestateexpr}, which 
relates the inner product~$\langle \Psi_1,\Psi_2\rangle$ of two Gaussian states~$\Psi_1,\Psi_2\in\cG_n$
to their inner products~$\langle \Psi_0,\Psi_1\rangle$, $\langle \Psi_0,\Psi_2\rangle$ with a Gaussian reference state~$\Psi_0\in \cG_n$ and their covariance matrices~$\cov_0,\cov_1,\cov_2$. This suggests fixing a reference state~$\Psi_0\in \cG_n$ and using the pair~$(\cov(\Psi),\langle \Psi_0,\Psi\rangle)$ as a classical description of any state~$\ket{\Psi}\in \cG_n$ relevant in the computation. The problem with this idea is that~$\langle \Psi_0,\Psi\rangle$ may vanish, preventing the application of~\eqref{eq:threestateexpr}. To avoid this problem,
we use  -- instead of a single state~$\Psi_0$ -- 
a  (potentially) different reference state for each state~$\Psi$. Specifically, we will show that using number states, i.e., states of the form~\eqref{eq:fermionicBasicStatesDef},
suffices. This motivates the following definition.
\begin{definition}
Let~$\ket{\Psi}=e^{i\theta}U_R\ket{0_F}\in\cG_n$ be a Gaussian state. We call a tuple 
\begin{align}
d=(\cov,x,r)\in  \mathsf{Mat}_{2n\times 2n}(\mathbb{R})\times\{0,1\}^{n}\times \mathbb{C}
\end{align}
a (valid)  description of~$\ket{\Psi}$ if the following three conditions hold:
\begin{enumerate}[(i)]
\item\label{it:descriptionfirst}
$\cov=\cov(\Psi)$ is the covariance matrix of~$\ket{\Psi}$.
\item\label{it:descriptionsecond}
$x\in \{0,1\}^n$ is such that~$\langle x,\Psi\rangle\neq 0$, where~$\ket{x}$ is the number state defined by Eq.~\eqref{eq:fermionicBasicStatesDef}.

In our algorithms we will in fact ensure that~$|\langle x,\Psi\rangle|^2\geq 2^{-n}$, i.e., only a subset of valid descriptions is used. A description~$d=(\Gamma,x,r)$ with this property, i.e., satisfying~$|r|^2\geq 2^{-n}$, will be called a good description. The restriction to good descriptions is necessary to make our algorithms work with finite-precision arithmetic.
\item\label{it:descriptionthird}
$r=\langle x,\Psi\rangle$.
\end{enumerate}
\end{definition}
More explicitly, necessary and sufficient conditions for~$d=(\cov(\Psi),x,r)$ to constitute a description of~$\Psi$ are that
\begin{align}
r\neq 0\qquad\textrm{ and }\qquad |r|^4=2^{-2n}\mathsf{Det}(\cov(\ket{x})+\cov(\Psi))
\end{align}
because of formula~\eqref{eq:innerproductoftwomatrices} for the overlap of two states and because~$\mathsf{Det}(\cdot) = \Pf^2(\cdot)$. 
Here
\begin{align}
\cov(\ket{x})=\bigoplus_{j=1}^n \begin{pmatrix}
0 & (-1)^{x_j}\\
-(-1)^{x_j} & 0 
\end{pmatrix}\label{eq:covariancematrixMx}
\end{align}
is the covariance matrix of~$\ket{x}$.  Since a Gaussian state~$\Psi$ generally has non-zero overlap with more than a single occupation number state~$\ket{x}$, there are several distinct valid descriptions of~$\Psi$. We will denote  the set of descriptions of~$\ket{\Psi}\in \cG_n$ by~$\mathsf{Desc}(\Psi)$. 

We note that  a description~$d=(\cov,x,r)$ uniquely fixes a  Gaussian state~$\Psi\in\cG_n$: The covariance matrix~$\cov$ determines all expectation values, and the global phase of~$\Psi$ is fixed by the overlap~$\langle x,\Psi\rangle$, i.e., by~$r$. Denoting by~$\mathsf{Desc}_n=\bigcup_{\Psi\in\cG_n} \mathsf{Desc}(\Psi)$ the set of all descriptions of fermionic Gaussian~$n$-mode states, this means that we have a function
\begin{align}
\begin{matrix}
\Psi &: \mathsf{Desc}_n & \rightarrow & \cG_n\\
 & d & \mapsto & \Psi(d)\ .
\end{matrix}\label{eq:psifunctiondescr}
\end{align}

The main result of this section shows that expectation values, overlaps, and matrix elements of (Majorana) operators with respect to Gaussian states can be efficiently computed from their classical descriptions.  Furthermore, when evolving a Gaussian state under a Gaussian unitary, the description of the resulting state can be computed efficiently. The same is true for the post-measurement state when applying an occupation number measurement. 

For evolution, we note that it suffices to consider Gaussian unitaries of the form~$U_R$ where~$R\in O(2n)$ belongs to the set of generators~$\mathsf{Gen}(O(2n))$ introduced in Section~\ref{sec:gaussianunitaries}, that is, 
\begin{align}
    \label{eq:Genset}
\mathsf{Gen}(O(2n))=\left\{R_{j,k}(\vartheta)\ |\ j< k\in [2n], \vartheta\in [0,2\pi)\right\}
\cup \{R_j\}_{j=1}^{2n}\ .
\end{align}
Here~$R_{j,k}(\vartheta)$ is a Givens rotation and~$R_j=-\mathsf{diag}(\{(-1)^{\delta_{j,k}}\}_{k=1}^{2n})$ a reflection.  We note that each element of~$\mathsf{Gen}(O(2n))$ can be specified by a tuple~$(j,k,\vartheta)\in [2n]\times [2n]\times [0,2\pi)$ or an index~$j\in [2n]$, respectively. We assume that this parameterization is used in the following algorithms (but leave this implicit). 

To state the properties of our (deterministic) algorithms,  it is convenient to express these as functions.
\begin{theorem}[Overlap, evolution, and measurement]\label{thm:maindescriptalgorithm}
Let~$\Psi(d)\in \cG_n$ be the Gaussian state associated with a description~$d\in\mathsf{Desc}_n$, see Eq.~\eqref{eq:psifunctiondescr}. Then the following holds:
\begin{enumerate}[(i)]
\item
The algorithm~$\Aoverlap:  \mathsf{Desc}_n\times\mathsf{Desc}_n  \rightarrow \mathbb{C}$
given in Fig.~\ref{fig:Aoverlap} has runtime~$O(n^3)$ and satisfies
\begin{align}
\Aoverlap(d_1,d_2)&=\langle \Psi(d_1),\Psi(d_2)\rangle\qquad\textrm{ for all }\qquad d_1,d_2\in\mathsf{Desc}_n\ .
\end{align}
\item
The algorithm~$\Aevolve: \mathsf{Desc}_n\times \mathsf{Gen}(O(2n))\rightarrow\mathsf{Desc}_n$ given in Fig.~\ref{fig:Aevolve} has runtime~$O(n^3)$ and satisfies
\begin{align}
\Psi(\Aevolve(d,R))=U_R\Psi(d)\qquad\textrm{ for all }\qquad d\in\mathsf{Desc}_n\textrm{ and } R\in \mathsf{Gen}(O(2n))\ ,
\end{align}
where~$U_R$ denotes the Gaussian unitary associated with~$R\in O(2n)$.
\item The algorithm~$\Ameasprob: \mathsf{Desc}_n \times [n] \times \sbin \rightarrow \bbR$ given in Fig.~\ref{fig:Ameasprob} has runtime~$O(1)$ and satisfies
\begin{align}
    \Ameasprob(d, j, s) = \|\Pi_j(s)\Psi(d)\|^2
    \qquad
    \textrm{ for all }\qquad d\in \mathsf{Desc}_n, j\in [n], s\in \{0,1\}\ ,
\end{align}
where~$\Pi_j(s)=\frac{1}{2}(I+(-1)^{s}ic_{2j-1}c_{2j})$ is the projection onto the eigenspace of~$a_j^\dagger a_j$ with eigenvalue~$s$. \label{it:thmMeasurementProb}
\item
The algorithm~$\Ameasure:\mathsf{Desc}_n \times [n]\times \{0,1\} \times [0,1]\rightarrow\mathsf{Desc}_n$ given in Fig.~\ref{fig:Ameasure} has runtime~$O(n^3)$.
The algorithm satisfies
\begin{align}
\Psi(\Ameasure(d,j,s, p(d,j,s)))&=\frac{\Pi_j(s)\Psi(d)}{\|\Pi_j(s)\Psi(d)\|}
\end{align}
for all~$d\in \mathsf{Desc}_n$,~$j\in [n]$,~$s\in \{0,1\}$,
with~$p(d,j,s)=\|\Pi_j(s)\Psi(d)\|^2$.
\end{enumerate}
The output of both $\Aevolve$ and $\Apostmeasure$ is a good description for any input. 
\end{theorem}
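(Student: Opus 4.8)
The plan is to derive all four parts from one common toolbox and then handle each algorithm by parity bookkeeping. The engine is the three‑state overlap identity~\eqref{eq:threestateexpr} and its Majorana‑decorated refinement~\eqref{eq:threestatemajoranaexpr}, which convert a product of three pairwise overlaps of equal‑parity Gaussian states into a single Pfaffian of covariance matrices. I would use number states as references throughout, exploiting that (a) the covariance matrix~\eqref{eq:covariancematrixMx} and the self‑overlap~$1$ of~$\ket{x}$ are available for free, and (b) any two number states of equal parity are related by an even‑weight Majorana monomial, $\ket{x'}=\mu\,c(\gamma)\ket{x}$, with~$\gamma$ supported on the odd positions~$\{2a-1 : a\in\supp(x\oplus x')\}$ and~$\mu$ a fourth root of unity, both computable in time~$O(n)$ from~\eqref{eq:ajOnState}. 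Fact~(b) lets one ``transport'' a reference from one number state to another. I would also isolate a \emph{reference‑finding} subroutine: given the covariance matrix of a Gaussian state~$\Phi$, greedily measure modes~$1,\dots,n$ in order, each time retaining the more likely outcome and updating the covariance matrix via~\eqref{eq:MpostMeasurement}; since every conditional probability is~$\ge\tfrac12$, the output~$x'$ obeys~$|\langle x',\Phi\rangle|^2\ge 2^{-n}$, and the search runs in time~$O(n^3)$.

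For~$\Aoverlap$, write~$d_i=(\cov_i,x_i,r_i)$ and~$\Psi_i=\Psi(d_i)$. Since~$r_i\ne0$, the parity of~$\Psi_i$ is~$(-1)^{|x_i|}$; if the two parities differ, output~$0$. Otherwise, with common parity~$\sigma$, choose~$c(\beta)$ (even weight) and~$\mu$ with~$\ket{x_2}=\mu\,c(\beta)\ket{x_1}$ and replace~$\Psi_2$ by~$\Psi_2'=c(\beta)\Psi_2$, a parity‑$\sigma$ Gaussian state whose covariance matrix is the signed‑diagonal conjugate of~$\cov_2$ (an~$O(n^2)$ update) and which satisfies~$\langle x_1,\Psi_2'\rangle=\mu r_2\ne0$. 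Because~$\Psi_2=c(\beta)\Psi_2'$, feeding~$(\ket{x_1},\Psi_1,\Psi_2')$ and the monomial~$c(\beta)$ into~\eqref{eq:threestatemajoranaexpr} expresses~$r_1\,\langle\Psi_1,\Psi_2\rangle\,\overline{\mu r_2}$ as~$\sigma\,4^{-n}\,i^{\,n+|\beta|(|\beta|-1)/2}\,\Pf(R_\beta)$, with~$R_\beta$ built from~$\cov(\ket{x_1}),\cov_1,\cov(\Psi_2')$; dividing by the nonzero prefactor gives~$\langle\Psi_1,\Psi_2\rangle$. The only costly step is one Pfaffian of an~$O(n)\times O(n)$ matrix, so the runtime is~$O(n^3)$.

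For~$\Aevolve$ and~$\Ameasure$ the structure is identical: from a description~$(\cov,x,r)$ of~$\Psi$ produce one of the new state~$\Phi$, where~$\Phi=U_R\Psi$ for~$R\in\mathsf{Gen}(O(2n))$, or~$\Phi=\Pi_j(s)\Psi/\sqrt p$. First compute~$\cov(\Phi)$ by the standard rules of Section~\ref{sec:backgroundFermionicGaussianClassicalSimulation} ($R\,\cov\,R^T$ for a Givens rotation or a reflection; the update~\eqref{eq:MpostMeasurement} for a measurement), in time~$O(n^2)$. Then run the reference‑finding subroutine on~$\cov(\Phi)$ to get~$x'$ with~$|\langle x',\Phi\rangle|^2\ge 2^{-n}$ (for a measurement, mode~$j$ is deterministic, so automatically~$x'_j=s$). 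Finally compute~$r'=\langle x',\Phi\rangle$ with its phase by pulling the operator back onto the number state: $\langle x',\Phi\rangle$ is an explicit combination of at most two overlaps~$\langle z,\Psi\rangle$ --- a single term (up to a known phase, or the factor~$1/\sqrt p$) for reflections, same‑mode rotations and measurements, and two terms with~$z\in\{x',\,x'\oplus e_a\oplus e_b\}$ for a rotation generator~$U_{j,k}(\vartheta)$ mixing distinct modes~$a,b$, using~$e^{-\vartheta c_j c_k/2}=\cos(\vartheta/2)I-\sin(\vartheta/2)c_j c_k$. Each~$\langle z,\Psi\rangle$ is recovered from the known~$r=\langle x,\Psi\rangle$ by transport: with~$\ket{z}=\lambda\,c(\gamma)\ket{x}$ ($\gamma$ even by the parity count above), \eqref{eq:threestatemajoranaexpr} applied to~$(\ket{x},\ket{x},\Psi)$ with monomial~$c(\gamma)$ yields~$\langle x,c(\gamma)\Psi\rangle$ in terms of a Pfaffian and~$\bar r\ne0$, whence~$\langle z,\Psi\rangle=\bar\lambda\,\langle x,c(\gamma)\Psi\rangle$. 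This costs~$O(1)$ Pfaffians plus the~$O(n^3)$ search, so~$O(n^3)$ overall; the tuple~$(\cov(\Phi),x',r')$ is a valid description, and it is \emph{good} precisely because~$x'$ came from the reference search --- which is the last sentence of the theorem.

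Finally, $\Ameasprob$ is immediate: by~\eqref{eq:probNumberMeas}, $\|\Pi_j(s)\Psi(d)\|^2=\tfrac12(1+(-1)^s\cov_{2j-1,2j})$ just reads off one entry of the covariance matrix in~$d$, in~$O(1)$ time. The two places I expect genuine care: (i) the three‑state identities require \emph{all three} pairwise overlaps nonzero, so one cannot take a common number‑state reference directly (distinct number states are orthogonal) --- this is what forces the Majorana‑monomial transport and the use of~\eqref{eq:threestatemajoranaexpr} over~\eqref{eq:threestateexpr}; and (ii) the output of an evolution or measurement must be a \emph{good} description for \emph{every} input, even when the input's own reference is barely nonzero or when a rotation generator produces a genuine superposition with near‑cancellation in~$r'$, which is why the output reference is always re‑chosen by the greedy search rather than inherited. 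With these in place, correctness of each step follows from the cited formulas plus parity bookkeeping, and the explicit pseudocode lives in Figs.~\ref{fig:Aoverlap}--\ref{fig:Ameasure}.
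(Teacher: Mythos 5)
Your proposal is correct and follows essentially the same route as the paper: the same greedy reference search (the paper's $\Asupport$, with the identical $2^{-n}$ guarantee), the same Majorana-monomial transport between number states ($\Arelate$), and the same use of the three-state Pfaffian identity~\eqref{eq:threestatemajoranaexpr} to complete a triangle of overlaps in which two sides are known and non-zero. The only differences are cosmetic rearrangements of which vertex of the triangle is solved for — e.g.\ for $\Aevolve$ you pull $U_R^\dagger$ back onto the new reference $\ket{x'}$ and transport from $\ket{x}$, whereas the paper pushes $U_R$ forward and uses $U_R\ket{x}$ as the middle vertex — and both versions correctly exploit the locality of the generators' action on number states.
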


We argue that descriptions of relevant initial states can be obtained efficiently. 
Clearly, this is the case for any state of the form~$\ket{\Psi}=U_{R_L}\cdots U_{R_1}\ket{0_F}$ obtained by applying a sequence~$\{R_j\}_{j\in [L]}\subset \mathsf{Gen}(O(2n))$ of generators to the vacuum state~$\ket{0_F}$: Here we can use   the algorithm~$\Aevolve$~$L$~times, producing a description of~$\ket{\Psi}$ in time~$O(Ln^3)$.

We we will at times need a description of a state~$\ket{\Psi}$ but do not require  fixing its global phase. This is the case for example when subsequent computational states only involve phase-insensitive expressions, e.g., terms  of the form~$|\langle \Psi,\Phi\rangle|^2$.  Such a description can be found efficiently from the covariance matrix~$\cov$ of~$\ket{\Psi}$. 
Since the phase can be fixed arbitrarily, the problem here is to find~$x\in\sbin^n$ such that~$\langle x, \Psi \rangle$ is non-zero.

\begin{theorem}\label{thm:initialstates}
There is an algorithm~$\Adescribe:\mathsf{Mat}_{2n\times 2n}(\mathbb{R})\rightarrow \mathsf{Desc}_n$ with runtime~$O(n^3)$ such
that for any covariance matrix~$\cov$, the state~$\Psi(\Adescribe(\cov))$ is a Gaussian state with covariance matrix~$\cov$, and $\Adescribe(\cov)$ is a good description. 
\end{theorem}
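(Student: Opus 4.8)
The plan is to reduce the construction of a good description to two subtasks: (1) find a number state $\ket{x}$, $x\in\{0,1\}^n$, with $|\langle x,\Psi\rangle|^2\geq 2^{-n}$; and (2) once such an $x$ is known, compute the complex number $r=\langle x,\Psi\rangle$ by fixing the global phase of $\Psi$ arbitrarily. For step (1) I would exploit the fact that the squared overlaps $\{|\langle x,\Psi\rangle|^2\}_{x\in\{0,1\}^n}$ form a probability distribution, so at least one of them is $\geq 2^{-n}$, and moreover each of them is computable from $\cov$ alone via the inner-product formula~\eqref{eq:innerproductoftwomatrices}, namely $|\langle x,\Psi\rangle|^2=\sigma 2^{-n}\Pf(\cov(\ket{x})+\cov)$ with $\cov(\ket{x})$ given by~\eqref{eq:covariancematrixMx} and $\sigma$ the (known) parity of $\Psi$. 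Naively scanning all $2^n$ strings is too slow, so instead I would build $x$ bit by bit: determine $x_1,\ldots,x_n$ greedily by, at step $j$, choosing the value of $x_j$ that keeps the conditional probability of the partial assignment as large as possible, using that the occupation-number measurement on mode $j$ of the (Gaussian) state conditioned on outcomes $x_1,\ldots,x_{j-1}$ has outcome probabilities $\tfrac12(1\pm\cov^{(j-1)}_{2j-1,2j})$ computable in $O(1)$ from the current covariance matrix, and that the post-measurement covariance matrix updates in $O(n^2)$ time via~\eqref{eq:MpostMeasurement}. Since at each step the larger of the two conditional outcome probabilities is $\geq 1/2$, the product over $j\in[n]$ is $\geq 2^{-n}$, so the resulting $x$ satisfies $|\langle x,\Psi\rangle|^2\geq 2^{-n}$. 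There are $n$ steps, each costing $O(n^2)$, for a total of $O(n^3)$.

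For step (2), having fixed $x$, I would pick any Gaussian state $\Psi$ with covariance matrix $\cov$ (the theorem only requires the output description to correspond to \emph{some} such state, so the global phase is free) and set $r$ to be any complex number with $|r|^2=\sigma 2^{-n}\Pf(\cov(\ket{x})+\cov)$, e.g., the nonnegative square root. By the characterization immediately following the definition of a description — $d=(\cov,x,r)$ is a valid description of \emph{the} Gaussian state it determines precisely when $r\neq 0$ and $|r|^4=2^{-2n}\mathsf{Det}(\cov(\ket{x})+\cov)$ — this $(\cov,x,r)$ is automatically a valid description of a Gaussian state with covariance matrix $\cov$, and it is a \emph{good} description because $|r|^2\geq 2^{-n}$ by construction. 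Computing one Pfaffian of a $2n\times2n$ matrix costs $O(n^3)$, which dominates nothing and fits the budget.

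I would present the algorithm $\Adescribe$ explicitly: initialize $\cov^{(0)}=\cov$; for $j=1,\ldots,n$, read off $p_j=\tfrac12(1+\cov^{(j-1)}_{2j-1,2j})$, set $x_j=0$ if $p_j\geq 1/2$ and $x_j=1$ otherwise, and update $\cov^{(j)}$ from $\cov^{(j-1)}$ by the post-measurement rule~\eqref{eq:MpostMeasurement} with outcome $s=x_j$ on mode $j$; finally output $(\cov,x,r)$ with $r$ as above. Correctness follows because the joint outcome probability $\|\Pi_n(x_n)\cdots\Pi_1(x_1)\Psi\|^2$ telescopes into $\prod_{j=1}^n p_j^{(x_j)}$ where each factor is $\geq 1/2$, and this joint probability equals $|\langle x,\Psi\rangle|^2$; the runtime is $n$ iterations of $O(n^2)$ work plus one $O(n^3)$ Pfaffian.

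The main obstacle is making the greedy bit-by-bit search rigorous: one must verify that the intermediate covariance matrices $\cov^{(j)}$ genuinely describe the normalized post-measurement Gaussian states on the \emph{remaining} modes (which is exactly what~\eqref{eq:MpostMeasurement} and the fact that occupation-number measurements preserve Gaussianity give us), and that the chain of conditional probabilities multiplies to the unconditional overlap $|\langle x,\Psi\rangle|^2$. Once this bookkeeping is in place, the bound $|\langle x,\Psi\rangle|^2\geq 2^{-n}$ and hence the "good description" property are immediate, and the finite-precision considerations are analogous to those for the other algorithms in Theorem~\ref{thm:maindescriptalgorithm}.
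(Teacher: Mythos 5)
Your proposal matches the paper's proof essentially exactly: the greedy bit-by-bit construction of $x$ via simulated occupation-number measurements (always choosing the higher-probability outcome so the product of conditional probabilities is $\geq 2^{-n}$) is precisely the paper's subroutine~$\Asupport$, and fixing the phase by setting $r$ to the nonnegative square root of $\sigma 2^{-n}\Pf(\cov(\ket{x})+\cov)$ is exactly what~$\Adescribe$ does. The approach and the runtime accounting are the same, so this is correct.
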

\noindent For example, consider states of the form~$\ket{\Phi(\pi,y)}=U_{R_\pi}\ket{y}$, where~$R_\pi\in O(2n)$ is a permutation matrix specified by an element~$\pi\in S_{2n}$ and~$y\in \{0,1\}^n$. (Such states are used in Ref.~\cite{bravyiComplexityQuantumImpurity2017a} to give a fast norm estimation algorithm, see Section~\ref{sec:fastnormestimation}.) The covariance matrix of this state is~$\cov(\pi,y)=R_\pi \cov(\ket{y})R_\pi^T$ (with~$\cov(\ket{y})$ defined by Eq.~\eqref{eq:covariancematrixMx}). We thus conclude that 
$\ket{\Psi(\Adescribe(\cov(\pi,y)))}$ is proportional to~$\ket{\Phi(\pi_,y)}$ with a global phase~$e^{i\theta}$ possibly depending  on  the pair~$(\pi,y)$.

The remainder of this section is devoted to the proofs of Theorem~\ref{thm:maindescriptalgorithm} and Theorem~\ref{thm:initialstates}: We describe the algorithms~$\Aevolve, \Aoverlap, \Ameasprob, \Ameasure$ and~$\Adescribe$ in detail, providing pseudocode, and verify that these satisfy the desired properties.

\subsection{Subroutines \label{sec:subroutines}}

Our algorithms  make use of subroutines called~$\Asupport$, $\Arelate$, $\Aoverlaptriple$ and $\Aconvert$ which we describe here.

\begin{figure}[H]
\begin{mdframed}[
    linecolor=black,
    linewidth=0.5pt,
    roundcorner=2pt,
    backgroundcolor=white, 
    userdefinedwidth=\textwidth,
]
\begin{algorithmic}[1]
\Require$\cov\in\mathsf{Mat}_{2n\times 2n}(\mathbb{R})$ covariance matrix of a pure Gaussian state
        \Function{$\Asupport$}{$\cov$}
            \State $\cov^{(0)}\leftarrow \cov$
            \State $x\leftarrow 0^{n}\in \{0,1\}^{n}$.
            \For{$j \leftarrow 1$ to~$n$}
                \Comment{simulate a measurement of~$a_j^\dagger a_j$}
                \State $q_j \leftarrow \frac{1}{2}(I + \cov^{(j-1)}_{2j-1, 2j})$ \label{alg:qjcomput}
                \If{$q_j \geq 1/2$}                 \Comment{choose the higher-probability outcome}\label{alg:algpickhigherone}
                    \State $x_j \leftarrow 0$
                    \State $p_j \leftarrow q_j$
                \Else
                    \State $x_j \leftarrow 1$
                    \State $p_j \leftarrow 1-q_j$
                 \EndIf\label{alg:pickhighertwo}
                 \State{$\cov^{(j)}\leftarrow 0\in \mathsf{Mat}_{2n\times 2n}(\mathbb{R})$} \Comment{covariance matrix of the post-measurement state}\label{alg:Mcomputeone}
                 \State{$\cov^{(j)}_{2j,2j-1}\leftarrow (-1)^{x_j}$}
                  \For{$\ell \leftarrow 1$ to~$n-1$}
                      \For{$k \leftarrow \ell+1$ to~$n$}
                         \If{$(k,\ell)\neq (2j,2j-1)$}
                            \State{$\cov^{(j)}_{k,\ell}\leftarrow \cov^{(j-1)}_{k,\ell}+\frac{(-1)^{x_j}}{2p_j} (\cov^{(j-1)}_{2j-1,\ell}\cov^{(j-1)}_{2j,k}-\cov^{(j-1)}_{2j-1,k}\cov^{(j-1)}_{2j,\ell})$}
                         \EndIf
                      \EndFor
                   \EndFor
                   \State $\cov^{(j)}\leftarrow \cov^{(j)}-(\cov^{(j)})^T$ \label{alg:Mcomputetwo}
            \EndFor
            \State \textbf{return}~$x$
        \EndFunction
\end{algorithmic}
\end{mdframed}
\caption{The algorithm~$\Asupport$: Given the covariance matrix~$\cov$ of a Gaussian state~$\ket{\Psi}$, it computes~$x\in \{0,1\}^n$ such that~$\langle x,\Psi\rangle\neq 0$.\label{fig:Asupport}}
\end{figure}

The subroutine~$\Asupport$ takes as input the covariance matrix~$\cov$ of a Gaussian state~$\Psi$ and produces a string~$x\in \{0,1\}^{2n}$
with the property that~$\langle x,\Psi\rangle\neq 0$. It is given in Fig.~\ref{fig:Asupport}. It has the following properties:

\begin{lemma}
\label{lem:Asupport}
The algorithm~$\Asupport:\mathsf{Mat}_{2n\times 2n}(\mathbb{R})\rightarrow \{0,1\}^{n}$ runs in time~$O(n^3)$. It satisfies
\begin{align}
|\langle \Asupport(\cov(\Psi)),\Psi\rangle|^2\geq 2^{-n}\qquad\textrm{ for every }\qquad \Psi\in\cG_n\ ,\label{eq:lowerboundMPsin}
\end{align}
where~$\cov(\Psi)$ is the covariance matrix of~$\Psi$.
\end{lemma}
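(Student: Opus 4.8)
The plan is to analyze the algorithm~$\Asupport$ as a deterministic simulation of a sequence of occupation number measurements on~$\Psi$, always committing to the more likely outcome, and to track the running probability. First I would observe that the algorithm iteratively constructs covariance matrices~$\cov^{(0)},\cov^{(1)},\ldots,\cov^{(n)}$ where~$\cov^{(0)}=\cov(\Psi)$ and, for each~$j$, $\cov^{(j)}$ is (by Eq.~\eqref{eq:MpostMeasurement}) the covariance matrix of the state obtained from the Gaussian state with covariance matrix~$\cov^{(j-1)}$ after projecting mode~$j$ onto outcome~$x_j$ and renormalizing. Correspondingly, let~$\Psi^{(0)}=\Psi$ and~$\Psi^{(j)}=\Pi_j(x_j)\Psi^{(j-1)}/\|\Pi_j(x_j)\Psi^{(j-1)}\|$; these are genuine Gaussian states with the stated covariance matrices. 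The key point is that the projectors~$\Pi_1(x_1),\ldots,\Pi_n(x_n)$ mutually commute (they involve disjoint pairs of Majorana modes), and~$\Pi_1(x_1)\cdots\Pi_n(x_n)=\proj{x}$ is exactly the rank-one projector onto the number state~$\ket{x}$ output by the algorithm. Hence the telescoping identity
\begin{align}
|\langle x,\Psi\rangle|^2 = \|\proj{x}\Psi\|^2 = \prod_{j=1}^n \|\Pi_j(x_j)\Psi^{(j-1)}\|^2 = \prod_{j=1}^n p_j\ ,
\end{align}
where~$p_j$ is the quantity computed in the algorithm. Indeed, at step~$j$ the quantity~$q_j=\tfrac12(1+\cov^{(j-1)}_{2j-1,2j})$ is, by Eq.~\eqref{eq:probNumberMeas} applied to~$\Psi^{(j-1)}$, precisely the probability~$\|\Pi_j(0)\Psi^{(j-1)}\|^2$ of outcome~$0$, and the algorithm sets~$p_j$ to~$\max(q_j,1-q_j)=\|\Pi_j(x_j)\Psi^{(j-1)}\|^2$.

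Next I would invoke the choice rule in lines~\ref{alg:algpickhigherone}--\ref{alg:pickhighertwo}: since the two outcome probabilities of a single mode sum to~$1$, the larger one satisfies~$p_j\geq 1/2$ for every~$j$. Combining this with the telescoping identity gives
\begin{align}
|\langle x,\Psi\rangle|^2 = \prod_{j=1}^n p_j \geq \prod_{j=1}^n \tfrac12 = 2^{-n}\ ,
\end{align}
which is exactly~\eqref{eq:lowerboundMPsin}. (In particular~$|\langle x,\Psi\rangle|^2>0$, so~$x$ lies in the support of~$\Psi$ as claimed; note that the intermediate states~$\Psi^{(j)}$ are well-defined precisely because each~$p_j\geq 1/2>0$, so no division by zero occurs.) For the runtime, the outer loop runs~$n$ times, and each iteration recomputes the~$O(n^2)$ lower-triangular entries of~$\cov^{(j)}$ via Eq.~\eqref{eq:MpostMeasurement}, each in~$O(1)$ arithmetic operations, for~$O(n^2)$ work per iteration and~$O(n^3)$ in total; the antisymmetrization step in line~\ref{alg:Mcomputetwo} is also~$O(n^2)$ per iteration, so the bound is unaffected.

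The one place that needs care — and what I expect to be the main technical obstacle — is justifying that after projecting and renormalizing, each intermediate object~$\Psi^{(j)}$ really is a legitimate pure Gaussian state whose covariance matrix is the matrix~$\cov^{(j)}$ produced by the formula in Eq.~\eqref{eq:MpostMeasurement}, so that Eq.~\eqref{eq:probNumberMeas} is applicable at the next step. This is the content of~\cite[Proof of Lemma~4]{bravyiDisorderAssistedErrorCorrection2012} cited for Eq.~\eqref{eq:MpostMeasurement}, together with the fact that Gaussianity is preserved under occupation number measurement (Section~\ref{sec:backgroundFermionicGaussianClassicalSimulation}); once this is granted, the induction on~$j$ closes and the argument above goes through verbatim. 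A secondary subtlety is that the measured projectors commute and multiply to~$\proj{x}$: this follows because~$\Pi_j(x_j)=\tfrac12(I+(-1)^{x_j}ic_{2j-1}c_{2j})$ depends only on modes~$\{2j-1,2j\}$ and these index sets are disjoint across~$j$, and the product of all~$n$ of them is the number-state projector by Eq.~\eqref{eq:fermionicBasicStatesDef}.
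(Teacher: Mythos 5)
Your proposal is correct and follows essentially the same route as the paper's proof: both interpret $\Asupport$ as a sequential simulation of occupation-number measurements that always commits to the more likely outcome, use the Born-rule/post-measurement covariance formulas to identify $q_j$ and $\cov^{(j)}$, and obtain $|\langle x,\Psi\rangle|^2=\prod_j p_j\geq 2^{-n}$ via the telescoping (Bayes'-rule) identity, with the same $O(n^2)$-per-step runtime accounting. Your explicit remark that the $\Pi_j(x_j)$ commute and multiply to $\proj{x}$ is just a slightly more detailed justification of the same identity the paper attributes to Bayes' rule.
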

\begin{proof}
The main idea of the algorithm is to mimic a measurement in the number state basis executed in a sequential manner. Consider the following process: Suppose we start  with the state~$\Psi^{(0)}=\Psi$, and then 
measure~$a_j^\dagger a_j$ successively for~$j=1,\dots,n$. Let~$P(x_j|x_1\cdots x_{j-1})$ denote the conditional probability of observing the outcome~$x_j\in \{0,1\}$ (when measuring~$a_j^\dagger a_j)$, given that the previous measurements yielded~$(x_1,\ldots,x_{j-1})$. According to Born's rule, this is given by 
\begin{align}
P(x_j|x_1\cdots x_{j-1})&=\langle \Psi^{(j-1)}_{x_1\cdots x_{j-1}},\Pi_j(x_j) \Psi^{(j-1)}_{x_1\cdots x_{j-1}}\rangle
\end{align}
where~$\Psi^{(j-1)}_{x_1\cdots x_{j-1}}$ is the post-measurement state after the first~$(j-1)$~measurements. The probability of observing the sequence~$x\in \{0,1\}^n$ of outcomes then is  
\begin{align}
|\langle x,\Psi\rangle|^2 &=\prod_{j=1}^n P(x_j|x_1\cdots x_{j-1})\label{eq:bayesianrule}
\end{align}
by Bayes' rule.

The algorithm~$\Asupport$ simulates this process: For each~$j\in [n]$, the quantity~$q_j$ computed in line~\ref{alg:qjcomput} is equal to the conditional probability~$P(0|x_{1}\cdots x_{j-1})$ that the~$j$-th measurement results in the outcome~$0$.  Lines~\ref{alg:algpickhigherone}--\ref{alg:pickhighertwo} ensure that the outcome~$x_j\in\{0,1\}$ with higher probability of occurrence is selected at each step, guaranteeing Property~\eqref{eq:lowerboundMPsin} (because of Eq.~\eqref{eq:bayesianrule}).  The matrix~$\cov^{(j)}$ computed in steps~\ref{alg:Mcomputeone}--\ref{alg:Mcomputetwo} is the covariance matrix of the post-measurement state 
$\Psi^{(j)}_{x_1\cdots x_{j}}$.

Each measurement is thus realized in time~$O(n^2)$ yielding the overall complexity of~$O(n^3)$.  
\end{proof}

The algorithm~$\Arelate$ is more straightforward: Given~$x,y\in \{0,1\}^n$,  it outputs~$(\alpha,\vartheta)\in \{0,1\}^{2n}\times\mathbb{R}$ such that~$c(\alpha)\ket{x}=e^{i\vartheta}\ket{y}$. That is, it finds a Majorana monomial~$c(\alpha)$ which maps the basis state~$\ket{x}$ to~$\ket{y}$ up to a phase and computes the corresponding phase.
In Fig.~\ref{fig:A2pseudocode} we give pseudocode for this algorithm.
\begin{figure}[H]
\begin{mdframed}[
    linecolor=black,
    linewidth=0.5pt,
    roundcorner=2pt,
    backgroundcolor=white, 
    userdefinedwidth=\textwidth,
]
\begin{algorithmic}[1]
\Require{$x\in \{0,1\}^n$, $y\in \{0,1\}^n$}
        \Function{$\Arelate$}{$x$,$y$}
           \State $\alpha\leftarrow 0^{2n}\in \{0,1\}^{2n}$.
            \For{$j \leftarrow 1$ to~$n$}
               \State{$\alpha_{2j-1}\leftarrow x_j\oplus y_j$}\label{alg:alphadefinitionline}
            \EndFor
            \State $\vartheta \leftarrow \frac{\pi}{4}|x\oplus y|+\pi\sum_{j=1}^n (x\oplus y)_j \eta_j(x)$\label{alg:varthetadef}
            \State \textbf{return}~$(\alpha, \vartheta)$ 
            \Comment{$(\alpha, \vartheta)$ is such that $c(\alpha)\ket{x}=e^{i\vartheta}\ket{y}$}
        \EndFunction
\end{algorithmic}
\end{mdframed}
\caption{Given~$x,y \in \{0,1\}^n$, the algorithm~$\Arelate$ computes~$\alpha\in \{0,1\}^{2n}$ and~$\vartheta\in\mathbb{R}$ such that 
$c(\alpha)\ket{x}=e^{i\vartheta}\ket{y}$.
 The expression~$\eta_j(x)$ in line~\ref{alg:varthetadef} is defined by Eq.~\eqref{eq:etajdef}, and~$x\oplus y\in \{0,1\}^n$ denotes the bitwise addition modulo two for~$x,y\in \{0,1\}^n$.  }
\label{fig:A2pseudocode}
\end{figure}
\begin{lemma}
The algorithm~$\Arelate:\{0,1\}^{n}\rightarrow\{0,1\}^{2n}\times\mathbb{C}$ runs in time~$O(n)$ and satisfies
\begin{align}
c(\alpha)\ket{x}=e^{i\vartheta}\ket{y}
\textrm{ where }
(\alpha,\vartheta)=\Arelate(x,y)\, \quad\textrm{ for all }\quad x,y\in \{0,1\}^n\ .
\end{align}
\end{lemma}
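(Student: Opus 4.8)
The runtime claim is the easy part: $\alpha$ is produced by a single loop over $j\in[n]$ with $O(1)$ work per iteration (line~\ref{alg:alphadefinitionline}), and $\vartheta$ is assembled from the weight $|x\oplus y|$ together with the sum $\sum_{j}(x\oplus y)_j\,\eta_j(x)$ (line~\ref{alg:varthetadef}), which is evaluable in one left-to-right pass by keeping a running value of $\eta_j(x)=\sum_{k<j}x_k$. Hence $\Arelate$ runs in time $O(n)$, and the substance of the lemma is the output identity $c(\alpha)\ket{x}=e^{i\vartheta}\ket{y}$, which I would prove as follows.

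Set $z=x\oplus y$ and write $\supp(z)=\{j_1<j_2<\cdots<j_m\}$, so $m=|z|=|x\oplus y|$. By construction $\alpha$ is supported exactly on the odd indices $\{2j_r-1\}_{r=1}^m$, hence $|\alpha|=m$, and unpacking definition~\eqref{eq:MajoranaMonoialDef} gives $c(\alpha)=i^{m(m-1)/2}\,c_{2j_1-1}c_{2j_2-1}\cdots c_{2j_m-1}$. First I would record the elementary identity $c_{2j-1}\ket{w}=(-1)^{\eta_j(w)}\ket{w\oplus e_j}$, valid for every $w\in\{0,1\}^n$ irrespective of the mode-$j$ occupation of $\ket{w}$; it follows from $c_{2j-1}=a_j+a_j^\dagger$, Eq.~\eqref{eq:ajOnState}, and $w_j+\overline{w_j}=1$. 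Then I would evaluate $c_{2j_1-1}\cdots c_{2j_m-1}\ket{x}$ by applying the factors from right to left and tracking the accumulated sign: when $c_{2j_r-1}$ acts, the current state is $\ket{x\oplus e_{j_{r+1}}\oplus\cdots\oplus e_{j_m}}$, and since $j_{r+1},\ldots,j_m>j_r$ all previously flipped modes lie outside $\{1,\ldots,j_r-1\}$, so $\eta_{j_r}$ of that state equals $\eta_{j_r}(x)$. Iterating, the product contributes the sign $(-1)^{\sum_r\eta_{j_r}(x)}=(-1)^{\sum_j(x\oplus y)_j\,\eta_j(x)}$ and sends $\ket{x}$ to $\ket{x\oplus z}=\ket{y}$, so that $c(\alpha)\ket{x}=i^{m(m-1)/2}\,(-1)^{\sum_j(x\oplus y)_j\,\eta_j(x)}\,\ket{y}$.

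Finally I would match the scalar prefactor to $e^{i\vartheta}$. Writing the sign as $e^{i\pi\sum_j(x\oplus y)_j\eta_j(x)}$, what remains is the modular identity $i^{|\alpha|(|\alpha|-1)/2}=e^{i\pi|x\oplus y|/4}$; for $|\alpha|=|x\oplus y|$ even this is a short computation, since then $\tfrac{m(m-1)}{2}-\tfrac{m}{2}=\tfrac{m(m-2)}{2}$ is a multiple of $4$, and the even-weight case is precisely the one relevant here — it holds automatically whenever $\ket{x}$ and $\ket{y}$ have the same parity, in particular whenever $c(\alpha)$ is an even Majorana monomial as in the overlap formula~\eqref{eq:threestatemajoranaexpr}. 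I expect this phase bookkeeping to be the only point needing genuine care: keeping the Jordan--Wigner signs $(-1)^{\eta_j}$ straight as successive bit-flips accumulate — which the monotonicity of $\supp(x\oplus y)$ renders painless — and reconciling the $i^{|\alpha|(|\alpha|-1)/2}$ normalization of the Majorana monomial with the $\tfrac{\pi}{4}|x\oplus y|$ term in $\vartheta$; everything else is routine.
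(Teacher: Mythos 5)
Your proof is correct and takes essentially the same route as the paper's: expand $c(\alpha)$ as $i^{|\alpha|(|\alpha|-1)/2}$ times the product of odd-indexed Majorana operators, act on the number state while tracking the Jordan--Wigner signs $(-1)^{\eta_j}$ (which decouple because the flipped indices are ordered), and reconcile the resulting prefactor with the $\vartheta$ computed in line~\ref{alg:varthetadef}. Your write-up is in fact the more careful one at both delicate points: the paper's displayed computation acts on $\ket{y}$ while recording the sign $(-1)^{\sum_j(x\oplus y)_j\eta_j(x)}$ appropriate for acting on $\ket{x}$, and it silently equates $\tfrac{\pi}{4}|x\oplus y|\,(|x\oplus y|-1)$ with the algorithm's $\tfrac{\pi}{4}|x\oplus y|$ --- an identity that, as you correctly observe, holds (mod $2\pi$) only for even $|x\oplus y|$, i.e.\ when $x$ and $y$ have equal parity, which is the only case arising in the paper's uses of $\Arelate$ even though the lemma is stated ``for all $x,y$''.
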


\begin{proof}
Let~$x,y\in \sbin^n$ be arbitrary.
Define 
\begin{align}
\alpha_{2j-1}&=x_j\oplus y_j\qquad\textrm{ and }\qquad \alpha_{2j}=0\qquad\textrm{ for }\qquad j\in[n]\ ,
\end{align}
as in line~\ref{alg:alphadefinitionline} of algorithm~$\Arelate$. Then
\begin{align}
c(\alpha) \ket{y}&=i^{|\alpha|\cdot (|\alpha|-1)/2} c_1^{x_1\oplus y_1}c_{3}^{x_2\oplus y_2}\cdots c_{2n-1}^{x_n\oplus y_n} \ket{y}\\
&=i^{\left(\sum_{j=1}^n x_j\oplus y_j\right)\cdot \left(\left(\sum_{j=1}^n x_j\oplus y_j\right)-1\right)/2}
(-1)^{\sum_{j=1}^n (x_j\oplus y_j) \eta_j(x)} \ket{y\oplus (x\oplus y)}\\
&=i^{|x\oplus y|\cdot (|x\oplus y|-1)/2} (-1)^{\sum_{j=1}^n (x\oplus y)_j\eta_j(x)}\ket{x}
\end{align}
where in the second identity,  we used that 
\begin{align}
c_{2j-1}\ket{x} &=(-1)^{\eta_j(x)}\ket{x\oplus e_j}
\qquad\textrm{ for all }\qquad x\in \sbin^n\textrm{ and } j\in[n]
\end{align}
because of Eq.~\eqref{eq:ajOnState}.  Because~$i^{|x\oplus y|\cdot (|x\oplus y|-1)/2} (-1)^{\sum_{j=1}^n (x\oplus y)_j \eta_j(x) }=e^{i\vartheta}$ for
\begin{align}
 \vartheta=\frac{\pi}{4} |x\oplus y|\cdot (|x\oplus y|-1)+\pi\sum_{j=1}^n (x\oplus y)_j \eta_j(x)\ ,
\end{align}
comparison with line~\ref{alg:varthetadef} of the algorithm~$\Arelate$ gives the claim. 
\end{proof}

The algorithm~$\Aoverlaptriple$ takes covariance matrices~$\cov_0,\cov_1,\cov_2$ of three Gaussian states $\Phi_0,\Phi_1,\Phi_2$ of the same parity~$\sigma\in\{-1,1\}$ and~$\alpha\in \sbineven^n$, as well as overlaps~$u=\langle \Phi_0,\Phi_1\rangle$, $v=\langle \Phi_1,c(\alpha)\Phi_2\rangle$ (which both have to be non-zero), and computes the overlap~$\langle \Phi_2,\Phi_0\rangle$. It is obtained by direct application of the formula~\eqref{eq:threestatemajoranaexpr}. For completeness, we include pseudocode
in Fig.~\ref{fig:Aoverlaptriple}. Since this algorithm involves computing Pfaffians of matrices that have size linear in~$n$, its runtime is~$O(n^3)$. 

\begin{figure}[H]
\begin{mdframed}[
    linecolor=black,
    linewidth=0.5pt,
    roundcorner=2pt,
    backgroundcolor=white, 
    userdefinedwidth=\textwidth,
]
\begin{algorithmic}[1]
\Require{$\cov_j$ covariance matrix of a Gaussian state~$\Phi_j$ for~$j=0,1,2$}
\Require{$\sigma=\Pf(\cov_0)=\Pf(\cov_1)=\Pf(\cov_2)$}
\Require{$\alpha\in \sbin^n_+$ such that~$\langle \Phi_1,c(\alpha)\Phi_2\rangle \neq 0$}
\Require{$\langle \Phi_0,\Phi_1\rangle\neq 0$}
\Require{$u=\langle \Phi_0,\Phi_1\rangle$ and~$v=\langle \Phi_1,c(\alpha)\Phi_2\rangle$}
        \Function{$\Aoverlaptriple$}{$\cov_0,\cov_1,\cov_2,\alpha,u,v$}
        \State{$R_\alpha\in \mathsf{Mat}_{(6n+|\alpha|)\times (6n+|\alpha|)}(\mathbb{R})$}
         \State $R_\alpha\leftarrow \textrm{ evaluate Eq.~\eqref{eq:Ralphadefinitionm}}$
         \State $o\leftarrow u^{-1}\cdot v^{-1}\cdot \sigma\cdot 4^{-n}i^{n+|\alpha|\cdot (|\alpha|-1)/2}\Pf(R_\alpha)$
         \Comment{compute the overlap $\langle \Phi_2, \Phi_0 \rangle$}
         \State \textbf{return}~$o$ 
        \EndFunction
\end{algorithmic}
\end{mdframed}
\caption{The algorithm~$\Aoverlaptriple$ takes as input the covariance matrices~$\cov_j$ of three Gaussian states~$\Phi_j$, $j=0,1,2$ with identical parity, $\alpha\in \sbineven^n$  and  the overlaps 
$\langle \Phi_0,\Phi_1\rangle$, $\langle \Phi_1,c(\alpha)\Phi_2\rangle$. The latter have to be non-zero. 
The algorithm computes the overlap~$\langle \Phi_2,\Phi_0\rangle$ using Eq.~\eqref{eq:threestatemajoranaexpr}. 
\label{fig:Aoverlaptriple} }
\end{figure}

\noindent Fig.~\ref{fig:aoverlaptriple} 
gives a graphical representation of what the algorithm~$\Aoverlaptriple$ achieves. These graphical representations will be helpful to construct and analyze  other algorithmic building blocks.

\begin{figure}[h]
\centering
\begin{subfigure}[t]{0.48\textwidth}
\centering
    \includegraphics[height=3cm]{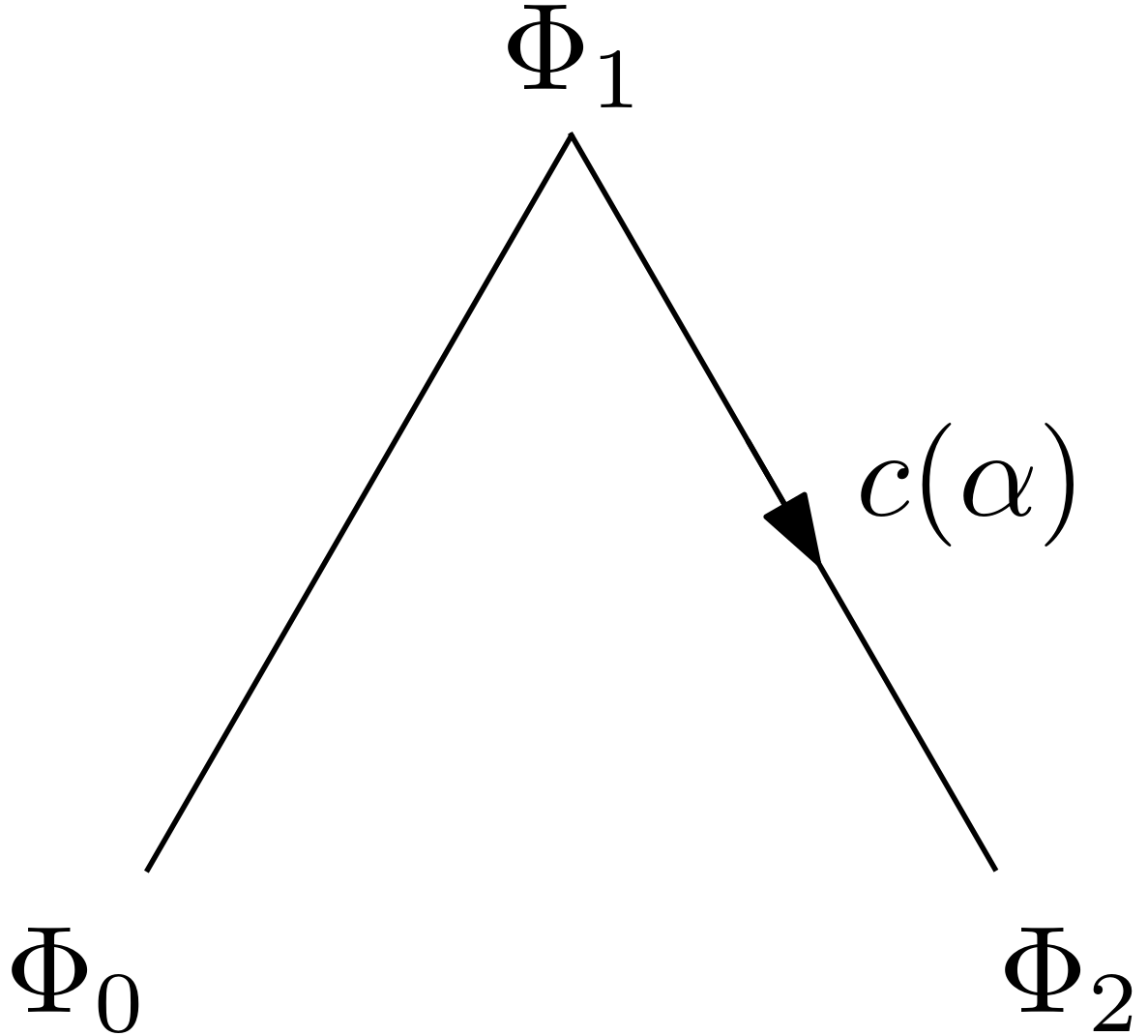}
    \caption{The input to the algorithm~$\Aoverlaptriple$ 
 consists of (descriptions of) three Gaussian states~$\Phi_0,\Phi_1,\Phi_2$
 and~$\alpha\in \{0,1\}^{2n}$, together with
 overlaps~$u=\langle \Phi_0,\Phi_1\rangle$ and~$v=\langle \Phi_1,c(\alpha)\Phi_2\rangle$ that are both non-zero.  
    }
    \label{fig:firstoverlaptriple}
\end{subfigure}
\hfill
\begin{subfigure}[t]{0.48\textwidth}
\centering
\includegraphics[height=3cm]{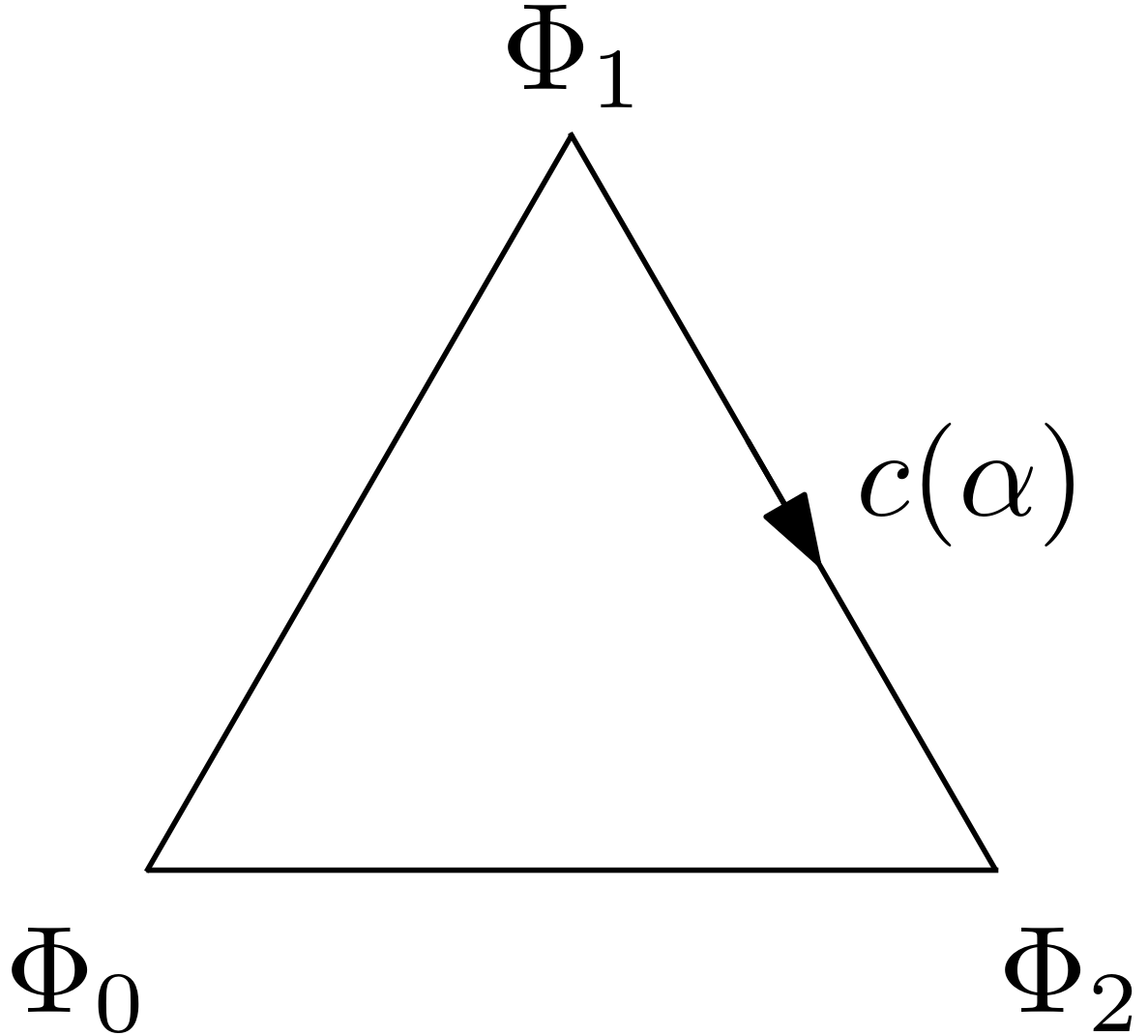}
    \caption{Applying~$\Aoverlaptriple$ provides the inner product~$\langle \Phi_0,\Phi_2\rangle$. In this diagrammatic representation, this completes the triangle   with vertices~$\Phi_0,\Phi_1,\Phi_2$. 
    }
    \label{fig:secondoverlaptriple}
\end{subfigure}

\caption{A graphical representation of the functionality provided by the algorithm~$\Aoverlaptriple$. Solid lines represent inner products that are given / have been computed, and are non-zero. Inner products of the form~$\langle \Phi_1,c(\alpha)\Phi_2\rangle$ are represented by arrows.}
\label{fig:aoverlaptriple}
\end{figure}

The algorithm~$\Aconvert$ takes a description~$d=(\cov,x,r)$ of  a Gaussian state~$\Psi(d)$ and~$y\in \{0,1\}^n$ such that~$\langle y,\Psi(d)\rangle\neq 0$, and outputs a description~$d'=(\cov,y,s)$ of the same state. In other words, it converts a description~$d$ of the state involving the reference state~$\ket{x}$ to a description~$d'$ of the same state but involving a different reference state~$\ket{y}$. In Fig.~\ref{fig:Aconvert} we give pseudocode for this algorithm.
\begin{figure}[H]
\begin{mdframed}[
    linecolor=black,
    linewidth=0.5pt,
    roundcorner=2pt,
    backgroundcolor=white, 
    userdefinedwidth=\textwidth,
]
\begin{algorithmic}[1]
\Require{$d=(\cov,x,r)\in\desc_n$, $y\in\sbin^n$ such that~$\langle y,\Psi(d)\rangle\neq 0$}
        \Function{$\Aconvert$}{$d$,$y$}
      \State{$(\alpha,\vartheta)\leftarrow \Arelate(y,x)$\label{alg:alphathetadef}}
      \Comment{find~$(\alpha, \vartheta)$ such that~$c(\alpha)\ket{y} = e^{i\vartheta} \ket{x}$}
       \State{$\cov_0\leftarrow \cov$, $\cov_1\leftarrow \cov(\ket{x})$, $\cov_2\leftarrow \cov(\ket{y})$\label{alg:triplematricesAconvert}}
       \Comment{covariance matrices of~$\Psi(d)$, $\ket{x}$ and~$\ket{y}$}
          \State{$u\leftarrow\overline{r}$} 
          \Comment{$u=\langle \Psi(d) , x\rangle$}
          \State{$v\leftarrow e^{i\vartheta}$}
                    \Comment{$v=\langle x, c(\alpha)y\rangle$}
           \State $w\leftarrow \Aoverlaptriple(\cov_0,\cov_1,\cov_2,\alpha,u,v)$ \label{alg:triplematricess}
           \Comment{compute the overlap~$\langle y, \Psi(d) \rangle$}
            \State \textbf{return}~$(\cov,y,w)$ 
        \EndFunction
\end{algorithmic}
\end{mdframed}
\caption{The algorithm~$\Aconvert$ takes a description~$d\in\desc_n$  and~$y\in \{0,1\}^n$ such that~$\langle y,\Psi(d)\rangle\neq 0$. It outputs a description~$d'\in\desc_n$ of~$\Psi(d)$ such that the second entry of~$d'$ is equal to~$y$, i.e., $d'=(\cov,y,s)$.
It makes use of the subroutines~$\Arelate$ and~$\Aoverlaptriple$. 
For~$x\in \{0,1\}^n$, $\cov(\ket{x})$ denotes the covariance matrix of the state~$\ket{x}$, see Eq.~\eqref{eq:covariancematrixMx}. 
\label{fig:Aconvert}}
\end{figure}
\noindent The algorithm~$\Aconvert$ is illustrated in Fig.~\ref{fig:aoverlapillustration}.
\begin{figure}[h]
\centering
\begin{subfigure}[t]{0.31\textwidth}
\centering
    \includegraphics[height=3cm]{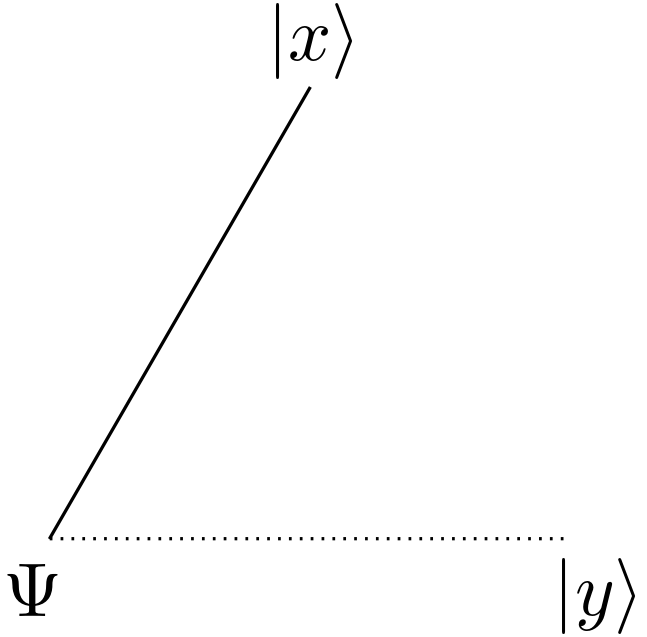}
    \caption{The input to the algorithm~$\Aconvert$ specifies a Gaussian state~$\Psi$, $x\in \{0,1\}^n$ such that~$\langle x,\Psi\rangle\neq 0$, the value~$r=\langle x,\Psi\rangle$ and an element~$y\in \{0,1\}^n$ such that~$\langle y,\Psi\rangle\neq 0$.  The value of~$\langle y,\Psi\rangle$ is not given. 
    }
    \label{fig:converta}
\end{subfigure}
\hfill
\begin{subfigure}[t]{0.31\textwidth}
\centering
\includegraphics[height=3cm]{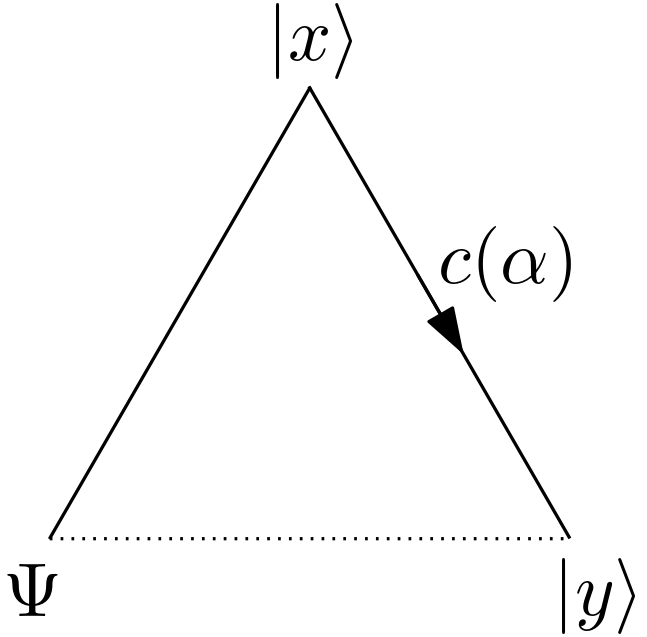}
    \caption{The algorithm  applies the subroutine~$\Arelate$ to find~$(\alpha,\vartheta)$ such that $c(\alpha)\ket{y}=e^{i\vartheta}\ket{x}$. In particular, after this step, the value $\langle x,c(\alpha)y\rangle=e^{i\vartheta}$ is known and it is non-zero. 
    }
    \label{fig:convertb}
\end{subfigure}
\hfill
\begin{subfigure}[t]{0.31\textwidth}
\centering
\includegraphics[height=3cm]{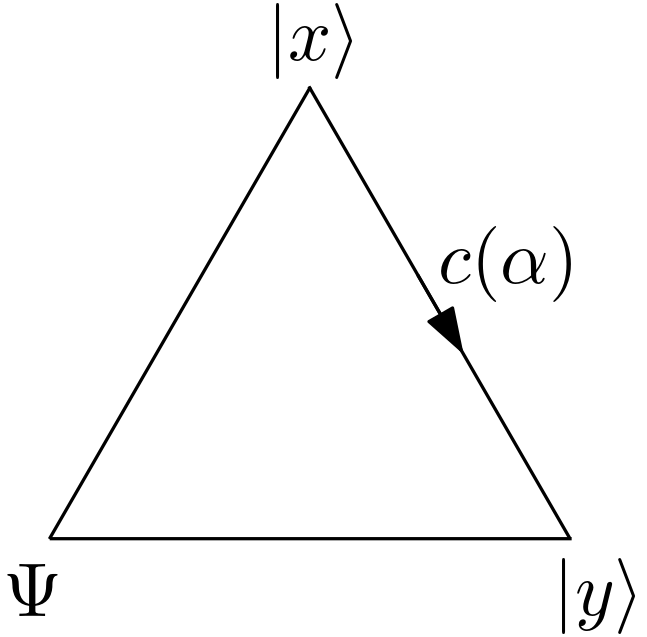}
    \caption{The algorithm then applies the subroutine~$\Aoverlaptriple$ to compute~$w=\langle y, \Psi \rangle$. The triple $(\Gamma,y,w)$ is a valid description of~$\Psi$.}
    \label{fig:convert}
\end{subfigure}       
\caption{An illustration of the algorithm~$\Aconvert$. Dotted lines represent inner products that are non-zero, but that are not provided / have not yet been computed by the algorithm.}
\label{fig:aoverlapillustration}
\end{figure}

\begin{lemma}
The algorithm~$\Aconvert:\desc_n\times \{0,1\}^n\rightarrow\desc_n$ given in Fig.~\ref{fig:Aconvert} runs in time~$O(n^3)$. 
Assume that~$d\in\desc_n$ and~$y\in\sbin^n$ satisfy~$\langle y, \Psi(d)  \rangle \neq 0$. 
Then
\begin{align}
\label{eq:firstpartclaimaconvert0}
    \Psi(\Aconvert(d,y)) = \Psi(d) \ .
\end{align}
Furthermore, denoting the output of~$\Aconvert(d,y)$ by~$d'=(\cov',y',s')$ we have
\begin{align}
y'=y\label{eq:firstpartclaimaconvert}
\end{align}
as well as
\begin{align}
|s'|^2 =  |\langle y', \Psi(d) \rangle|^2 = |\langle y, \Psi(d) \rangle|^2 \ . \label{eq:normpreservationclaimaconvert}
\end{align}
\end{lemma}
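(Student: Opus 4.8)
The plan is to verify directly that the pseudocode in Figure~\ref{fig:Aconvert} does what is claimed, step by step. The algorithm first calls $\Arelate(y,x)$ to obtain $(\alpha,\vartheta)$ with $c(\alpha)\ket{y}=e^{i\vartheta}\ket{x}$; it then sets up the covariance matrices $\cov_0=\cov(\Psi(d))$, $\cov_1=\cov(\ket{x})$, $\cov_2=\cov(\ket{y})$, the overlap $u=\overline{r}=\langle\Psi(d),x\rangle$, the overlap $v=e^{i\vartheta}$, and feeds these to $\Aoverlaptriple$. The first thing I would check is that the inputs to $\Aoverlaptriple$ satisfy its preconditions: the three states $\Psi(d),\ket{x},\ket{y}$ must have the same parity $\sigma$, the quantity $\alpha$ must lie in $\sbin^{2n}_+$, and the two supplied overlaps must be non-zero. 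Parity agreement holds because $\langle x,\Psi(d)\rangle\neq 0$ and $\langle y,\Psi(d)\rangle\neq 0$ force $\ket{x}$ and $\ket{y}$ to lie in the same parity sector as $\Psi(d)$; that $\alpha$ has even weight follows from the fact that $c(\alpha)$ intertwines two states of the same parity (equivalently, from the explicit form $\alpha_{2j-1}=x_j\oplus y_j$ together with $|x|\equiv|y|\bmod 2$). Non-vanishing of $u$ is the hypothesis $r\neq 0$; non-vanishing of $v=e^{i\vartheta}$ is immediate.

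\textbf{Next I would identify the output of $\Aoverlaptriple$ with the desired overlap.} By the specification of $\Aoverlaptriple$ (which is just formula~\eqref{eq:threestatemajoranaexpr} solved for the third overlap), calling it with $(\cov_0,\cov_1,\cov_2,\alpha,u,v)$ returns
\begin{align}
w=\langle \Phi_2,\Phi_0\rangle \qquad\text{where}\qquad \Phi_0=\Psi(d),\ \Phi_1=\ket{x},\ \Phi_2=\ket{y},
\end{align}
provided that $u=\langle\Phi_0,\Phi_1\rangle$ and $v=\langle\Phi_1,c(\alpha)\Phi_2\rangle$. We have $u=\overline{r}=\overline{\langle x,\Psi(d)\rangle}=\langle\Psi(d),x\rangle$ by condition~(iii) of a valid description, and $v=\langle x,c(\alpha)y\rangle=\overline{\langle c(\alpha)y,x\rangle}=\overline{\langle e^{i\vartheta}x,x\rangle}=e^{-i\vartheta}$ — so here I must be careful about which direction $\Arelate$ is called in and conjugate correctly; a quick check is whether the pseudocode sets $v\leftarrow e^{i\vartheta}$ or $v\leftarrow e^{-i\vartheta}$, and I would reconcile the statement $c(\alpha)\ket{y}=e^{i\vartheta}\ket{x}$ with the convention used for $v$ in $\Aoverlaptriple$. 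Assuming this bookkeeping is consistent, we get $w=\langle \ket{y},\Psi(d)\rangle=\langle y,\Psi(d)\rangle$, which is exactly the third coordinate needed. The output triple is $(\cov,y,w)$: its first entry is $\cov(\Psi(d))$ unchanged, its second entry is $y$ — giving~\eqref{eq:firstpartclaimaconvert} — and its third entry is $w=\langle y,\Psi(d)\rangle$. Hence $(\cov,y,w)$ satisfies conditions (i)–(iii) of a valid description of the \emph{same} state $\Psi(d)$, which gives~\eqref{eq:firstpartclaimaconvert0}, and $|s'|^2=|w|^2=|\langle y,\Psi(d)\rangle|^2$ gives~\eqref{eq:normpreservationclaimaconvert}.

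\textbf{For the runtime}, $\Arelate$ costs $O(n)$, assembling the covariance matrices $\cov(\ket{x})$, $\cov(\ket{y})$ via~\eqref{eq:covariancematrixMx} costs $O(n)$, and the single call to $\Aoverlaptriple$ — which evaluates a Pfaffian of a matrix of size $O(n)$ — costs $O(n^3)$; the dominant term is $O(n^3)$. I would also remark that the validity of $(\cov,y,w)$ as a description does \emph{not} by itself give the ``good description'' property $|w|^2\geq 2^{-n}$, consistent with the fact that the lemma only asserts~\eqref{eq:normpreservationclaimaconvert} and not goodness; goodness is claimed only for the outputs of $\Aevolve$ and $\Apostmeasure$, which internally arrange the reference string via $\Asupport$.

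\textbf{The main obstacle} I anticipate is purely the phase/conjugation bookkeeping: matching the direction in which $\Arelate(y,x)$ versus $\Arelate(x,y)$ is invoked, the placement of the complex conjugate in $u=\overline{r}$, and the sign of $\vartheta$ in $v$, so that the three overlaps fed into~\eqref{eq:threestatemajoranaexpr} are genuinely the cyclically-consistent triple $\langle\Phi_0,\Phi_1\rangle\langle\Phi_1,c(\alpha)\Phi_2\rangle\langle\Phi_2,\Phi_0\rangle$ and not some conjugated or transposed variant. Everything else is a direct unwinding of definitions.
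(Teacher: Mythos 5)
Your proposal is correct and follows essentially the same route as the paper's proof: check the preconditions of $\Aoverlaptriple$ (equal parities, even weight of $\alpha$, non-vanishing of $u$ and $v$), identify $w$ with $\langle y,\Psi(d)\rangle$, and bound the runtime by the Pfaffian computation. The conjugation worry you flag resolves immediately: since $c(\alpha)\ket{y}=e^{i\vartheta}\ket{x}$ and the inner product is linear in its second argument, $\langle x,c(\alpha)y\rangle=\langle x,e^{i\vartheta}x\rangle=e^{i\vartheta}$ directly (your intermediate computation via $\overline{\langle c(\alpha)y,x\rangle}$ drops a conjugate and should also land on $e^{i\vartheta}$), which matches the pseudocode's assignment $v\leftarrow e^{i\vartheta}$.
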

\begin{proof}
Let us denote the input to~$\Aconvert$ by~$(d,y)$, where~$d=(\cov,x,r)\in \desc_n$ and~$y\in \{0,1\}^n$. 
Then 
\begin{align}
\langle x,\Psi(d)\rangle \neq 0\label{eq:xpsideneqzero}
\end{align}
since~$d$ is a description of~$\Psi(d)$. 
Furthermore, for~$(\alpha,\vartheta)$ as defined in line~\ref{alg:alphathetadef} we have
\begin{align}
\langle x,c(\alpha)y\rangle =e^{i\vartheta}\neq 0\label{eq:varthetaxalphay}
\end{align}
by definition of the algorithm~$\Arelate$.
In line~\ref{alg:triplematricesAconvert} of~$\Aconvert$, the matrices~$\cov_j$, $j\in[3]$ are the covariance matrices of the states
\begin{align}
(\Phi_0,\Phi_1,\Phi_2)&=(\Psi(d),\ket{x},\ket{y})\ .
\end{align}
We note that Eq.~\eqref{eq:xpsideneqzero} and the assumption~$\langle y, \Psi(d)  \rangle \neq 0$ imply that these three states have identical parity. The value~$w$ computed in line~\ref{alg:triplematricess} using~$\Aoverlaptriple$ is equal to the overlap
\begin{align}
w&=\langle \Phi_2,\Phi_0\rangle=\langle y,\Psi(d)\rangle\ ,\label{eq:sdefinitionphizerotwo}
\end{align}
because
\begin{align}
u&=\overline{r}=\overline{\langle x,\Psi(d)\rangle}=\langle \Psi(d), x \rangle=\langle \Phi_0,\Phi_1\rangle\neq 0\\
v&=e^{i\vartheta}=\langle x,c(\alpha)y\rangle=\langle \Phi_1,c(\alpha)\Phi_2\rangle\neq 0
\end{align}
by Eqs.~\eqref{eq:xpsideneqzero} and~\eqref{eq:varthetaxalphay}.  Eq.~\eqref{eq:sdefinitionphizerotwo} together with the assumption~$\langle y,\Psi(d)\rangle\ \neq 0$ show that the output~$(\cov,y,w)$ is a description of~$\Psi(d)$. This completes the proof of Eq.~\eqref{eq:firstpartclaimaconvert0}.

Eq.~\eqref{eq:normpreservationclaimaconvert} is trivially satisfied because
\begin{align}
    s'=w=\langle y,\Psi(d)\rangle\ .
\end{align}

The complexity of the algorithm is dominated by~$\Aoverlaptriple$, which takes time~$O(n^3)$. 
\end{proof}

\subsection{Computing overlaps and descriptions of evolved/measured states\label{sec:overlapevolvemeasure}}

Based on the subroutines~$\Asupport$, $\Arelate$, $\Aoverlaptriple$ and~$\Aconvert$, we can now describe our main algorithms~$\Aoverlap$, $\Aevolve$, $\Ameasprob$ and~$\Ameasure$ for overlaps, Gaussian unitary evolution, to compute the outcome probability and the post-measurement state when measuring the occupation number, respectively. We give pseudocode for each algorithm and establish the associated claims.

We give pseudocode for the algorithm~$\Aoverlap$ in Fig.~\ref{fig:Aoverlap} and we illustrate it in Fig.~\ref{fig:aoverlap}.

\begin{figure}[H]
\begin{mdframed}[
    linecolor=black,
    linewidth=0.5pt,
    roundcorner=2pt,
    backgroundcolor=white, 
    userdefinedwidth=\textwidth,
]
\begin{algorithmic}[1]
\Require{$d_1=(\cov_1,x_1,r_1), d_2=(\cov_2,x_2,r_2)\in\desc_n$}
        \Function{$\Aoverlap$}{$d_1$,$d_2$}
        \State{$\sigma_1\leftarrow \Pf(\cov_1)$}
        \Comment{ compute the parity~$\sigma_j$ of~$\Psi(d_j)$}
        \State{$\sigma_2\leftarrow \Pf(\cov_2)$}
        \If{$\sigma_1\neq \sigma_2$}\label{alg:checkequalparity}
            \State{\textbf{return} 0}
            \Comment{states with different parities have zero overlap}
        \EndIf
        \State{$(\alpha,\vartheta)\leftarrow \Arelate(x_2,x_1)$\label{alg:alphanewdef}}\label{alg:alphavarthetacomp}
        \Comment{$(\alpha, \vartheta)$ satisfies~$c(\alpha)\ket{x_2} = e^{i\vartheta} \ket{x_1}$}
       \State{$\cov'_0\leftarrow \cov_1$, $\cov'_1\leftarrow \cov(\ket{x_1})$, $\cov'_2\leftarrow \cov_2$\label{alg:triplematricesAconvertnew}}
       \Comment{covariance matrices of~$\Psi(d_1), \ket{x_1}$ and~$\Psi(d_2)$}
          \State{$u\leftarrow\overline{r}_1$}
          \Comment{$u=\langle \Psi(d_1),x_1\rangle$}
          \State{$v\leftarrow e^{i\vartheta}r_2$}
          \Comment{$v=\langle x_1,c(\alpha)\Psi(d_2)\rangle$}
           \State $w\leftarrow \Aoverlaptriple(\cov'_0,\cov'_1,\cov'_2,\alpha,u,v)$ \label{alg:triplematricesnew}
           \Comment{compute the overlap~$\langle \Psi(d_2), \Psi(d_1) \rangle$}
            \State \textbf{return}~$\overline{w}$ 
        \EndFunction
\end{algorithmic}
\end{mdframed}
\caption{The algorithm~$\Aoverlap$ takes descriptions~$d_1,d_2\in\desc_n$ 
and outputs the overlap~$\langle \Psi(d_1),\Psi(d_2)\rangle$.  
\label{fig:Aoverlap}}
\end{figure}

\begin{figure}
\centering
\begin{subfigure}[t]{0.31\textwidth}
\centering
    \includegraphics[height=3cm]{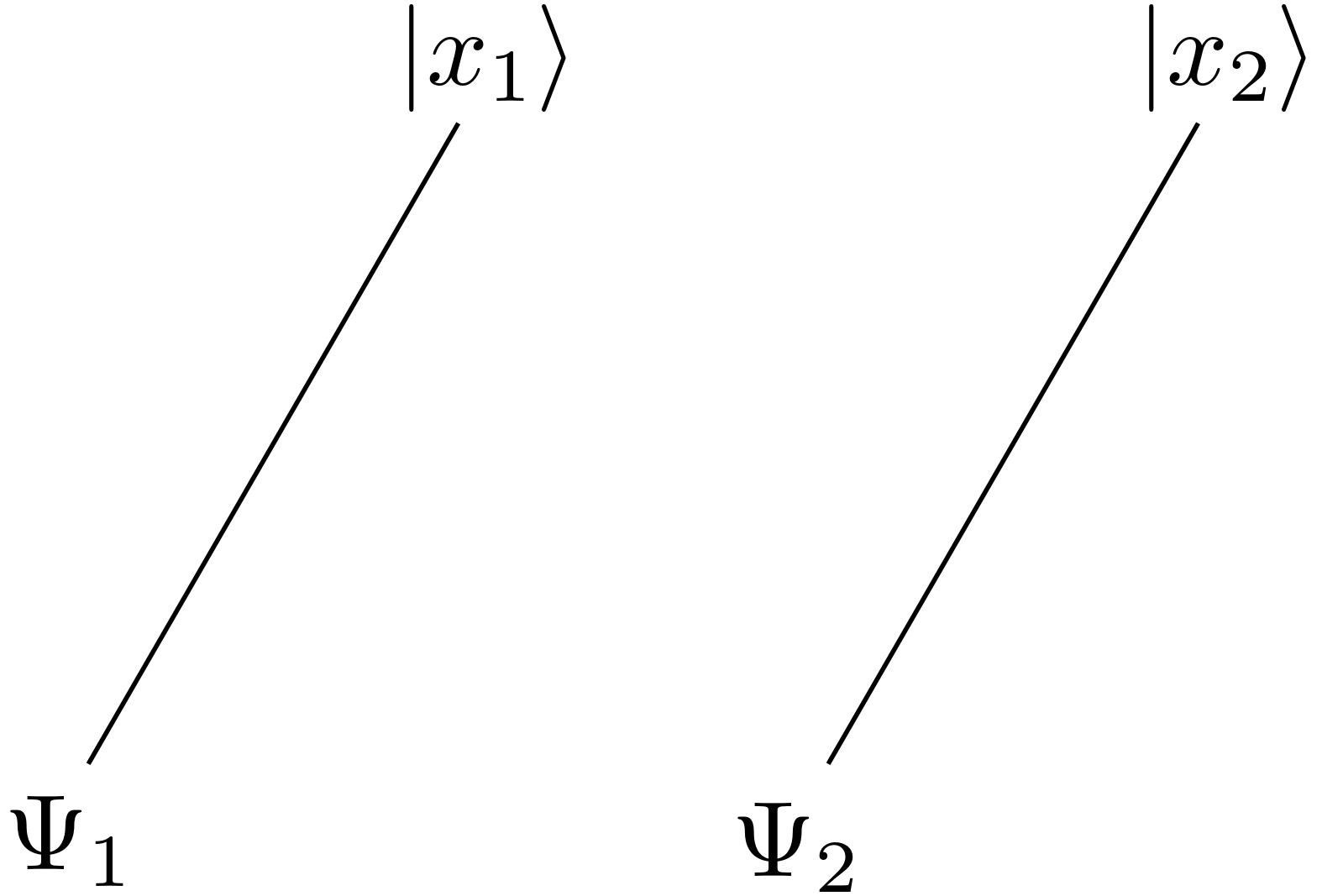}
    \caption{The input to the algorithm~$\Aoverlap$ consists of (descriptions of) two Gaussian~$\Psi_1,\Psi_2$, $x_1,x_2\in\{0,1\}^n$ and the overlaps~$r_j=\langle x_j,\Psi_j\rangle$, $j\in [2]$. The latter are both assumed to be non-zero.
    }
    \label{fig:firstoverlaptriple}
\end{subfigure}
\hfill
\begin{subfigure}[t]{0.31\textwidth}
\centering
\includegraphics[height=3cm]{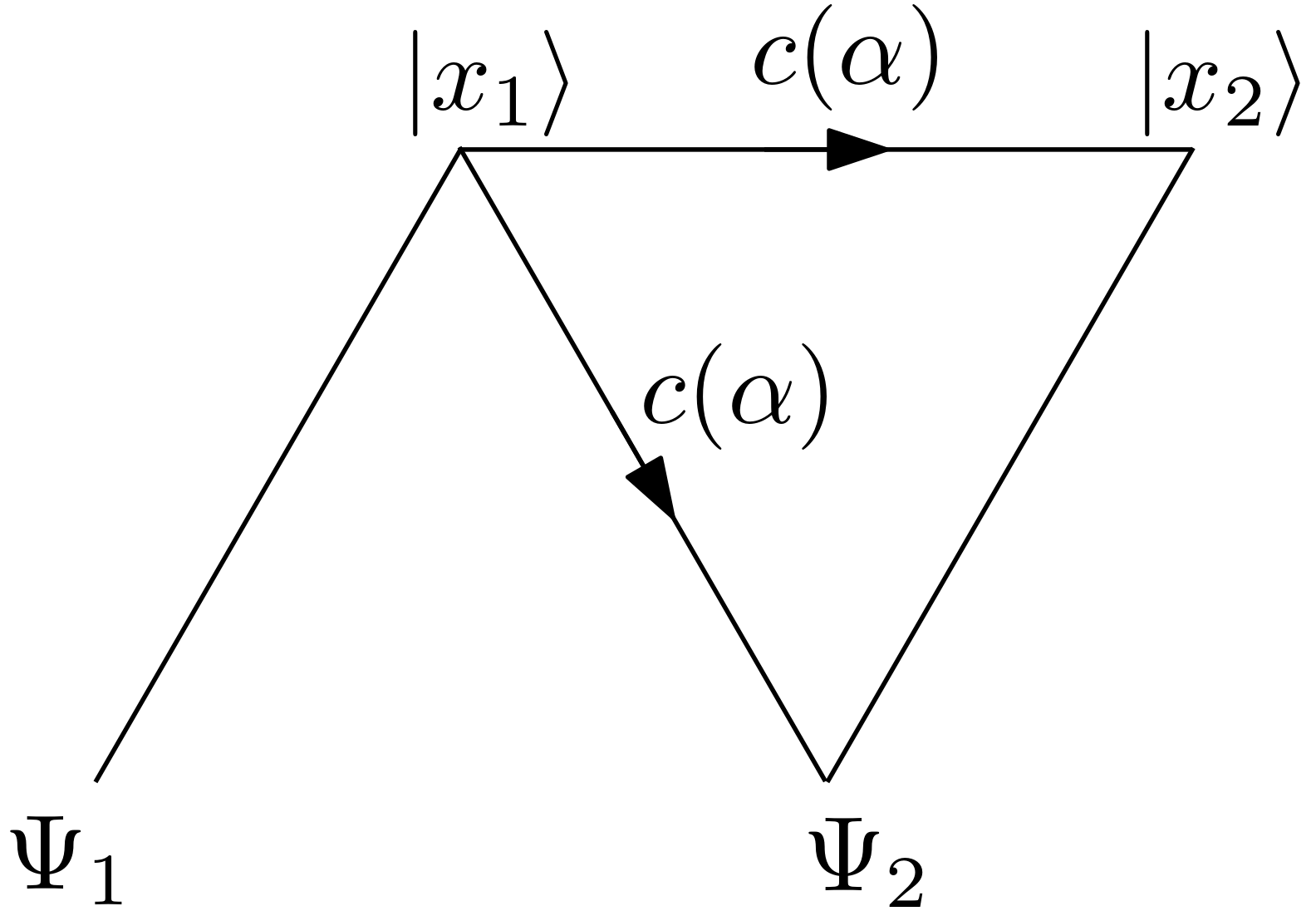}
    \caption{The algorithm uses the subroutine~$\Arelate$ 
    to find~$(\alpha,\vartheta)$ such that $c(\alpha)\ket{x_2}=e^{i\vartheta}\ket{x_1}$. In particular, this means that the value $\langle x_1,c(\alpha)x_2\rangle=e^{i\vartheta}$ is computed, and it is non-zero. Furthermore, this implies that $\langle x_1,c(\alpha)\Psi_2\rangle=\langle c(\alpha)x_1,\Psi_2\rangle=e^{i\vartheta}\langle x_2,\Psi_2\rangle=e^{i\vartheta}r_2$ is also known and non-zero.
    }
    \label{fig:secondoverlaptriple}
\end{subfigure}
\hfill
\begin{subfigure}[t]{0.31\textwidth}
\centering
\includegraphics[height=3cm]{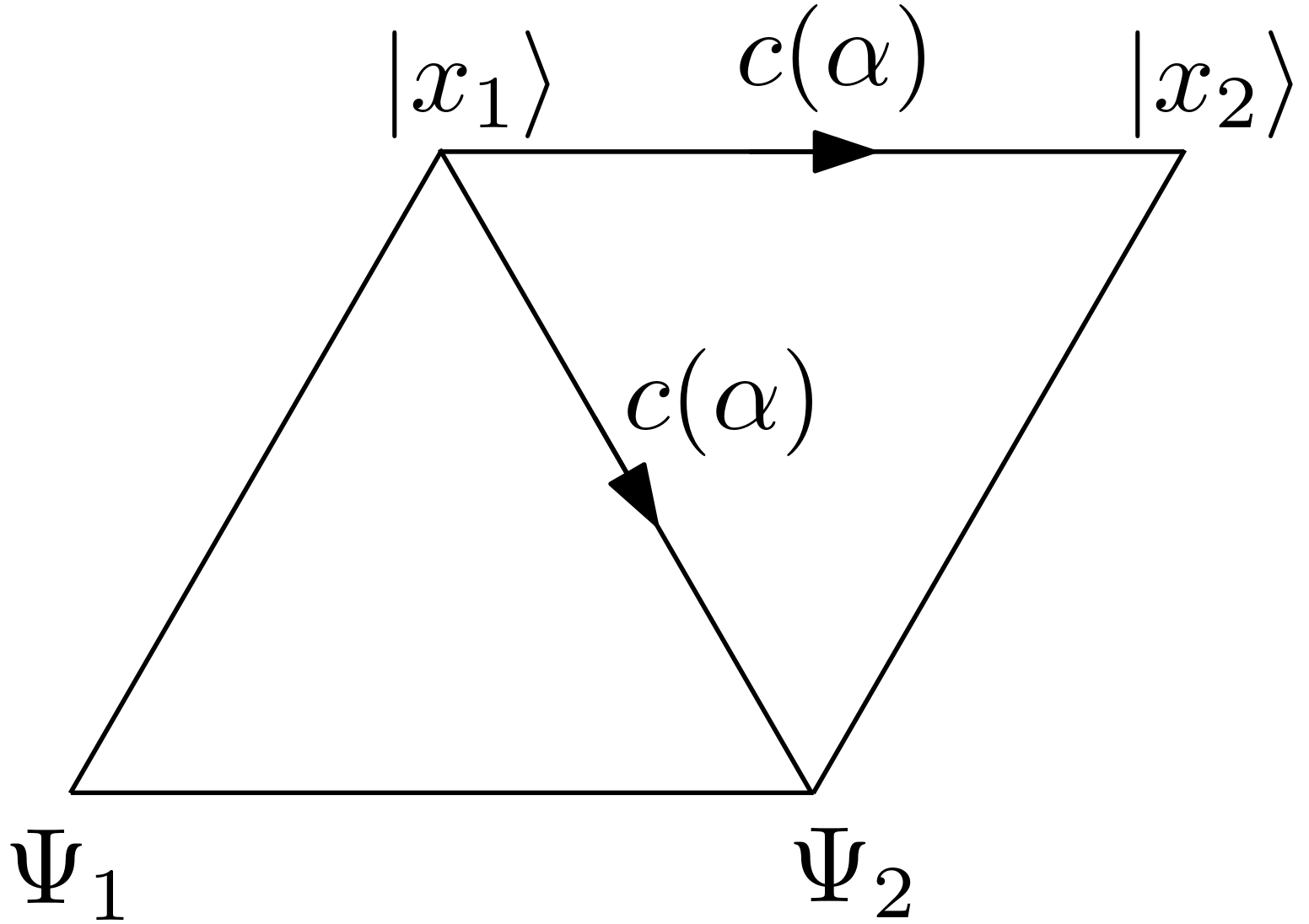}
    \caption{In the last step of the algorithm, the subroutine $\Aoverlaptriple$ is applied
    to complete a triangle: This amounts to computing $w=\langle \Psi_2,\Psi_1\rangle$. The algorithm returns the complex conjugate~$\overline{w}=\langle \Psi_1,\Psi_2\rangle$.}
    \label{fig:secondoverlaptriple}
\end{subfigure}       
\caption{An illustration of the algorithm~$\Aoverlap$.}
\label{fig:aoverlap}
\end{figure}

\begin{lemma}
\label{lem:overlap}
The algorithm~$\Aoverlap:\desc_n\times \desc_n\rightarrow\mathbb{C}$ given in Fig.~\ref{fig:Aoverlap} runs in time~$O(n^3)$. It satisfies
\begin{align}
\Aoverlap(d_1,d_2)&=\langle \Psi(d_1),\Psi(d_2)\rangle\qquad\textrm{ for all }\qquad d_1,d_2\in\desc_n\ .\label{eq:overlapidentitydonedtwo}
\end{align}
\end{lemma}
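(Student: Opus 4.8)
The plan is to check the pseudocode of $\Aoverlap$ (Fig.~\ref{fig:Aoverlap}) line by line, verifying that each variable holds exactly the quantity asserted in its comment, and then to invoke the already-established correctness of the subroutines $\Arelate$ and $\Aoverlaptriple$. The one fact I would record at the outset is that the Pfaffian of the covariance matrix of a Gaussian state is precisely its parity: applying the two-state formula~\eqref{eq:innerproductoftwomatrices} with $\Phi_1=\Phi_2=\Psi$ and using $\Pf(2\cov)=2^{n}\Pf(\cov)$ gives $1=\sigma 2^{-n}\Pf(2\cov(\Psi))=\sigma\,\Pf(\cov(\Psi))$, so $\Pf(\cov(\Psi))=\sigma\in\{\pm1\}$. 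Hence the quantities $\sigma_1,\sigma_2$ computed in the first lines of $\Aoverlap$ are the parities of $\Psi(d_1),\Psi(d_2)$.

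First I would dispose of the branch $\sigma_1\neq\sigma_2$: here $\Psi(d_1)$ and $\Psi(d_2)$ are eigenvectors of the parity operator $P$ with distinct eigenvalues, hence orthogonal, so $\langle\Psi(d_1),\Psi(d_2)\rangle=0$ agrees with the returned value. For the main branch $\sigma_1=\sigma_2=:\sigma$, write $\Phi_0=\Psi(d_1)$, $\Phi_1=\ket{x_1}$, $\Phi_2=\Psi(d_2)$ for the triple whose covariance matrices $\cov'_0,\cov'_1,\cov'_2$ are assembled in the algorithm. Since $d_j$ is a description, $r_j=\langle x_j,\Psi(d_j)\rangle\neq0$, which forces $\ket{x_j}$ to have the same parity as $\Psi(d_j)$; combined with $\sigma_1=\sigma_2$ this gives that $\Phi_0,\Phi_1,\Phi_2$ all have parity $\sigma$ and that $\Pf(\cov(\ket{x_1}))=\sigma$, so the hypotheses $\sigma=\Pf(\cov'_0)=\Pf(\cov'_1)=\Pf(\cov'_2)$ of $\Aoverlaptriple$ hold. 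It also follows that $|x_1|\equiv|x_2|\bmod 2$, so the string $\alpha$ returned by $\Arelate(x_2,x_1)$ has even weight, as required by $\Aoverlaptriple$. I then check the two overlap arguments: $u=\overline{r_1}=\overline{\langle x_1,\Psi(d_1)\rangle}=\langle\Psi(d_1),x_1\rangle=\langle\Phi_0,\Phi_1\rangle\neq0$; and, using the defining property $c(\alpha)\ket{x_2}=e^{i\vartheta}\ket{x_1}$ of $\Arelate$ together with the fact that each $c_j$, hence also $c(\alpha)$, is a self-adjoint involution (so $c(\alpha)\ket{x_1}=e^{-i\vartheta}\ket{x_2}$), one gets $v:=e^{i\vartheta}r_2=e^{i\vartheta}\langle x_2,\Psi(d_2)\rangle=\langle c(\alpha)x_1,\Psi(d_2)\rangle=\langle x_1,c(\alpha)\Psi(d_2)\rangle=\langle\Phi_1,c(\alpha)\Phi_2\rangle\neq0$. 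With all preconditions of $\Aoverlaptriple$ met, its correctness (via Eq.~\eqref{eq:threestatemajoranaexpr}) yields $w=\langle\Phi_2,\Phi_0\rangle=\langle\Psi(d_2),\Psi(d_1)\rangle$, and therefore $\overline{w}=\langle\Psi(d_1),\Psi(d_2)\rangle$, which is the returned value; this proves~\eqref{eq:overlapidentitydonedtwo}.

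For the runtime, $\Arelate$ costs $O(n)$, each Pfaffian evaluation $\Pf(\cov_j)$ costs $O(n^3)$, the single call to $\Aoverlaptriple$ costs $O(n^3)$ (it computes a Pfaffian of a matrix of size $O(n)$), and the remaining steps — writing down $\cov(\ket{x_1})$ and forming $u,v$ — are $O(n^2)$ or cheaper; the total is thus $O(n^3)$. I do not expect a substantive obstacle: the only delicate point is the bookkeeping of complex conjugates together with the direction of the call $\Arelate(x_2,x_1)$ (rather than $\Arelate(x_1,x_2)$), which is precisely what makes $v$ come out as an overlap against the reference state $\ket{x_1}$ used in $d_1$ — matching the reference state of the triple fed to $\Aoverlaptriple$ — rather than against $\ket{x_2}$.
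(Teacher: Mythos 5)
Your proposal is correct and follows essentially the same route as the paper's proof: reduce the different-parity case to orthogonality of parity eigenvectors, feed the triple $(\Psi(d_1),\ket{x_1},\Psi(d_2))$ to $\Aoverlaptriple$ after verifying via $\Arelate$ and the self-adjointness of $c(\alpha)$ that $u=\langle\Phi_0,\Phi_1\rangle$ and $v=\langle\Phi_1,c(\alpha)\Phi_2\rangle$ are the required non-zero overlaps, and conjugate the result. Your explicit justifications that $\Pf(\cov(\Psi))$ equals the parity and that $\alpha$ has even weight are correct details the paper leaves implicit.
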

\begin{proof}
Let~$d_j=(\cov_j,x_j,r_j)\in\desc_n$ for~$j\in[2]$. Then 
\begin{align}
r_j&=\langle x_j,\Psi(d_j)\rangle\neq 0\qquad\textrm{ for }\qquad j\in[2] \ ,
\label{eq:rjdefinitionxjdj}
\end{align}
by assumption.  

Line~\ref{alg:checkequalparity} 
treats the case where~$\Psi(d_1)$ and~$\Psi(d_2)$ have different parity, and are thus orthogonal. Starting from line~\ref{alg:alphavarthetacomp}, we can hence assume that the parities~$\sigma_1, \sigma_2$ of~$\Psi(d_1),\Psi(d_2)$ are identical, $\sigma=\sigma_1=\sigma_2$. By 
Eq.~\eqref{eq:rjdefinitionxjdj}, this implies that both~$\ket{x_1}$ and~$\ket{x_2}$ also have parity~$\sigma$, that is,
\begin{align}    \parity(\ket{x_1})=\parity(\ket{x_2})=\parity(\ket{\Psi(d_1)})=\parity(\ket{\Psi(d_2)})\ .\label{eq:parityequalm}
\end{align}
By definition of~$\Arelate$, the pair~$(\alpha,\vartheta)$ computed in line~\ref{alg:alphanewdef} satisfies 
\begin{align}
c(\alpha)\ket{x_2}=e^{i\vartheta}\ket{x_1}\ .\label{eq:alphax1x2}
\end{align}
Consider the triple of states
\begin{align}
(\Phi_0,\Phi_1,\Phi_2)&=(\Psi(d_1),\ket{x_1},\Psi(d_2))\ .
\end{align}
Then the matrices~$\cov_j'$, $j\in[3]$ defined in line~\ref{alg:triplematricesAconvertnew} of the algorithm are the covariance matrices of~$\Phi_j$, $j\in \{0,1,2\}$. We have 
\begin{align}
u&=\overline{r}_1=\overline{\langle x_1,\Psi(d_1)\rangle}=\langle \Psi(d_1),x_1\rangle=\langle  \Phi_0,\Phi_1\rangle \neq 0\ ,
\end{align}
by Eq.~\eqref{eq:rjdefinitionxjdj}, and similarly
\begin{alignat}{2}
v&=e^{i\vartheta}r_2\\
&= e^{i\vartheta}\langle x_2,\Psi(d_2)\rangle\\
&=\langle e^{-i\vartheta} x_2,\Psi(d_2)\rangle\\
&=\langle c(\alpha)x_1,\Psi(d_2)\rangle&&\qquad\textrm{ with~\eqref{eq:alphax1x2}}\\
&=\langle x_1,c(\alpha)\Psi(d_2)\rangle&&\qquad\textrm{ because~$c(\alpha)$ is self-adjoint}\\
&=\langle \Phi_1,c(\alpha)\Phi_2\rangle\\
&\neq 0&&\qquad\textrm{ because~$r_2\neq 0$ by Eq.~\eqref{eq:rjdefinitionxjdj}\ .} 
\end{alignat}
Furthermore, by Eq.~\eqref{eq:parityequalm} the states~$(\Phi_0,\Phi_1,\Phi_2)$ have identical parity. 
It thus follows from the properties of~$\Aoverlaptriple$ that 
the quantity~$w$ computed in step line~\ref{alg:triplematricesnew} is equal to
\begin{align}
w&=\langle \Psi(d_2),\Psi(d_1)\rangle\ .
\end{align}
Since the output of the algorithm is the complex conjugate~$\overline{w}=\langle \Psi(d_1),\Psi(d_2)\rangle$, this  implies the claim~\eqref{eq:overlapidentitydonedtwo}.

The runtime of~$\Aoverlap$ is dominated by~$\Aoverlaptriple$, hence it is of order~$O(n^3)$. 
\end{proof}

We give pseudocode for the algorithm~$\Aevolve$ in Fig.~\ref{fig:Aevolve} and we illustrate it in Fig.~\ref{fig:aevolvefig}.

\begin{figure}[h]
\begin{mdframed}[
    linecolor=black,
    linewidth=0.5pt,
    roundcorner=2pt,
    backgroundcolor=white, 
    userdefinedwidth=\textwidth,
]
\begin{algorithmic}[1]
\Require{$R\in\mathsf{Gen}(O(2n))$}
\Require{$d=(\cov,x,r)\in\desc_n$}
        \Function{$\Aevolve$}{$R,d$}
           \State $\cov_0\leftarrow R\cov R^T$\label{alg:mzerodefm}
        \Comment{covariance matrix of~$U_R\Psi(d)$}
    \State $y\leftarrow\Asupport(\cov_0)$\label{alg:yReferenceForM0}
\Comment{gives~$y$ such that~$|\langle y,U_R\Psi(d)\rangle |^2\geq 2^{-n}$}
            \If{$R=R_{j,k}(\vartheta)$ 
 for~$j < k\in [2n], \vartheta\in\mathbb{R}$}\label{eq:computationlineone}
 \Comment{$R=R_{j,k}(\vartheta)$ is a Givens rotation}
                \If{$\cos^2(\vartheta/2) \geq 1/2$} \label{alg:gammaIfCond}
                    \State $z \leftarrow x$ \label{alg:xIsZ}
                    \State $s \leftarrow \cos(\vartheta/2)$ \label{alg:sIsCos}
                \Else
                    \State $z \leftarrow x \oplus e_j \oplus e_k$ \label{alg:xIsZOplusOplus}
                    \State $\beta \leftarrow \beta_j(x)+\beta_k(x)$ \label{alg:betakgeqj} 
                    \State $s \leftarrow e^{i\pi\beta} \sin(\vartheta/2)$ \label{alg:sIsPhaseBeta}
                    \EndIf
                \EndIf \label{eq:computationlineEndGivens} 
            \If{$R=R_j$ for~$j\in [2n]$} \label{eq:computationlineBeginReflection}
             \Comment{$R=R_j$ is a reflection}
              \State{$z\leftarrow x \oplus e_j$} \label{alg:RjX}
            \State{$\beta \leftarrow \beta_j(x)$} \label{alg:RjBeta}
              \State{$s\leftarrow e^{i\pi\beta}$} \label{alg:RjS}
            \EndIf\label{eq:computationlineend} \Comment{$(s,z)$ satisfies~$s=\langle z, U_Rx\rangle$, $|s|^2\geq 1/2$}
            \State $(\alpha,\gamma)\leftarrow \Arelate(y,z)$ \label{alg:overlapBeginFinal}
            \Comment{$(\alpha, \gamma)$ satisfies~$c(\alpha)\ket{y} = e^{i\gamma} \ket{z}$}
            \State $\cov_1\leftarrow R\cov(\ket{x})R^T$, $\cov_2\leftarrow \cov(\ket{y})$\label{alg:monetwodef}
            \Comment{covariance matrices of~$U_R\ket{x}$ and~$\ket{y}$}
            \State $u\leftarrow \overline{r}$ \label{alg:evolveU}
            \Comment{$u=\langle U_R\Psi,U_Rx\rangle$}
            \State $v\leftarrow e^{i\gamma}\overline{s}$ \label{alg:evolvev}
            \Comment{$v=\langle U_Rx,c(\alpha)y\rangle$}
            \State $w\leftarrow \Aoverlaptriple(\cov_0,\cov_1,\cov_2,\alpha,u,v)$ \label{alg:overlapEndFinal}
            \Comment{compute the overlap~$\langle y,U_R \Psi(d) \rangle$}
             \State \textbf{return} $(\cov_0,y,w)$ 
             \Comment{return a description of~$U_R \ket{\Psi(d)}$}
        \EndFunction
\end{algorithmic}
\end{mdframed}
\caption{The algorithm~$\Aevolve$ takes a description~$d\in \desc_n$ and an orthogonal matrix~$R \in \mathsf{Gen}(O(2n))$ associated with the Gaussian unitary~$U_R$ and computes a description for the state~$U_R \Psi(d)$. In this algorithm, the functions~$\beta_{s}:\sbin^n\rightarrow\mathbb{R}$ for~$s\in [n]$ are defined as~$\beta_s(x)= \eta_s(x)+\left(x_s - \frac{1}{2}\right) \cdot (s+1)$, $x\in \{0,1\}^n$ with~$\eta_s(x)$ given in Eq.~\eqref{eq:etajdef}.\label{fig:Aevolve}}
\end{figure}

\begin{figure}[h]
\centering
\begin{subfigure}[t]{0.31\textwidth}
\centering
    \includegraphics[height=3cm]{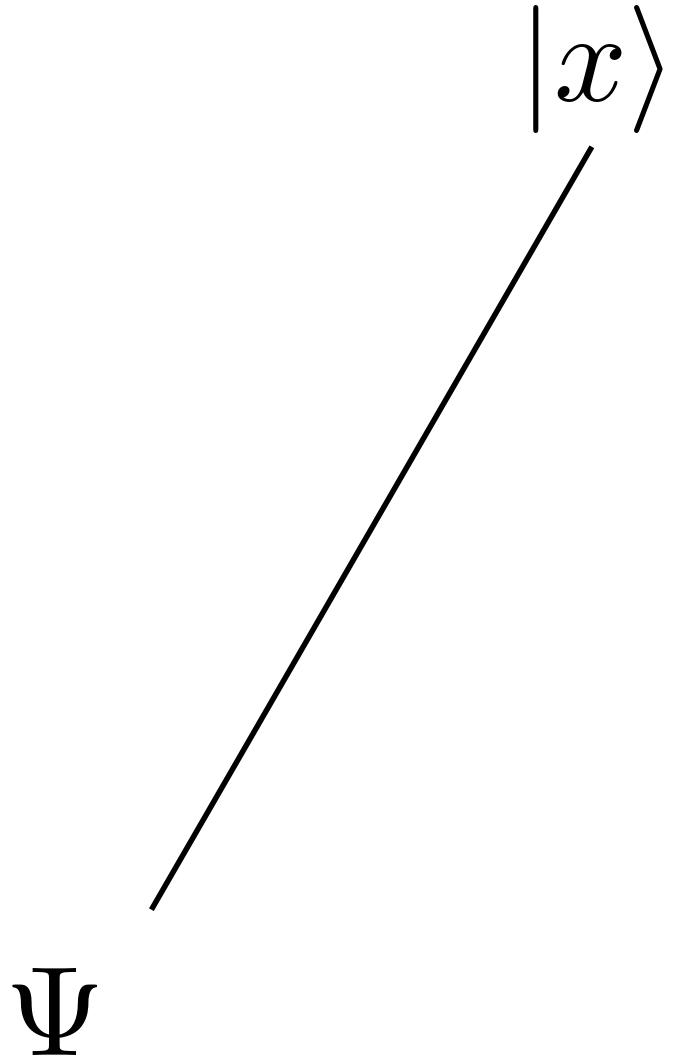}
    \caption{The input to the algorithm~$\Aevolve$  is (the description of) a Gaussian state~$\Psi$, $x\in \{0,1\}^n$ and the non-zero overlap~$r=\langle x,\Psi\rangle$, as well as an element~$R \in \mathsf{Gen}(O(2n))$ associated with a Gaussian unitary~$U_R$.}
    \label{fig:aevolvefig}
\end{subfigure}
\hfill
\begin{subfigure}[t]{0.31\textwidth}
\centering
\includegraphics[height=3cm]{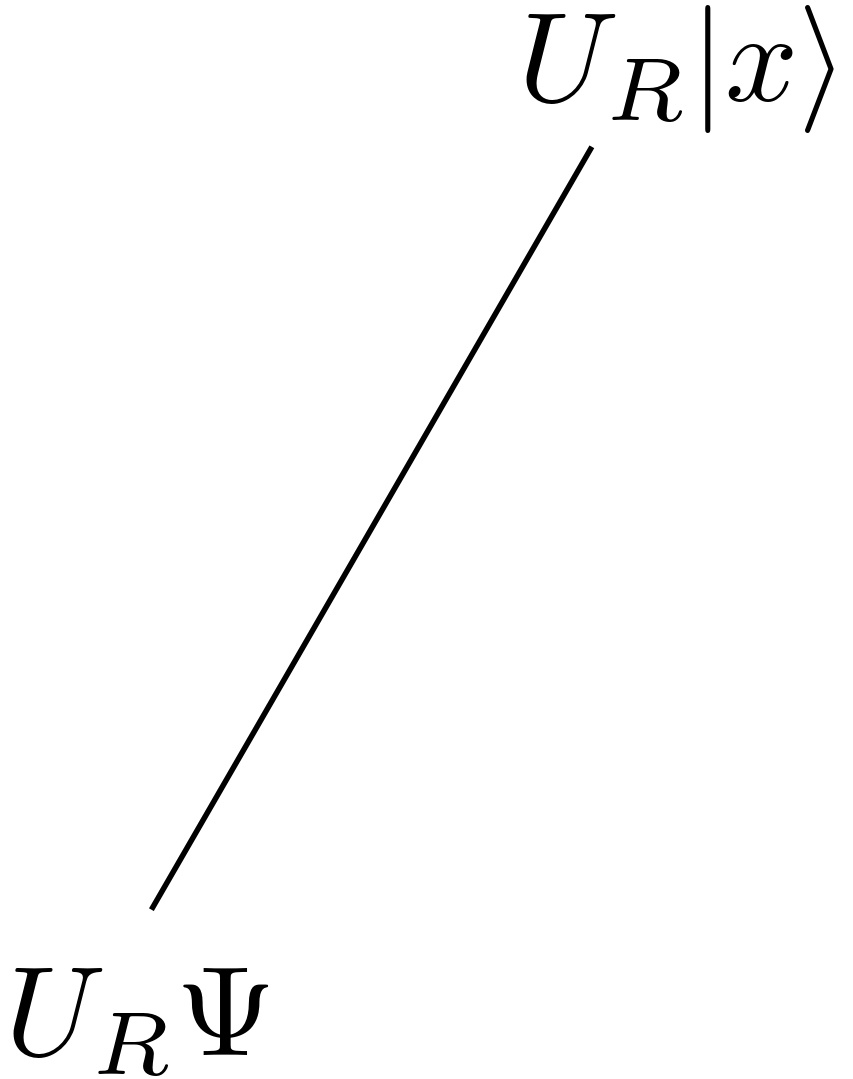}
    \caption{By unitarity of~$U_R$, the input data 
    also provides the inner product~$\langle U_Rx,U_R\Psi\rangle=\langle x,\Psi\rangle=r$, which is non-zero.  
    }
    \label{fig:aevolvefigtwo}
\end{subfigure}
\hfill
\begin{subfigure}[t]{0.31\textwidth}
\centering
\includegraphics[height=3cm]{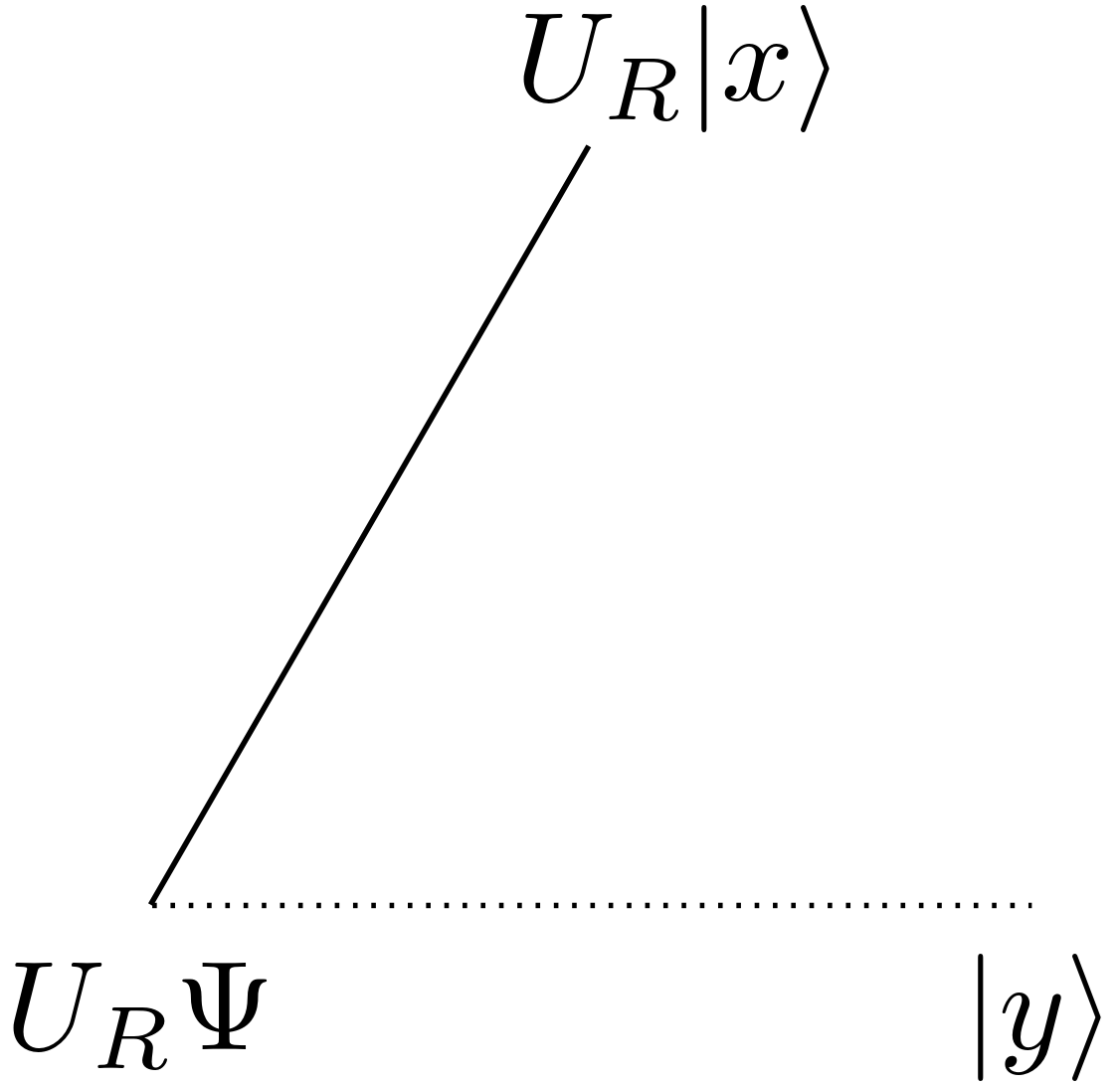}
\caption{The algorithm invokes the subroutine~$\Asupport$ applied to the covariance matrix~$\Gamma_0$ of the evolved state~$U_R\Psi$ in order to find an element~$y\in \{0,1\}^n$ such that~$\langle y,U_R\Psi\rangle\neq 0$. (The value of this inner product is not computed/available at this point.)   
    }
    \label{fig:aevolvefigthree}
\end{subfigure}\\
\vspace{4mm}
\begin{subfigure}[t]{0.31\textwidth}
\centering
    \includegraphics[height=3cm]{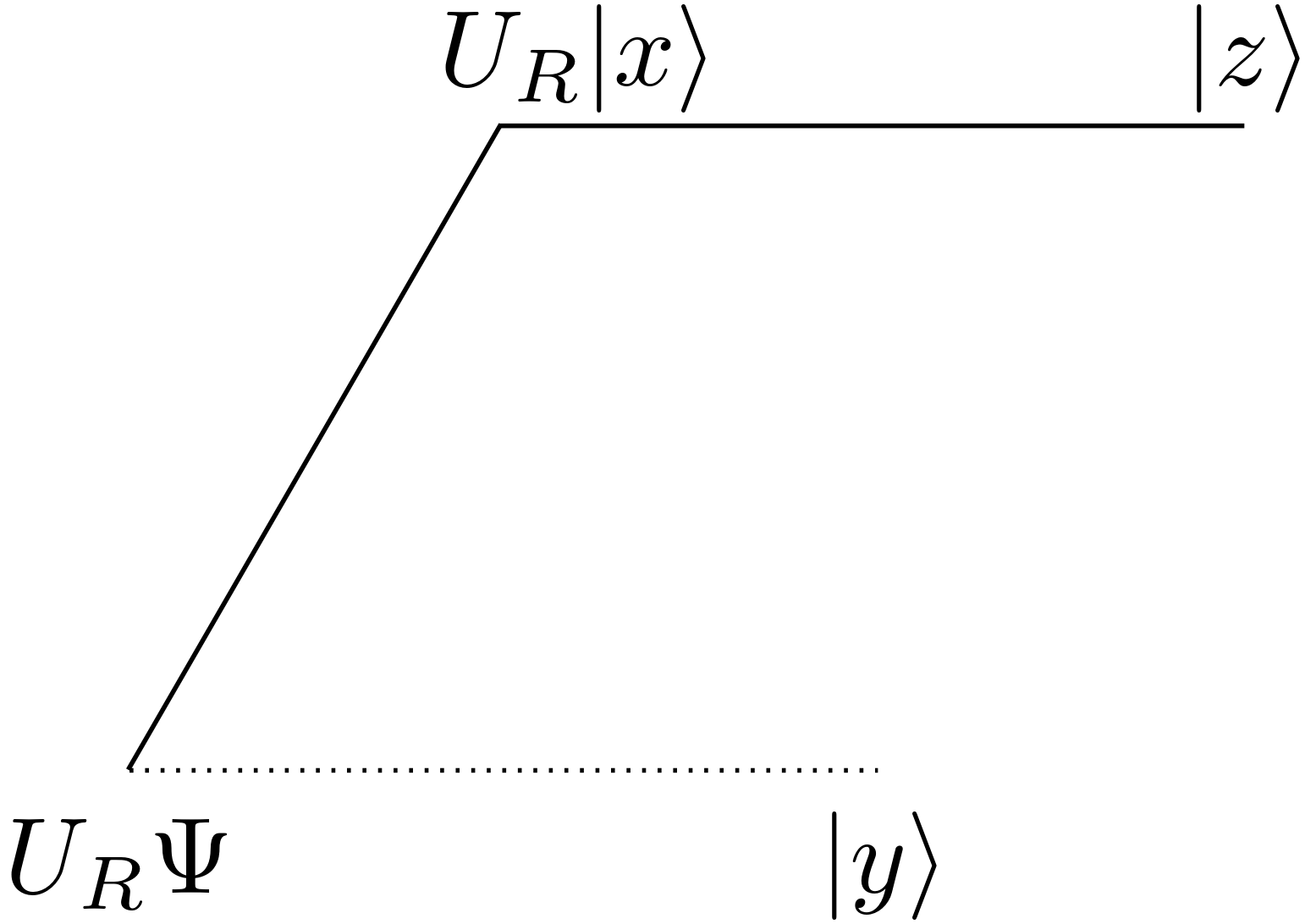}
    \caption{
    Using the fact the action of~$U_R$ for a generator $R\in \mathsf{Gen}(O(2n))$ is local, the algorithm determines an element~$z\in \{0,1\}^n$ such that~$s=\langle z,U_R x\rangle\neq 0$, and computes the value~$s$.
    }
    \label{fig:aevolvefigfour}
\end{subfigure}
\hfill
\begin{subfigure}[t]{0.31\textwidth}
\centering
\includegraphics[height=3cm]{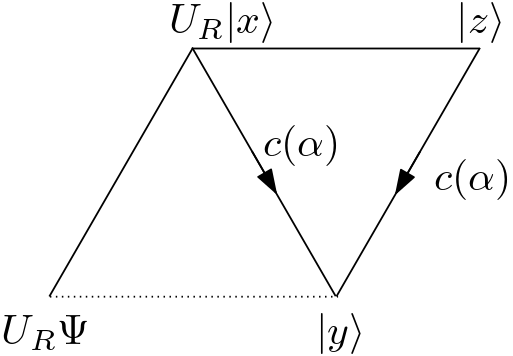}
    \caption{The algorithm then uses the subroutine~$\Arelate$ to find~$(\alpha,\gamma)$ such that $c(\alpha)\ket{y}=e^{i\gamma}\ket{z}$. This means that the inner product $\langle z,c(\alpha)y\rangle=e^{i\gamma}$ is known and non-zero. Since~$\langle z,U_R x\rangle$ is  known and non-zero (as ensured by the previous step~\eqref{fig:aevolvefigfour}), the value~$\langle U_Rx,c(\alpha)y\rangle=e^{i\gamma}\overline{\langle z,U_Rx\rangle}$ is also non-zero and can be computed.}
    \label{fig:aevolvefigfive}
\end{subfigure}
\hfill
\begin{subfigure}[t]{0.31\textwidth}
\centering
\includegraphics[height=3cm]{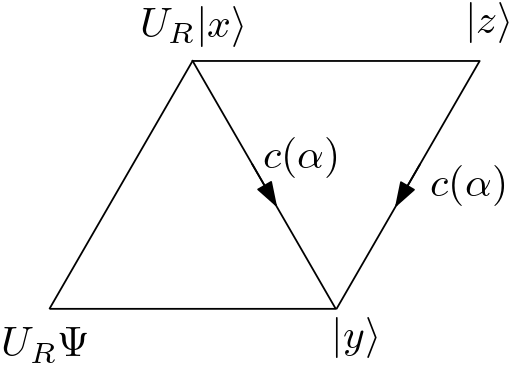}
    \caption{In the last step, the subroutine~$\Aoverlaptriple$ is used to compute the quantity~$w=\langle y,U_R\Psi\rangle$. It is non-zero by step~\eqref{fig:aevolvefigthree}. Thus~$(\Gamma_0,y,w)$ is a valid description of~$U_R\Psi$.}
    \label{fig:aevolvefigsix}
\end{subfigure}\\\
\caption{An illustration of the algorithm~$\Aevolve$. Dotted lines correspond to inner products whose value is non-zero, but has not been computed at that stage of the algorithm. }
\label{fig:aevolvefig}
\end{figure}

\begin{lemma}
    The algorithm~$\Aevolve : \mathsf{Gen}(O(2n)) \times \desc_n \rightarrow \desc_n$ given in Fig.~\ref{fig:Aevolve} runs in time~$O(n^3)$. Consider an arbitrary generator~$R\in \mathsf{Gen}(O(2n))$ and a description~$d \in \desc_n$. 
    Then
    \begin{align}
        \Psi(\Aevolve(R, d)) = U_R\Psi(d)\ ,    \label{eq:claimEvolveOverlap1}
        \end{align}
        that is,  the output of~$\Aevolve$ is a description of the evolved state~$U_R\Psi(d)$. Furthermore,
        denoting the output by~$d'=(\cov', x', r')=\Aevolve(R,d)$ we have  \begin{align}
    |r'|^2=|\lrangle{ x', \Psi(d') }|^2 \geq 2^{-n}\ .\label{eq:lowerboundreturnw}
\end{align}
\end{lemma}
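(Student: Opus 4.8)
The plan is to follow the same skeleton as the already-established routines $\Aconvert$ and $\Aoverlap$: the output description of $U_R\Psi(d)$ will have first component $\cov_0=R\cov R^T$, which is the covariance matrix of $U_R\Psi(d)$ by the transformation rule $\cov(U_R\Phi)=R\cov(\Phi)R^T$ recalled in Section~\ref{sec:backgroundFermionicGaussianClassicalSimulation}; its reference string will be $y=\Asupport(\cov_0)$; and its last component will be the overlap $\langle y,U_R\Psi(d)\rangle$, obtained by completing a triangle of Gaussian states using $\Aoverlaptriple$. The first step I would take is to invoke Lemma~\ref{lem:Asupport}: since $\Asupport$ only reads a covariance matrix and $|\langle y,\cdot\rangle|$ is phase-independent, we get $|\langle y,U_R\Psi(d)\rangle|^2\geq 2^{-n}$, which simultaneously yields $\langle y,U_R\Psi(d)\rangle\neq 0$ (needed for the output to be a valid description) and the bound~\eqref{eq:lowerboundreturnw}.

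The one genuinely new ingredient relative to $\Aconvert$ is the middle block of the pseudocode, which must produce $(z,s)$ with $s=\langle z,U_R\ket{x}\rangle$ and $|s|^2\geq 1/2$; this amounts to evaluating the action of a single generator on a number state. For a Givens generator $R=R_{j,k}(\vartheta)$ I would use $(c_jc_k)^2=-I$ to write $U_{j,k}(\vartheta)=\cos(\vartheta/2)\,I+\sin(\vartheta/2)\,c_jc_k$, expand $U_{j,k}(\vartheta)\ket{x}=\cos(\vartheta/2)\ket{x}+\sin(\vartheta/2)\,c_jc_k\ket{x}$, and use~\eqref{eq:ajOnState} to identify $c_jc_k\ket{x}$ as a phase times a number state (which, when $c_j$ and $c_k$ act on the same mode, is a phase times $\ket{x}$ itself); the case split on $\cos^2(\vartheta/2)$ versus $1/2$ then selects whichever of the at most two basis vectors in $U_{j,k}(\vartheta)\ket{x}$ carries the larger-magnitude coefficient, forcing $|s|^2\geq 1/2$. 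For a reflection $R=R_j$ one has $U_j=c_j$, and $c_j\ket{x}$ is directly a phased number state by~\eqref{eq:ajOnState}, so $|s|=1$. In all cases the phase of $s$ is the product of the signs and powers of $i$ that come out of~\eqref{eq:ajOnState}, which is exactly what the auxiliary functions $\beta_s$ are defined to encode. I expect the careful phase bookkeeping here — in particular separating the same-mode and different-mode Majorana pairs and checking that the $\beta$-formulas reproduce these phases — to be the main obstacle; the rest is routine.

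With that claim in hand I would assemble the argument exactly as in the proof of Lemma~\ref{lem:overlap}. Set $(\Phi_0,\Phi_1,\Phi_2)=(U_R\Psi(d),U_R\ket{x},\ket{y})$; then $\cov_0$, $\cov_1=R\cov(\ket{x})R^T$ and $\cov_2=\cov(\ket{y})$ are their covariance matrices (using~\eqref{eq:covariancematrixMx} and the transformation rule), unitarity of $U_R$ gives $u=\overline{r}=\overline{\langle x,\Psi(d)\rangle}=\langle U_R\Psi(d),U_R\ket{x}\rangle=\langle\Phi_0,\Phi_1\rangle\neq 0$ (since $r\neq 0$), and the defining property $c(\alpha)\ket{y}=e^{i\gamma}\ket{z}$ of $\Arelate$ (applied to $(y,z)$) gives $v=e^{i\gamma}\overline{s}=e^{i\gamma}\langle U_R\ket{x},z\rangle=\langle U_R\ket{x},c(\alpha)y\rangle=\langle\Phi_1,c(\alpha)\Phi_2\rangle\neq 0$ (since $s\neq 0$). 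A short parity chain shows $\Phi_0,\Phi_1,\Phi_2$ share a common parity: $\Psi(d)$ and $\ket{x}$ agree because $r\neq 0$, $U_R$ either preserves or flips both, $\ket{y}$ agrees with $U_R\Psi(d)$ because $\langle y,U_R\Psi(d)\rangle\neq 0$, and $\ket{z}$ agrees with $U_R\ket{x}$ because $s\neq 0$; the last point also forces $|y\oplus z|$ to be even, so that $\alpha\in\sbineven^{2n}$ as $\Aoverlaptriple$ requires. Hence $\Aoverlaptriple(\cov_0,\cov_1,\cov_2,\alpha,u,v)$ returns $w=\langle\Phi_2,\Phi_0\rangle=\langle y,U_R\Psi(d)\rangle$ by~\eqref{eq:threestatemajoranaexpr}, so $(\cov_0,y,w)$ is a valid description of $U_R\Psi(d)$, which gives~\eqref{eq:claimEvolveOverlap1}, and $|w|^2=|\langle y,U_R\Psi(d)\rangle|^2\geq 2^{-n}$ gives~\eqref{eq:lowerboundreturnw}. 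For the runtime, a generator $R$ is supported on at most a $2\times 2$ block, so the conjugations $R\cov R^T$ and $R\cov(\ket{x})R^T$ cost $O(n^2)$, and the overall cost is dominated by $\Asupport$ and $\Aoverlaptriple$, each running in time $O(n^3)$.
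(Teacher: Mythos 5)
Your proposal is correct and follows essentially the same route as the paper's own proof: compute $\cov_0=R\cov R^T$, obtain $y$ via $\Asupport$ (which yields both nonvanishing of $\langle y,U_R\Psi(d)\rangle$ and the bound~\eqref{eq:lowerboundreturnw}), establish the intermediate claim that the generator's local action on $\ket{x}$ produces $(z,s)$ with $s=\langle z,U_R x\rangle$ and $|s|^2\geq 1/2$, and then complete the triangle $(U_R\Psi(d),U_R\ket{x},\ket{y})$ with $\Aoverlaptriple$. The only differences are cosmetic: you spell out the common-parity chain and the evenness of $\alpha$ explicitly (which the paper leaves implicit in this particular proof), and you flag the same-mode Givens case, but the argument and its structure coincide with the paper's.
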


\begin{proof}
Let us denote the input of~$\Aevolve$ by~$(R,d)$ where~$R\in \mathsf{Gen}(O(2n))$ and~$d = (\cov, x, r) \in \desc_n$. 
The state~$U_R \Psi(d)$ has covariance matrix~$\cov_0 = R \cov R^T$ computed in line~\ref{alg:mzerodefm} (see Section~\ref{sec:backgroundFermionicGaussianClassicalSimulation}).
By the properties of~$\Asupport$ (see Lemma~\ref{lem:Asupport}), the state~$\ket{y}$ with~$y =\Asupport(\cov_0) \in\{0,1\}^n$ computed in line~\ref{alg:yReferenceForM0} is such that
\begin{align}
    |\lrangle{y, U_R \Psi(d)}|^2 \geq 2^{-n}\ .\label{eq:ylowerboundrpsid}
\end{align}
In particular, it is non-zero. The remainder of the algorithm computes~$\lrangle{y, U_R \Psi(d)}$.

We first show the following:
\begin{claim}\label{claim:revolutionxm}
Lines~\ref{eq:computationlineone}--\ref{eq:computationlineend} compute~$(z,s)\in \{0,1\}^n\times\mathbb{C}$ such that
\begin{align}
     |\lrangle{z, U_R x}|^2\geq 1/2\label{eq:zurxclaimone}
\end{align}
and
\begin{align}
    s=\langle z,U_Rx\rangle\ .\label{eq:zurxclaimtwo}
\end{align}
\end{claim}
\begin{proof}
Here we are using the fact that for any generator~$R\in \mathsf{Gen}(O(2n))$, the associated Gaussian unitary~$U_R$ has a local action on the mode operators. In particular, we can easily compute the image~$U_R\ket{x}$ of a number state~$\ket{x}$ under~$U_R$. We distinguish two cases:
\begin{enumerate}[(i)]
    \item  $R=R_{j,k}(\vartheta)$, $j < k \in [2n]$, $\vartheta\in [0,2\pi)$ is a Givens-rotation (see Lines~\ref{eq:computationlineone}--\ref{eq:computationlineEndGivens}):
In this case, $R$ is associated with the unitary evolution operator
\begin{align}
    U_{j,k} = \exp(\vartheta/2 c_j c_k)=\cos(\vartheta/2)I+\sin(\vartheta/2)c_jc_k\ .
\end{align}
It maps a basis state~$\ket{x}$, $x\in \{0,1\}^n$, to
\begin{align}
    U_{j,k}(\vartheta) \ket{x} = \cos(\vartheta/2) \ket{x} + e^{i \pi(\beta_{j}(x)+\beta_k(x))}    \sin(\vartheta/2)
    \ket{x \oplus e_j \oplus e_k}\ 
    \label{eq:UjkmActingOnZ}
\end{align}
where we introduced the quantities 
\begin{align}
    \beta_{s}(x)= \eta_s(x)+\left(x_s - \frac{1}{2}\right) \cdot (s+1)\qquad\textrm{ for any }\qquad s\in [n]\ ,
\end{align}
with~$\eta_s(x)$ defined in Eq.~\eqref{eq:etajdef}. To obtain Eq.~\eqref{eq:UjkmActingOnZ}, we used that
\begin{align}
    c_{j} | x \rangle &= 
    e^{i \pi \beta_j(x)}
    \ket{x \oplus e_j}
    \qquad\text{ for all }\quad
    j\in [2n]\quad\textrm{ and }\quad x \in \sbin^n    \label{eq:cjAction}
\end{align}
because
\begin{align}
    c_{2j-1} | x \rangle = (-1)^{\eta_j(x)} | x \oplus e_j \rangle  
    \qquad\text{and}\qquad
    c_{2j} | x \rangle = -i (-1)^{\eta_j(x) + x_j } | x \oplus e_j \rangle \ .
\end{align} 
Eq.~\eqref{eq:UjkmActingOnZ} motivates the following case distinction:
\begin{enumerate}[(a)]
\item $\cos^2(\vartheta/2) \geq 1/2$ (see Lines~\ref{alg:xIsZ}--\ref{alg:sIsCos}): Here~$\ket{x}$ has higher amplitude than~$\ket{x\oplus e_j\oplus e_k}$ in the state~$U_{j,k}(\vartheta)\ket{x}$. The algorithm picks~$z=x$ (Line~\ref{alg:xIsZ}) and sets
$s=\cos(\vartheta/2)$ (line~\ref{alg:sIsCos}). In particular, comparing with~\eqref{eq:UjkmActingOnZ}, it follows immediately that the claims~\eqref{eq:zurxclaimone} and~\eqref{eq:zurxclaimtwo} are satisfied.
\item $\cos^2(\vartheta/2) < 1/2$ (see Lines~\ref{alg:xIsZOplusOplus}--\ref{alg:sIsPhaseBeta}): 
In this case the algorithm ensures that 
\begin{alignat}{2}
    z &= x \oplus e_j \oplus e_k
    &&\text{ by Line~\ref{alg:xIsZOplusOplus}}   \label{eq:zejekdefm}\\
    \beta &= \beta_{j}(x)+\beta_k(x) 
    &&\text{ by Line~\ref{alg:betakgeqj}}\\
    s &=e^{i\pi (\beta_{j}(x)+\beta_k(x))} \sin(\vartheta/2)
    \qquad &&\text{ by Lines~\ref{alg:betakgeqj} and~\ref{alg:sIsPhaseBeta}}\ .\label{eq:sdefxkmn}
\end{alignat}
Because~$\cos^2(\vartheta/2)+\sin^2(\vartheta/2)=1$ we have
\begin{align}
    |s|^2 &\geq \frac{1}{2}
\end{align}
by the assumption that~$\cos^2(\theta/2)<1/2$.  To prove the claims~\eqref{eq:zurxclaimone} and~\eqref{eq:zurxclaimtwo}, it thus suffices to show the second claim~\eqref{eq:zurxclaimtwo}. But this again follows from~\eqref{eq:UjkmActingOnZ} and the definitions~\eqref{eq:zejekdefm} of~$z$ and~\eqref{eq:sdefxkmn} of~$s$, i.e., we have~$s=\langle z,U_{j,k}(\vartheta)x\rangle$.
\end{enumerate}
\item  $R=R_j$, $j\in [2n]$ is a reflection (see Lines~\ref{eq:computationlineBeginReflection}--\ref{eq:computationlineend}):
Here~$R$ is associated with the unitary evolution operator~
\begin{align}
    U_{j}=c_j\ .
\end{align} Its action on~$\ket{x}$, $x\in\sbin^n$, is described by Eq.~\eqref{eq:cjAction}, i.e., we have
\begin{align}
 U_{j} \ket{x}&= 
    e^{i \pi \beta_j(x) }
    \ket{x \oplus e_j}\ . 
\end{align}
This state is proportional to~$\ket{x\oplus e_j}$, showing that the choice
\begin{alignat}{2}
   z &= x\oplus e_j
   \qquad&&\text{(Line~\ref{alg:RjX})}\\
    \beta &= \beta_j(x)
    &&(\text{Line~\ref{alg:RjBeta}})\\
    s &=e^{i\pi\beta}
    &&\text{(Line~\ref{alg:RjS})}
\end{alignat}
indeed ensures that the claims~\eqref{eq:zurxclaimone} and~\eqref{eq:zurxclaimtwo} are satisfied.
\end{enumerate}
\end{proof}

Equipped with Claim~\ref{claim:revolutionxm}, we can  show that the algorithm~$\Aevolve$ has the desired functionality. The matrix~$\Gamma_0$ computed in Line~\ref{alg:mzerodefm} 
is the covariance matrix of the evolved state~$U_R \Psi(d)$, whereas~$\Gamma_1,\Gamma_2$ computed in Line~\ref{alg:monetwodef} 
are the covariance matrices of~$U_R\ket{x}$ and~$\ket{y}$, respectively. Thus~$\Aoverlaptriple$ 
in Line~\ref{alg:overlapEndFinal} is invoked on the triple of states~\begin{align}
    (\Phi_0,\Phi_1,\Phi_2)&=(U_R\Psi(d), U_R\ket{x},\ket{y}) \ .
\end{align}
To check that the requirements of~$\Aoverlaptriple$ are satisfied, first observe that
\begin{alignat}{2}
    u &= \overline{r}
    &&\textrm{ by Line~\ref{alg:evolveU}} \\
    &= \lrangle{\Psi(d), x}
    &&\textrm{ by definition of~$r$}\\
    &= \lrangle{ U_R \Psi(d), U_R x}
    \qquad&&\textrm{ by unitarity of~$U_R$}\\
    &= \lrangle{\Phi_0, \Phi_1}\ .
\end{alignat}
Furthermore, this is non-zero because~$r$ (part of the input) is non-zero by definition of the description~$d = (\cov_0, x, r)$ of~$\Psi(d)$.

By the defining property of the subroutine~$\Arelate$, Line~\ref{alg:overlapBeginFinal} of the algorithm computes~$(\alpha, \gamma) \in \sbin^{2n} \times [0, 2\pi)$ such that
\begin{align}
    c(\alpha)\ket{y} = e^{i\gamma} \ket{z}\ .\label{eq:cxvarthetay}
\end{align}
We also have 
\begin{alignat}{2}
    v &= e^{i \gamma} \overline{s}
    &&\textrm{   by line~\ref{alg:evolveU}}\\
    &= e^{i \gamma} \overline{\lrangle{z,U_R x}} 
    &&\textrm{ by~\eqref{eq:zurxclaimtwo}}\\
    &= \lrangle{U_R x, c(\alpha) y}  
    \qquad&&\textrm{ by~\eqref{eq:cxvarthetay}} \\
    &= \lrangle{\Phi_1, c(\alpha) \Phi_2}\ .
\end{alignat}
Because~$|s|^2\geq 1/2$ (see Claim~\ref{claim:revolutionxm}), this is non-zero. 

We conclude from the the properties of~$\Aoverlaptriple$  that 
\begin{align}
    w &= \lrangle{\Phi_2, \Phi_0}\qquad\qquad\textrm{ see Line~\ref{alg:overlapEndFinal}} \\
    &= \lrangle{y,U_R \Psi(d)} \ .
\end{align}
By construction of~$y$ using~$\Asupport$, we have
\begin{align}
    |w|^2 &\geq 2^{-n}\ ,
\end{align}
see Eq.~\eqref{eq:ylowerboundrpsid}. In particular, we conclude that the triple~$(\Gamma_0,y,w)$ is a valid description of~$U_R\Psi(d)$ with the desired property~\eqref{eq:lowerboundreturnw}. This is what the algorithm returns.
    
The runtime of the algorithm~$\Aevolve$ is dominated by the runtime~$O(n^3)$ of the algorithm $\Aoverlaptriple$.

\end{proof}

We give pseudocode for the algorithm~$\Ameasprob$ in Fig.~\ref{fig:Ameasprob}.

\begin{figure}[H]
\begin{mdframed}[
    linecolor=black,
    linewidth=0.5pt,
    roundcorner=2pt,
    backgroundcolor=white, 
    userdefinedwidth=\textwidth,
]
\begin{algorithmic}[1]
        \Require{$d = (\cov, x, r) \in\desc_n$}
        \Require $j \in [n]$
        \Require $s \in \sbin$
        \Function{$\Ameasprob$}{$d, j, s$}
        \State \textbf{return} $\frac{1}{2} (1+(-1)^s \cov_{2j-1,2j})$ \label{alg:measprobout}
        \EndFunction
\end{algorithmic}
\end{mdframed}
\caption{The subroutine~$\Ameasprob$ takes as input a description~$d=(\cov,x,r) \in \desc_n$ of a Gaussian state~$\Psi(d)$, an integer~$j\in[n]$ and a bit~$s\in\sbin$. It outputs the probability of obtaining the measurement outcome~$s$ when measuring the occupation number operator~$a_j^\dagger a_j$. The outcome probability does not depends on the global phase of~$\Psi(d)$ (which is determined by its reference state~$x$ and the overlap~$r$), but only on its covariance matrix~$\cov$.
\label{fig:Ameasprob}}
\end{figure}

\begin{lemma}
The algorithm~$\Ameasprob: \desc_n \times [n] \times \sbin \rightarrow \bbR$ given in Fig.~\ref{fig:Ameasprob} runs in time~$O(1)$. 
It satisfies
\begin{align}
    \Ameasprob(d, j, s) = \lrangle{\Psi(d), \Pi_j(s) \Psi(d)}
    \qquad
    \textrm{ for all }\qquad d \in \mathsf{Desc}_n, j\in [n], s\in \{0,1\}\ ,
\end{align}
where~$\Pi_j(s)=\frac{1}{2}(I+(-1)^{s}ic_{2j-1}c_{2j})$ is the projection onto the eigenvalue-$s$  eigenspace of~$a_j^\dagger a_j$. 
\end{lemma}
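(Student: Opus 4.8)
The statement to prove is that the algorithm $\Ameasprob$ runs in $O(1)$ time and outputs $\langle\Psi(d),\Pi_j(s)\Psi(d)\rangle$ where $\Pi_j(s)=\frac12(I+(-1)^s ic_{2j-1}c_{2j})$. The plan is to simply unwind the one-line pseudocode and compare it against the definition of the covariance matrix. First I would recall that by Eq.~\eqref{eq:covarianceMatrixDef}, for $j\in[n]$ we have $\cov_{2j-1,2j}(\Psi)=\langle\Psi,ic_{2j-1}c_{2j}\Psi\rangle$, since $2j-1\neq 2j$. Since $d=(\cov,x,r)\in\desc_n$ is a description of $\Psi(d)$, condition~\eqref{it:descriptionfirst} guarantees that $\cov=\cov(\Psi(d))$, so $\cov_{2j-1,2j}=\langle\Psi(d),ic_{2j-1}c_{2j}\Psi(d)\rangle$.

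Next I would compute directly:
\begin{align}
\langle\Psi(d),\Pi_j(s)\Psi(d)\rangle
&=\tfrac12\left(\langle\Psi(d),\Psi(d)\rangle+(-1)^s\langle\Psi(d),ic_{2j-1}c_{2j}\Psi(d)\rangle\right)\nonumber\\
&=\tfrac12\left(1+(-1)^s\cov_{2j-1,2j}\right)\ ,\nonumber
\end{align}
using that $\Psi(d)$ is a unit vector. This is exactly the value returned in line~\ref{alg:measprobout} of Fig.~\ref{fig:Ameasprob}. This reasoning also matches the derivation of Eq.~\eqref{eq:probNumberMeas} given earlier in Section~\ref{sec:backgroundFermionicGaussianClassicalSimulation}; indeed $\langle\Psi,\Pi_j(s)\Psi\rangle=\|\Pi_j(s)\Psi\|^2=P_j(s)$ since $\Pi_j(s)$ is an orthogonal projection, which also explains why the same quantity appears in part~\eqref{it:thmMeasurementProb} of Theorem~\ref{thm:maindescriptalgorithm} written as $\|\Pi_j(s)\Psi(d)\|^2$.

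Finally, for the runtime claim, I would observe that the algorithm performs a constant number of elementary arithmetic operations (one array lookup of the entry $\cov_{2j-1,2j}$, one sign computation $(-1)^s$, one multiplication, one addition, one multiplication by $\tfrac12$), independent of $n$; hence the runtime is $O(1)$ in the arithmetic-operations model used throughout. I do not anticipate any real obstacle here: this is the most elementary of the building-block lemmas, since $\Ameasprob$ reads off a single covariance-matrix entry and the outcome probability is manifestly phase-independent, so neither the reference state $x$ nor the overlap $r$ plays any role. The only point worth stating carefully is that the correctness relies on $d$ being a \emph{valid} description (so that $\cov$ really is the covariance matrix of $\Psi(d)$), but this is built into the definition of $\desc_n$.
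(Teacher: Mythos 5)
Your proof is correct and follows essentially the same route as the paper's: both reduce the claim to Eq.~\eqref{eq:probNumberMeas} via the definition of the covariance matrix and note that line~\ref{alg:measprobout} involves a constant number of arithmetic operations. You merely spell out the one-line computation behind Eq.~\eqref{eq:probNumberMeas} more explicitly, which is fine.
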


\begin{proof}
We denote the input to~$\Ameasprob$ by~$(d, j, s)$ where~$d=(\cov,x,r) \in \desc_n$ is a description of a state~$\Psi(d)$, $j\in[n]$ and~$s\in\sbin$.
Given the state~$\Psi(d)$, the probability
of obtaining measurement outcome~$s$ when measuring the occupation number operator~$a_j^\dagger a_j$ is given by Eq.~\eqref{eq:probNumberMeas}. This is the output of the algorithm in line~\ref{alg:measprobout} and gives the claim. 
Computing line~\ref{alg:measprobout} requires a constant number of arithmetic operations, giving the runtime~$O(1)$.
\end{proof}

We give pseudocode for the algorithm~$\Ameasure$ in Fig.~\ref{fig:Ameasure} and we illustrate it in Fig.~\ref{fig:ameasurefig}.

\begin{figure}[h]
\begin{mdframed}[
    linecolor=black,
    linewidth=0.5pt,
    roundcorner=2pt,
    backgroundcolor=white, 
    userdefinedwidth=\textwidth,
]
\begin{algorithmic}[1]
        \Require $d = (\cov, x, r)\in \desc_n$
        \Require $j \in [n]$
        \Require $s \in \sbin$
        \Require $p = \| \Pi_j(s) \Psi(d)\|^2 \in [0,1]$
        \Comment{probability of outcome~$s$ when measuring~$a_j^\dagger a_j$}
        \Function{$\Ameasure$}{$d, j, s, p$}
            \State{$\cov'\leftarrow 0\in \mathsf{Mat}_{2n\times 2n}(\mathbb{R})$} \label{alg:Mbegin}
            \Comment{compute covariance matrix of post-measurement state~$\Psi'$}
            \State{$\cov'_{2j,2j-1}\leftarrow (-1)^{s}$}
            \For{$\ell \leftarrow 1$ to~$n-1$}
                \For{$k \leftarrow \ell+1$ to~$n$}
                    \If{$(k,\ell)\neq (2j,2j-1)$}
                    \State{$\cov'_{k, \ell}\leftarrow      \cov_{k,\ell}+\frac{(-1)^{s}}{2p} (\cov_{2j-1,\ell}\cov_{2j,k}-\cov_{2j-1,k}\cov_{2j,\ell})$} \label{alg:Mend}
                    \EndIf
                \EndFor
            \EndFor   
        \State $\cov'\leftarrow \cov'-(\cov')^T$
        \State $y \leftarrow \Asupport(\cov')$ \label{alg:measureY}
        \Comment{find~$y$ such that~$|\langle y, \Psi'\rangle|^2\geq 2^{-n}$}
        \State $(\alpha,\vartheta)\leftarrow \Arelate(y,x)$ \label{alg:measureAlphaVartheta}
        \Comment{$(\alpha,\vartheta)$ are such that~$c(\alpha)\ket{y}=e^{i\vartheta}\ket{x}$}
        \State $\cov_0 \leftarrow \cov$, $\cov_1 \leftarrow \cov(|x\rangle)$, $\cov_2 \leftarrow \cov(|y\rangle)$ \label{alg:measureSetMs}
        \Comment{covariance matrices of~$\Psi(d)$, $\ket{x}$ and~$\ket{y}$}
        \State $u \leftarrow \overline{r}$ \label{alg:measureU}
        \Comment{$u=\langle \Psi(d),x\rangle$}
        \State $v \leftarrow e^{i\vartheta}$ \label{alg:measureV}
        \Comment{$v=\langle x,c(\alpha)y\rangle$}
        \State $w \leftarrow \Aoverlaptriple(\cov_0, \cov_1, \cov_2, \alpha, u, v)~$ \label{alg:measureW}
        \Comment{$w=\langle y,\Psi(d)\rangle$}
        \State \textbf{return} $(\cov', y, w/\sqrt{p})$
        \Comment{return a description of~$\Psi'$}
        \EndFunction
\end{algorithmic}
\end{mdframed}
\caption{The algorithm~$\Ameasure$ takes as input a description~$d\in \desc_n$, an integer~$j\in[n]$, a bit~$s\in\sbin$ and a real number~$p \in [0,1]$.
Assuming~$p=\| \Pi_j(s) \Psi(d)\|^2$, the algorithm outputs a description of the post-measurement state~$\Psi'=(\Pi_j(s)\Psi(d)) / \|\Pi_j(s)\Psi(d)\|$ when measuring the number operator~$a_j^\dagger a_j$ and obtaining the outcome~$s$ where~$s\in\{0,1\}$. Here, $\Pi_j(s)=(I+(-1)^{s}ic_{2j-1}c_{2j})/2$ is the projection onto the eigenvalue~$s$ eigenspace of~$a_j^\dagger a_j$. \label{fig:Ameasure}}
\end{figure}

\begin{figure}[h]
\centering
\begin{subfigure}[t]{0.31\textwidth}
\centering
    \includegraphics[height=3cm]{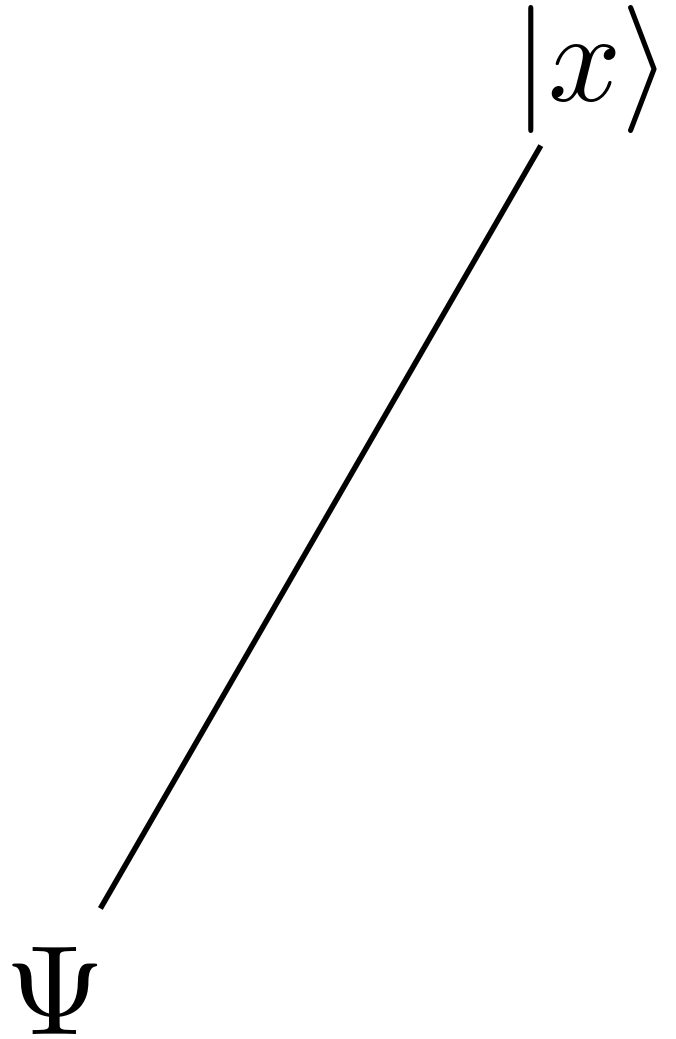}
    \caption{The algorithm~$\Ameasure$ is given the covariance matrix~$\Gamma$ of a Gaussian state~$\Psi$, $x\in \{0,1\}^n$ as well as the value~$r=\langle x,\Psi\rangle$ (which is non-zero).   It is additionally given the probability~$p=\|\Pi_j(s)\Psi\|^2$ of observing the outcome~$s$.
    }
    \label{fig:postaevolvefig}
\end{subfigure}
\hfill
\begin{subfigure}[t]{0.31\textwidth}
\centering
\includegraphics[height=3cm]{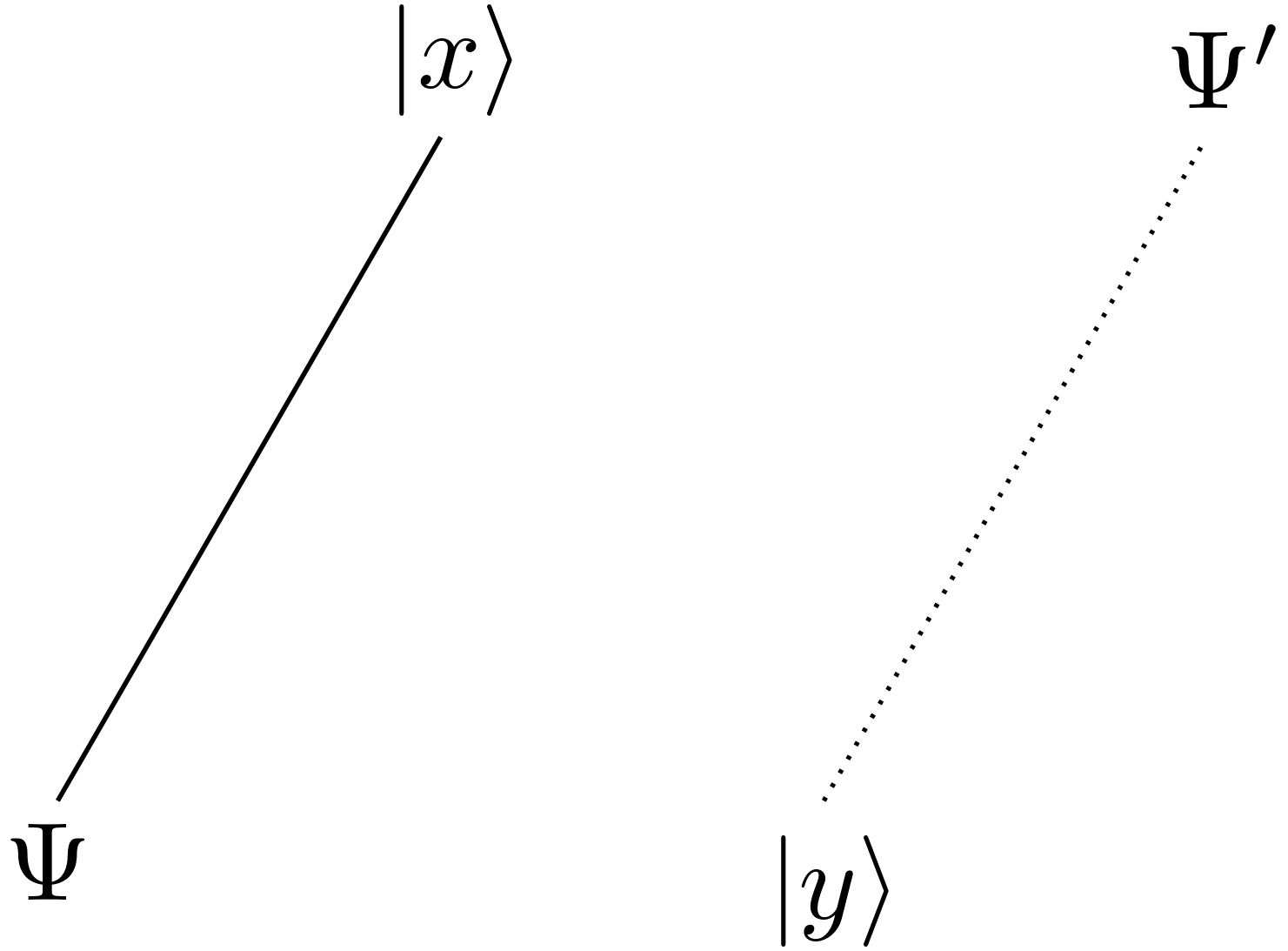}
    \caption{After computing the covariance matrix~$\Gamma'$ of the post-measurement state~$\Psi'$, the algorithm uses the subroutine~$\Asupport$ to find an element~$y\in \{0,1\}^n$ such that~$\langle y,\Psi'\rangle\neq 0$. (The value of this inner product is not computed at this point.)
    }
    \label{fig:postaevolvefigtwo}
\end{subfigure}
\hfill
\begin{subfigure}[t]{0.31\textwidth}
\centering
\includegraphics[height=3cm]{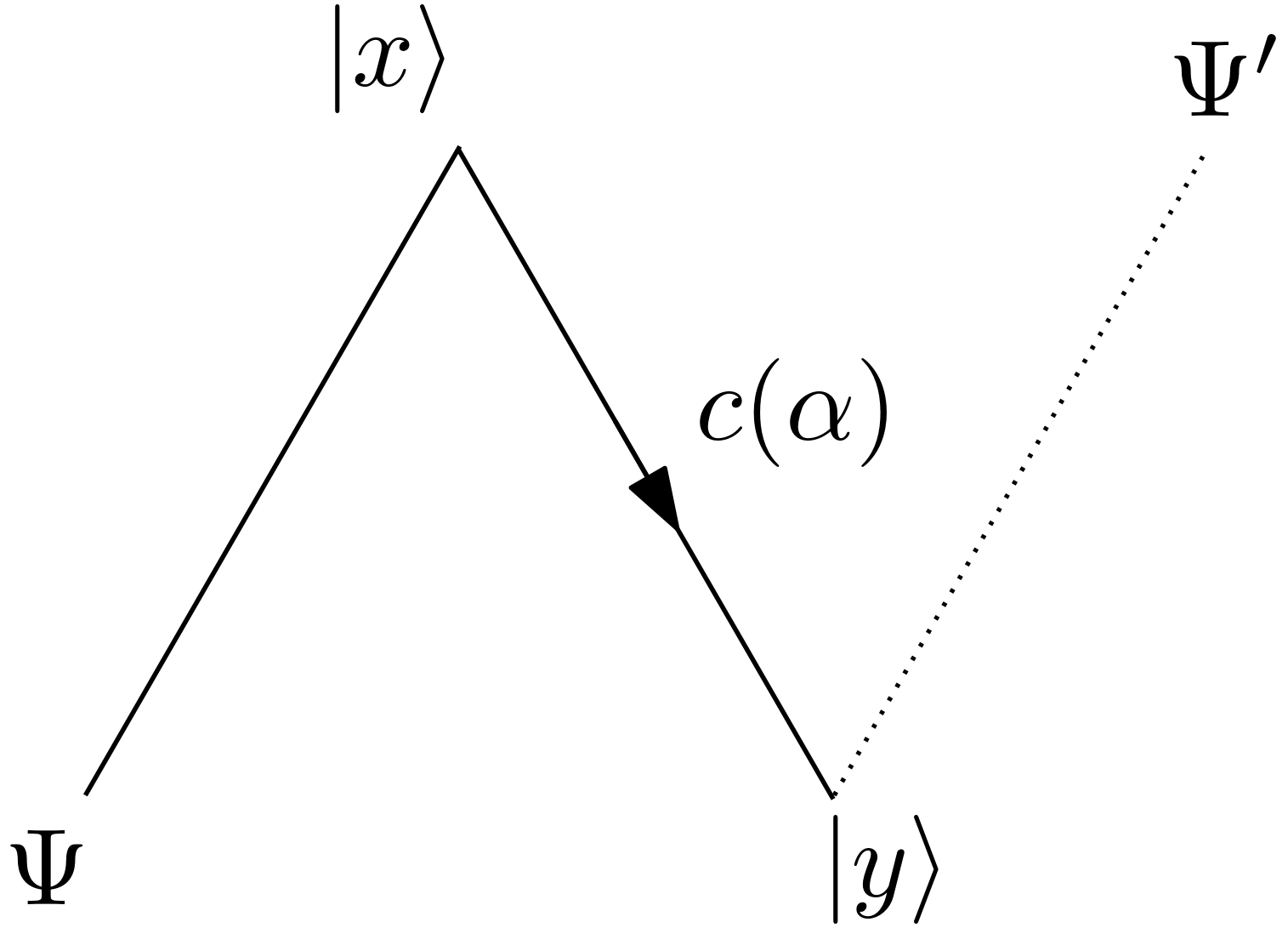}
    \caption{The subroutine $\Arelate$ is then used to find $(\alpha,\vartheta)$ such that $c(\alpha)\ket{y}=e^{i\vartheta} \ket{x}$.   
    }
    \label{fig:postaevolvefigthree}
\end{subfigure}\\ 
\vspace{4mm}
\begin{subfigure}[t]{0.31\textwidth}
\centering
    \includegraphics[height=3cm]{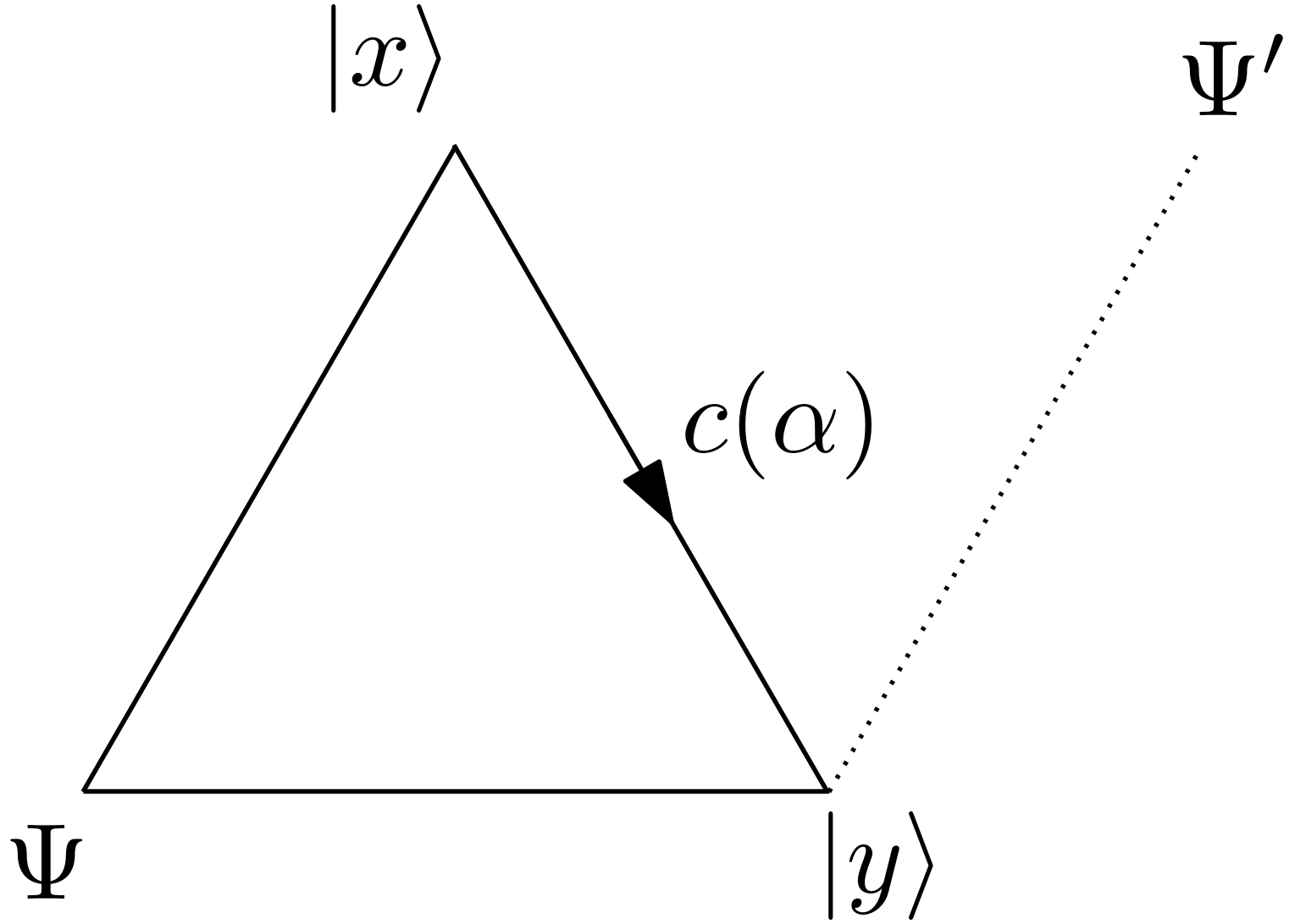}
    \caption{The algorithm~$\Aoverlaptriple$ is used to compute the overlap~$w=\langle y,\Psi\rangle$.}
    \label{fig:postaevolvefigfour}
\end{subfigure}
\hfill
\begin{subfigure}[t]{0.31\textwidth}
\centering
\includegraphics[height=3cm]{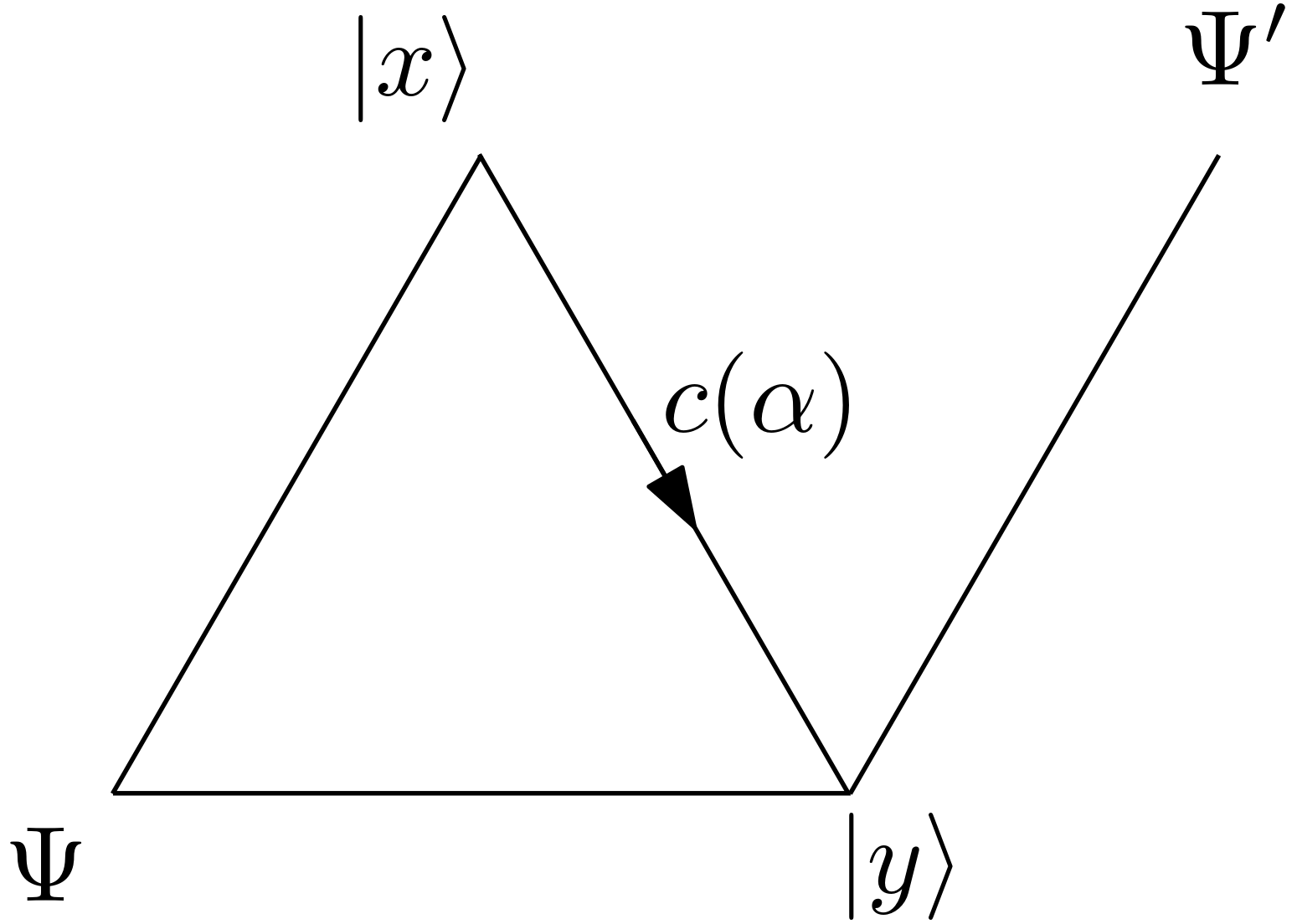}
    \caption{As argued in the proof of Lemma~\ref{lem:measureclaim}, the inner product~$\langle y,\Psi'\rangle$ can be computed from~$w$ and the probability~$p$: it is equal to~$\langle y,\Psi'\rangle=w/\sqrt{p}$. Thus~$(\Gamma',y,w,\sqrt{p})$ is a description of~$\Psi'$.
    }
    \label{fig:postaevolvefigfive}
\end{subfigure}
\hfill
\begin{subfigure}[t]{0.31\textwidth}
\centering
\hfill
\end{subfigure}
\caption{An illustration of the algorithm~$\Ameasure$, which computes a description of the post-measurement state~$\Psi'=\frac{\Pi_j(s)\Psi}{\|\Pi_j(s)\Psi\|}$ given a description of~$\Psi$.
}
\label{fig:ameasurefig}
\end{figure}

\begin{lemma}\label{lem:measureclaim}
The algorithm~$\Ameasure : \desc_n \times [n] \times \sbin \times [0,1] \rightarrow \desc_n$ given in Fig.~\ref{fig:Ameasure} runs in time~$O(n^3)$. Let~$d\in \desc_n$, $j\in [n]$ and~$s\in \{0,1\}$ be arbitrary. Let~$\Pi_j(s)=\frac{1}{2}(I+(-1)^{s}ic_{2j-1}c_{2j})$ be the projection onto the eigenvalue-$s$ eigenspace of~$a_j^\dagger a_j$ and let~$p=\| \Pi_j(s) \Psi(d) \|^2$.
Then
\begin{align}
\Psi(\Ameasure(d,j,s, p)&=\frac{\Pi_j(s)\Psi(d)}{\|\Pi_j(s)\Psi(d)\|}\ ,\label{lem:measureClaim1}
\end{align}
that is, $\Ameasure$ computes a description of the post-measurement state when measuring~$a_j^\dagger a_j$ and obtaining outcome~$s$. Denoting the output of the algorithm by
\begin{align}
d'=(\cov',x',r')=\Ameasure(d,j,s,p) \ ,   
\end{align}
we further have 
\begin{align}
|r'|^2 = |\lrangle{ x', \Psi(d')}|^2\geq 2^{-n}\ .\label{eq:rprimepostmeasuredjs}
\end{align}
\end{lemma}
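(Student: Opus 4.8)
The plan is to follow the structure already used for the analogous algorithms $\Aevolve$ and $\Aconvert$, since $\Ameasure$ is essentially $\Aconvert$ composed with a covariance-matrix update for the post-measurement state, plus a rescaling by $1/\sqrt{p}$. First I would verify that lines~\ref{alg:Mbegin}--\ref{alg:Mend} correctly compute $\cov' = \cov(\Psi')$ where $\Psi' = \Pi_j(s)\Psi(d)/\|\Pi_j(s)\Psi(d)\|$; this is immediate from Eq.~\eqref{eq:MpostMeasurement}, which gives the lower-triangular entries of the post-measurement covariance matrix, together with the antisymmetrization step. So after line~\ref{alg:Mend} (and the antisymmetrization) we have a valid covariance matrix for $\Psi'$, and in particular $\Psi'$ has definite parity (opposite sense depending on $s$ and the parity of $\Psi(d)$, but in any case definite).

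Next I would invoke the properties of $\Asupport$ (Lemma~\ref{lem:Asupport}) to conclude that $y = \Asupport(\cov')$ satisfies $|\langle y,\Psi'\rangle|^2 \geq 2^{-n}$, hence $\langle y,\Psi'\rangle \neq 0$. The core of the argument is then to show that $w$ computed in line~\ref{alg:measureW} equals $\langle y, \Psi(d)\rangle$. This is exactly the same triangle-completion argument as in $\Aconvert$: the subroutine $\Aoverlaptriple$ is applied to the triple $(\Phi_0,\Phi_1,\Phi_2) = (\Psi(d), \ket{x}, \ket{y})$ with $u = \overline{r} = \langle \Psi(d), x\rangle \neq 0$ and $v = e^{i\vartheta} = \langle x, c(\alpha) y\rangle \neq 0$ (using the defining property of $\Arelate$ that $c(\alpha)\ket{y} = e^{i\vartheta}\ket{x}$), and one checks these three states have identical parity. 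Here the parity check needs a small argument: $\ket{x}$ has the parity of $\Psi(d)$ because $r = \langle x, \Psi(d)\rangle \neq 0$, and $\ket{y}$ has the parity of $\Psi'$; but $\Psi'$ and $\Psi(d)$ need not have the same parity --- wait, actually $\Pi_j(s)$ is an even operator (product/linear combination of even Majorana monomials), so $\Psi'$ has the \emph{same} parity as $\Psi(d)$, and hence all three states in the triple share a parity. So $\Aoverlaptriple$ is legitimately applicable and returns $w = \langle \Phi_2, \Phi_0\rangle = \langle y, \Psi(d)\rangle$.

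The final link is the relation between $\langle y, \Psi(d)\rangle$ and $\langle y, \Psi'\rangle$. Since $\Pi_j(s)$ is a projection and $y$ is a number state, $\Pi_j(s)$ either fixes $\ket{y}$ or annihilates it; but we know $\langle y, \Psi'\rangle \neq 0$, so $\Pi_j(s)\ket{y} = \ket{y}$, and therefore $\langle y, \Psi(d)\rangle = \langle y, \Pi_j(s)\Psi(d)\rangle = \sqrt{p}\,\langle y, \Psi'\rangle$. Thus $\langle y,\Psi'\rangle = w/\sqrt{p}$, which is precisely the third component of the returned description, so $(\cov', y, w/\sqrt{p})$ is a valid description of $\Psi'$, establishing Eq.~\eqref{lem:measureClaim1}. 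The bound Eq.~\eqref{eq:rprimepostmeasuredjs} is then immediate: $|r'|^2 = |w/\sqrt{p}|^2 = |\langle y,\Psi'\rangle|^2 \geq 2^{-n}$ by the $\Asupport$ guarantee. Finally, the runtime is dominated by $\Aoverlaptriple$ ($O(n^3)$) and $\Asupport$ ($O(n^3)$), with the covariance-matrix update costing only $O(n^2)$, giving $O(n^3)$ overall.

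\textbf{The main obstacle} I anticipate is the bookkeeping around why $\Pi_j(s)\ket{y} \in \{\ket{y}, 0\}$ and the deduction $\langle y, \Psi(d)\rangle = \sqrt p\,\langle y,\Psi'\rangle$ --- i.e., being careful that $\ket{y}$ is a simultaneous eigenstate of all the $a_k^\dagger a_k$, so in particular an eigenstate of $ic_{2j-1}c_{2j}$ with eigenvalue $(-1)^{y_j}$, whence $\Pi_j(s)\ket{y} = \delta_{s,y_j}\ket{y}$; combined with $\langle y,\Psi'\rangle \neq 0$ this forces $y_j = s$ and the scalar relation follows. Everything else is a routine adaptation of the already-established lemmas for $\Aconvert$ and $\Aoverlaptriple$.
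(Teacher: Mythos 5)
Your proposal is correct and follows essentially the same route as the paper's own proof: the same covariance-matrix update, the same use of $\Asupport$ and the triangle-completion via $\Aoverlaptriple$ on the triple $(\Psi(d),\ket{x},\ket{y})$, and the same final step deducing $\Pi_j(s)\ket{y}=\ket{y}$ from $\langle y,\Psi'\rangle\neq 0$ to obtain $\langle y,\Psi'\rangle=w/\sqrt{p}$. Your explicit parity check (that $\Pi_j(s)$ is even, so $\Psi'$, $\ket{x}$ and $\ket{y}$ all share the parity of $\Psi(d)$) is a small point the paper leaves implicit, and you handle it correctly.
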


\begin{proof}
We denote the input to~$\Ameasure$ by~$(d, j, s, p)$, where~$d\in\desc_n$, $j\in[n]$, $s\in\sbin$ and~$p\in[0,1]$.
For brevity, let us denote the post-measurement state when measuring the observable~$a_j^\dagger a_j$ and obtaining outcome~$s$ by
\begin{align}
\Psi'&=\frac{\Pi_j(s)\Psi(d)}{\|\Pi_j(s)\Psi(d)\|}\ .
\end{align}

In lines~\ref{alg:Mbegin}--\ref{alg:Mend}, the algorithm~$\Apostmeasure$ computes the covariance matrix~$\cov'$ of~$\Psi'$ according to Eq.~\eqref{eq:MpostMeasurement}. In line~\ref{alg:measureY} the algorithm uses~$\Asupport$ to find~$y\in\{0,1\}^n$ such that 
\begin{align}
\left|\langle y, \Psi' \rangle\right|^2&\geq 2^{-n}\ . 
\label{eq:overlapYPostMeasState}
\end{align} 
Line~\ref{alg:measureAlphaVartheta}
provides~$(\alpha,\vartheta)\in \{0,1\}^{2n}\times\mathbb{R}$ such that
\begin{align}
c(\alpha)\ket{y}=e^{i\vartheta}\ket{x}\ .\label{eq:alphaytranslationproperty}
\end{align}
Line~\ref{alg:measureSetMs} of the algorithm sets the matrices~$(\Gamma_0,\Gamma_1,\Gamma_2)$ equal to the covariance matrices of the three states
\begin{align}
    (\Phi_0,\Phi_1,\Phi_2)=(\Psi(d), \ket{x},\ket{y})\ .
\end{align}
We check that the conditions for applying~$\Aoverlaptriple$ in Line~\eqref{alg:measureW} are satisfied. We have 
\begin{align}
    u &= \overline{r}\qquad\qquad\qquad\textrm{ by Line~\ref{alg:measureU}}\\
    &= \lrangle{\Psi(d), x}\qquad\ \ \textrm{ by definition of~$r$}\\ 
    &= \lrangle{\Phi_0, \Phi_1}\ ,
\end{align}
and this is non-zero because~$d=(\Gamma,x,r)$ is a valid description (hence~$r\neq 0$). 
Similarly as before, we also have 
\begin{align}
    v &= e^{i \vartheta}\qquad\qquad\qquad\textrm{ by Line~\ref{alg:measureV}}\\
    &= \lrangle{x, c(\alpha) y}\qquad\ \ \ \textrm{ by Eq.~\eqref{eq:alphaytranslationproperty}}\\
    &= \lrangle{\Phi_1, c(\alpha) \Phi_2} \ .
\end{align}
In particular, this is also non-zero. 
The requirements to run~$\Aoverlaptriple$ in Line~\ref{alg:measureW} are therefore met, and Line~\ref{alg:measureW} returns
\begin{align}
    w &= \lrangle{ \Phi_2, \Phi_0 } \\
    &= \lrangle{ y,\Psi(d) } \ .\label{eq:wypsiexpressiontwo}
\end{align}

It remains to show that~$(\cov', y, w/\sqrt{p})$ (the expression returned by the algorithm) is a valid description of the post-measurement state~$\Psi'$, and to establish the  bound
\begin{align}
    |w/\sqrt{p}|^2&\geq 2^{-n}\label{eq:wpsiprimelowerbound}
\end{align} in order to prove Eq.~\eqref{eq:rprimepostmeasuredjs}.

Inserting~$\Psi'=\Pi_j(s)\Psi/\sqrt{p}$, Eq.~\eqref{eq:overlapYPostMeasState} implies that
\begin{align}
2^{-n} &\leq |\langle y,\Psi'\rangle|^2=\frac{1}{p}|\langle y,\Pi_j(s)\Psi\rangle|^2=\frac{1}{p}|\langle \Pi_j(s)y,\Psi\rangle|^2\leq\frac{1}{p} \|\Pi_j(s)y\|^2\cdot \|\Psi\|^2\ \label{eq:firstinequalitynfw}
    \end{align}
    because~$\Pi_j(s)$ is  self-adjoint and with the Cauchy-Schwarz inequality. 
    In particular, we have $\Pi_j(s)y\neq 0$ and thus 
    \begin{align}
        \Pi_j(s)y&=y\label{eq:pisyeigenvalueeq}
    \end{align} since any number state~$\ket{y}$ is an eigenvector of the projection~$\Pi_j(s)$. Inserting~\eqref{eq:pisyeigenvalueeq} into~\eqref{eq:firstinequalitynfw} and using~\eqref{eq:wypsiexpressiontwo}  we obtain the bound
    \begin{align}
        2^{-n} \leq  \frac{1}{p}|\langle y, \Psi\rangle|^2=\frac{|w|^2}{p}\ ,
    \end{align}
    establishing~\eqref{eq:wpsiprimelowerbound}. Eq.~\eqref{eq:pisyeigenvalueeq} and the self-adjointness of~$\Pi_j(s)$ also imply that
    \begin{align}
    \langle y,\Psi'\rangle &=\frac{1}{\sqrt{p}}\langle y,\Pi_j(s)\Psi\rangle=\frac{1}{\sqrt{p}}\langle y,\Psi\rangle=\frac{w}{\sqrt{p}}\ .
    \end{align}
    Since~$\Gamma'$ is the covariance matrix of~$\Psi'$ and~$p=\|\Pi_j(s) \Psi(d)\|^2$ is the probability of obtaining outcome~$s$ when measuring~$a_j^\dagger a_j$, this shows that~$(\Gamma',y,w/\sqrt{p})$ is a valid description of~$\Psi'$ as claimed.

The complexity of the algorithm is dominated by~$\Aoverlaptriple$, which takes time~$O(n^3)$. 

\end{proof}

\subsection{Initial states for computation}

Using the algorithm~$\Aevolve$, it is straightforward to generate a description of a state that is obtained by applying a sequence of Gaussian unitaries (generators) to the vacuum state. This is all that is typically needed to describe initial states.

In cases where we do not need to fix the overall phase, we can generate a description from the covariance matrix. The algorithm~$\Adescribe$ takes as input the covariance matrix~$\cov$ of a Gaussian state~$\Psi$ and
outputs a description~$d\in\desc_n$ of a Gaussian state which is equal to~$\Psi$ up to a global phase.
It is given in Fig.~\ref{fig:Adescribe} and it simply uses the subroutine~$\Asupport$ and Eq.~\eqref{eq:innerproductoftwomatrices}.

\begin{figure}[H]
\begin{mdframed}[
    linecolor=black,
    linewidth=0.5pt,
    roundcorner=2pt,
    backgroundcolor=white, 
    userdefinedwidth=\textwidth,
]
\begin{algorithmic}[1]
\Require $\cov\in\mathsf{Mat}_{n\times n}(\mathbb{R})$ covariance matrix of a pure Gaussian state~$\Psi$
        \Function{$\Adescribe$}{$\cov$}
            \State $y\in \{0,1\}^n$
            \State $y\leftarrow \Asupport(\cov)$\label{alg:eqasupport}
            \Comment{find~$y$ such that~$|\langle y,\Psi\rangle|^2\geq 2^{-n}$}
            \State $\sigma \leftarrow \Pf(\Gamma)$
            \State $r \leftarrow \sqrt{\sigma 2^{-n} \Pf( \cov(\ket{y}) + \cov )}$\label{lem:rcomputationmline}
            \Comment{compute~$|\langle y,\Psi\rangle|$}
            \State \textbf{return} $(\cov,y,r)$
        \EndFunction
\end{algorithmic}
\end{mdframed}
\caption{The algorithm~$\Adescribe$: Given the covariance matrix~$\cov$ of a Gaussian state~$\ket{\Psi}$, it outputs~$d\in \desc_n$  such that~$|\langle \Psi(d),\Psi\rangle|=1$.\label{fig:Adescribe}}
\end{figure}

\begin{lemma}
The algorithm~$\Adescribe:\mathsf{Mat}_{2n\times 2n}(\mathbb{R})\rightarrow \desc_n$ runs in time~$O(n^3)$. Its output is such that for every covariance matrix~$\cov$, the state~$\Psi(\Adescribe(\cov))$ is a Gaussian state with covariance matrix~$\cov$.
We have 
\begin{align}
    |r|^2=|\langle x, \Psi(d)\rangle|^2\geq 2^{-n}\label{eq:mhmxexpr}
\end{align} for~$d=(\Gamma, x, r) = \Psi(\Adescribe(\Gamma))$.
\end{lemma}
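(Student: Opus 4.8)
The plan is to read off all three claims directly from the pseudocode of $\Adescribe$ in Fig.~\ref{fig:Adescribe}, using only the established properties of the subroutine $\Asupport$ (Lemma~\ref{lem:Asupport}) and the two-state overlap formula~\eqref{eq:innerproductoftwomatrices}. For the runtime, the call to $\Asupport$ costs $O(n^3)$ by Lemma~\ref{lem:Asupport}, each of the two Pfaffians is evaluated on a $2n\times 2n$ matrix and hence also costs $O(n^3)$, and the remaining arithmetic is $O(n^2)$; this gives the claimed $O(n^3)$ bound.

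For correctness I would fix any pure Gaussian state $\Psi\in\cG_n$ with $\cov(\Psi)=\cov$ (one exists by hypothesis on the input, and all such states agree up to a global phase), and write $(\cov,y,r)=\Adescribe(\cov)$ and $\sigma=\Pf(\cov)$ (so $y$ is the second component, called $x$ in the statement). First I would observe that $\langle y,\Psi\rangle\neq 0$: indeed Lemma~\ref{lem:Asupport} gives $|\langle y,\Psi\rangle|^2\geq 2^{-n}>0$. Hence $\ket{y}$ and $\Psi$ have the same parity, and since the parity of a Gaussian state equals the Pfaffian of its covariance matrix (Wick's theorem~\eqref{eq:wickstheorem} with $\alpha=1^{2n}$, using $c(1^{2n})=P$), both states have common parity $\sigma=\Pf(\cov)=\Pf(\cov(\ket{y}))$, with $\cov(\ket{y})$ given by~\eqref{eq:covariancematrixMx}. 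Applying~\eqref{eq:innerproductoftwomatrices} to the equal-parity pair $(\ket{y},\Psi)$ yields
\[
|\langle y,\Psi\rangle|^2=\sigma\,2^{-n}\,\Pf\bigl(\cov(\ket{y})+\cov\bigr).
\]
Since the left-hand side is a non-negative real, the radicand in the definition of $r$ is non-negative, so $r$ is a well-defined non-negative real equal to $|\langle y,\Psi\rangle|$; combined with Lemma~\ref{lem:Asupport} this already gives $|r|^2=r^2=|\langle y,\Psi\rangle|^2\geq 2^{-n}$, and in particular $r\neq0$.

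To conclude I would fix the global phase of $\Psi$ so that $\langle y,\Psi\rangle$ is real and non-negative (possible because $\langle y,\Psi\rangle\neq0$); then $\langle y,\Psi\rangle=|\langle y,\Psi\rangle|=r$, so the triple $(\cov,y,r)$ satisfies all three requirements in the definition of a description of $\Psi$, and since $|r|^2\geq 2^{-n}$ it is in fact a good description. As a description uniquely determines a Gaussian state, $\Psi(\Adescribe(\cov))=\Psi$, which has covariance matrix $\cov$, and $|r|^2=|\langle y,\Psi(\Adescribe(\cov))\rangle|^2\geq 2^{-n}$, as required. I do not expect a genuine obstacle here: the one point requiring care is that the expression under the square root defining $r$ is a non-negative real, which is exactly what is secured by having $\Asupport$ return a $y$ whose parity matches $\Psi$, so that~\eqref{eq:innerproductoftwomatrices} applies with the correct sign $\sigma$.
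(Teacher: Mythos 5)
Your argument is correct and follows the same route as the paper's own proof: invoke Lemma~\ref{lem:Asupport} to get $|\langle y,\Psi\rangle|^2\geq 2^{-n}$, use the overlap formula~\eqref{eq:innerproductoftwomatrices} to identify $r$ with $|\langle y,\Psi\rangle|$, and absorb the phase ambiguity by choosing the representative $e^{i\vartheta}\Psi$ for which $\langle y,\cdot\rangle$ is real and non-negative. Your additional checks (that the common parity $\sigma=\Pf(\cov)=\Pf(\cov(\ket{y}))$ justifies applying~\eqref{eq:innerproductoftwomatrices} and makes the radicand non-negative) are left implicit in the paper but are exactly the right points to verify.
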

\begin{proof}
Let~$\cov\in\mathsf{Mat}_{2n\times 2n}(\mathbb{R})$ be a covariance matrix and let~$\Psi$ be a Gaussian state with covariance matrix~$\cov$. By definition of the algorithm~$\Asupport$, the  value~$y\in \{0,1\}^n$ computed in  line~\ref{alg:eqasupport} satisfies
\begin{align}
|\langle y,\Psi\rangle|^2 &\geq 2^{-n}\ .
\end{align}
By Eq.~\eqref{eq:innerproductoftwomatrices}, the value~$r$ computed in Line~\ref{lem:rcomputationmline}
satisfies
\begin{align}
    r&=|\langle y,\Psi\rangle|\ .
\end{align}
In particular, there is an angle~$\vartheta\in [0,2\pi)$ such that
\begin{align}
    r&=\sqrt{\sigma 2^{-n} \Pf( \cov(\ket{y}) + \cov )}=\langle y,e^{i\vartheta}\Psi\rangle\ .
\end{align}
It follows immediately that~$d=(\Gamma,y,r)$ is a valid description  of the Gaussian state~$e^{i\vartheta}\Psi=\Psi(d)$ with the required property~\eqref{eq:mhmxexpr}.
\end{proof}

\section{Classical simulation of fermionic Gaussian  circuits with non-Gaussian initial states\label{sec:classicalsimulationalgorithms}}

In this section, we argue that the techniques developed in Section~\ref{sec:trackingphases} to describe fermionic Gaussian states (including relative phases)  give rise to efficient classical simulation algorithms for computations composed of non-Gaussian initial states, Gaussian unitaries and occupation number measurements. Specifically, we argue that algorithms developed in the context of stabilizer circuits can immediately be translated to this fermionic setup. Furthermore, this translation maintains runtime bounds when the stabilizer extent is replaced by the fermionic Gaussian extent. Because of the generality of this adaptation procedure 
-- it being applicable to a variety of simulation algorithms both for strong and weak simulation -- we restrict our attention to the key substitutions.

Our algorithms apply to the efficient classical simulation of fermionic circuits of the following form, involving~$n$~fermions.
\begin{enumerate}[(i)]
\item\label{it:generalinitialstatenongaussian}
The initial state~$\Psi^{(0)}=\Psi$ is a possibly non-Gaussian state~$\Psi$. We assume that its fermionic Gaussian extent~$\xi(\Psi)$ and a corresponding optimal decomposition into a superposition of Gaussian states is known. This is the case for example for any four-fermion state, or a tensor product of two four-fermion states, see Section~\ref{sec:gaussianoverlap}. Alternatively, we may assume that an upper bound~$\overline{\xi}(\Psi)\geq \xi(\Psi)$ and a corresponding decomposition of~$\Psi$ achieving this value is known: In this case, runtime upper bounds will depend on~$\overline{\xi}(\Psi)$ instead of~$\xi(\Psi)$. 
\item
The computation proceeds in a sequence of timesteps. At each step~$t\in[T]$, one of the following is performed:
\begin{enumerate}[(a)]
\item\label{it:gaussianunitaryevolutionoperator}
A Gaussian unitary~$U_R$, $R\in\mathsf{Gen}(O(2n))$ is applied to the state. Here the choice of~$R$ may depend (in an efficiently computable manner) on measurement results obtained previously. We will leave this dependence implicit  and do not take it into account in our runtime estimates, as it will typically depend heavily on the circuit considered. 
\item
An occupation number measurement, i.e., measurement of the operator~$a_j^\dagger a_j$ for some~$j\in [n]$ is performed, yielding a measurement outcome~$s\in \{0,1\}$ and a corresponding post-measurement state. The choice of the mode~$j\in [n]$ to be measured may again depend (in an efficient manner) on the measurement outcomes already obtained.
\end{enumerate}
\end{enumerate}
We note that the restriction to the set of Gaussian unitaries associated with generators of~$O(2n)$ in~\eqref{it:gaussianunitaryevolutionoperator} incurs no loss of generality at the cost of possibly increasing~$T$ by a factor of order~$O(n^2)$ and an additive term in the runtime of order~$O(n^3)$ since a decomposition of an arbitrary element~$R\in O(2n)$ of the form~\eqref{eq:soL} as a product of~$L\leq O(n^2)$ generators can be found in time~$O(n^3)$, see the discussion below Theorem~\ref{thm:maindescriptalgorithm}.

The use of arbitrary initial states~$\Psi$ in~\eqref{it:generalinitialstatenongaussian} allows us to model, in particular, the application of certain ``magic gates'' using so-called gadgets. These can be realized by using non-Gaussian auxiliary states combined with Gaussian operations, see e.g.,~\cite{PhysRevA.73.042313,PhysRevLett.123.080503}. Since all~$1$-, $2$- and~$3$-fermion states are Gaussian~\cite{meloPowerNoisyFermionic2013a}, $4$-fermion states provide the smallest non-trivial examples; these will also be our main focus in Section~\ref{sec:gaussianoverlap}. We refer to e.g.,~\cite{PhysRevA.73.042313,PhysRevLett.123.080503} for a discussion of these constructions.

We proceed as follows: In Section~\ref{sec:classicalsimulationalgorithms}, we formulate  in general terms how simulation algorithms for a model can be generalized to initial states that are superpositions: This follows known approaches for stabilizer circuits augmented by magic states. 
In Section~\ref{sec:sparsification} we review the relationship between the~$\cD$-extent and the~$\cD$-rank  defined by a dictionary~$\cD$.
In Section~\ref{sec:fastnormestimation} we discuss fast algorithms for estimating norms of superpositions of dictionary states.  
In Section~\ref{sec:classicalsimulationfermionic} we apply these constructions to our setup.

\subsection{Extending simulation algorithms to superpositions\label{sec:extendingtosuperpos}}

Here we discuss how to extend simulation algorithms for an efficiently simulable model $(\cD,\cE,\cM)$ in such a way that the resulting extended algorithms~$(\chi\Aevolve,\chi\Ameasprob,\allowbreak\chi\Apostmeasure)$
work with any initial state~$\Psi$ which is a superposition of~$\chi$ elements of~$\cD$ (i.e., has~$\cD$-rank bounded by~$\chi_{\cD}(\Psi)\leq \chi$). Our discussion is standard and is included only for the reader's convenience: It follows that for stabilizer states as discussed in~\cite{bravyiSimulationQuantumCircuits2019a}.

Recall that the dictionary~$\cD$ is a set of states, $\cE$ a set of operations and~$\cM$ a set of measurements.  In addition to the subroutines~$\Aevolve,\Ameasprob$ and~$\Apostmeasure$ 
for evolution and measurement associated with~$(\cD,\cE,\cM)$, the construction discussed here requires
an efficient  algorithm~$\Aoverlap$ which computes inner products~$\langle \Psi(d_1),\Psi(d_2)\rangle$ from descriptions~$(d_1,d_2)\in\desc^2_n$.  This means that the description~$d\in \desc_n$ of a state~$\Psi(d)$ must include phase information. For Gaussian states, the covariance matrix formalism has to be extended as discussed in Section~\ref{sec:trackingphases}.

Our goal is to find classical simulation algorithms for circuits of the following form:
\begin{enumerate}[(i)]
\item
The initial state~$\Psi^{(0)}=\Psi$ is a superposition of the form
\begin{align}
    \label{eq:decompositionIntoChiElems}
    \Psi= \sum_{j=1}^{\chi} \gamma_j \varphi_j \ .    
\end{align}
of~$\chi$ states~$\{\varphi_j\}_{j=1}^\chi\subset\cD$ with complex coefficients~$\{\gamma_j\}_{j=1}^\chi$. We assume that this decomposition is explicitly provided as an input to the classical algorithm in the form of a
$\chi$-tuple~$\{(\gamma_j,d_j)\}_{j=1}^{\cD}$, where~$d_j$ is an efficient classical descriptions of the state~$\varphi_j$.
\item
In each timestep~$t\in[T]$,
\begin{enumerate}[(a)]
\item\label{it:aevolvestep}
either an evolution operation~$E\in\cE$, or
\item
a measurement~$M\in\cM$ 
\end{enumerate}
is applied to the state. We assume that corresponding efficient descriptions of~$E$ respectively~$M$ are given to the classical simulation algorithm. 
\end{enumerate}
The algorithms~$(\Aevolve,\Ameasprob,\Apostmeasure,\Aoverlap)$ associated with the model \linebreak $(\cD,\cE,\cM)$ then immediately give rise to algorithms~$(\chi\Aevolve,\chi\Ameasprob,\chi\Apostmeasure)$ for simulating a more general circuit: At each time step~$t\in [T]$, the resulting algorithm maintains the data~$\{\gamma_j^{(t)},d_j^{(t))}\}_{j=1}^{\cD}$ describing the instantaneous state~$\Psi^{(t)}$ after step~$t$ as a linear combination
\begin{align}
\Psi^{(t)}&=\sum_{j=1}^\chi \gamma_j^{(t)}\Psi(d_j^{(t)})\ 
\end{align}
of vectors belonging to the dictionary~$\cD$, and the subroutines~$(\chi\Aevolve,\chi\Ameasprob, \allowbreak\chi\Apostmeasure)$  are used to successively update this description (respectively sample from corresponding measurement outcomes).

Before describing the extended routines~$\chi\Aevolve,\chi\Ameasprob,\chi\Apostmeasure$ in more detail, it is convenient to introduce a subroutine~$\chi\Anorm$ which takes as input a tuple $\{(\gamma_j,d_j)\}_{j=1}^\chi\in (\mathbb{C}\times \desc_n)^\chi$
and outputs the squared norm~$\|\sum_{j=1}^\chi \gamma_j \Psi(d_j)\|^2$. It is clear that such a primitive can be realized naively by using the algorithm~$\Aoverlap$ for computing inner products. This naive implementation, which we refer to as~$\chi\Anaivenorm$, requires time
\begin{align}
\mathsf{time}(\chi\Anaivenorm)&=\chi^2 \mathsf{time}(\Aoverlap)\ .
\end{align}
Let us now describe the procedures~$\chi\Aevolve,\chi\Ameasprob$ and~$\chi\Apostmeasure$, building on a (general) norm computation subroutine~$\chi\Anorm$.
\begin{enumerate}[(a)]
\item
if an evolution operation~$E\in E$  with description~$d_E$ is applied at time~$t$, then the description is updated by
setting
\begin{align}
\gamma_{j}^{(t)}&=\gamma_{j}^{(t-1)}\qquad\textrm{ and }\qquad d_j^{(t)}=\Aevolve(d_E,d_j^{(t-1)})\qquad\textrm{ for }\qquad j\in [\chi]\ .
\end{align}
This defines the algorithm~$\chi\Aevolve$.
The runtime of this update is~
\begin{align}
    \mathsf{time}(\chi\Aevolve)=\chi\cdot \mathsf{time}(\Aevolve)\ .
\end{align}

\item
if a (projective) measurement~$M=\{M_s\}_{s\in\cM}\in\cM$ with description~$d_M$ is applied to the state at time~$t$, then
the probability of obtaining~$s\in\cM$ is given by
\begin{align}
p(s|\Psi^{(t-1)})&= \|M_s\Psi^{(t-1)}\|^2=\left\|\sum_{j=1}^\chi \gamma_j^{(t-1)}\sqrt{p(s|\Psi^{(t-1)}_j)} \Psi^{(t-1)}_j(M,s)\right\|^2\ .
\end{align}
Here the probability~$p(s|\Psi^{(t-1)}_j)=\| M_s \Psi_j^{(t-1)} \|^2=\Ameasprob(d^{(t-1)}_j,d_{M},s)$ of obtaining outcome~$s$ when measuring~$\Psi^{(t-1)}_j$ can be efficiently obtained from the description~$d^{(t-1)}_j$ of~$\Psi_j^{(t-1)}$ and the description~$d_M$ of~$M$. 
(Summands~$j$ where the probability~$p(s|\Psi_j^{(t-1)})$ vanishes can be omitted from this sum.) 
Similarly, 
a description~$d_j(s)=\Apostmeasure(d^{(t-1)}_j,d_M,s)$ of the (normalized) post-measurement state~$\Psi^{(t-1)}_j(M,s)=\frac{1}{\sqrt{p(s|\Psi^{(t-1)}_j)}}M_s\Psi^{(t-1)}_j$  (when measuring~$\Psi^{(t-1)}_j$) can be obtained 
efficiently.
 In particular, setting~$\tilde{\gamma_j}=\gamma^{(t-1)}_j \sqrt{p(s|\Psi^{(t-1)}_j)}$, 
we conclude that  the outcome probability 
\begin{align}
p(s|\Psi^{(t-1)})&=\left\|\sum_{j=1}^\chi \tilde{\gamma}_j \Psi(d_j(s))\right\|^2\  \label{eq:squarenormcomputationb}
\end{align}
is the squared norm of a superposition of elements from~$\cD$.  This expression (together with the norm computation routine~$\chi\Anorm$) defines the algorithm~$\chi\Ameasprob$. In particular, given~$\{\tilde{\gamma}_j, d_j(s),p(s|\Psi_j^{(t-1)})\}_{j=1}^\chi$, the probability~$p(s|\Psi)$ can be evaluated (exactly) in runtime~$\mathsf{time}(\chi\Anorm)$. Since~$\chi\Ameasprob$ first has to compute the descriptions~$d_j(s)$ of the post-measurement states~$\Psi_j^{(t-1)}(M,s)$
and the probabilities~$\{p(s|\Psi_j^{(t-1)})\}_{j=1}^\chi$, its runtime is
\begin{align}
    \mathsf{time}(\chi\Ameasprob)= \mathsf{time}(\chi\Anorm)+\chi\cdot (\runtime(\Ameasprob) + \runtime(\Ameasure))\ .
\end{align} 
 One can easily verify that the post-measurement state
after time step~$t$ is given by
\begin{align}
\Psi^{(t)}&=\sum_{j=1}^\chi \gamma_j^{(t)}\Psi(d_j^{(t)})\ ,
\end{align}
where
\begin{align}
\gamma_j^{(t)}=\frac{\tilde{\gamma}_j}{\sqrt{p(s|\Psi^{(t-1)})}}\qquad\textrm{ and }\qquad d_j^{(t)}=d_j(s)\ .
\end{align} 
In particular, this means that (similarly as for~$\chi\Ameasprob$) we have an algorithm $\chi\Apostmeasure$ which given~$\{(\gamma_j,d_j^{(t-1)})\}_{j=1}^\chi$ and~$p(s|\Psi^{(t-1)})$, computes a description of the post-measurement state in time
\begin{align}
    \runtime(\chi\Apostmeasure)=\chi\cdot (\runtime(\Apostmeasure)+\runtime(\Ameasprob))\ .
\end{align}
Given the ability to compute~$p(s|\Psi)$ and assuming, e.g., that the number~$|\cM|$ of measurement outcomes is constant, one can then sample from this distribution (when the goal is to perform weak simulation) to get an outcome~$s\in\cM$.
\end{enumerate}

Using the naive algorithm~$\chi\Anaivenorm$ for~$\chi\Anorm$ gives 
runtimes
\begin{equation}
\begin{aligned}
\mathsf{time}(\chi\Aevolve)&=\chi\cdot\runtime(\Aevolve)\\
\mathsf{time}(\chi\Ameasprob)&=\chi^2\cdot \mathsf{time}(\Aoverlap) + \chi\cdot (\runtime(\Apostmeasure)+\runtime(\Ameasprob))\\
\mathsf{time}(\chi\Apostmeasure)&=\chi\cdot (\runtime(\Apostmeasure)+\runtime(\Ameasprob))
\end{aligned} \ .
\label{eq:naivealgorithmruntimes}
\end{equation}
As a function of~$\chi$, this is dominated by the computation of the squared norm~\eqref{eq:squarenormcomputationb} in~$\chi\Ameasprob$ which takes time~$O(\chi^2)$.

\subsection{Sparsification: Relating~$\cD$-extent to approximate~$\cD$-rank\label{sec:sparsification}}

Algorithms whose complexity depends on the~$\cD$-extent~$\xi_\cD(\Psi)$ instead of the (exact)~$\cD$-rank~$\chi_\cD(\Psi)$ (see Eq.~\eqref{eq:drankdefinition}) of the initial state~$\Psi$ can be obtained as follows. The idea consists in  replacing~$\Psi$ by a state~$\tilde{\Psi}$ which is~$\delta$-close to~$\Psi$ and has bounded~$\cD$-rank.
More precisely, it relies on the following result which connects the~$\cD$-extent~$\xi_\cD(\Psi)$ to the approximate~$\cD$-rank~$\chi_\cD^\delta(\Psi)$  defined in Eq.~\eqref{eq:approximaterankdefinition}.

\begin{theorem}[Theorem~1 in~\cite{bravyiSimulationQuantumCircuits2019a}]\label{thm:approximateextentexact}
Suppose~$\Psi=\sum_{j=1}^m \gamma_j \varphi_j$ is a decomposition of a normalized vector~$\Psi$ into a superposition of elements~$\{\varphi_j\}_{j=1}^m$ belonging to the dictionary~$\cD$. Then 
\begin{align}
\chi_\cD^\delta(\Psi)\leq 1+\|\gamma\|_1^2/\delta^2
\end{align}
where~$\|\gamma\|_1=\sum_{j=1}^m |\gamma_j|$ is the~$1$-norm of~$\gamma$. In particular, we have the relationship
\begin{align}
\chi_\cD^\delta(\Psi) &\leq 1+\xi_\cD(\Psi)/\delta^2\ .
\end{align}
\end{theorem}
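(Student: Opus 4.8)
The plan is to prove the bound $\chi_\cD^\delta(\Psi)\leq 1+\|\gamma\|_1^2/\delta^2$ by a probabilistic (Maurey-type) sparsification argument: given the decomposition $\Psi=\sum_{j=1}^m\gamma_j\varphi_j$ with $\varphi_j\in\cD$, I would interpret the normalized coefficient vector as a probability distribution over the dictionary elements, draw $k$ i.i.d. samples, form the empirical average, and show that with positive probability the resulting state vector is $\delta$-close to $\Psi$ once $k\geq\|\gamma\|_1^2/\delta^2$. Concretely, write $\Omega=\|\gamma\|_1=\sum_j|\gamma_j|$ and define a random unit-dictionary element $\omega$ taking value $\tfrac{\gamma_j}{|\gamma_j|}\varphi_j$ with probability $|\gamma_j|/\Omega$, so that $\ExpE[\omega]=\tfrac{1}{\Omega}\Psi$. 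For $k$ independent copies $\omega_1,\ldots,\omega_k$ of $\omega$, set $\tilde\Psi=\tfrac{\Omega}{k}\sum_{i=1}^k\omega_i$; then $\ExpE[\tilde\Psi]=\Psi$, and $\tilde\Psi$ is (by construction) a linear combination of at most $k$ elements of $\cD$, hence $\chi_\cD(\tilde\Psi)\leq k$.

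The key estimate is the second moment computation
\begin{align}
\ExpE\|\Psi-\tilde\Psi\|^2 &= \frac{\Omega^2}{k}\left(\ExpE\|\omega\|^2-\|\ExpE\omega\|^2\right)
= \frac{\Omega^2}{k}\left(1-\frac{\|\Psi\|^2}{\Omega^2}\right)
\leq \frac{\Omega^2}{k}\ ,
\end{align}
using that each $\omega_i$ has unit norm (elements of $\cD$ are unit vectors) and that $\|\Psi\|=1$. Hence if $k\geq\Omega^2/\delta^2=\|\gamma\|_1^2/\delta^2$, then $\ExpE\|\Psi-\tilde\Psi\|^2\leq\delta^2$, so there exists at least one realization with $\|\Psi-\tilde\Psi\|\leq\delta$. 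Taking $k=\lceil\|\gamma\|_1^2/\delta^2\rceil\leq 1+\|\gamma\|_1^2/\delta^2$ gives a state $\tilde\Psi$ with $\|\Psi-\tilde\Psi\|\leq\delta$ and $\chi_\cD(\tilde\Psi)\leq k$, which by definition of the approximate rank in Eq.~\eqref{eq:approximaterankdefinition} yields $\chi_\cD^\delta(\Psi)\leq 1+\|\gamma\|_1^2/\delta^2$. The second assertion follows by taking the infimum over all valid decompositions of $\Psi$: since $\xi_\cD(\Psi)=\inf\|\gamma\|_1^2$ over decompositions (Eq.~\eqref{eq:Loneminimization}), and the bound holds for every decomposition, we obtain $\chi_\cD^\delta(\Psi)\leq 1+\xi_\cD(\Psi)/\delta^2$ (strictly, one argues that for any $\epsilon>0$ there is a decomposition with $\|\gamma\|_1^2\leq\xi_\cD(\Psi)+\epsilon$, applies the first bound, and lets $\epsilon\to 0$, using that $\chi_\cD^\delta$ is integer-valued).

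I expect the only genuinely delicate point to be bookkeeping around normalization and the case of dictionary elements with zero coefficient (which are simply dropped from the sampling support), together with the minor subtlety that the infimum in the definition of $\xi_\cD$ may not be attained for an infinite dictionary — handled by the $\epsilon\to 0$ limit above. The core variance bound itself is routine once the right random variable is set up; the main conceptual content is recognizing that replacing the support-size functional by an $L^1$ constraint is exactly what makes the empirical-mean sparsification go through with a dimension-independent sample count.
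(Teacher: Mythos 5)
Your proposal is correct and is essentially the same argument as the one the paper relies on: the paper does not reproduce a proof but defers to Theorem~1 of~\cite{bravyiSimulationQuantumCircuits2019a}, whose proof is exactly this Maurey-type empirical-mean sparsification (sample dictionary elements with probability proportional to $|\gamma_j|$, bound the variance by $\|\gamma\|_1^2/k$, and conclude by the probabilistic method). Your handling of the two minor subtleties --- the non-attained infimum for infinite dictionaries via $\epsilon\to 0$ and integer-valuedness, and the fact that the paper's remark that ``inspection of the proof shows the statement applies to any dictionary'' is justified because the argument never uses any structure of $\cD$ beyond its elements being unit vectors --- is also sound.
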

In~\cite{bravyiSimulationQuantumCircuits2019a}, this result was established for the dictionary~$\cD=\stab_n$ consisting of~$n$-qubit stabilizer states. Inspection of the proof immediately shows that the statement is applicable to any dictionary~$\cD$ (independently of, e.g., whether or not it is finite). In particular, Theorem~\ref{thm:approximateextentexact} implies that in runtime upper bounds, the quantity~$\chi_\cD$ can always be replaced by (the potentially much smaller quantity)~$\xi_\cD(\Psi)/\delta^2$, at the cost of introducing an~$O(\delta)$-error in~$L^1$-distance in the sampled distribution. For example, using the naive norm estimation algorithm (i.e., inserting into~\eqref{eq:naivealgorithmruntimes}), this gives a quadratic scaling (for computing output probabilities) in~$\xi_{\cD}(\Psi)$. Note that here we are assuming that a decomposition of~$\Psi$ with squared~$L^1$-norm~$\|\gamma\|_1^2$ of coefficients achieving~$\xi_\cD(\Psi)$ is given.

\subsection{Fast norm estimation and approximate simulation \label{sec:fastnormestimation}}

In pioneering work, Bravyi and Gosset~\cite{bravyiImprovedClassicalSimulation2016} improved upon the  sketched algorithm in the case of stabilizer states. This was achieved  by replacing the~$O(\chi^2)$-runtime (naive) estimation algorithm~$\chi\Anaivenorm$ for the norm of a  superposition of stabilizer states by a probabilistic algorithm~$\chi\Afastnorm$. With success probability at least~$1-p_f$, the  algorithm~$\chi\Afastnorm$  provides 
an estimate~$\hat{N}$ of the squared norm~$N=\left\|\Psi\right\|^2$ of a superposition~$\Psi=\sum_{j=1}^\chi \gamma_j \varphi_j$ of~$n$-qubit stabilizer states~$\{\varphi_j\}_{j=1}^\chi$ with multiplicative error~$\epsilon$ (i.e., $\hat{N}\in [(1-\epsilon)N,(1+\epsilon)N])$, and has runtime~$O(\chi\cdot n^3\epsilon^{-2}p_f^{-1})$
 The key observation underlying the algorithm is the fact that the norm of interest can be expressed as 
\begin{align}
N &=2^{n}\ExpE_{\Theta}\left[|\langle \Theta,\Psi\rangle|^2\right]\ ,\label{eq:thetajpsioverlapexpectation}
\end{align}
i.e., it is proportional to the expected (squared) overlap of~$\Psi$ with a state~$\ket{\Theta}$ drawn uniformly at random from the set of~$n$-qubit stabilizer states. Given~$\{\gamma_j,\varphi_j\}_{j=1}^\chi$, this algorithm
proceeds by taking~$R=\lceil p_f^{-1}\epsilon^{-2} \rceil$ stabilizer states~$\Theta_1,\ldots,\Theta_R$ chosen uniformly from the set of all stabilizer states, and producing the estimate
\begin{align}
    \hat{N}=\frac{2^n}{R}\sum_{k=1}^R |\langle \Theta_k,\Psi\rangle|^2\ \label{eq:thetajpsioverlap}
\end{align}
of~$N$.
Importantly, expression~\eqref{eq:thetajpsioverlap} can be computed from (the descriptions of)~$\{\Theta_k\}_{k=1}^R$, $\{\varphi_j\}_{j=1}^\chi$ and the coefficients~$\{\gamma_j\}_{j=1}^\chi$, using~$\chi\cdot R$ calls of a subroutine~$\Aoverlap$ which computes the overlap of two stabilizer states (including phases). This is because each summand in~\eqref{eq:thetajpsioverlap} can be written as a sum 
\begin{align}
    |\langle \Theta_k,\Psi\rangle| &=\left|\sum_{j=1}^\chi \gamma_j \langle \Theta_k,\varphi_j\rangle\right|\label{eq:expressionoverlamindep}
\end{align}
of~$\chi$ such products. The resulting runtime of this norm estimation algorithm is thus~$O(\chi\cdot R\cdot \mathsf{time}(\Aoverlap))$ which amounts to the  claimed runtime~$O(\chi\cdot n^3\epsilon^{-2}p_f^{-1})$.

We note that a  similar reasoning can be applied to any situation where the norm of a superposition of dictionary elements of interest can be expressed as in Eq.~\eqref{eq:thetajpsioverlapexpectation} as the expected overlap of the inner product of~$\Psi$ with a state~$\Theta$ randomly chosen according to a suitable distribution over dictionary states. Specifically, as derived in Appendix~B of~\cite{bravyiComplexityQuantumImpurity2017a} and discussed below (see Section~\ref{sec:classicalsimulationfermionic}), this is the case for the set of fermionic Gaussian states.
The corresponding norm estimation algorithm then has a runtime of the form
\begin{equation}
\begin{aligned}
    \runtime(\chi\Afastnorm)
    &=O(\chi\cdot R\cdot \mathsf{time}(\Aoverlap) +R\cdot\runtime(\Asamplestate)) \\
    &=O(p_f^{-1} \epsilon^{-2} 
    (\chi\runtime(\Aoverlap) + \runtime(\Asampletheta) )
\end{aligned}
\label{eq:runtimeAfastnorm}
\end{equation}
where~$\Asamplestate$ is an algorithm producing a description of a state~$\Theta$ drawn randomly form the appropriate distribution. Importantly, the runtime~\eqref{eq:runtimeAfastnorm} is linear in~$\chi$, resulting in a linear dependence when replacing~$\chi$ by the stabilizer extent~$\xi_\cD(\Psi)$ as discussed in Section~\ref{sec:sparsification}. 

Algorithms~$(\approxAevolve,\approxAmeasprob,\approxApostmeasure)$ can now be obtained by using~$\chi\Afastnorm$ in place of~$\chi\Anorm$. 
The algorithm~$\approxAevolve$ is identical to~$\chi\Aevolve$ since it does not involve norm computations.
In contrast, $\approxAmeasprob$ is a probabilistic algorithm that can fail with probability~$p_f$ and both
$\approxAmeasprob$ and $\approxApostmeasure$ introduce  an error (in the sampled distribution and the post-mea\-surement state, respectively). This is because~$\chi\Afastnorm$ only estimates the norm of a superposition.

Finally, replacing~$\chi$ by the~$\cD$-extent~$\xi_\cD(\Psi)$, see Section~\ref{sec:sparsification} results in a triple of approximate algorithms
$(\approxAevolve,\approxAmeasprob,\approxApostmeasure)$ with 
parameters~$(\epsilon,\delta,p_f)$ describing the quality of approximation and failure probability as discussed in Section~\ref{sec:ourcontribution}.  By construction, the runtimes of these algorithms are
\begin{equation}
\begin{aligned}
\runtime(\approxAevolve) &= O\left( \frac{\xi_\cD(\Psi)}{\delta^2} \runtime(\Aevolve) \right)\\
\runtime(\approxAmeasprob) &= O\left( p_f^{-1} \epsilon^{-2} \left( \frac{\xi_\cD(\Psi)}{\delta^2} \runtime(\Aoverlap) + \runtime(\Asamplestate) \right) \right)\\
 &+ O\left(\frac{\xi_\cD(\Psi)}{\delta^2}\left( \runtime(\Apostmeasure) 
+ \runtime(\Ameasprob) \right) \right) \\
\runtime(\approxApostmeasure) &= O\left( \frac{\xi_\cD(\Psi)}{\delta^2} \left(\runtime(\Apostmeasure) + \runtime(\Asamplestate) \right) \right)
\end{aligned} \ .
\label{eq:runtimeapproximationalgorithms}
\end{equation}

\subsection{Fermionic linear optics with non-Gaussian initial states~\label{sec:classicalsimulationfermionic}}

The algorithms described above can be adapted in a straightforward manner to the problem of classically simulating fermionic linear optics with non-Gaussian initial states: We can simply use the efficient description of Gaussian states introduced in Section~\ref{sec:trackingphases}
and make use of the associated procedures~$\Aoverlap$, $\Aevolve$ as well as~$\Ameasprob$ and~$\Apostmeasure$.
In particular, observe that combining Eq.~\eqref{eq:naivealgorithmruntimes} with the runtimes~$O(n^3)$ for the algorithms~$\Aevolve$, $\Apostmeasure$ and~$\Aoverlap$, and~$O(1)$ for~$\Ameasprob$ (see Section~\ref{sec:trackingphases}) results in the runtimes given in Table~\ref{tab:runtimeextendedalgo} for exact simulation.

To get a linear scaling in the Gaussian extent~$\xi_{\cG_n}(\Psi)$ of the initial state (for approximate simulation), the naive norm estimation needs to be replaced. A fast norm estimation scheme for superpositions of  fermionic Gaussian has been described in Appendix~C of  Ref.~\cite{bravyiComplexityQuantumImpurity2017a}: Consider the following probabilistic process defined for a superposition~$\Psi=\sum_{j=1}^\chi \gamma_j \varphi_j$, $\varphi_j\in\cG_n$, $\gamma_j\in\mathbb{C}$ of~$n$-mode fermionic Gaussian states:
\begin{enumerate}[(i)]
    \item Sample~$K$ random Gaussian states~$\{\Theta_k\}_{k=1}^K$ independently and identically from  the distribution induced by 
    picking a permutation~$\pi\in S_{2n}$ and a string~$y\in \{0,1\}^n$ uniformly at random and outputting
\begin{align}
\ket{\Theta(\pi,y)}=U_{R_\pi}\ket{y} .\label{eq:thetaury} 
\end{align}
Here~$R_\pi=O(2n)$ is a permutation matrix specified by an element~$\pi\in S_{2n}$ and~$y\in \{0,1\}^n$.
    \item 
    Set 
    \begin{align}
\hat{N}=\frac{1}{K}\sum_{k=1}^K 2^n |\langle \Theta_k,\Psi\rangle|^2\ . \label{eq:thetakPsidefinitionaverage}
    \end{align}
    \end{enumerate}
The following  was shown in~\cite{bravyiComplexityQuantumImpurity2017a}.
\begin{lemma}[Lemma~10 in Ref.~\cite{bravyiComplexityQuantumImpurity2017a}]
For any~$p_f\in [0,1]$ and~$\epsilon>0$, consider the probabilistic  process described above with the choice~$K=\lceil 2\sqrt{n}\epsilon^{-2}p_f^{-1} \rceil$. Then the random variable~$\hat{N}$ satisfies 
$$
(1-\epsilon)\|\Psi\|^2 \leq \hat{N} \leq(1+\epsilon)\|\Psi\|^2 \ .
$$
with probability at least~$1-p_f$.
\end{lemma}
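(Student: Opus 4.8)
The plan is to show that $\hat N$ is an unbiased estimator of $\|\Psi\|^2$ whose relative fluctuations are controlled, and then to apply a second-moment (Chebyshev-type) concentration bound. First I would establish unbiasedness: writing $\hat N = \frac1K\sum_{k=1}^K X_k$ with $X_k = 2^n |\langle \Theta_k,\Psi\rangle|^2$, the $X_k$ are i.i.d., so it suffices to prove $\ExpE_{\Theta}[X_1]=\|\Psi\|^2$, i.e.\ Eq.~\eqref{eq:thetajpsioverlapexpectation} specialized to the fermionic dictionary. This follows from a $1$-design property of the sampling distribution over states $\ket{\Theta(\pi,y)}=U_{R_\pi}\ket{y}$: one computes $\ExpE_{\pi,y}[\proj{\Theta(\pi,y)}]$ by averaging over $y\in\{0,1\}^n$ uniformly (which gives the maximally mixed state on the span of all number states reachable, in fact $2^{-n}I$ once one also averages over $\pi$, using that permuting the $2n$ Majorana modes and flipping occupation numbers together act transitively enough on the relevant operator basis) and concludes $\ExpE_{\pi,y}[\proj{\Theta(\pi,y)}]=2^{-n}I$. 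Then $\ExpE[X_1] = 2^n\,\mathrm{tr}(\proj{\Psi}\,\ExpE[\proj{\Theta}]) = 2^n\cdot 2^{-n}\|\Psi\|^2 = \|\Psi\|^2$. I would cite the relevant computation from Appendix~B/C of~\cite{bravyiComplexityQuantumImpurity2017a} rather than redo it.

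Next I would bound the variance. Since the $X_k$ are i.i.d., $\var(\hat N) = \tfrac1K \var(X_1) \le \tfrac1K \ExpE[X_1^2]$. The key quantitative input — again taken from~\cite{bravyiComplexityQuantumImpurity2017a} — is a bound of the form $\ExpE[X_1^2] \le c\sqrt{n}\,\|\Psi\|^4$ for an absolute constant $c$ (the $\sqrt n$ is what forces the factor $2\sqrt n$ in the choice of $K$; it comes from a second-moment/$2$-design-deficiency estimate for the Gaussian-type ensemble, where the overlap $|\langle\Theta,\Psi\rangle|^2$ has heavier tails than for a true $2$-design by a $\mathrm{poly}(n)$ factor). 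Granting this, $\var(\hat N)\le c\sqrt n\,\|\Psi\|^4/K$.

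Finally I would apply Chebyshev's inequality: $\Pr\big[|\hat N - \|\Psi\|^2| > \epsilon\|\Psi\|^2\big] \le \var(\hat N)/(\epsilon^2\|\Psi\|^4) \le c\sqrt n/(K\epsilon^2)$. Choosing $K = \lceil 2\sqrt n\,\epsilon^{-2} p_f^{-1}\rceil$ (absorbing $c$ into the constant $2$, or carrying it explicitly) makes this probability at most $p_f$, which is exactly the claimed bound $(1-\epsilon)\|\Psi\|^2 \le \hat N \le (1+\epsilon)\|\Psi\|^2$ with probability $\ge 1-p_f$. The main obstacle is the second step: establishing the second-moment bound $\ExpE[X_1^2]=O(\sqrt n)\|\Psi\|^4$ for the specific ensemble $\{\ket{\Theta(\pi,y)}\}$ is the genuinely non-trivial estimate — it requires understanding how far this ensemble is from a $2$-design and controlling the worst-case state $\Psi$ — whereas unbiasedness and Chebyshev are routine. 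Since the statement is quoted verbatim from~\cite{bravyiComplexityQuantumImpurity2017a}, in the paper itself I would simply invoke that reference for both moment computations and present only the assembly via Chebyshev.
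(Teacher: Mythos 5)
The paper does not prove this lemma at all: it is imported verbatim by citation to Ref.~\cite{bravyiComplexityQuantumImpurity2017a}, so there is no in-paper proof to compare against. Your sketch correctly reconstructs the argument of that reference --- unbiasedness from the $1$-design property $\ExpE_{\pi,y}\bigl[\proj{\Theta(\pi,y)}\bigr]=2^{-n}I$ (which in fact already follows from averaging over $y$ alone, since the number states form an orthonormal basis and conjugation by $U_{R_\pi}$ preserves the identity), the $O(\sqrt{n})\,\|\Psi\|^4$ second-moment bound, and Chebyshev --- and you correctly identify the second-moment estimate as the only non-trivial ingredient; deferring it to the cited reference is consistent with how the paper itself treats the entire lemma.
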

A description of a state proportional to~$\Theta_k$ can be computed from the associated pair~$(\pi_k,y_k)\in S_{2n}\times \{0,1\}^n$  using the subroutine~$\Adescribe$, for each~$k\in [K]$, as follows (see Fig.~\ref{fig:Asampletetha} for pseudocode for the algorithm). The covariance matrix~$\cov_k=R_{\pi_k} \cov(\ket{y})R_{\pi_k}^\dagger$ of such a state can be computed in time~$O(n^3)$ from~$(\pi_k,y)$, and applying~$\Adescribe$ to~$\cov_k$ gives the desired description. We note that any such state can be used in place of~$\Theta_k$ since the expression~\eqref{eq:thetakPsidefinitionaverage} (and, in particular,~\eqref{eq:expressionoverlamindep}) does not depend on the global phase of~$\Theta_k$.  With the definition~\eqref{eq:thetakPsidefinitionaverage}, it follows that the probabilistic process described here can be simulated in time given by Eq.~\eqref{eq:runtimeAfastnorm} using~$K$ calls to 
the subroutine~$\Asampletheta$ shown in Fig.~\ref{fig:Asampletetha}, and subsequent use of~$\Aoverlap$
to compute the empirical average~\eqref{eq:thetakPsidefinitionaverage}.
\begin{figure}[H]
\begin{mdframed}[
    linecolor=black,
    linewidth=0.5pt,
    roundcorner=2pt,
    backgroundcolor=white, 
    userdefinedwidth=\textwidth,
]
\begin{algorithmic}[1]
        \Function{$\Asampletheta$}{}
        \State{$y \leftarrow$ uniform random string in~$\{0,1\}^n$ }
        \State{$\pi \leftarrow$ uniform random permutation in~$S_{2n}$ }
        \State{$R_\pi \in O(2n)$} 
        \For{$j \in [2n]$} 
        \Comment{compute the permutation matrix~$R_\pi$}
            \State{$R_\pi[j] \leftarrow e_{\pi(j)}$ }
        \EndFor
        \State{$\cov \leftarrow R_\pi \cov(\ket{y]}) R_\pi^\dagger$}
        \Comment{compute the covariance matrix of~$\ket{\Theta(\pi,y)} = R_{\pi} \ket{y}$}
        \State{\textbf{return} $ \Adescribe(\cov) $ }
        \Comment{output a description of~$\ket{\Theta(\pi,y)}$}
        \EndFunction
\end{algorithmic}
\end{mdframed}
\caption{The algorithm~$\Asampletheta$ outputs a classical description of a state~$\ket{\Theta(\pi,y)} = R_{\pi} \ket{y}$ where~$\pi\in S_{2n}$ and~$y \in \{0,1\}^n$ are taken  uniformly at random.
\label{fig:Asampletetha}}
\end{figure}

Because the runtimes of~$\Adescribe$ and~$\Aoverlap$ are both upper bounded by~$O(n^3)$, this leads to an overall runtime of~$O\left(n^{7 / 2} \epsilon^{-2} p_f^{-1} \chi\right)$ of this algorithm for computing the estimate~$\hat{N}$ of~$\|\Psi\|^2$.
We note this conclusion about the runtime was also reached in  Ref.~\cite{bravyiComplexityQuantumImpurity2017a}, although the issue of a potential lack of a phase reference applicable throughout the computation
was not considered there. This issue is resolved by our description of Gaussian states, see Section~\ref{sec:trackingphases}. 

Combing this algorithm with the runtimes given in Eq.~\eqref{eq:runtimeapproximationalgorithms} and with the runtimes~$O(n^3)$ for the algorithms~$\Aevolve$, $\Apostmeasure$ and~$\Aoverlap$, and~$O(1)$ for~$\Ameasprob$ (see Section~\ref{sec:trackingphases}) 
gives runtimes claimed in Table~\ref{tab:approximateruntimes} for the algorithms~$\approxAevolve$, $\approxAmeasprob$ and $\approxApostmeasure$.

To give an idea of the feasibility of running these algorithms, let us compare to~\cite{bravyiImprovedClassicalSimulation2016}, where Clifford circuits on $n=40$ qubits with up to~$t=48$~$T$-gates were simulated by means of an algorithm relying on the stabilizer formalism augmented with stabilizer-decompositions of magic states. 
These simulations were performed using MATLAB on a laptop with a 2.6GHz Intel i5 Dual Core CPU. Runtimes of individual building blocks are given in~\cite[Table I]{bravyiImprovedClassicalSimulation2016} and are of the order of tens of milliseconds even for 100~qubits. 
The  algorithms of~\cite{bravyiImprovedClassicalSimulation2016} have a scaling which is linear in the approximate (stabilizer) rank. 
Since this quantity scales exponentially in the number of copies of a single magic state, this dominates the runtime, which includes additional factors that are polynomial in the number of qubits and gates. In more detail,  the authors of~\cite{bravyiImprovedClassicalSimulation2016} consider the non-stabilizer state
\begin{align}
    |A\rangle = \frac{1}{\sqrt{2}} (|0\rangle + e^{i\pi/4} |1\rangle  )\ . 
\end{align}
A copy of this (magic) state can  be used together with stabilizer operations to implement the (non-Clifford)~$T$-gate~\cite{PhysRevA.71.022316}. 
The approximate stabilizer rank (which is defined in terms of fidelity rather than distance in~\cite{bravyiImprovedClassicalSimulation2016}) of $t$~copies of this state is shown to be bounded by~
\begin{align}
    \tilde{\chi}^\delta(|A\rangle^{\otimes t})=O(2^{\gamma t} \delta^{-1})\qquad\textrm{ where }\qquad \gamma\approx 0.228\ .\label{eq:chitm}
\end{align}      
Now consider our simulation procedure: Here the dependence of the runtime is linear in the  Gaussian extent~$\xi_{\mathcal{G}_n}$ and thus again essentially linear in the~$\delta$-approximate Gaussian rank~$\chi_{\mathcal{G}_n}^\delta$ (see Theorem~\ref{thm:approximateextentexact}), and this quantity dominates the runtime (which includes polynomial factors in the number of fermionic modes and gates). 
Thus our simulation algorithms have a runtime comparable to that of~\cite{bravyiImprovedClassicalSimulation2016} when substituting the fermionic Gaussian extent of the initial state with the approximate stabilizer rank (respectively the stabilizer extent) of the initial state in the Clifford setting.
    
To make this more concrete, consider the state
\begin{align}
    \label{eq:a8}
    |a_8\rangle = \frac{1}{\sqrt{2}} \left( |0000\rangle + |1111\rangle \right)\ , 
\end{align} which is 
a magic state for Gaussian computations: As shown in \cite[Lemma 1]{PhysRevA.73.042313}, the non-Gaussian unitary (magic gate)~$U_{\textrm{magic}}=\exp(i\pi c_1c_2c_3c_4)$ can be implemented with Gaussian operations and a single copy of this state.     
The state~$|a_8\rangle$ has Gaussian extent
\begin{align}
    \xi_{\mathcal{G}_n}(|a_8\rangle) = 2\ \label{eq:a8extenteq}
\end{align}
and thus we have 
\begin{align}
    \label{eq:upperbounda8} 
    \xi_{\mathcal{G}_n}(|a_8\rangle^{\otimes t'}) \leq  2^{t'} 
\end{align} 
for $t'$ copies of~$|a_8\rangle$. (We note that in fact, the results of~\cite{StrelchukCudby} imply that equality  holds in~\eqref{eq:upperbounda8}.)
Eq.~\eqref{eq:a8extenteq} can be shown as follows: The upper bound~$\xi_{\mathcal{G}_n}(|a_8\rangle) \leq  2$ is a direct consequence of the decomposition given in \eqref{eq:a8} into a sum of the Gaussian states~$|0000\rangle$ and~$|1111\rangle$. 
The dual formulation of the extent implies that  
\begin{align}
    \xi_{\mathcal{G}_n}(|\Psi\rangle) 
    &\geq  \frac{1}{F_{\mathcal{G}_n}(|\Psi\rangle)}\qquad\textrm{ for any state } \Psi\in\mathcal{H}_n\ ,\label{eq:dualityextentineq}
\end{align}
see~\eqref{eq:dualformulationxi} with the choice~$y=|\Psi\rangle / \sqrt{F_{\mathcal{G}_n}(|\Psi\rangle)}$.
Using the fact that~$F_{\mathcal{G}_n}(|a_8\rangle) =1/2$ (see~\cite{BOTERO200439} --  the inequality~$F_{\mathcal{G}_n}(|a_8\rangle) \leq 1/2$ which is needed here was previously shown in \cite{meloPowerNoisyFermionic2013a}) and~\eqref{eq:dualityextentineq} gives the lower bound~$\xi_{\mathcal{G}_n}(|a_8\rangle) \geq  2$, establishing~\eqref{eq:a8extenteq}.

Comparing the exponents in~\eqref{eq:chitm} and~\eqref{eq:upperbounda8}, we can give the following rough estimate: we expect that the cost (runtime) of simulating $t=48$~$T$-gates in the stabilizer framework is comparable with the cost of simulation of roughly~$t'\approx 10$ gates~$U_{\textrm{magic}}$ in the fermionic context.    
Similar estimates apply to other non-Gaussian operations such that the SWAP gate, a magic gate in the context of fermionic Gaussian computation. The latter can be implemented using a certain non-Gaussian state as shown in~\cite{PhysRevLett.123.080503}.

\subsection{Efficient additive-error strong simulation}

In a different direction of generalization, building upon the work~\cite{bravyiImprovedClassicalSimulation2016} and making innovative use of a concentration inequality by Hayes~\cite{Hayes} for vector martingales, Ref.~\cite{PRXQuantum.3.020361} gives a randomized algorithm which, for a state~$\Psi$ obtained by applying~$n$-qubit Clifford gates and~$t$ (non-Clifford)~$T$-gates to~$\ket{0}^{\otimes n}$, provides an additive-error estimate~$\hat{p}(x)$ for the probability~$p(x)=\|(\langle x|\otimes I^{\otimes (n-a)})|\Psi\rangle\|^2$ of observing~$a$~qubits in the state~$\ket{x}$, with~$x\in \{0,1\}^a$. The algorithm is based on a procedure by which the probability~$p(x)$ of interest is expressed in terms of the squared norm~$\left\|\left(\bra{0}^{\otimes t-r}\otimes I^{\otimes r}\right)W\ket{\Psi}\right\|^2$ of a partially projected state, where~$\Psi$ is a tensor product of~$t$ non-stabilizer single-qubit states (arising from gadgetization), $W$ a certain Clifford unitary, and~$r$ a circuit-dependent integer. The failure probability of the constructed algorithm is then upper bounded (see~\cite[Theorem 3]{PRXQuantum.3.020361}) by an expression depending on~$p(x)$, the error~$\epsilon$,  the stabilizer rank~$\xi_{\stab_n}(\Psi)$ (taking the role of~$\chi$) of the product state~$\Psi$, as well as two additional parameters than be chosen freely, but enter into the (polynomial) runtime estimate.

The described method of adapting fast algorithms for simulating Clifford circuits with non-stabilizer initial states can be applied in a similar manner to this algorithm, since this also reduces to computing inner products (including phases) between Gaussian states.

\section{Multiplicativity of the Gaussian fidelity for~$4$ fermions\label{sec:gaussianoverlap}}

The main result of this section is a proof that the fermionic Gaussian fidelity is multiplicative for the tensor product of any two positive-parity 4-fermion states. 

We begin in Section~\ref{sec:4fermionstates} by laying out some specific properties of~$4$-fermion states. We discuss a Gaussianity condition specific to 4-fermion states \cite{oszmaniecClassicalSimulationFermionic2014a} and we write an explicit expression for any 4-fermion state as a superposition of two orthogonal (Gaussian) states. This was first introduced in Refs.~\cite{oszmaniecClassicalSimulationFermionic2014a,kusClassicalQuantumStates2009}. In Section~\ref{sec:gaussianoverlap4fermionstates} we establish properties of the fermionic Gaussian fidelity for 4-fermion states which are subsequently used in Section~\ref{sec:multoverlap4fermionstatesSubsection} to prove that the fermionic Gaussian fidelity is multiplicative for the tensor product of any two 4-fermion states. 

\subsection{Four-fermion Gaussian and non-Gaussian states \label{sec:4fermionstates}}

Key to our considerations is a certain antiunitary map~$\theta$ acting on~$\cHeven^4$, the positive-parity subspace of 4 fermions spanned by~$\{ \ket{x} \}_{x\in\sbin^4_+}$. It is defined
by its action
\begin{align}
\theta \ket{x}&=(-1)^{\vartheta(x)}\ket{\overline{x}} \ ,
\label{eq:mapThetaDef}
\end{align}
for~$x\in \sbineven^4$, on basis states (antilinearly extended to all of~$\cHeven^4$), where~$\vartheta(x) = x_1+x_3 \mod 2= x_2+x_4 \mod 2 = \vartheta(\bar{x})$.  
Here~$\overline{x} = (\overline{x}_1, \ldots, \overline{x}_n)$ is obtained by flipping each bit of~$x$.
The relevant properties of this map are the following. 
 We note that the following statement has been given in~\cite[Eq. (9)]{oszmaniecClassicalSimulationFermionic2014a},
along with a negative-parity version.
\begin{lemma}[\!\!\cite{oszmaniecClassicalSimulationFermionic2014a}]
\label{lem:thetamapGaussianity}
A state~$\Psi\in\cHeven^4$ is Gaussian if and only if
\begin{align}
\langle \Psi,\theta\Psi\rangle &=0\ .
\end{align}
\end{lemma}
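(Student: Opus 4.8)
The plan is to characterize Gaussianity of a positive-parity 4-fermion state~$\Psi$ via the condition~$\Lambda(\ket{\Psi}\otimes\ket{\Psi})=0$ from Lemma~\ref{lem:gaussianitylambdapurestates}, and then to show that for~$4$ fermions this condition is equivalent to the single scalar equation~$\langle\Psi,\theta\Psi\rangle=0$. Concretely, I would write~$\Psi=\sum_{x\in\sbin^4_+}\psi_x\ket{x}$ with the $8$ coefficients~$\{\psi_x\}$ indexed by the positive-parity strings~$\{0000,1100,1010,1001,0110,0101,0011,1111\}$. The operator~$\Lambda=\sum_{j=1}^{8}c_j\otimes c_j$ acts on~$\cH^4\otimes\cH^4$; since each~$c_j$ is odd, $\Lambda$ maps~$\cHeven^4\otimes\cHeven^4$ into~$\cH^4_-\otimes\cH^4_-$. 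The first step is a bookkeeping computation: express~$\Lambda(\ket{\Psi}\otimes\ket{\Psi})$ explicitly in the number-state basis using Eq.~\eqref{eq:ajOnState} (equivalently the Majorana action formulas for~$c_{2j-1}$ and~$c_{2j}$), and collect the resulting components.

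The key observation to make is that, because~$\Psi$ has fixed parity, $\Lambda(\ket{\Psi}\otimes\ket{\Psi})$ is built from the antisymmetric (i.e., exterior-square) part of the coefficient data. The map~$\theta$ is, up to phases, the ``complement'' involution on the $8$-dimensional space~$\cHeven^4$, and the quantity~$\langle\Psi,\theta\Psi\rangle=\sum_{x\in\sbin^4_+}(-1)^{\vartheta(x)}\overline{\psi_x}\,\overline{\psi_{\bar x}}$ is (up to conjugation) a Pfaffian-type quadratic form pairing~$\psi_x$ with~$\psi_{\bar x}$. I would compute the squared norm~$\|\Lambda(\ket{\Psi}\otimes\ket{\Psi})\|^2$ and show it equals a nonnegative multiple of~$|\langle\Psi,\theta\Psi\rangle|^2$ (the multiple being a positive constant, or possibly~$\|\Psi\|^2$ times a constant). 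This gives both directions at once: $\Lambda(\ket{\Psi}\otimes\ket{\Psi})=0$ iff~$\langle\Psi,\theta\Psi\rangle=0$, hence~$\Psi$ Gaussian iff~$\langle\Psi,\theta\Psi\rangle=0$ by Lemma~\ref{lem:gaussianitylambdapurestates}. Alternatively, and perhaps more cleanly, I would invoke the representation-theoretic picture mentioned in the text: on~$\cHeven^4$ the metaplectic/spin group acts, the Gaussian states form the orbit of a highest-weight vector, and such orbits are cut out by the vanishing of a single quadratic invariant — here the invariant bilinear form realized by~$\theta$ composed with complex conjugation. Identifying~$\langle\Psi,\theta\Psi\rangle$ with that invariant then yields the claim directly; this is essentially the content of~\cite{oszmaniecClassicalSimulationFermionic2014a} and one checks that~$\theta$ as defined in Eq.~\eqref{eq:mapThetaDef} is the correct intertwiner by verifying it is antiunitary, squares to~$+1$ on~$\cHeven^4$, and commutes appropriately with the generators~$U_{j,k}(\vartheta)$ restricted to the even sector.

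A useful sanity check along the way: verify the claim on the two extreme examples. For~$\Psi=\ket{0000}$ (Gaussian), only~$\psi_{0000}\neq 0$ and~$\psi_{1111}=0$, so~$\langle\Psi,\theta\Psi\rangle=(-1)^{\vartheta(0000)}\overline{\psi_{0000}}\,\overline{\psi_{1111}}=0$ — consistent. For~$\Psi=|a_8\rangle=(\ket{0000}+\ket{1111})/\sqrt2$ (non-Gaussian, as noted in Section~\ref{sec:classicalsimulationfermionic}), one gets~$\langle\Psi,\theta\Psi\rangle=\tfrac12\big((-1)^{\vartheta(0000)}+(-1)^{\vartheta(1111)}\big)\overline{\psi}=\pm1\neq0$ since~$\vartheta(0000)=\vartheta(1111)$ — also consistent, and this pins down the relevant sign conventions.

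I expect the main obstacle to be the first, purely computational step: expanding~$\Lambda(\ket{\Psi}\otimes\ket{\Psi})$ in the~$256$-dimensional doubled space and organizing the~$\binom{8}{2}$ antisymmetric coefficient combinations so that the collapse to the single form~$\langle\Psi,\theta\Psi\rangle$ becomes transparent. The anticommutation signs~$(-1)^{\eta_j(x)}$ from Eq.~\eqref{eq:ajOnState} must be tracked carefully, and it is easy to make sign errors; the cleanest route is probably to note that~$\Lambda$ is~$SO(8)$-covariant (it is the Casimir-like intertwiner), so its kernel on~$\cHeven^4\otimes\cHeven^4$ is an invariant subspace, and then to decompose~$\cHeven^4\otimes\cHeven^4$ into irreducibles under the relevant group to see that the "Gaussian locus" is exactly the zero set of one invariant. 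Matching that invariant with~$\langle\Psi,\theta\Psi\rangle$ — rather than grinding the~$c_j$-action by brute force — is the step I would lean on to keep the argument short, deferring the explicit verification of~$\theta$'s intertwining property to a short lemma.
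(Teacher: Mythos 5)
Your proposal follows essentially the same route as the paper's proof in \appref{sec:thetamapGaussianity}: both reduce Gaussianity to the condition $\Lambda(\ket{\Psi}\otimes\ket{\Psi})=0$ via Lemma~\ref{lem:gaussianitylambdapurestates} and then verify, by a direct number-basis computation tracking the signs from Eq.~\eqref{eq:ajOnState}, that every nonvanishing component of $\Lambda(\ket{\Psi}\otimes\ket{\Psi})$ is proportional to the single quadratic form $\sum_{x\in\{0,1\}^4_+}(-1)^{\vartheta(x)}\langle \Psi,x\rangle\langle \Psi,\overline{x}\rangle$, which is (up to conjugation and a factor of two) exactly $\langle\Psi,\theta\Psi\rangle$. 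Packaging the equivalence as $\|\Lambda(\ket{\Psi}\otimes\ket{\Psi})\|^2\propto|\langle\Psi,\theta\Psi\rangle|^2$ rather than component by component is a cosmetic difference, and your sanity checks on $\ket{0000}$ and $\ket{a_8}$ are consistent with the paper's conventions.
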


\begin{proof}
This follows from the Gaussianity criterion given in Lemma~\ref{lem:gaussianitylambdapurestates}. We give the proof in Appendix~\ref{sec:thetamapGaussianity}.
\end{proof}

\begin{lemma}
\label{lem:thetaCCcommute}
We have~$\theta c_j c_k = c_j c_k \theta~$ for all~$j,k\in[8]$.
\end{lemma}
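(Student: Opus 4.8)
\textbf{Proof plan for Lemma~\ref{lem:thetaCCcommute}.}

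The plan is to verify the commutation relation $\theta c_j c_k = c_j c_k \theta$ for all $j,k\in[8]$ by checking it on the basis $\{\ket{x}\}_{x\in\sbin^4_+}$ of $\cHeven^4$, using antilinearity of $\theta$ together with the explicit action of the Majorana operators on number states recorded in Eqs.~\eqref{eq:ajOnState} and~\eqref{eq:MajoranaToDiracFermion}. First I would note that it suffices to treat the case where $c_jc_k$ is an \emph{even}-weight Majorana monomial (i.e.\ $j\neq k$): when $j=k$ we have $c_j^2 = I$ and the claim is trivial, and the general product of two distinct Majorana operators maps the positive-parity sector $\cHeven^4$ to itself, so $\theta c_jc_k$ is well-defined on $\cHeven^4$. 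I would then reduce further: since any even-weight monomial on $4$ fermions is (up to a sign and a power of $i$) a product of the ``pair'' operators $ic_{2a-1}c_{2a}$ (which act diagonally as $\ket{x}\mapsto (-1)^{x_a}\ket{x}$, cf.\ the definition of $\Pi_a(s)$) and the ``hopping'' operators of the form $c_{2a-1}c_{2b-1}$, $c_{2a-1}c_{2b}$, etc., it is enough to check the relation for these generating types, and then extend multiplicatively using the fact that a product of two commuting-with-$\theta$ operators again commutes with $\theta$ (here one uses antilinearity: if $\theta A = A\theta$ and $\theta B = B\theta$ then $\theta AB = A\theta B = AB\theta$, no complex conjugation subtlety since no scalars move).

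Concretely, for a diagonal generator $D = ic_{2a-1}c_{2a}$ one has $D\ket{x} = (-1)^{x_a}\ket{x}$, while $\theta D\ket{x} = (-1)^{x_a}\theta\ket{x} = (-1)^{x_a}(-1)^{\vartheta(x)}\ket{\bar x}$; on the other hand $D\theta\ket{x} = (-1)^{\vartheta(x)} D\ket{\bar x} = (-1)^{\vartheta(x)}(-1)^{\bar x_a}\ket{\bar x}$. Since $(-1)^{x_a} = (-1)^{\bar x_a}$ would be false, one must instead observe that $D$ is anti-self-adjoint times $i$\dots{}--- more carefully, $ic_{2a-1}c_{2a}$ is Hermitian with eigenvalue $(-1)^{x_a}$, and we need $(-1)^{x_a} = (-1)^{\bar x_a}$, which fails; the resolution is that the sign $(-1)^{\vartheta(x)}$ also changes, or rather the correct bookkeeping is that $\overline{(-1)^{x_a}}=(-1)^{x_a}$ since these are real, so the genuine check is whether $(-1)^{x_a}=(-1)^{\bar x_a + (\text{something})}$. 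I would carry this out honestly term by term, and similarly for each hopping generator $c_jc_k$, tracking both the Majorana phases $(-1)^{\eta}$ from Eq.~\eqref{eq:ajOnState} and the definitional sign $(-1)^{\vartheta(x)}$ from Eq.~\eqref{eq:mapThetaDef}, and the flip $x\mapsto\bar x$. The key structural input is that $\vartheta(x)=\vartheta(\bar x)$, so the defining sign of $\theta$ is invariant under complementation; this is exactly what makes the two sides match up.

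Alternatively --- and this may be the cleaner route --- I would relate $\theta$ to a fixed \emph{antilinear} operator already present in the literature, namely time reversal. On $4$ modes, $\theta$ should coincide (up to a fixed sign/basis change) with the composition of complex conjugation $\mathcal{C}$ in the number-state basis with a fixed Gaussian unitary, and Gaussian unitaries by definition conjugate the real span of the $c_j$ into itself. Since the $c_j$ are \emph{real} self-adjoint matrices in the number-state basis (inspect Eq.~\eqref{eq:ajOnState}: $c_{2j-1}$ is real and $c_{2j}=i(a_j-a_j^\dagger)$ is real), complex conjugation commutes with each $c_j$, hence with $c_jc_k$; one then checks the residual unitary part is among the operators commuting with all $c_jc_k$ (e.g.\ it could be the parity or identity). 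I would pick whichever of these two arguments gives the shortest fully rigorous write-up.

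\textbf{Main obstacle.} The main difficulty is purely bookkeeping: getting \emph{all} the signs right simultaneously --- the Jordan--Wigner strings $(-1)^{\eta_j(x)}$, the extra $\pm i$ in $c_{2j}$, the definitional sign $(-1)^{\vartheta(x)}$ of $\theta$, and the interaction of antilinearity with these --- without an arithmetic slip. The identity $\vartheta(x)=\vartheta(\bar x)$ is doing the real work, and I expect the cleanest proof is the ``$c_j$ are real matrices, so $\theta$ factors through complex conjugation'' observation, which reduces the lemma to the trivial statement that conjugating a real matrix identity is again the same identity; I would lead with that and relegate the direct basis computation to a remark or appendix.
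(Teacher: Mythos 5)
Your two proposed routes are in fact the paper's two proofs (the paper first factors $\theta = c_1c_3c_5c_7K$ with $K$ complex conjugation in the number basis, and then separately gives a direct computation), but both of your executions have concrete gaps. In your second, ``cleaner'' route, the premise that all $c_j$ are real matrices in the number-state basis is false: by Eq.~\eqref{eq:ajOnState}, $(a_j-a_j^\dagger)\ket{x}=(-1)^{\eta_j(x)}(x_j-\overline{x_j})\ket{x\oplus e_j}$ has real entries, so $c_{2j}=i(a_j-a_j^\dagger)$ has \emph{purely imaginary} entries and anticommutes with $K$, while $c_{2j-1}$ is real and commutes with $K$. Consequently $K$ does \emph{not} commute with every $c_jc_k$ --- it anticommutes precisely when one index is odd and one is even --- and the residual unitary $c_1c_3c_5c_7$ is not ``an operator commuting with all $c_jc_k$'' either: being an even monomial of weight $4$, it anticommutes with $c_jc_k$ exactly when one of $j,k$ lies in $\{1,3,5,7\}$, i.e.\ in exactly the same cases. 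The lemma holds because these two minus signs cancel case by case; your plan, which expects each factor separately to commute, would not close without noticing this cancellation.

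Your first route also stalls at a real sign issue that you flag but do not resolve. For $D=ic_{2a-1}c_{2a}$ one genuinely has $\theta D=-D\theta$ (your computation showing $(-1)^{x_a}\neq(-1)^{\overline{x}_a}$ is correct), so the diagonal ``generators'' you chose do \emph{not} commute with $\theta$; the lemma is saved only because it concerns $c_{2a-1}c_{2a}=-iD$, and antilinearity of $\theta$ conjugates that explicit $i$, supplying exactly the compensating sign. Working with Hermitian combinations $ic_jc_k$ therefore sabotages the statement you are trying to prove. The clean way to organize the bookkeeping --- and what the paper's second proof does --- is to prove the \emph{linear} statement $\theta c_j=-c_j\theta$ for every single $j\in[8]$ (using $\overline{x\oplus e_j}=\overline{x}\oplus e_j$, $\eta_j(\overline{x})=(j-1)-\eta_j(x)$, and the behaviour of $\vartheta$ under a single bit flip), from which $\theta c_jc_k=-c_j\theta c_k=c_jc_k\theta$ is immediate with no case analysis on pairs. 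I would recommend restructuring around that single-Majorana anticommutation rather than around quadratic generators.
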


\begin{proof}
    See Appendix~\ref{sec:thetaCCcommute}.
\end{proof}

The following result was first shown in Ref.~\cite{kusClassicalQuantumStates2009}. 

\begin{lemma}[\hspace{-0.05mm}\cite{kusClassicalQuantumStates2009}]\label{lem:evenstatedecompositionone}
Let~$\Psi\in\cHeven^4$ be a unit vector. Then
there are two orthogonal unit vectors~$\Psi_1,\Psi_2\in \cHeven^4$, $\varphi\in [0,2\pi)$ and~$a\in [0,1/\sqrt{2}]$ such that 
\begin{align}
\theta\Psi_j&=\Psi_j\qquad\textrm{ for }\qquad j\in[2] \label{eq:stabilizationpropertypsi}
\end{align}
and
\begin{align}
\Psi &=e^{i\varphi}\left(\sqrt{1-a^2}\Psi_1+i a\Psi_2\right)\ . \label{eq:psiidentityaia}
\end{align}
\end{lemma}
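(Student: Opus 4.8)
The plan is to use the antiunitary map $\theta$ to construct the desired decomposition. First I would observe that $\theta$ is antiunitary, so $\theta^2$ is a unitary operator commuting with all $c_jc_k$ by Lemma~\ref{lem:thetaCCcommute}; checking the action of $\theta^2$ on the basis states $\{\ket{x}\}_{x\in\sbin^4_+}$ directly from the definition~\eqref{eq:mapThetaDef} gives $\theta^2=\mathrm{id}$, i.e.\ $\theta$ is an antiunitary involution (a conjugation) on $\cHeven^4$. The key structural fact I would then invoke is that any antiunitary involution on a complex Hilbert space admits a real form: the fixed-point set $V=\{\Phi\in\cHeven^4 : \theta\Phi=\Phi\}$ is a real-linear subspace, every vector $\Phi\in\cHeven^4$ decomposes uniquely as $\Phi=\Phi_+ + i\Phi_-$ with $\Phi_\pm\in V$ (via $\Phi_+=\tfrac12(\Phi+\theta\Phi)$, $\Phi_-=\tfrac{1}{2i}(\Phi-\theta\Phi)$), and the real inner product $\langle\cdot,\cdot\rangle$ restricted to $V$ makes $V$ a real Hilbert space with $\dim_\bbR V = \dim_\bbC\cHeven^4 = 8$.

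Next, starting from a unit vector $\Psi\in\cHeven^4$, I would write $\Psi = \Phi_+ + i\Phi_-$ with $\Phi_\pm\in V$ as above. Since $\theta$ is antiunitary and $\langle\Psi,\Psi\rangle=1$, one checks that $\langle\Phi_+,\Phi_-\rangle$ is real and that $\langle\Psi,\theta\Psi\rangle = \|\Phi_+\|^2-\|\Phi_-\|^2 + 2i\langle\Phi_+,\Phi_-\rangle$. The goal is to rotate within the (at most 2-dimensional) real span of $\Phi_+,\Phi_-$ so that the two components become orthogonal. Concretely, if $\Phi_+$ and $\Phi_-$ are linearly dependent over $\bbR$ then $\Psi$ is already (a phase times) a $\theta$-fixed vector and we may take $a=0$; otherwise, applying a global phase $e^{-i\varphi}$ to $\Psi$ and a compensating real rotation in the plane $\mathrm{span}_\bbR\{\Phi_+,\Phi_-\}\subset V$ — note such a rotation preserves $V$ and hence commutes with $\theta$ — I can bring $\Psi$ to the form $e^{i\varphi}(b\Psi_1 + i c\Psi_2)$ with $\Psi_1,\Psi_2\in V$ orthonormal and $b\geq c\geq 0$, $b^2+c^2=1$. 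Setting $a=c\in[0,1/\sqrt2]$ and $\sqrt{1-a^2}=b$ gives~\eqref{eq:psiidentityaia}, and~\eqref{eq:stabilizationpropertypsi} holds by construction since $\Psi_1,\Psi_2\in V$.

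The step requiring the most care — and the one I expect to be the main obstacle in writing this cleanly — is the orthogonalization/normalization bookkeeping: given $\Psi = \Phi_+ + i\Phi_-$ with the unit-norm constraint, one must show the right choice of global phase $\varphi$ and of orthonormal pair $(\Psi_1,\Psi_2)$ inside $V$ exists with the asymmetric range $a\in[0,1/\sqrt2]$ rather than $a\in[0,1]$. This is really a $2\times 2$ linear-algebra computation: diagonalizing the real symmetric Gram-type form of $(\Phi_+,\Phi_-)$, or equivalently performing a singular-value-style decomposition of the real $2\times 2$ matrix expressing $\Psi$ in terms of a $V$-orthonormal basis. The constraint $a\le 1/\sqrt2$ comes from choosing $\Psi_1$ to be the component with the larger coefficient; the unit-norm condition $b^2+c^2=1$ then forces $c^2\le 1/2$. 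An alternative, more hands-on route that avoids the abstract real-form language: diagonalize the Hermitian operator $M = \overline{\langle\Psi,\theta\Psi\rangle}^{\,1/2}$-type object, or simply note $\proj{\Psi}+\theta\proj{\Psi}\theta^{-1}$ is a $\theta$-invariant positive operator of trace $2$ and rank $\le 2$ whose eigenvectors, suitably phased, furnish $\Psi_1,\Psi_2$; I would likely present whichever is shorter. Either way the proof is elementary once the involution property $\theta^2=\mathrm{id}$ is established, which is the only place Lemma~\ref{lem:thetaCCcommute} and the explicit formula for $\vartheta(x)$ enter.
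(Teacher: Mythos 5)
Your proposal is correct and follows essentially the same route as the paper's proof: you form $\Phi_\pm=\tfrac12(\Psi\pm\theta\Psi)$ exactly as the paper defines $\Psi_R,\Psi_I$, use a global phase (equivalently, making $\langle\Psi,\theta\Psi\rangle$ real) to render them orthogonal, and then normalize and order the two components so that $a\in[0,1/\sqrt2]$. The only cosmetic difference is your real-form/involution framing and the explicit check that $\theta^2=\mathrm{id}$, which the paper uses implicitly.
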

\begin{proof}
We first argue that it suffices to consider the case where~$\Psi$ satisfies
\begin{align}
\langle \Psi,\theta\Psi\rangle\in\mathbb{R}\ .\label{eq:psithetapsi}
\end{align}
This is because
\begin{align}
\langle (e^{i\varphi}\Psi),\theta (e^{i\varphi}\Psi)\rangle&=e^{-2i\varphi}\langle \Psi,\theta\Psi\rangle\qquad\textrm{ for every }\qquad \varphi\in [0,2\pi)\ ,
\end{align}
which implies that~\eqref{eq:psithetapsi} can be ensured by replacing~$\Psi$ with~$e^{i\varphi}\Psi$ for a suitably chosen~$\varphi\in [0,2\pi)$. 

Let~$\Psi$ be such that~\eqref{eq:psithetapsi} holds. We define ``real'' and ``imaginary'' parts of~$\Psi$ by the expressions
\begin{align}
\Psi_R&=\frac{1}{2}\left(\Psi+\theta\Psi\right)\\
\Psi_I&=\frac{1}{2i}\left(\Psi-\theta\Psi\right)\ .
\end{align}
It follows immediately from this definition that
\begin{align}
\Psi &=\Psi_R+i\Psi_I\ 
\end{align}
and
\begin{align}
\theta \Psi_R&=\Psi_R\\
\theta\Psi_I&=\Psi_I\ .
\end{align}
 because~$\theta$ is antiunitary and an involution.
 Furthermore,  Eq.~\eqref{eq:psithetapsi} implies that  the vectors~$\Psi_R$,$\Psi_I$ are orthogonal: We have
 \begin{align}
 4i \langle \Psi_R,\Psi_I\rangle &=\langle \Psi+\theta\Psi,\Psi-\theta\Psi\rangle\\
 &=\|\Psi\|^2-\|\theta\Psi\|^2+\langle \theta\Psi,\Psi\rangle-\langle\Psi,\theta\Psi\rangle\\
 &=-2i\mathsf{Im} \langle \Psi,\theta\Psi\rangle\\
 &=0
 \end{align}
where we used that~$\theta$ is an antiunitary in the first step, and assumption~\eqref{eq:psithetapsi} in the last step. 
The claim now follows by setting  
\begin{align}
\left(a,\Psi_1,\Psi_2\right)=
\begin{cases}
\left(\|\Psi_I\|,\frac{\Psi_R}{\|\Psi_R\|},\frac{\Psi_I}{\|\Psi_I\|}\right)\qquad\textrm{ if }\qquad \|\Psi_I\|\leq 1/\sqrt{2}\\
\left(\|\Psi_R\|,\frac{\Psi_I}{\|\Psi_I\|},\frac{\Psi_R}{\|\Psi_R\|}\right)\qquad\textrm{ otherwise }\ .
\end{cases}
\end{align}

\end{proof}

\begin{theorem}[\hspace{-0.1mm}\cite{oszmaniecClassicalSimulationFermionic2014a,kusClassicalQuantumStates2009}]\label{thm:generalGaussianstateorbit}
Let~$\Psi\in \cHeven^4$ be an arbitrary unit vector. Then there are a Gaussian pure state~$\Psi_g\in\cGeven_4$, $\varphi\in [0,2\pi)$ and~$f\in [1/2,1]$ such that  the state~$\theta\Psi_g$ is Gaussian and orthogonal to~$\Psi_g$ and
\begin{align}
\Psi&=e^{i\varphi}\left(\sqrt{f}\Psi_g+\sqrt{1-f} \theta \Psi_g\right)\ .\label{eq:psidecompositionpq}
\end{align}
The triple~$\left(\Psi_g,\varphi,f\right)$ is uniquely defined by~$\Psi$, i.e., a function of~$\Psi$. 
Furthermore, the quantity~$f=f(\Psi)$ is invariant under the action of Gaussian unitaries associated with special orthogonal rotations: We have 
\begin{align}
f(U\Psi)&=f(\Psi)\qquad\textrm{ for any Gaussian unitary }U = U_R 
\text{ with } R \in SO(2n) \ .
\label{eq:invariancepropertypUpsi}
\end{align}

\end{theorem}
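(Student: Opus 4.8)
\emph{Plan.} The idea is to manufacture the claimed decomposition directly from the one already provided by Lemma~\ref{lem:evenstatedecompositionone}, to extract uniqueness (and the value of $f$) from the antiunitary invariant $\langle\Psi,\theta\Psi\rangle$, and then to read off the $SO(2n)$-invariance from the fact that $\theta$ commutes with every special-orthogonal Gaussian unitary.

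\textbf{Existence.} First I would apply Lemma~\ref{lem:evenstatedecompositionone} to write $\Psi=e^{i\varphi}\big(\sqrt{1-a^2}\,\Psi_1+ia\,\Psi_2\big)$ with $\theta\Psi_j=\Psi_j$, $\langle\Psi_1,\Psi_2\rangle=0$, and $a\in[0,1/\sqrt2]$, and then set $\Psi_g=\tfrac1{\sqrt2}(\Psi_1+i\Psi_2)$. Since $\theta$ is antilinear and fixes $\Psi_1,\Psi_2$, one has $\theta\Psi_g=\tfrac1{\sqrt2}(\Psi_1-i\Psi_2)$, and a one-line computation gives $\langle\Psi_g,\theta\Psi_g\rangle=0$. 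By Lemma~\ref{lem:thetamapGaussianity} this simultaneously says that $\Psi_g$ is Gaussian and that $\theta\Psi_g\perp\Psi_g$; moreover $\theta\Psi_g$ is Gaussian as well, because $\langle\theta\Psi_g,\theta(\theta\Psi_g)\rangle=\overline{\langle\Psi_g,\theta\Psi_g\rangle}=0$. Both vectors lie in $\cHeven^4$, so $\Psi_g\in\cGeven_4$. Inverting the relation between $(\Psi_1,\Psi_2)$ and $(\Psi_g,\theta\Psi_g)$ and substituting yields
\[
\Psi=\tfrac{e^{i\varphi}}{\sqrt2}\big((\sqrt{1-a^2}+a)\,\Psi_g+(\sqrt{1-a^2}-a)\,\theta\Psi_g\big);
\]
since $a\le1/\sqrt2\le\sqrt{1-a^2}$ both coefficients are nonnegative, and with $f:=\tfrac12\big(1+2a\sqrt{1-a^2}\big)\in[1/2,1]$ this is exactly the form \eqref{eq:psidecompositionpq}.

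\textbf{Uniqueness.} For an arbitrary decomposition of the form \eqref{eq:psidecompositionpq} I would use $\theta^2=\mathrm{id}$, the antiunitarity of $\theta$, and $\Psi_g\perp\theta\Psi_g$ to compute $\langle\Psi,\theta\Psi\rangle=e^{-2i\varphi}\,2\sqrt{f(1-f)}$, so $|\langle\Psi,\theta\Psi\rangle|=2\sqrt{f(1-f)}$; combined with $f\ge1/2$ this forces $f=\tfrac12\big(1+\sqrt{1-|\langle\Psi,\theta\Psi\rangle|^2}\big)$, a function of $\Psi$ alone, and (for $f\neq1$) fixes $e^{i\varphi}$ up to sign. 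To pin down $\Psi_g$, observe that any competing Gaussian $\Psi_g'$ with $\theta\Psi_g'\perp\Psi_g'$ entering such a decomposition must lie in the two-dimensional, $\theta$-invariant space $\mathrm{span}\{\Psi_g,\theta\Psi_g\}$, and the Gaussianity condition $\langle\Psi_g',\theta\Psi_g'\rangle=0$ forces $\Psi_g'$ to be a scalar multiple of $\Psi_g$ or of $\theta\Psi_g$; matching coefficients (using $f>1/2$ to exclude the swap) leaves only the harmless simultaneous replacement $(\Psi_g,\varphi)\mapsto(-\Psi_g,\varphi+\pi)$, removed by the obvious normalization convention. (For $f=1$, $\Psi$ is itself Gaussian and $\Psi_g=e^{-i\varphi}\Psi$; for $f=1/2$ one additionally allows $\Psi_g\leftrightarrow\theta\Psi_g$.)

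\textbf{Invariance under $SO(2n)$.} The crucial input is Lemma~\ref{lem:thetaCCcommute}: $\theta c_jc_k=c_jc_k\theta$. Because $\theta$ is antilinear and $\vartheta\in\mathbb{R}$, this upgrades to $\theta\,U_{j,k}(\vartheta)=\theta\exp(\tfrac{\vartheta}{2}c_jc_k)=\exp(\tfrac{\vartheta}{2}c_jc_k)\theta=U_{j,k}(\vartheta)\,\theta$ for every generator, and hence $\theta U=U\theta$ for every Gaussian unitary $U=U_R$ with $R\in SO(2n)$ (these are generated, up to an irrelevant phase, by the $U_{j,k}(\vartheta)$, $j<k\in[2n]$). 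Therefore, for such $U$, $\langle U\Psi,\theta U\Psi\rangle=\langle U\Psi,U\theta\Psi\rangle=\langle\Psi,\theta\Psi\rangle$. Applying $U$ to \eqref{eq:psidecompositionpq}, and using that $U\Psi_g$ is Gaussian with $\theta(U\Psi_g)=U\theta\Psi_g\perp U\Psi_g$, the uniqueness just established gives $\Psi_g(U\Psi)=U\Psi_g(\Psi)$, $\varphi(U\Psi)=\varphi(\Psi)$, and in particular $f(U\Psi)=f(\Psi)$, which is \eqref{eq:invariancepropertypUpsi}. (Note $U\Psi\in\cHeven^4$ since the $U_{j,k}(\vartheta)$ are parity-preserving, cf.~\eqref{eq:UjkvarthetaP}.)

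\textbf{Expected obstacle.} The linear-algebra identities are routine; the only delicate point is the uniqueness claim — excluding a continuous phase freedom in $\Psi_g$ and handling the borderline cases $f=1$ and $f=1/2$ cleanly. Once uniqueness is in place, the identity $\langle U\Psi,\theta U\Psi\rangle=\langle\Psi,\theta\Psi\rangle$ makes the invariance essentially immediate.
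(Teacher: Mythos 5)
Your construction of $\Psi_g=\tfrac{1}{\sqrt2}(\Psi_1+i\Psi_2)$ from Lemma~\ref{lem:evenstatedecompositionone} and the verification of its Gaussianity via Lemma~\ref{lem:thetamapGaussianity} is exactly the paper's existence argument, and the resulting value $f=\tfrac12(1+2a\sqrt{1-a^2})$ matches. Where you diverge is in the last two parts: the paper establishes the invariance~\eqref{eq:invariancepropertypUpsi} by retracing the construction and observing that, since $\theta$ commutes with every $U_R$, $R\in SO(2n)$ (Lemma~\ref{lem:thetaCCcommute}), the states $(\Psi_1,\Psi_2,\Psi_g)$ for $U\Psi$ are just $(U\Psi_1,U\Psi_2,U\Psi_g)$; you instead extract the closed-form invariant $f=\tfrac12\bigl(1+\sqrt{1-|\langle\Psi,\theta\Psi\rangle|^2}\bigr)$ and deduce $f(U\Psi)=f(\Psi)$ directly from $\langle U\Psi,\theta U\Psi\rangle=\langle\Psi,\theta\Psi\rangle$. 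Your route buys a genuinely more careful treatment of uniqueness (the paper essentially asserts it), including the correct observation that within the $\theta$-invariant plane $\mathrm{span}\{\Psi_g,\theta\Psi_g\}$ the only Gaussian unit vectors are multiples of $\Psi_g$ or $\theta\Psi_g$, and an honest accounting of the residual sign ambiguity and the borderline cases $f=1/2$, $f=1$ — which, strictly speaking, the theorem's ``uniquely defined triple'' phrasing glosses over in both your write-up and the paper's. Both arguments rest on the same key input (Lemma~\ref{lem:thetaCCcommute}); yours has the advantage that the invariance of $f$ does not depend on resolving the uniqueness edge cases at all, since the formula for $f$ in terms of $|\langle\Psi,\theta\Psi\rangle|$ already makes it a well-defined $SO(2n)$-invariant function of $\Psi$.
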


\begin{proof}
Let~$\Psi\in\cHeven^4$ be an arbitrary unit vector. Let~$\Psi_1,\Psi_2\in\cH_+^4$, $\varphi\in [0,2\pi)$ and~$a\in [0,1]$ be as in Lemma~\ref{lem:evenstatedecompositionone}. 
Define
\begin{align}
\Psi_g^{\pm} &=\frac{1}{\sqrt{2}} \left(\Psi_1\pm i \Psi_2\right)\ .
\end{align}
Then 
\begin{align}
\theta\Psi_g^-=\Psi_g^+\label{eq:psigplusminusx}
\end{align}
because of property~\eqref{eq:stabilizationpropertypsi}. It follows that
\begin{align}
\langle\Psi_g^-,\Psi_g^+\rangle
&=\langle\Psi_g^-,\theta\Psi_g^-\rangle\\
&=\frac{1}{2}\left(\|\Psi_1\|^2+2i\mathsf{Re}\langle\Psi_1,\Psi_2\rangle-\|\Psi_2\|^2\right)\\
&=0
\end{align}
since~$\Psi_1$ and~$\Psi_2$ are orthogonal unit vectors. 

Using~\eqref{eq:psigplusminusx} and the orthogonality of~$\Psi_g^-,\Psi_g^+$ implies that 
\begin{align}
\langle \Psi_g^+,\theta\Psi_g^+\rangle &=\langle \Psi_g^+,\Psi_g^-\rangle=0
\end{align}
and similarly (because~$\theta$ is an involution)
\begin{align}
\langle \Psi_g^-,\theta\Psi_g^-\rangle &=\langle \Psi_g^-,\Psi_g^+\rangle=0\ .
\end{align}
According to the Gaussianity criterion in Lemma~\ref{lem:thetamapGaussianity}, we conclude that both~$\Psi_g^+$ and~$\Psi_g^-$ are Gaussian.

Rewriting Eq.~\eqref{eq:psiidentityaia} by expressing~$(\Psi_1,\Psi_2)$ in terms of~$(\Psi_g^+,\Psi_g^-)$ 
gives
\begin{align}
\Psi &=e^{i\varphi}\left(\sqrt{f}\Psi_g^-+\sqrt{1-f}\Psi_g^+\right)\qquad\textrm{ where }\qquad f=\frac{1}{2}+a\sqrt{1-a^2}\ .
\end{align}
The claim~\eqref{eq:psidecompositionpq} now follows with~\eqref{eq:psigplusminusx}.

It remains to show property~\eqref{eq:invariancepropertypUpsi} of the function~$f$. This follows immediately from the fact that the antiunitary~$\theta$ commutes with all quadratic monomials~$c_jc_k$ of Majorana operators (see Lemma~\ref{lem:thetaCCcommute}), and hence with any Gaussian unitary~$U=U_R$ with~$R\in SO(2n)$, i.e., $U\theta=\theta U$.
Retracing the steps of the proof, it is easy to check that if~$(\Psi_1,\Psi_2)$ are the states of Lemma~\ref{lem:evenstatedecompositionone}, and~$\Psi_g$ the state in expression~\eqref{eq:psidecompositionpq} for~$\Psi$, then the corresponding states~$(\Psi'_1,\Psi'_2)$ and~$\Psi'_g$ for the state~$\Psi'=U\Psi$ are given by~$\Psi'_j=U\Psi_j$ for~$j\in[2]$ and~$\Psi_g'=U\Psi_g$, respectively. This implies the claim.
\end{proof}

\subsection{The Gaussian fidelity for 4-fermion states \label{sec:gaussianoverlap4fermionstates}}

For a subset~$\Gammaset\subset\sbin^4$, we define~$\overline{\Gammaset}=\left\{\overline{x}\ |\ x\in \Gammaset\right\}$. We also write
$\Pi_\Gammaset=\sum_{x\in\Gammaset}\proj{x}$ for the projection  onto the span of~$\{\ket{x}\}_{x\in\Gammaset}$. 
\begin{lemma}\label{lem:generalstateprojectionoverlap}
Let~$\Gammaset\subset\sbineven^4$, $|\Gammaset|=4$ be a subset of even-weight strings such that~$\Gammaset\cup \overline{\Gammaset}=\sbin^4_+$. 
Let~$f(\Psi)\in [1/2,1]$ be defined as in Theorem~\ref{thm:generalGaussianstateorbit}. 
Then 
\begin{align}
\left\|\Pi_\Gammaset\Psi\right\|^2\leq f(\Psi)\qquad\textrm{ for any unit vector }\qquad \Psi\in\cHeven^4\ .\label{eq:upperboundpPsismallf}
\end{align}
\end{lemma}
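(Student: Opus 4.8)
The plan is to reduce the bound to a statement about a single Gaussian state and then exploit the two invariances available: invariance of $\|\Pi_\Gammaset\cdot\|^2$ under (a suitable subgroup of) Gaussian unitaries, and invariance of $f$ under $SO(8)$-Gaussian unitaries (Eq.~\eqref{eq:invariancepropertypUpsi}). First I would write $\Psi$ in the canonical form of Theorem~\ref{thm:generalGaussianstateorbit}, namely $\Psi=e^{i\varphi}(\sqrt{f}\,\Psi_g+\sqrt{1-f}\,\theta\Psi_g)$ with $\Psi_g\in\cGeven_4$ and $f=f(\Psi)\in[1/2,1]$. Since $\|\Pi_\Gammaset\Psi\|^2$ is phase-independent, the global phase $e^{i\varphi}$ can be dropped. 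Expanding the norm gives $\|\Pi_\Gammaset\Psi\|^2 = f\|\Pi_\Gammaset\Psi_g\|^2 + (1-f)\|\Pi_\Gammaset\theta\Psi_g\|^2 + 2\sqrt{f(1-f)}\,\mathrm{Re}\langle \Pi_\Gammaset\Psi_g,\Pi_\Gammaset\theta\Psi_g\rangle$.

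The key structural input I would use is how $\theta$ and $\Pi_\Gammaset$ interact. By definition $\theta\ket{x}=(-1)^{\vartheta(x)}\ket{\bar x}$, so $\theta$ maps $\mathrm{span}\{\ket{x}\}_{x\in\Gammaset}$ onto $\mathrm{span}\{\ket{x}\}_{x\in\overline{\Gammaset}}$. Because $\Gammaset\cup\overline{\Gammaset}=\sbin^4_+$ and $|\Gammaset|=4$ while $|\sbin^4_+|=8$, the sets $\Gammaset$ and $\overline{\Gammaset}$ are \emph{disjoint} and partition $\sbin^4_+$; hence $\Pi_\Gammaset + \theta\Pi_\Gammaset\theta^{-1} = \Pi_{\sbin^4_+} = I$ on $\cHeven^4$, i.e. $\theta\Pi_\Gammaset\theta^{-1} = I - \Pi_\Gammaset =: \Pi_{\overline\Gammaset}$. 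Consequently $\|\Pi_\Gammaset\theta\Psi_g\|^2 = \|\theta\Pi_{\overline\Gammaset}\Psi_g\|^2 = \|\Pi_{\overline\Gammaset}\Psi_g\|^2 = 1-\|\Pi_\Gammaset\Psi_g\|^2$ (using $\theta$ antiunitary and $\|\Psi_g\|=1$), and likewise the cross term becomes $\langle\Pi_\Gammaset\Psi_g,\theta\Pi_{\overline\Gammaset}\Psi_g\rangle$, which I would analyze or bound. Writing $p:=\|\Pi_\Gammaset\Psi_g\|^2\in[0,1]$, the expression collapses to $\|\Pi_\Gammaset\Psi\|^2 = fp + (1-f)(1-p) + (\text{cross term})$.

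Next I would try to choose the Gaussian unitary freedom wisely. Since $\Psi_g$ is Gaussian and since occupation-number-basis projectors $\Pi_\Gammaset$ with $\Gammaset$ of the stated symmetric form should be acted on covariantly by an appropriate subgroup of Gaussian unitaries (permutations of modes together with single-mode parity flips, implemented by the reflections $R_j$ — these preserve the family of admissible $\Gammaset$'s up to relabeling, and more importantly these are generated by orthogonal matrices so $f$ is unchanged under the $SO$ part), I would use this to bring $\Psi_g$ to a convenient normal form, e.g. a number state or $\ket{0000}$, via a Gaussian unitary $U$; then $\|\Pi_\Gammaset\Psi\|^2$ for the original $\Gammaset$ equals $\|\Pi_{\Gammaset'} U\Psi\|^2$ for the image $\Gammaset'$, and $f(U\Psi)=f(\Psi)$. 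With $U\Psi_g$ a basis state, $p=\|\Pi_{\Gammaset'}U\Psi_g\|^2\in\{0,1\}$, and the cross term vanishes because $\theta\Pi_{\overline{\Gammaset'}}U\Psi_g$ is supported on basis states disjoint from the support of $\Pi_{\Gammaset'}U\Psi_g$. This leaves $\|\Pi_\Gammaset\Psi\|^2 \in \{f, 1-f\}$, and since $f\geq 1/2\geq 1-f$, the bound $\|\Pi_\Gammaset\Psi\|^2\leq f$ follows.

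\textbf{Main obstacle.} The delicate point is the claim that the Gaussian-unitary orbit of $\Psi_g$ can be steered into a number state \emph{while keeping the projector in the admissible family} $\{\Pi_\Gammaset : |\Gammaset|=4,\ \Gammaset\cup\overline{\Gammaset}=\sbin^4_+\}$, and — more subtly — whether one is allowed to use orthogonal transformations with determinant $-1$ (the reflections $U_j=c_j$), since $f$ is only guaranteed invariant under $SO(8)$. A positive-parity $4$-fermion Gaussian state has a $4$-dimensional orbit parameter space ($\cGeven_4 \cong SO(8)/U(4)$ has dimension $28-16=12$... actually one must recount), so it is not obvious a single number state representative is reachable within $SO(8)$ alone; I would likely need the disjointness/partition structure of $\Gammaset,\overline{\Gammaset}$ together with a parity argument (every admissible $\Gammaset$ contains exactly one of $x,\bar x$ for each $x$) to show the cross term $\mathrm{Re}\langle\Pi_\Gammaset\Psi_g,\theta\Pi_{\overline\Gammaset}\Psi_g\rangle$ is identically zero \emph{without} normalizing $\Psi_g$ — then the bound reads $\|\Pi_\Gammaset\Psi\|^2 = fp+(1-f)(1-p) = (1-f) + (2f-1)p \leq (1-f)+(2f-1) = f$ since $p\leq 1$ and $2f-1\geq 0$, finishing the proof cleanly. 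Establishing that the cross term vanishes (via the $\theta$-action mapping supports to the complementary index set) is the crux I would focus on.
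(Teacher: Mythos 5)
Your overall plan is the right one and, once completed, coincides with the paper's proof in all essentials: the paper also starts from the decomposition $\Psi=e^{i\varphi}(\sqrt{f}\,\Psi_g+\sqrt{1-f}\,\theta\Psi_g)$, uses the partition $\Gammaset\cup\overline{\Gammaset}=\sbin^4_+$ to get the normalization $\|\Pi_\Gammaset\Psi_g\|^2+\|\Pi_\Gammaset\theta\Psi_g\|^2=1$, kills the cross term, and concludes with the same elementary estimate $fp+(1-f)(1-p)\le\max\{f,1-f\}=f$. However, as written your argument has a genuine gap at exactly the point you flag: you never establish that the cross term $\mathrm{Re}\,\langle\Pi_\Gammaset\Psi_g,\Pi_\Gammaset\theta\Psi_g\rangle$ vanishes, and neither of your two proposed routes can close it. The normal-form route fails for the reasons you already suspect: a generic $\Psi_g\in\cGeven_4$ cannot be mapped to a number state using only mode permutations and reflections (the transformations that preserve the family of admissible $\Gammaset$'s), and general Givens rotations destroy the number-basis structure of $\Pi_\Gammaset$. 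The ``parity argument'' route also cannot work on its own, because the cross term is \emph{not} identically zero for arbitrary states with the given support structure: using $\Pi_\Gammaset\theta=\theta\Pi_{\overline{\Gammaset}}$ one computes
\begin{align}
\langle\Pi_\Gammaset\Psi_g,\Pi_\Gammaset\theta\Psi_g\rangle
=\sum_{x\in\Gammaset}(-1)^{\vartheta(x)}\langle\Psi_g,x\rangle\langle\Psi_g,\overline{x}\rangle
=\tfrac12\,\langle\Psi_g,\theta\Psi_g\rangle\ ,
\end{align}
which is nonzero e.g.\ for $\Psi_g=(\ket{0000}+\ket{1111})/\sqrt{2}$. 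The missing ingredient is precisely the Gaussianity of $\Psi_g$: by Lemma~\ref{lem:thetamapGaussianity}, $\Psi_g\in\cGeven_4$ if and only if $\langle\Psi_g,\theta\Psi_g\rangle=0$, and this is what forces the cross term to vanish. Inserting this one line completes your proof and makes it equivalent to the paper's, which packages the same fact as the orthogonality $\sum_{x\in\Gammaset}\overline{\alpha_x}\beta_x=0$ of the coefficient vectors $\vec{\alpha}=(\langle x,\Psi_g\rangle)_{x\in\Gammaset}$ and $\vec{\beta}=(\langle x,\theta\Psi_g\rangle)_{x\in\Gammaset}$.
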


\begin{proof}
Let~$f=f(\Psi)\in [1/2,1]$, $\varphi\in [0,2\pi)$ and~$\Psi_g\in \cGeven_n$ be  as in Theorem~\ref{thm:generalGaussianstateorbit} such that
\begin{align}
\Psi&=e^{i\varphi}\left(\sqrt{f}\Psi_g+\sqrt{1-f}\theta\Psi_g\right)\  .\label{eq:psidecompositionpqsec}
\end{align}
We  define
\begin{align}
\begin{matrix}
\alpha_x&=&\langle x,\Psi_g\rangle\\
\beta_x&=&\langle x,\theta\Psi_g\rangle
\end{matrix}\qquad\textrm{ for every }\qquad x\in \Gammaset\ .
\end{align}
We claim that we have the identities
\begin{align}
\sum_{x\in \Gammaset} (|\alpha_x|^2+|\beta_x|^2)&=1\label{eq:normconditionalphabeta}\\
\sum_{x\in \Gammaset} \overline{\alpha_x}\beta_x&=0\label{eq:orthogonalityconditionalphabeta}\ .
\end{align}
Observe that these two identities immediately imply~\eqref{eq:upperboundpPsismallf}: Using expression~\eqref{eq:psidecompositionpqsec}, we have
\begin{align}
\|\Pi_\Gammaset\Psi\|^2&=\sum_{x\in \Gammaset} |\langle x,\Psi\rangle|^2\\
&=\sum_{x\in \Gammaset} |\sqrt{f} \alpha_x+\sqrt{1-f} \beta_x|^2\\
&=\left\|\sqrt{f}\vec{\alpha}+\sqrt{1-f}\vec{\beta}\right\|^2\ \label{eq:PigammaPsi}
\end{align}
where we defined the vectors~$\vec{\alpha}=(\alpha_x)_{x\in\Gammaset},\vec{\beta}=(\beta_x)_{x\in\Gammaset}\in \mathbb{C}^{4}$.  
Since~\eqref{eq:normconditionalphabeta} and~\eqref{eq:orthogonalityconditionalphabeta}
are equivalent to the statement that
\begin{align}
\|\vec{\alpha}\|^2+\|\vec{\beta}\|^2&=1\label{eq:normalizationalphabeta}\\
\langle \vec{\alpha},\vec{\beta}\rangle&=0\label{eq:orthogonalityalphabeta} ,
\end{align}
we obtain
\begin{align}
\left\|\sqrt{f}\vec{\alpha}+\sqrt{1-f}\vec{\beta}\right\|^2&= f\|\vec{\alpha}\|^2+(1-f)\|\vec{\beta}\|^2\\
&\leq \max\{f,1-f\}\\
&=f\ \label{eq:upperboundponeminusp}
\end{align}
by the Pythagorean theorem in~$\mathbb{C}^4$ (using~\eqref{eq:orthogonalityalphabeta}) and by maximizing over~$(\alpha,\beta)$ satisfying~\eqref{eq:normalizationalphabeta}. Inserting~\eqref{eq:upperboundponeminusp} into~\eqref{eq:PigammaPsi} results in the upper bound~\eqref{eq:upperboundpPsismallf} on~$\|\Pi_\Gammaset\Psi\|^2$. 

It remains to prove the claimed identities~\eqref{eq:normconditionalphabeta} and~\eqref{eq:orthogonalityconditionalphabeta}. We argue that these are a consequence of the fact that~$\Psi_g$ is normalized and Gaussian, respectively.
\begin{proof}
Observe that by definition of the antiunitary~$\theta$, we have 
\begin{align}
\beta_x &=\langle x,\theta\Psi_g\rangle\\
&=(-1)^{\vartheta(\overline{x})}\overline{\langle\overline{x},\Psi_g\rangle}\qquad\textrm{ for all }\qquad x\in \Gammaset\ .
\end{align}
In particular, this implies that
\begin{align}
|\beta_x|^2=|\langle \overline{x},\Psi_g\rangle|^2\qquad\textrm{ for every }\qquad x\in \Gammaset\ .\label{eq:betaxnormalisation}
\end{align}
Eq.~\eqref{eq:normconditionalphabeta} now follows from the fact that~$\Psi_g$ is normalized and positive-parity: we have 
\begin{align}
\sum_{x\in \Gammaset}\left(|\alpha_x|^2+|\beta_x|^2\right)
&=\sum_{x\in \Gammaset} \left(|\langle x,\Psi_g\rangle|^2+|\langle \overline{x},\Psi_g\rangle|^2\right)\\
&=\|\Psi_g\|^2\\
&=1
\end{align}
where we used the definition of~$\alpha_x$ and~\eqref{eq:betaxnormalisation} in the first step, and the assumption~$\Gammaset\cup\overline{\Gammaset}=\sbin^4_+$ in the second identity.

Similarly, Eq.~\eqref{eq:orthogonalityconditionalphabeta} is a consequence of the fact that~$\Psi_g$ is Gaussian: we have
\begin{align}
\sum_{x\in \sbineven^4}  \overline{\alpha_x}\beta_x &=\sum_{x\in\sbineven^4} \langle \Psi_g,x\rangle\langle x,\theta\Psi_g\rangle\\
&=\langle \Psi_g,\theta\Psi_g\rangle\\
&=0
\end{align}
where we used the definition of~$\alpha_x$ and~$\beta_x$ in the first step, the fact that~$\Psi_g\in\cHeven^4$ in the second step, and the  characterization of Gaussianity from Lemma~\ref{lem:thetamapGaussianity} in the last identity.
\end{proof}
\end{proof}

Lemma~\ref{lem:generalstateprojectionoverlap} immediately implies the following expression for the  fermionic Gaussian fidelity. We note that a more general expression for the ``Gaussian fidelity'' of a mixed state has previously been obtained in~\cite{oszmaniecClassicalSimulationFermionic2014a}. 
The proof for pure states given here is more elementary and illustrates the 
use of Lemma~\ref{lem:generalstateprojectionoverlap}.
\begin{theorem}[Fermionic Gaussian fidelity for~$4$-mode pure states \cite{oszmaniecClassicalSimulationFermionic2014a,kusClassicalQuantumStates2009}]\label{thm:gaussianoverlapexpression}
Let~$\Psi\in\cHeven^4$ be a unit vector. Let~$f(\Psi)\in [1/2,1]$ be defined as in  Theorem~\ref{thm:generalGaussianstateorbit}. 
Then
\begin{align}
F_{\cGeven_4}(\Psi)&=f(\Psi)\ .\label{eq:gaussianoverlapexplicitexpr}
\end{align}
\end{theorem}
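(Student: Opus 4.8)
The plan is to prove the two inequalities $F_{\cGeven_4}(\Psi) \geq f(\Psi)$ and $F_{\cGeven_4}(\Psi) \leq f(\Psi)$ separately, both leaning on the decomposition from Theorem~\ref{thm:generalGaussianstateorbit} and on Lemma~\ref{lem:generalstateprojectionoverlap}. Write $\Psi = e^{i\varphi}(\sqrt{f}\,\Psi_g + \sqrt{1-f}\,\theta\Psi_g)$ with $f = f(\Psi) \in [1/2,1]$, $\Psi_g \in \cGeven_4$, and $\theta\Psi_g$ Gaussian and orthogonal to $\Psi_g$.

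First, the lower bound $F_{\cGeven_4}(\Psi) \geq f(\Psi)$ is immediate: since $\Psi_g$ is itself a Gaussian state, the supremum defining $F_{\cGeven_4}(\Psi)$ is at least $|\langle \Psi_g, \Psi\rangle|^2 = f$, using the orthogonality $\langle \Psi_g, \theta\Psi_g\rangle = 0$.

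Second, for the upper bound $F_{\cGeven_4}(\Psi) \leq f(\Psi)$, I would take an arbitrary Gaussian state $\varphi_g \in \cGeven_4$ and bound $|\langle \varphi_g, \Psi\rangle|^2$ by $f$. The idea is to reduce, via a Gaussian unitary, to a situation where $\varphi_g$ is supported on a "half" of the computational basis so that Lemma~\ref{lem:generalstateprojectionoverlap} applies. Concretely: one can find a Gaussian unitary $U = U_R$ with $R \in SO(8)$ such that $U\varphi_g$ is one of the two computational basis vectors in some fixed pair $\{\ket{x_0}, \ket{\overline{x_0}}\}$ — indeed $\ket{0000}$ and $\ket{1111}$ are Gaussian, $SO(8)$ acts transitively on positive-parity Gaussian states (the orbit of the vacuum under $SO(2n)$ is one connected component), and WLOG $U\varphi_g = \ket{x_0}$ for some $x_0 \in \sbineven^4$ with $\ket{x_0}$ positive parity. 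Then
\begin{align}
|\langle \varphi_g, \Psi\rangle|^2 = |\langle U\varphi_g, U\Psi\rangle|^2 = |\langle x_0, U\Psi\rangle|^2 \leq \|\Pi_{\Gammaset}(U\Psi)\|^2
\end{align}
for any index set $\Gammaset \subset \sbineven^4$ with $|\Gammaset| = 4$, $x_0 \in \Gammaset$, and $\Gammaset \cup \overline{\Gammaset} = \sbin^4_+$ (such a $\Gammaset$ exists: pick one element from each complementary pair, including $x_0$). By Lemma~\ref{lem:generalstateprojectionoverlap}, $\|\Pi_{\Gammaset}(U\Psi)\|^2 \leq f(U\Psi)$, and by the invariance property~\eqref{eq:invariancepropertypUpsi} in Theorem~\ref{thm:generalGaussianstateorbit}, $f(U\Psi) = f(\Psi)$ since $R \in SO(8)$. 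Taking the supremum over $\varphi_g$ yields $F_{\cGeven_4}(\Psi) \leq f(\Psi)$, completing the proof.

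The main obstacle I anticipate is the reduction step in the upper bound: verifying that every positive-parity Gaussian state can be mapped to a computational basis state by a Gaussian unitary coming from $SO(8)$ (as opposed to $O(8)$), since the invariance of $f$ only holds for special orthogonal rotations. This needs the observation that $\cGeven_4$ is the orbit of $\ket{0_F}$ under $SO(8)$ alone (the reflections $R_j$ reverse parity by~\eqref{eq:UjkvarthetaP}, so they do not act within the positive-parity sector, and the even metaplectic representation restricted to positive parity comes from the connected group $SO(8)$) — and then that all positive-parity computational basis states are related to $\ket{0_F}$ by such rotations. One must also check that $x_0$ can indeed be included in a valid half-set $\Gammaset$, which is a trivial combinatorial observation. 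Everything else is bookkeeping with the already-established Lemma~\ref{lem:generalstateprojectionoverlap} and Theorem~\ref{thm:generalGaussianstateorbit}.
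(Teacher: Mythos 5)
Your proposal is correct and follows essentially the same route as the paper: the lower bound is read off from the decomposition of Theorem~\ref{thm:generalGaussianstateorbit}, and the upper bound rotates an arbitrary Gaussian state to a number state via a Gaussian unitary from $SO(8)$, then applies Lemma~\ref{lem:generalstateprojectionoverlap} together with the $SO$-invariance of $f$. The only cosmetic difference is that the paper fixes the target to be $\ket{0_F}$ (writing $\Phi_g=e^{i\mu}U\ket{0_F}$ and choosing $\Gammaset$ containing $0000$), whereas you allow an arbitrary basis state $\ket{x_0}$ and note it can be completed to a valid half-set $\Gammaset$; the potential obstacle you flag is exactly the point the paper also relies on.
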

\begin{proof}

Let~$f=f(\Psi)$, $\varphi\in [0,2\pi)$ and~$\Psi_g\in\cGeven_4$ be 
as in Theorem~\ref{thm:generalGaussianstateorbit}. Then we have  
\begin{align}
F_{\cGeven_4}(\Psi)&\geq \left|\langle \Psi_g,e^{i\varphi}\left(\sqrt{f} \Psi_g+f \theta \Psi_g\right)\rangle\right|^2\\
&=f\ 
\end{align}
since~$\theta\Psi_g$ is orthogonal to~$\Psi_g$. It thus suffices to show the upper bound
\begin{align}
F_{\cGeven_4}(\Psi)&\leq f\ .\label{eq:fcgevenupperbound}
\end{align}
Let~$\Phi_g\in \cGeven_4$ be an arbitrary positive-parity Gaussian pure state. Then there is a Gaussian unitary~$U=U_R$ with~$R\in SO(2n)$ and a phase~$\mu\in [0,2\pi)$ such that~$\Phi_g=e^{i\mu}U\ket{0_F}$. 
We will use any subset~$\Gammaset\subset\sbineven^4$ of even-weight strings as in Lemma~\ref{lem:generalstateprojectionoverlap} with the additional property that~$0000\in \Gammaset$, e.g., $\Gammaset=\{0000,1100,1010,1001\}$.  Then~$\ket{0_F}=\Pi_\Gammaset\ket{0_F}$ is in the image of~$\Pi_\Gammaset$. It follows that 
\begin{align}
\left|\langle \Phi_g,\Psi\rangle\right|&=\left|\langle 0_F,U^\dagger \Psi\rangle\right|\\
&=\left|\langle\Pi_\Gammaset 0_F,U^\dagger \Psi\rangle\right|\\
&=\left|\langle 0_F,\Pi_\Gammaset  U^\dagger \Psi\rangle\right|\\
&\leq \left\|\Pi_\Gammaset  U^\dagger \Psi\right\|\\
&\leq \sqrt{f(U^\dagger\Psi)}\ ,
\end{align}
where we used the Cauchy-Schwarz inequality in the penultimate step, and Lemma~\ref{lem:generalstateprojectionoverlap} applied to the state~$U^\dagger\Psi$. Since~$\Phi_g\in\cGeven_4$ was arbitrary, the claimed inequality~\eqref{eq:fcgevenupperbound}
 follows by taking the square and using that
$f(U^\dagger \Psi)=f(\Psi)$, see Eq.~\eqref{eq:invariancepropertypUpsi} of Theorem~\ref{thm:generalGaussianstateorbit}.
\end{proof}

Combining Lemma~\ref{lem:generalstateprojectionoverlap} with Theorem~\ref{thm:gaussianoverlapexpression} yields the following statement, which directly relates the weight of a state on certain subspaces to the fermionic Gaussian fidelity. It will be our main technical tool in the following.
\begin{corollary}\label{cor:generalstateprojectionoverlap}
Let~$\Gammaset\subset\sbineven^4$, $|\Gammaset|=4$ be a subset of even-weight strings such that~$\Gammaset\cup \overline{\Gammaset}=\sbin^4_+$. Then 
\begin{align}
\left\|\Pi_\Gammaset\Psi\right\|^2\leq F_{\cGeven_4}(\Psi)\qquad\textrm{ for any unit vector }\qquad \Psi\in\cHeven^4\ .\label{eq:upperboundpPsi}
\end{align}
\end{corollary}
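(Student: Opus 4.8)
The plan is to combine the two ingredients directly, since the corollary is essentially just a rephrasing of Lemma~\ref{lem:generalstateprojectionoverlap} with the quantity $f(\Psi)$ replaced by $F_{\cGeven_4}(\Psi)$. First I would take an arbitrary unit vector $\Psi\in\cHeven^4$ and an arbitrary even-weight subset $\Gammaset\subset\sbineven^4$ with $|\Gammaset|=4$ and $\Gammaset\cup\overline{\Gammaset}=\sbin^4_+$, exactly as in the hypotheses of both statements. Lemma~\ref{lem:generalstateprojectionoverlap} then immediately gives
\begin{align}
\left\|\Pi_\Gammaset\Psi\right\|^2\leq f(\Psi)\ ,
\end{align}
where $f(\Psi)\in[1/2,1]$ is the quantity defined in Theorem~\ref{thm:generalGaussianstateorbit}.

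Next I would invoke Theorem~\ref{thm:gaussianoverlapexpression}, which identifies this very quantity with the fermionic Gaussian fidelity: $F_{\cGeven_4}(\Psi)=f(\Psi)$. Substituting this equality into the bound above yields
\begin{align}
\left\|\Pi_\Gammaset\Psi\right\|^2\leq F_{\cGeven_4}(\Psi)\ ,
\end{align}
which is precisely~\eqref{eq:upperboundpPsi}. Since $\Psi$ and $\Gammaset$ were arbitrary (subject to the stated constraints), this completes the argument.

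There is essentially no obstacle here: both Lemma~\ref{lem:generalstateprojectionoverlap} and Theorem~\ref{thm:gaussianoverlapexpression} are already proved in the excerpt, and the corollary is their trivial composition. The only thing to be careful about is that the hypotheses on $\Gammaset$ in the corollary match exactly those of Lemma~\ref{lem:generalstateprojectionoverlap} (they do, verbatim), so no extra case analysis or choice of a distinguished string such as $0000\in\Gammaset$ is needed — that refinement was only used inside the proof of Theorem~\ref{thm:gaussianoverlapexpression}, not in its statement. The proof is therefore a two-line chaining of the two previously established results.
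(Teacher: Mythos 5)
Your proof is correct and is exactly the paper's own argument: the paper introduces the corollary with the sentence ``Combining Lemma~\ref{lem:generalstateprojectionoverlap} with Theorem~\ref{thm:gaussianoverlapexpression} yields the following statement,'' i.e., it chains $\left\|\Pi_\Gammaset\Psi\right\|^2\leq f(\Psi)$ with $f(\Psi)=F_{\cGeven_4}(\Psi)$ just as you do. Your remark that the extra condition $0000\in\Gammaset$ is only needed inside the proof of Theorem~\ref{thm:gaussianoverlapexpression} and not here is also accurate.
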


\subsection{Multiplicativity of the  Gaussian fidelity for 4-fermion states
\label{sec:multoverlap4fermionstatesSubsection}
}

Here we prove that the fermionic Gaussian fidelity is multiplicative for~$4$-fermion states (see Theorem~\ref{thm:multiplicativityGaussianoverlap}).
We use two intermediate results stated as Lemmas~\ref{lem:overlapmultaux1} and \ref{lem:overlapmultaux2boundm}. In Lemma~\ref{lem:overlapmultaux1}, we bound the overlap of a tensor product of two (arbitrary) positive-parity pure states~$\Psi_A,\Psi_B$ with a state~$\Phi$ written as a Schmidt decomposition of a bipartite fermionic pure state. In Theorem~\ref{thm:multiplicativityGaussianoverlap}, this result is used to bound the fermionic Gaussian fidelity. More specifically, Lemma~\ref{lem:overlapmultaux2boundm} is used to upper bound the Schmidt coefficients, giving the multiplicativity result for the Gaussian fidelity.

\begin{lemma}
\label{lem:overlapmultaux1}
Let~$\{m_x\}_{x\in \sbin^4}\subset\mathbb{C}$ be arbitrary. Define
\begin{align}
\label{lem:overlapmultaux1eq1}
\ket{\Phi}=\sum_{x\in \sbin^4}m_x \ket{x,x} \in \cH^8_+\  .
\end{align}
Let~$\Gammaset\subset\sbineven^4$, $|\Gammaset|=4$ be a subset of even-weight strings such that~$\Gammaset\cup \overline{\Gammaset}=\sbin^4_+$. Then
\begin{align}
|\langle\Phi, \Psi_A\motimes\Psi_B\rangle|^2 &\leq F_{\cGeven_4}(\Psi_A)F_{\cGeven_4}(\Psi_B) \left(\max_{x\in \Gammaset} |m_x|+\max_{y\in \overline{\Gammaset}} |m_y|\right)^2
\end{align}
for all states~$\Psi_A,\Psi_B\in \cHeven^4$.

\end{lemma}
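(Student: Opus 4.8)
The plan is to expand the overlap $\langle\Phi,\Psi_A\motimes\Psi_B\rangle$ in the number-state basis, split the sum over $\sbin^4$ into the two pieces $\Gammaset$ and $\overline{\Gammaset}$, and bound each piece separately using Corollary~\ref{cor:generalstateprojectionoverlap}. First I would use Eq.~\eqref{eq:phaseintensorproduct} to write, for $x\in\sbin^4$,
\begin{align}
\langle x,x|\Psi_A\motimes\Psi_B\rangle &= (-1)^{|x|}\langle x,\Psi_A\rangle\langle x,\Psi_B\rangle = \langle x,\Psi_A\rangle\langle x,\Psi_B\rangle\ ,
\end{align}
since $|x|$ is even on the support of $\Phi$ (both $\Psi_A,\Psi_B$ have positive parity, so only even-weight $x$ contribute). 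Hence
\begin{align}
\langle\Phi,\Psi_A\motimes\Psi_B\rangle &= \sum_{x\in\sbineven^4} \overline{m_x}\,\langle x,\Psi_A\rangle\langle x,\Psi_B\rangle\ .
\end{align}

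Next I would split this sum as $\sum_{x\in\Gammaset} + \sum_{y\in\overline{\Gammaset}}$ and bound each by pulling out the maximum of $|m_x|$ and applying Cauchy--Schwarz to the remaining inner-product factors. For the $\Gammaset$-piece,
\begin{align}
\Bigl|\sum_{x\in\Gammaset}\overline{m_x}\langle x,\Psi_A\rangle\langle x,\Psi_B\rangle\Bigr|
&\leq \max_{x\in\Gammaset}|m_x|\cdot \sum_{x\in\Gammaset}|\langle x,\Psi_A\rangle||\langle x,\Psi_B\rangle|\\
&\leq \max_{x\in\Gammaset}|m_x|\cdot \Bigl(\sum_{x\in\Gammaset}|\langle x,\Psi_A\rangle|^2\Bigr)^{1/2}\Bigl(\sum_{x\in\Gammaset}|\langle x,\Psi_B\rangle|^2\Bigr)^{1/2}\\
&= \max_{x\in\Gammaset}|m_x|\cdot \|\Pi_\Gammaset\Psi_A\|\,\|\Pi_\Gammaset\Psi_B\|\ ,
\end{align}
and analogously for the $\overline{\Gammaset}$-piece with $\Pi_{\overline{\Gammaset}}$. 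The key point is now that $\overline{\Gammaset}$ is \emph{also} a valid index set for Corollary~\ref{cor:generalstateprojectionoverlap}: it has four even-weight elements and $\overline{\Gammaset}\cup\overline{\overline{\Gammaset}}=\overline{\Gammaset}\cup\Gammaset=\sbin^4_+$. Therefore both $\|\Pi_\Gammaset\Psi_A\|^2$ and $\|\Pi_{\overline{\Gammaset}}\Psi_A\|^2$ are bounded by $F_{\cGeven_4}(\Psi_A)$, and likewise for $\Psi_B$. Combining via the triangle inequality,
\begin{align}
|\langle\Phi,\Psi_A\motimes\Psi_B\rangle| &\leq \max_{x\in\Gammaset}|m_x|\cdot \sqrt{F_{\cGeven_4}(\Psi_A)}\sqrt{F_{\cGeven_4}(\Psi_B)} + \max_{y\in\overline{\Gammaset}}|m_y|\cdot\sqrt{F_{\cGeven_4}(\Psi_A)}\sqrt{F_{\cGeven_4}(\Psi_B)}\ ,
\end{align}
and squaring gives exactly the claimed bound.

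The only subtlety — and the step I would be most careful about — is the parity/phase bookkeeping: confirming that $\Phi$ indeed lies in $\cH^8_+$ so that the stated fermionic tensor product makes sense, and that the phase $(-1)^{|x|}$ from Eq.~\eqref{eq:phaseintensorproduct} is trivial on the relevant terms (equivalently, that odd-weight $x$ contribute zero because $\langle x,\Psi_A\rangle=0$ for positive-parity $\Psi_A$). Everything else is Cauchy--Schwarz plus the observation that $\Gammaset$ and $\overline{\Gammaset}$ are interchangeable as hypotheses of Corollary~\ref{cor:generalstateprojectionoverlap}; no new analytic input is needed beyond what is already established.
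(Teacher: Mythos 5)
Your proposal is correct and follows essentially the same route as the paper's proof: both expand the overlap in the number basis using Eq.~\eqref{eq:phaseintensorproduct}, split the sum over $\Gammaset$ and $\overline{\Gammaset}$, apply Cauchy--Schwarz (the paper phrases this via the operator norm of $M_\Omega=\sum_{x\in\Omega}m_xe^{i\nu_x}\proj{x}$, which is the same bound), and invoke Corollary~\ref{cor:generalstateprojectionoverlap} for both $\Gammaset$ and $\overline{\Gammaset}$. Your observation that the phase $(-1)^{|x|}$ is trivial on even-weight strings is a minor simplification of the paper's bookkeeping with $e^{i\nu_x}$, but the argument is otherwise identical.
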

\begin{proof}
Because~$\Psi_A$ and~$\Psi_B$ are supported on~$\cHeven^4$ by assumption, we have by Eq.~\eqref{eq:phaseintensorproduct} 
\begin{align}
\langle\Phi,\Psi_A\motimes\Psi_B\rangle &=\sum_{x\in \sbineven^4} m_x(-1)^{|x|} \langle x,\Psi_A\rangle\langle x,\Psi_B\rangle\\
&=\sum_{x\in \sbineven^4} m_xe^{i\nu_x} \langle \Psi_A,x\rangle\langle x,\Psi_B\rangle\label{eq:nuxmfdef}
\end{align}
where~$\nu_x$ is defined by the identity
\begin{align}
(-1)^{|x|}\langle x,\Psi_A\rangle&=e^{i\nu_x}\langle\Psi_A,x\rangle\qquad\textrm{ for }\qquad x\in \sbineven^4\ .
\end{align}
Defining the operator
\begin{align}
M_\Omega=\sum_{x\in \Omega} m_xe^{i\nu_x} \proj{x}
\end{align}
for any subset~$\Omega\subset \sbineven^4$, it follows from Eq.~\eqref{eq:nuxmfdef} that 
\begin{align}
\langle\Phi,\Psi_A\motimes\Psi_B\rangle &=\langle \Psi_A,M_\Gammaset\Psi_B\rangle+\langle \Psi_A,M_{\overline{\Gammaset}}\Psi_B\rangle\ .\label{eq:phipsiaboverlap}
\end{align}
Since~$M_\Gammaset$ is supported on~$\mathsf{span}\{\ket{x}\}_{x\in\Gammaset}$, we have
$\langle \Psi_A,M_\Gammaset\Psi_B\rangle=\langle \Pi_\Gammaset \Psi_A,M_\Gammaset\Pi_\Gammaset\Psi_B\rangle$.
With the Cauchy-Schwarz inequality and the definition of the operator norm~$\|M_\Gammaset\|$
 we thus get
\begin{align}
\left|\langle
\Psi_A,M_\Gammaset\Psi_B\rangle\right|&\leq  \|\Pi_\Gammaset\Psi_A\|\cdot \|\Psi_\Gammaset\Psi_B\|\cdot \|M_\Gammaset\|\\
&\leq \sqrt{F_{\cGeven_4}(\Psi_A)}\cdot \sqrt{F_{\cGeven_4}(\Psi_B)}\cdot \|M_\Gammaset\|\ ,\label{eq:inequalityfaone}
\end{align}
where we applied Corollary~\ref{cor:generalstateprojectionoverlap}. Identical reasoning applies to~$\overline{\Gammaset}$ and yields the inequality
\begin{align}
\left|\langle\Psi_A,M_{\overline{\Gammaset}}\Psi_B\rangle\right|&\leq \sqrt{F_{\cGeven_4}(\Psi_A)}\cdot \sqrt{F_{\cGeven_4}(\Psi_B)}\cdot \|M_{\overline{\Gammaset}}\|\ .\label{eq:inequalityfatwo}
\end{align}
Combining Eqs.~\eqref{eq:inequalityfaone}, \eqref{eq:inequalityfatwo} with Eq.~\eqref{eq:phipsiaboverlap}, we conclude that
\begin{align}
\left|\langle\Phi,\Psi_A\motimes\Psi_B\rangle \right|&\leq 
\left|\langle \Psi_A,M_\Gammaset\Psi_B\rangle\right|+\left|\langle \Psi_A,M_{\overline{\Gammaset}}\Psi_B\rangle\right|\\
&\leq \sqrt{F_{\cGeven_4}(\Psi_A) F_{\cGeven_4}(\Psi_B)}\left(\|M_{\Gammaset}\|+\|M_{\overline{\Gammaset}}\|\right)\ .
\end{align}
Taking the square and observing that
\begin{align}
\|M_\Omega\|&=\max_{x\in \Omega} |m_xe^{i\nu_x}|=\max_{x\in \Omega}|m_x|\ 
\end{align}
gives the claim.
\end{proof}

The following lemma will be useful to prove the main theorem.

\begin{lemma}
\label{lem:overlapmultaux2boundm}
The function
\begin{align}
f(\theta,x)&=\prod_{j=1}^4 (\cos\theta_j)^{1-x_j}(\sin\theta_j)^{x_j}\qquad\textrm{ for }\qquad \theta=(\theta_1,\ldots,\theta_4)\in\mathbb{R}^4\textrm{ and }x\in \{0,1\}^4 
\end{align}
satisfies
\begin{align}
|f(\theta,x)|+|f(\theta,y)|&\leq 1
\end{align}
for all~$\theta\in \mathbb{R}^4$ and~$x,y\in \{0,1\}^4$ with~$x,y$ even-weight and~$x\neq y$.
\end{lemma}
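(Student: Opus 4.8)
The plan is to reduce the four-factor inequality to a product of simple two-variable inequalities. Since $x$ and $y$ are even-weight bitstrings in $\{0,1\}^4$ with $x \neq y$, they must differ in at least two positions; being even-weight forces them to differ in an even number of positions, so either they differ in exactly two positions or in exactly four positions. First I would dispose of the trivial structural observation: at each coordinate $j$ where $x_j = y_j$, the corresponding factors $(\cos\theta_j)^{1-x_j}(\sin\theta_j)^{x_j}$ agree in $f(\theta,x)$ and $f(\theta,y)$, so we may factor out $\prod_{j : x_j = y_j}$ and are left with bounding $|f| + |f|$ restricted to the coordinates $S = \{j : x_j \neq y_j\}$, against $\prod_{j \in S}(\text{something} \le 1)$. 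It therefore suffices to prove the inequality in the two cases $|S| = 2$ and $|S| = 4$.

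For the core $|S|=2$ case, relabel so $S = \{1,2\}$ and (WLOG) $x$ restricted to $S$ is $(1,0)$ while $y$ restricted to $S$ is $(0,1)$ (the other pairing, $(0,0)$ vs $(1,1)$, is handled identically). Then the reduced quantity is $|\sin\theta_1 \cos\theta_2| + |\cos\theta_1 \sin\theta_2| = |\sin\theta_1||\cos\theta_2| + |\cos\theta_1||\sin\theta_2|$, which by the AM–GM or the angle-addition bound is at most $|\sin(|\theta_1| + |\theta_2|)| \le 1$; more simply, $|\sin\theta_1||\cos\theta_2| + |\cos\theta_1||\sin\theta_2| \le \sqrt{(\sin^2\theta_1 + \cos^2\theta_1)(\cos^2\theta_2 + \sin^2\theta_2)} = 1$ by Cauchy–Schwarz. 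This gives the bound $1$ on the two-coordinate sum, hence (multiplying back the common factors, each of absolute value $\le 1$) the full claim when $|S|=2$.

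For $|S| = 4$ the two strings are complementary even-weight strings; relabelling we may assume $x = (1,1,0,0)$, $y = (0,0,1,1)$ (up to the symmetry $\theta_j \leftrightarrow$ swapping $\cos$ and $\sin$, all even-weight complementary pairs reduce to this, since $(1,1,0,0)$ and $(0,0,1,1)$ are the representatives after permuting coordinates — note $(1,0,1,0)/(0,1,0,1)$ and $(1,0,0,1)/(0,1,1,0)$ give the same structure). Then the quantity is $|\sin\theta_1\sin\theta_2\cos\theta_3\cos\theta_4| + |\cos\theta_1\cos\theta_2\sin\theta_3\sin\theta_4|$. Writing $a_j = |\sin\theta_j|$, $b_j = |\cos\theta_j|$ with $a_j^2 + b_j^2 = 1$, this is $a_1 a_2 b_3 b_4 + b_1 b_2 a_3 a_4$. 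By AM–GM, $a_1 a_2 \le \tfrac12(a_1^2 + a_2^2)$ etc., but the cleanest route is again Cauchy–Schwarz in the form $a_1a_2b_3b_4 + b_1b_2a_3a_4 \le \sqrt{(a_1^2a_2^2 + b_1^2b_2^2)}\sqrt{(b_3^2b_4^2 + a_3^2a_4^2)}$ and then noting $a_1^2a_2^2 + b_1^2b_2^2 \le (a_1^2 + b_1^2)(a_2^2 + b_2^2) = 1$ since the missing cross terms $a_1^2 b_2^2 + b_1^2 a_2^2$ are nonnegative; similarly for the second factor. Hence the $|S|=4$ sum is also $\le 1$.

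The main obstacle is not any single inequality — each is a one-line Cauchy–Schwarz or AM–GM estimate — but rather organizing the case analysis cleanly: making the reduction "factor out the agreeing coordinates" precise, and verifying that under coordinate permutations every even-weight pair $x \neq y$ falls into one of the handful of normal forms treated above, so that no case is missed. Once the combinatorial bookkeeping of even-weight pairs in $\{0,1\}^4$ is laid out, the analytic content is immediate. I would present the coordinate reduction as a short opening paragraph, then handle $|S| = 2$ and $|S| = 4$ as two displayed computations.
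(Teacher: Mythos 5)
Your proof is correct and follows essentially the same route as the paper: a case split on whether $x$ and $y$ differ in two or four positions, reduction to normal forms via the $\sin/\cos$ swap symmetry, discarding factors of absolute value at most one, and a Cauchy--Schwarz estimate on the remaining two-coordinate sum. The only cosmetic difference is that in the four-coordinate case you apply Cauchy--Schwarz twice where the paper first bounds two factors by $1$ and then applies it once.
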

\begin{proof}
Because~$x,y$ have even and different weight, it suffices to consider two cases, namely with~$|x-y|\in \{2,4\}$.

Consider first the case where~$|x-y|=2$. Without loss of generality, assume that $(x_1,x_2)=(y_1,y_2)$, $x_3\neq y_3$, $x_4\neq y_4$. 
Since translating~$\theta$ by~$-\pi/2$ interchanges~$\left|\sin\theta\right|$ and~$\left|\cos\theta\right|$, it suffices to show the claim for~$x=(0,0,0,0)$ and~$y=(0,0,1,1)$. In this case we have
\begin{align}
|f(\theta,x)|+|f(\theta,y)|&=
\left|\cos\theta_1\cos\theta_2\cos\theta_3\cos\theta_4\right|+
\left|\cos\theta_1\cos\theta_2\sin\theta_3\sin\theta_4\right|\\
&=\left|\cos\theta_1\cos\theta_2\right|\cdot \left(|\cos\theta_3\cos\theta_4|+|\sin\theta_3\sin\theta_4|\right)\\
&\leq \left|\cos\theta_1\cos\theta_2\right|\\
&\leq 1 \ ,
\end{align}
where the first inequality follows from the Cauchy-Schwarz inequality in~$\mathbb{R}^2$. Since~$\theta\in\mathbb{R}^4$ was arbitrary, this concludes the proof for~$|x-y|=2$.

The proof for~$|x-y|=4$, i.e., $y=\overline{x}$ proceeds similarly. Again it suffices to show the claim for~$x=(0,0,0,0)$. In this case
\begin{align}
	|f(\theta, x)| + |f(\theta,y)|  &= | \cos\theta_1 \cos\theta_2  \cos\theta_3  \cos\theta_4 | +  | \sin\theta_1 \sin\theta_2 \sin\theta_3 \sin\theta_4 |  \\
	&\leq | \cos\theta_1 \cos\theta_2 | +  | \sin\theta_1 \sin\theta_2 | \\
	&\leq  1 \ ,
\end{align}
where the first inequality follows from~$|\cos\theta_3  \cos\theta_4 |\leq 1~$ and~$|\sin\theta_3  \sin\theta_4 |\leq 1~$ and the second one from the Cauchy-Schwarz inequality.
\end{proof}

\begin{theorem}[Multiplicativity of the fermionic Gaussian fidelity for 4-mode pure states.]
\label{thm:multiplicativityGaussianoverlap}
Let~$\cH^n_+$ be the set of pure~$n$-fermion states with positive parity and let~$\cG_n^+$ be the set of pure~$n$-fermion Gaussian states with positive parity.
We have that
\begin{align}
    F_{\cG_8}(\Psi_A \tilde{\otimes} \Psi_B) = 
    F_{\cG_4}(\Psi_A)F_{\cG_4}(\Psi_B)\qquad\textrm{ for all }\qquad \Psi_A,\Psi_B\in\cH^4_+\ .
\end{align}
\end{theorem}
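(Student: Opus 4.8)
The plan is to prove the two inequalities $F_{\cG_8}(\Psi_A\motimes\Psi_B)\geq F_{\cG_4}(\Psi_A)F_{\cG_4}(\Psi_B)$ and $F_{\cG_8}(\Psi_A\motimes\Psi_B)\leq F_{\cG_4}(\Psi_A)F_{\cG_4}(\Psi_B)$ separately. The lower bound is immediate: if $\Phi_A\in\cGeven_4$ and $\Phi_B\in\cGeven_4$ are Gaussian states achieving (or approaching) the Gaussian fidelity of $\Psi_A$ and $\Psi_B$ respectively, then $\Phi_A\motimes\Phi_B\in\cGeven_8$ is Gaussian by the lemma on page preceding Section~3, and $|\langle \Phi_A\motimes\Phi_B,\Psi_A\motimes\Psi_B\rangle|^2=|\langle \Phi_A,\Psi_A\rangle|^2|\langle\Phi_B,\Psi_B\rangle|^2$ by the multiplicativity of overlaps of fermionic tensor products (this follows from the defining extension property in Eq.~\eqref{eq:jointextension}, or directly from Eq.~\eqref{eq:phaseintensorproduct}). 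Taking the supremum over $\Phi_A,\Phi_B$ gives the lower bound.

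The substance is the upper bound. First I would reduce to a convenient form for $\Psi_A\motimes\Psi_B$: since $\Psi_A\motimes\Psi_B\in\cHeven^8$ has positive parity, its Gaussian fidelity is computed by optimizing over $\Phi_g\in\cGeven_8$. I would write an arbitrary such $\Phi_g$ using the Schmidt decomposition of fermionic Gaussian states of~\cite{BOTERO200439}: a positive-parity Gaussian state on $\cH^4\otimes\cH^4$ can be brought, by a Gaussian unitary $U_{R_A}\otimes U_{R_B}$ with $R_A,R_B\in SO(8)$ acting on the two factors, into a standard form $\sum_{x\in\sbin^4}\left(\prod_{j}(\cos\theta_j)^{1-x_j}(\sin\theta_j)^{x_j}\right)\ket{x,x}$ — i.e. exactly the state $\ket{\Phi}$ of Lemma~\ref{lem:overlapmultaux1} with $m_x=f(\theta,x)$ in the notation of Lemma~\ref{lem:overlapmultaux2boundm}. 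I would then use that $F_{\cG_4}$ is invariant under $SO(8)$-Gaussian unitaries (Eq.~\eqref{eq:invariancepropertypUpsi} combined with Theorem~\ref{thm:gaussianoverlapexpression}), so that replacing $\Psi_A,\Psi_B$ by $U_{R_A}^\dagger\Psi_A$, $U_{R_B}^\dagger\Psi_B$ changes neither the left- nor the right-hand side of the claimed identity, reducing the whole problem to bounding $|\langle \Phi,\Psi_A\motimes\Psi_B\rangle|^2$ for $\Phi$ in this standard Schmidt form. The one subtlety to verify here is that the Schmidt decomposition of a positive-parity Gaussian state really is implemented by $SO$ (not general $O$) rotations on each factor and that the Schmidt vectors $\ket{x}$ only involve even-weight strings $x$, so that the resulting $\ket{\Phi}$ lies in $\cH^8_+$ and matches the hypothesis of Lemma~\ref{lem:overlapmultaux1}; this is where I expect the main obstacle to lie, and I would need to check it against the conventions of~\cite{BOTERO200439}.

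With that reduction in hand, the endgame is short. Apply Lemma~\ref{lem:overlapmultaux1} with $m_x=f(\theta,x)$ and a fixed choice of $\Gammaset\subset\sbineven^4$ with $|\Gammaset|=4$ and $\Gammaset\cup\overline\Gammaset=\sbin^4_+$, to get
\begin{align}
|\langle\Phi,\Psi_A\motimes\Psi_B\rangle|^2\leq F_{\cGeven_4}(\Psi_A)F_{\cGeven_4}(\Psi_B)\left(\max_{x\in\Gammaset}|f(\theta,x)|+\max_{y\in\overline\Gammaset}|f(\theta,y)|\right)^2\ .
\end{align}
Then invoke Lemma~\ref{lem:overlapmultaux2boundm}: the two maximizers $x^\star\in\Gammaset$ and $y^\star\in\overline\Gammaset$ are distinct even-weight strings, so $|f(\theta,x^\star)|+|f(\theta,y^\star)|\leq 1$, hence the parenthesized factor is $\leq 1$ and $|\langle\Phi,\Psi_A\motimes\Psi_B\rangle|^2\leq F_{\cGeven_4}(\Psi_A)F_{\cGeven_4}(\Psi_B)$. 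Taking the supremum over all $\Phi_g\in\cGeven_8$ — equivalently over all Schmidt parameters $\theta$ and all $SO(8)\times SO(8)$ frames, which we have absorbed — yields $F_{\cG_8}(\Psi_A\motimes\Psi_B)\leq F_{\cG_4}(\Psi_A)F_{\cG_4}(\Psi_B)$. Combining with the lower bound proves the theorem. A minor point to address along the way is that $x^\star$ and $y^\star$ are genuinely different strings (they lie in the disjoint sets $\Gammaset$ and $\overline\Gammaset$, which are disjoint because $|\Gammaset|=4$ and $\Gammaset\cup\overline\Gammaset=\sbin^4_+$ has $8$ elements), so Lemma~\ref{lem:overlapmultaux2boundm} applies.
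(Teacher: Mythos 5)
Your proposal is correct and follows essentially the same route as the paper's proof: the lower bound from $\cG_4^+\motimes\cG_4^+\subseteq\cG_8^+$, and the upper bound from the Schmidt decomposition of fermionic Gaussian states combined with Lemma~\ref{lem:overlapmultaux1} and Lemma~\ref{lem:overlapmultaux2boundm}. Your explicit reduction via local $SO(8)$ Gaussian unitaries and the invariance of $F_{\cG_4}$ is in fact slightly more careful than the paper's treatment of that step, and the worry about even-weight Schmidt vectors resolves itself since $\ket{x,x}$ always has even total weight and Lemma~\ref{lem:overlapmultaux1} already restricts the sum to $\sbin^4_+$ through the support of $\Psi_A\motimes\Psi_B$.
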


\begin{proof}
We first observe that 
$\cH^8$ is a direct sum of the four spaces~$\cH^4_+\motimes\cH^4_+$, $\cH^4_+\motimes\cH^4_-$, 
$\cH^4_-\motimes\cH^4_+$ and~$\cH^4_-\motimes\cH^4_-$. This is because states in these subspaces have different eigenvalues with respect to the corresponding parity operators on the factors (interpreted as Majorana monomials on~$\cH^8$ these are the  monomials~$c(1^80^8)$ and~$c(0^81^8)$). It follows immediately that the overlap with a state of the form~$\Psi_A\motimes\Psi_B\in\cH^4_+\motimes \cH^4_+$ is maximized for a decomposition into states belonging to~$\cH^4_+\motimes\cH^4_+$ only. In particular, it follows that 
\begin{align}
    F_{\cG_8}(\Psi_A \tilde{\otimes} \Psi_B) &=
    F_{\cG_8^+}(\Psi_A \tilde{\otimes} \Psi_B)\qquad\textrm{ for all }\qquad \Psi_A,\Psi_B\in\cH^4_+\ 
\end{align}
and by the same reasoning, we have 
\begin{align}
    F_{\cG_4}(\Psi)=F_{\cG_4^+}(\Psi)\qquad\textrm{ for all }\qquad \Psi\in\cH^4_+\ .
\end{align}
We conclude that it suffices to show that 
\begin{align}
    F_{\cG_8^+}(\Psi_A \tilde{\otimes} \Psi_B) = 
    F_{\cG_4^+}(\Psi_A)F_{\cG_4^+}(\Psi_B)\qquad\textrm{ for all }\qquad \Psi_A,\Psi_B\in\cH^4_+\ .
\end{align}

Let~$\Psi_A,\Psi_B\in\cH^4_+$ be arbitrary. The inequality~$F_{\cG_8^+}(\Psi_A \tilde{\otimes} \Psi_B) \geq F_{\cG_4^+}(\Psi_A)F_{\cG_4^+}(\Psi_B)$ follows trivially from the definition of fermionic Gaussian fidelity in Eq.~\eqref{eq:defmaxoverlap}, because~$\cG_4^+ \tilde{\otimes} \cG_4^+ \subseteq \cG_8^+$.

The inequality~$F_{\cG_8^+}(\Psi_A \tilde{\otimes} \Psi_B) \leq F_{\cG_4^+}(\Psi_A)F_{\cG_4^+}(\Psi_B)$ is a consequence of the Schmidt decomposition for fermionic states put forward in Ref.~\cite{BOTERO200439} and of Lemmas~\ref{lem:overlapmultaux1} and \ref{lem:overlapmultaux2boundm}.
According to Ref.~\cite{BOTERO200439}, 
an arbitrary pure fermionic state~$\Phi \in \cH^n$ admits a Schmidt decomposition of the form
\begin{align}
    |\Phi\rangle=\sum_{x \in\{0,1\}^n} m_x|x, x\rangle
\end{align}
with 
\begin{align}
    m_x = \prod_{j=1}^n (\cos\theta_j)^{1-x_j} (-\sin\theta_j)^{x_j}  
    \qquad\mathrm{ with }\qquad
    \theta_j \in \bbR
    \text{ for }
    j\in[n] \ .
\end{align}
With this definition of~$m_x$ for~$n=4$, an arbitrary state~$\Phi \in \cG_8^+$ can be written as in Eq.~\eqref{lem:overlapmultaux1eq1} and the conditions for Lemma~\ref{lem:overlapmultaux1} apply. 
We have
\begin{align}
    F_{\cG_8^+}(\Psi_A \tilde{\otimes} \Psi_B) 
    &= \max_{\Phi\in\cG_8^+} |\lrangle{\Phi, \Psi_A \otimes \Psi_B}| \\
    &\leq 
    F_{\cGeven_4}(\Psi_A)F_{\cGeven_4}(\Psi_B) \left(\max_{x\in \Gammaset} |m_x|+\max_{y\in \overline{\Gammaset}} |m_y|\right)^2
\end{align}
where~$\Gammaset \subset \sbin^4_+$ with~$|\Gammaset|=4$ a subset of even weight strings such that~$\Gammaset \cup \overline{\Gammaset} = \sbin^4_+$.
Notice that~$x,y$ have even-weight and that~$x\neq y$ because~$\Gammaset$ and~$\overline{\Gammaset}$ are disjoint sets.
Identifying~$m_x$ (whose dependence on~$\theta\in\bbR^4$ is implicit) with~$f(\theta, x)$ in Lemma~\ref{lem:overlapmultaux2boundm} (apart from a minus sign that is not relevant because we take the absolute value) we have
\begin{align}
    F_{\cG_8^+}(\Psi_A \tilde{\otimes} \Psi_B) 
    &\leq 
    F_{\cGeven_4}(\Psi_A)F_{\cGeven_4}(\Psi_B) \left(\max_{x\in \Gammaset} |m_x|+\max_{y\in \overline{\Gammaset}} |m_y|\right)^2 \\
    &\leq F_{\cGeven_4}(\Psi_A)F_{\cGeven_4}(\Psi_B) \ ,
\end{align}
giving the claim.
\end{proof}

\section{Multiplicativity of~$\cD$-fidelity implies that of~$\cD$-extent \label{sec:multoverlapimpmultextent}}

In this section, we show that multiplicativity of the~$\cD$-fidelity implies multiplicativity of the~$\cD$-extent. In Section~\ref{sec:MultFimpMultXiFinite} we prove this for finite dictionaries: This follows immediately from the fact that~$F_\cD(\Psi)$ and~$\xi_\cD(\Psi)$ are related by (convex programming) duality.  
In Section~\ref{sec:MultFimpMultXiInfinite}, we extend this results for infinite, i.e., continuously parameterized dictionaries. We achieve this extension by using (finite)~$\epsilon$-nets for the set of Gaussian states. Similar approaches have been applied in the signal processing context, see e.g., the work~\cite{tangSparseRecoveryContinuous2013}, which shows how to approximately solve atomic norm minimization problems for sparse recovery when the parameters indexing the dictionary lie in a small-dimensional space.

\subsection{Multiplicativity for finite dictionaries \label{sec:MultFimpMultXiFinite}}

We will restrict our attention to finite dictionaries in this section. 
For~$|\cD| < \infty$, the~$\cD$-fidelity is related to the dual formulation of the~$\cD$-extent as (see~\cite[Eq. (3.2)]{heimendahlStabilizerExtentNot2021} and \cite[Theorem 4]{bravyiSimulationQuantumCircuits2019a})
\begin{align}
    \xi_\cD(\Psi)&=\max_{y\in\cH: F_{\cD}(y)\leq 1} 
        |\langle \Psi,y\rangle|^2\ \label{eq:dualformulationxi} \ .
\end{align}
Let~$\cH_1,\cH_2$ and~$\cH_3$ be a triple of Hilbert spaces. 
Let~$\{\cD_j\}_{j\in[3]}$ be a family of dictionaries, where~$\cD_j\subset\cH^j$ for~$j\in[3]$. We assume that
\begin{align}
\cD_1\otimes\cD_2\subseteq\cD_3\ .\label{eq:productinclusionproperty}
  \end{align}
We are interested in the following two properties:
\begin{align}
 \mult^\xi(\cD_1,\cD_2,\cD_3)&:\quad \xi_{\cD_3}(\Psi_1\otimes \Psi_2)=\xi_{\cD_1}(\Psi_1)\xi_{\cD_2}(\Psi_2)\quad\textrm{ for all }\quad\Psi_j\in \cH_j\textrm{ for }j\in[2]\ \\
 \mult^F(\cD_1,\cD_2,\cD_3)&:\quad F_{\cD_3}(\Psi_1\otimes \Psi_2)=F_{\cD_1}(\Psi_1)F_{\cD_2}(\Psi_2)\quad \textrm{ for all }\quad \Psi_j\in \cH_j\textrm{ for }j\in[2]\ .
\end{align}
As an important example, let~$n_j\in\mathbb{N}$ for~$j\in[2]$, $n_3=n_1+n_2$, $\cH_j=(\mathbb{C}^2)^{\otimes n_j}$ and let~$\mathsf{STAB}_n$ be the set of stabilizer states on~$(\mathbb{C}^{2})^{\otimes n}$. Then~$\mult^\xi(\mathsf{STAB}_{n_1},\mathsf{STAB}_{n_1},\mathsf{STAB}_{n_1+n_2})$ does not hold for certain (large) choices of~$n_1$ and~$n_2$~\cite{heimendahlStabilizerExtentNot2021}. On the other hand, for~$n_1,n_2 \leq 3$ the multiplicativity property~$\mult^\xi(\mathsf{STAB}_{n_1},\mathsf{STAB}_{n_1},\mathsf{STAB}_{n_1+n_2})$ holds (see Ref.~\cite[Proposition 1]{bravyiSimulationQuantumCircuits2019a}).
This was shown using that the stabilizer fidelity is multiplicative, i.e., $\mult^F(\mathsf{STAB}_{n_1},\mathsf{STAB}_{n_1},\mathsf{STAB}_{n_1 +n_2})$.

We claim the property~$\mult^F(\cD_1,\cD_2,\cD_3)$ implies property~$\mult^\xi(\cD_1,\cD_2,\cD_3)$.

\begin{theorem}
\label{thm:MultFimpMultXiFinite}
Property~$\mult^F(\cD_1,\cD_2,\cD_3)$ implies  property~$\mult^\xi(\cD_1,\cD_2,\cD_3)$. 
\end{theorem}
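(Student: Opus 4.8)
The plan is to exploit the dual formulation \eqref{eq:dualformulationxi} of the $\cD$-extent together with the product-inclusion hypothesis \eqref{eq:productinclusionproperty}. Recall that for finite dictionaries we have $\xi_{\cD}(\Psi) = \max\{|\langle \Psi, y\rangle|^2 : y\in\cH, F_{\cD}(y)\leq 1\}$. The submultiplicativity inequality $\xi_{\cD_3}(\Psi_1\otimes\Psi_2) \leq \xi_{\cD_1}(\Psi_1)\xi_{\cD_2}(\Psi_2)$ holds for free from \eqref{eq:productinclusionproperty}, as explained around Eq.~\eqref{eq:submultiplicativity} (a dual certificate tensoring two optimal feasible points for the two factors works, or directly: any decomposition of $\Psi_1$ and $\Psi_2$ into dictionary elements yields one of $\Psi_1\otimes\Psi_2$ with multiplied $L^1$-norms). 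So the entire content of the theorem is the reverse inequality $\xi_{\cD_3}(\Psi_1\otimes\Psi_2) \geq \xi_{\cD_1}(\Psi_1)\xi_{\cD_2}(\Psi_2)$, and this is where $\mult^F$ enters.

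First I would take optimal dual witnesses: pick $y_1\in\cH_1$ with $F_{\cD_1}(y_1)\leq 1$ and $|\langle\Psi_1,y_1\rangle|^2 = \xi_{\cD_1}(\Psi_1)$, and similarly $y_2\in\cH_2$ with $F_{\cD_2}(y_2)\leq 1$ and $|\langle\Psi_2,y_2\rangle|^2 = \xi_{\cD_2}(\Psi_2)$ (these maxima are attained since $\cD_j$ is finite, so the feasible set is compact). Then consider the candidate witness $y_1\otimes y_2\in\cH_3$ for the dual program of $\xi_{\cD_3}$. The key computation is that $\mult^F$ gives
\begin{align}
F_{\cD_3}(y_1\otimes y_2) = F_{\cD_1}(y_1)\, F_{\cD_2}(y_2) \leq 1\ ,
\end{align}
so $y_1\otimes y_2$ is feasible for the dual program defining $\xi_{\cD_3}(\Psi_1\otimes\Psi_2)$. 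Here one should be a little careful: $\mult^F$ as stated asserts equality for all pure states $\Psi_j\in\cH_j$; since $F_{\cD}$ is invariant under scaling (it is a ratio, $F_{\cD}(\lambda y) = |\lambda|^2 F_{\cD}(y)$ after suitable normalization, or one works directly with the quantity $\sup_{\varphi\in\cD}|\langle\varphi,y\rangle|^2$ which is $2$-homogeneous), the multiplicativity statement extends from unit vectors to arbitrary (sub-)normalized vectors by homogeneity, so applying it to $y_1,y_2$ is legitimate. Then by the dual formulation \eqref{eq:dualformulationxi} evaluated at the feasible point $y_1\otimes y_2$,
\begin{align}
\xi_{\cD_3}(\Psi_1\otimes\Psi_2) \geq |\langle \Psi_1\otimes\Psi_2,\, y_1\otimes y_2\rangle|^2 = |\langle\Psi_1,y_1\rangle|^2\,|\langle\Psi_2,y_2\rangle|^2 = \xi_{\cD_1}(\Psi_1)\,\xi_{\cD_2}(\Psi_2)\ ,
\end{align}
which combined with submultiplicativity yields the claimed equality.

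The main obstacle — really the only subtlety — is ensuring that the homogeneity/normalization conventions line up so that ``$F_{\cD_3}(y_1\otimes y_2)\leq 1$'' genuinely follows from $\mult^F$ and the feasibility constraints $F_{\cD_j}(y_j)\leq 1$, and that the dual formula \eqref{eq:dualformulationxi} is being applied with a consistent normalization of $\Psi_1\otimes\Psi_2$. I would handle this by stating explicitly at the outset that $F_{\cD}$, as used in \eqref{eq:dualformulationxi}, is the $2$-homogeneous functional $y\mapsto \sup_{\varphi\in\cD}|\langle\varphi,y\rangle|^2$, that $\mult^F$ for unit vectors upgrades to $F_{\cD_3}(z_1\otimes z_2) = F_{\cD_1}(z_1)F_{\cD_2}(z_2)$ for all $z_j\in\cH_j$ by homogeneity, and that $\|\Psi_1\otimes\Psi_2\| = \|\Psi_1\|\,\|\Psi_2\|$ so the tensor product of unit vectors is a unit vector. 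With these conventions pinned down, the three displayed lines above constitute the complete proof; no net or limiting argument is needed in the finite case (that is deferred to Section~\ref{sec:MultFimpMultXiInfinite}).
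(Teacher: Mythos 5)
Your proposal is correct and follows essentially the same route as the paper's proof: establish submultiplicativity from the inclusion $\cD_1\otimes\cD_2\subset\cD_3$, then use optimal dual witnesses $y_1,y_2$ and the multiplicativity of the fidelity to show $y_1\otimes y_2$ is feasible for the dual program of $\xi_{\cD_3}$, yielding the reverse inequality. Your extra care about homogeneity is fine but not needed, since the paper states $\mult^F$ for all vectors $\Psi_j\in\cH_j$, not only unit vectors.
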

\begin{proof}
We clearly have
\begin{align}
    \label{eq:multProofFiniteDicLeq}
    \xi_{\cD_3}(\Psi_1\otimes\Psi_2)&\leq \xi_{\cD_1}(\Psi_1)\xi_{\cD_2}(\Psi_2)
\end{align}
for all~$\Psi_1\in\cH_1$ and~$\Psi_2\in\cH_2$
  because of property~\eqref{eq:productinclusionproperty} of the dictionaries~$\{\cD_j\}_{j=1}^3$ and the definition~\eqref{eq:Loneminimization} of~$\xi_\cD$. To show the converse inequality, assume that~$y_1\in\cH_1$, $y_2\in\cH_2$ are such that 
  \begin{align}
F_{\cD_j}(y_j)\leq 1\quad \textrm{ and }\quad \xi_{\cD_j}(\Psi_j)=|\langle\Psi_j,y_j\rangle|^2\quad\textrm{ for }\quad j\in[2]\ .
  \end{align}
Then
\begin{align}
    F_{\cD_3}(y_1\otimes y_2)&=F_{\cD_1}(y_1)F_{\cD_2}(y_2) \leq 1
\end{align}
  where we used  the assumption that property~$\mult^F(\cD_1,\cD_2,\cD_3)$ holds to obtain the equality.
This implies that~$y_1\otimes y_2$ is a feasible point of the dual program for the quantity~$\xi_{\cD_3}(\Psi_1\otimes\Psi_2)$, see Eq.~\eqref{eq:dualformulationxi}. Thus
\begin{align}
    \xi_{\cD_3}(\Psi_1\otimes\Psi_2)&\geq |\langle \Psi_1\otimes\Psi_2,y_1\otimes y_2\rangle|^2\\
    &=|\langle\Psi_1,y_1\rangle|^2\cdot |\langle\Psi_2,y_2\rangle|^2\\
    &=\xi_{\cD_1}(\Psi_1)\xi_{\cD_2}(\Psi_2)\ .
    \label{eq:multProofFiniteDicGeq}
\end{align}
Expression \eqref{eq:multProofFiniteDicGeq} together with Eq.~\eqref{eq:multProofFiniteDicLeq} gives the claim.
\end{proof}

\subsection{Multiplicativity for infinite dictionaries \label{sec:MultFimpMultXiInfinite}}

In this section, we extend the results of Section~\ref{sec:MultFimpMultXiInfinite} to dictionaries~$\cD$ that may contain infinitely many elements.
Our strategy is to use an~$\epsilon$-net for~$\cD\in\cH$ with a finite number of elements, we denote by~$\cD^\epsilon$.
We relate the extent and fidelity with respect to the dictionary~$\cD$ to the extent and fidelity with respect to its net~$\cD^\epsilon$ (see Lemmas~\ref{lem:Xiapproximate} and \ref{lem:fapproximate}) to prove that multiplicativity of the~$\cD$-fidelity implies multiplicativity of the~$\cD$-extent in Theorem~\ref{thm:MultFimpMultXiInfinite}. This result is a generalization of Theorem~\ref{thm:MultFimpMultXiFinite} (that considered finite dictionaries) for (possibly) infinite dictionaries.

We will make use of the notion of~$\epsilon$-net to replace our infinite set~$\cD$ by a finite set~$\cD^\epsilon$.
 Let~$\|\Psi\|=\sqrt{\langle \Psi,\Psi\rangle}$ for~$\Psi\in\cH$ denote the norm on~$\cH$.
 Let~$\cD\subset \cH$ and let~$\epsilon>0$. 
Then a set~$\cD^{\epsilon}\subset \cH$ is called an~$\epsilon$-net for~$\cD$ if
for any~$\Psi\in\cD$ there is some~$\Phi\in\cD^\epsilon$ such that~$\|\Phi-\Psi\|\leq \epsilon$.

We are interested in the case where for every~$\epsilon>0$ there is a finite~$\epsilon$-net~$\cD^\epsilon$ for~$\cD$, with the additional property that~$\cD^\epsilon\subset\cD$, i.e., the net consists of elements of~$\cD$. A sufficient condition for this being the case is that the subset~$\cD\subset\cH$ is compact.

\begin{lemma}\label{lem:Xiapproximate}
Let~$\cD\subset\cH$ be a set of states. Assume that there is a finite~$\epsilon$-net~$\cD^\epsilon$ for~$\cD$ such that~$\cD^\epsilon\subset \cD$, for some~$\epsilon>0$. Assume further that~$\cD^\epsilon$ contains an orthonormal basis of~$\cH$. Let~$d$ be the dimension of~$\cH$. Then
\begin{align}
\xi_{\cD}(\Psi)&\leq \xi_{\cD^\epsilon}(\Psi)\leq \xi_{\cD}(\Psi)\left(1+\sqrt{d}\epsilon\right)^2\qquad\textrm{ for all }\qquad \Psi\in\cH\ .
\end{align}
\end{lemma}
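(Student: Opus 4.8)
The plan is to prove the two inequalities separately. The left inequality $\xi_{\cD}(\Psi)\leq \xi_{\cD^\epsilon}(\Psi)$ is immediate: since $\cD^\epsilon\subset\cD$, any decomposition of $\Psi$ into elements of $\cD^\epsilon$ is in particular a decomposition into elements of $\cD$, so the infimum defining $\xi_{\cD}(\Psi)$ in Eq.~\eqref{eq:Loneminimization} is over a larger set and can only be smaller. (Note $\cD^\epsilon$ does admit at least one decomposition of every $\Psi\in\cH$ because it contains an orthonormal basis, so $\xi_{\cD^\epsilon}(\Psi)<\infty$.) For the right inequality, the idea is to take a near-optimal decomposition $\Psi=\sum_{j=1}^N\gamma_j\varphi_j$ with $\varphi_j\in\cD$ and $\|\gamma\|_1^2$ close to $\xi_{\cD}(\Psi)$, and to replace each $\varphi_j$ by a net element $\tilde\varphi_j\in\cD^\epsilon$ with $\|\varphi_j-\tilde\varphi_j\|\leq\epsilon$. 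This gives $\sum_j\gamma_j\tilde\varphi_j=\Psi-E$ where the error $E=\sum_j\gamma_j(\varphi_j-\tilde\varphi_j)$ has norm $\|E\|\leq\|\gamma\|_1\epsilon$. We then need to correct for $E$ using the orthonormal basis contained in $\cD^\epsilon$.

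The key step is the error-correction bound. Write $E=\sum_{k=1}^d e_k b_k$ in an orthonormal basis $\{b_k\}_{k=1}^d\subset\cD^\epsilon$ of $\cH$; then $\sum_k |e_k| = \|E\|_1 \leq \sqrt{d}\,\|E\|_2 \leq \sqrt{d}\,\|\gamma\|_1\epsilon$ by Cauchy--Schwarz. Adding this basis expansion of $E$ to $\sum_j\gamma_j\tilde\varphi_j$ produces a decomposition of $\Psi$ into elements of $\cD^\epsilon$ whose coefficient $L^1$-norm is at most $\|\gamma\|_1 + \sqrt{d}\,\|\gamma\|_1\epsilon = \|\gamma\|_1(1+\sqrt d\,\epsilon)$. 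Hence $\xi_{\cD^\epsilon}(\Psi)\leq \|\gamma\|_1^2(1+\sqrt d\,\epsilon)^2$, and taking the infimum over near-optimal decompositions (so that $\|\gamma\|_1^2\to\xi_{\cD}(\Psi)$) yields $\xi_{\cD^\epsilon}(\Psi)\leq\xi_{\cD}(\Psi)(1+\sqrt d\,\epsilon)^2$.

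I do not expect a serious obstacle here; the argument is a standard net/perturbation estimate. The one point requiring a little care is whether the infimum in $\xi_{\cD}(\Psi)$ is attained or merely approached — so I would phrase the final step as: for any $\eta>0$ choose a decomposition with $\|\gamma\|_1^2\leq\xi_{\cD}(\Psi)+\eta$, derive $\xi_{\cD^\epsilon}(\Psi)\leq(\xi_{\cD}(\Psi)+\eta)(1+\sqrt d\,\epsilon)^2$, and let $\eta\to0$. A second minor subtlety is that the $\varphi_j$ and $\tilde\varphi_j$ are unit vectors but the net condition $\|\varphi_j-\tilde\varphi_j\|\leq\epsilon$ is exactly what is assumed, so no renormalization issue arises. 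This completes the proof modulo these routine verifications.
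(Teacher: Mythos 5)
Your proposal is correct and follows essentially the same route as the paper's proof: a near-optimal decomposition over $\cD$, replacement of each element by a net element, absorption of the resulting error into the orthonormal basis contained in $\cD^\epsilon$ via the Cauchy--Schwarz bound $\|\alpha\|_1\leq\sqrt{d}\,\|\alpha\|_2$, and a limiting argument to handle the infimum. The only (immaterial) difference is that the paper parametrizes the near-optimal decompositions by $\|c(m)\|_1<\sqrt{\xi_\cD(\Psi)}+1/m$ rather than by $\|\gamma\|_1^2\leq\xi_\cD(\Psi)+\eta$.
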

\begin{proof}
The first inequality follows immediately from the definition of~$\xi_\cD$ and from the assumption that~$\cD^\epsilon\subset\cD$. 

To prove the second inequality, let~$\Psi\in\cH$ be arbitrary. 
By definition of~$\xi_\cD(\Psi)$ as an infimum, we have the following: For every~$m\in\mathbb{N}$, there exist~$N(m)\in\mathbb{N}$, $\{\varphi_j(m)\}_{j=1}^{N(m)}\subset \cD$ and~$\{c_j(m)\}_{j=1}^{N(m)}\subset\mathbb{C}$ such that \begin{align}
\Psi=\sum_{j=1}^{N(m)}c_j(m) \varphi_j(m)
\end{align}
and
\begin{align}
\|c(m)\|_1&< \sqrt{\xi_{\cD}(\Psi)}+\frac{1}{m}\ .\label{eq:supremumproperty}
\end{align}
Furthermore,
we have
\begin{align}
\|c(m)\|_1^2&\geq \xi_{\cD}(\Psi)\quad\textrm{ for any }\quad {N(m)}\in\mathbb{N}, \{s_j\}_{j=1}^{N(m)}\subset \cD\quad\textrm{  and }\quad\{c_j(m)\}_{j=1}^{N(m)}\subset\mathbb{C}\ .
\end{align}
Fix such an~$m\in\mathbb{N}$. Since~$\cD^\epsilon$ is an~$\epsilon$-net for~$\cD$, there is, for every~$j\in[N(m)]$, an element~$\varphi_j^\epsilon(m)\in\cD^\epsilon$ and~$\delta_j(m)\in\cH$ such that
\begin{align}
\varphi_j(m)&=\varphi_j^\epsilon(m)+\delta_j(m)\qquad\textrm{ and }\qquad \|\delta_j(m)\|\leq \epsilon\  .
\end{align}
It follows that
\begin{align}
\Psi&=\left(\sum_{j=1}^{N(m)} c_j(m)\varphi_j^\epsilon(m)\right)+\delta(m)\qquad\textrm{ where }\qquad \delta(m)=\sum_{j=1}^{N(m)} c_j(m)\delta_j(m)\ .
\end{align}
By the triangle inequality we have 
\begin{align}
    \|\delta(m)\|&\leq \epsilon\sum_{j=1}^{N(m)} |c_j(m)|=\epsilon \|c(m)\|_1 \ .
    \label{eq:deltanormproperty}
\end{align}
Suppose~$\{e_k\}_{k=1}^d$ is an orthonormal basis contained in~$\cD^\epsilon$. Then we can expand
\begin{align}
\delta(m) &=\sum_{k=1}^d \alpha_k(m) e_k\ ,
\end{align}
and it follows from~\eqref{eq:deltanormproperty} and the Cauchy-Schwarz inequality that 
\begin{alignat}{2}
\|\alpha(m)\|_1 &\leq \sqrt{d}\|\alpha(m)\|_2 \\
&=\sqrt{d}\|\delta(m)\|_2
&&\qquad\text{by the Cauchy-Schwarz inequality in~$\bbC^d$} \\
&\leq \sqrt{d}\epsilon \|c(m)\|_1\  &&\qquad\text{because of Eq.~\eqref{eq:deltanormproperty}}\ .
\end{alignat}
where~$\|\cdot\|_2$ is the Euclidean norm in~$\mathbb{C}^d$.
In summary, we have 
\begin{align}
\Psi &=\sum_{j=1}^{N(m)} c_j(m)\varphi_j^\epsilon(m)+\sum_{k=1}^d \alpha_k(m) e_k\ ,
\end{align}
with~$(c,\alpha)\in\mathbb{C}^{N(m)+d}$ satisfying
\begin{align}
\|(c,\alpha)\|_1 &= 
\sum_{j=1}^{N(m)} |c_j(m)|+\sum_{k=1}^d|\alpha_k(m)| \\
&\leq \|c(m)\|_1\cdot \left(1+\sqrt{d}\epsilon\right)\\
&\leq \left(\sqrt{\xi_\cD(\Psi)}+\frac{1}{m}\right)\cdot  \left(1+\sqrt{d}\epsilon\right)\ ,
\end{align}
by~\eqref{eq:supremumproperty}. Because~$\varphi_j^\epsilon(m)\in \cD^\epsilon$ for every~$j \in [N(m)]$ and~$e_k\in \cD^\epsilon$ for every~$k\in[d]$, this implies that
\begin{align}
\sqrt{\xi_{\cD^\epsilon}(\Psi)}\leq \left(\sqrt{\xi_{\cD}(\Psi)}+\frac{1}{m}\right)\cdot  \left(1+\sqrt{d}\epsilon\right)\ .
\end{align}
Since~$m\in\mathbb{N}$ was arbitrary, we can take the limit
$m\rightarrow\infty$ and obtain
\begin{align}
\sqrt{\xi_{\cD^\epsilon}(\Psi)}\leq \sqrt{\xi_{\cD}(\Psi)}\cdot  \left(1+\sqrt{d}\epsilon\right)\ . 
\end{align}
This gives the claim.
\end{proof}

\begin{lemma}\label{lem:fapproximate}
Let~$\cD^\epsilon\subset \cD\subset\cH$  be as in Lemma~\ref{lem:Xiapproximate}. Then we have
\begin{align}
\sqrt{F_{\cD^\epsilon}(\Psi)}&\leq \sqrt{F_{\cD}(\Psi)}\leq \sqrt{F_{\cD^\epsilon}(\Psi)}+\|\Psi\|\cdot\epsilon\qquad\textrm{ for all }\qquad \Psi\in\cH\ .
\end{align}
\end{lemma}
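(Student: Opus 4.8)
The plan is to mirror the structure of the preceding Lemma~\ref{lem:Xiapproximate}, but for the fidelity $F_{\cD}(\Psi) = \sup_{\varphi \in \cD} |\langle \varphi, \Psi\rangle|^2$. The first inequality $\sqrt{F_{\cD^\epsilon}(\Psi)} \leq \sqrt{F_{\cD}(\Psi)}$ is immediate from the assumption $\cD^\epsilon \subset \cD$, since the supremum over a larger set can only be larger. So the work is entirely in the second inequality.

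For the second inequality, I would fix $\Psi \in \cH$ and let $\varphi \in \cD$ be an \emph{approximate maximizer} for $F_{\cD}(\Psi)$: for any $\eta > 0$, choose $\varphi \in \cD$ with $|\langle \varphi, \Psi \rangle|^2 \geq F_{\cD}(\Psi) - \eta$ (if the supremum is attained, e.g.\ because $\cD$ is compact, one can just take the maximizer and set $\eta=0$, but it is cleaner to carry $\eta$ to avoid attainment assumptions). Since $\cD^\epsilon$ is an $\epsilon$-net for $\cD$, there is $\varphi^\epsilon \in \cD^\epsilon$ with $\|\varphi - \varphi^\epsilon\| \leq \epsilon$. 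Then, writing $\varphi = \varphi^\epsilon + \delta$ with $\|\delta\| \leq \epsilon$, I estimate
\begin{align}
|\langle \varphi, \Psi\rangle| &= |\langle \varphi^\epsilon, \Psi \rangle + \langle \delta, \Psi\rangle| \leq |\langle \varphi^\epsilon, \Psi\rangle| + |\langle \delta, \Psi\rangle| \leq \sqrt{F_{\cD^\epsilon}(\Psi)} + \|\delta\|\cdot\|\Psi\| \leq \sqrt{F_{\cD^\epsilon}(\Psi)} + \epsilon\|\Psi\|\ ,
\end{align}
where the second step uses the triangle inequality, and the third uses that $\varphi^\epsilon \in \cD^\epsilon$ together with Cauchy-Schwarz. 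Combining with $|\langle \varphi, \Psi\rangle|^2 \geq F_{\cD}(\Psi) - \eta$ gives $\sqrt{F_{\cD}(\Psi) - \eta} \leq \sqrt{F_{\cD^\epsilon}(\Psi)} + \epsilon\|\Psi\|$, and letting $\eta \to 0$ yields the claim.

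There is essentially no serious obstacle here: unlike Lemma~\ref{lem:Xiapproximate} (which needed the net to contain an orthonormal basis so that the perturbation $\delta$ could be re-expanded inside the dictionary), the fidelity argument only needs the net to $\epsilon$-approximate a single near-optimal dictionary element, and Cauchy-Schwarz handles the error term directly. The one point to be careful about is that the statement as given assumes the hypotheses of Lemma~\ref{lem:Xiapproximate} (in particular $\cD^\epsilon \subset \cD$ and finiteness), so I would phrase the proof to use only $\cD^\epsilon \subset \cD$ plus the $\epsilon$-net property, noting that the orthonormal-basis assumption is not actually needed for this lemma. I would also remark that normalization of $\Psi$ is not assumed, which is why the $\|\Psi\|$ factor appears; for unit vectors it simplifies to $\sqrt{F_{\cD}(\Psi)} \leq \sqrt{F_{\cD^\epsilon}(\Psi)} + \epsilon$.
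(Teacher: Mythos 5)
Your proof is correct and follows essentially the same route as the paper: the first inequality from $\cD^\epsilon\subset\cD$, and the second by decomposing a (near-)optimal $\varphi\in\cD$ as $\varphi=\varphi^\epsilon+\delta$ with $\|\delta\|\leq\epsilon$ and applying the triangle and Cauchy--Schwarz inequalities. The only difference is that you carry an approximate maximizer with parameter $\eta$, whereas the paper takes an exact maximizer (justified in its application by compactness of $\cG_n$); your version is marginally more careful but the argument is the same.
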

\begin{proof}
The first inequality follows trivially from the definitions using~$\cD^\epsilon\subset\cD$. For the second inequality, let~$\Psi\in\cH$ be arbitrary. Let~$\varphi\in\cD$ be such that 
\begin{align}
F_{\cD}(\Psi)&=|\langle \varphi,\Psi\rangle|^2\ .
\end{align}
Then (by the fact that~$\cD^\epsilon$ is an~$\epsilon$-net for~$\cD$ and~$\varphi\in\cD$) there is an element~$\varphi^\epsilon\in \cD^\epsilon$ and~$\delta\in\cH$ such that
\begin{align}
\varphi&=\varphi^\epsilon+\delta\qquad\textrm{ where }\qquad \|\delta\|\leq \epsilon\ .
\end{align}
It follows that 
\begin{align}
\sqrt{F_{\cD}(\Psi)}&=|\langle \varphi,\Psi\rangle|\\
&\leq |\langle \varphi^\epsilon,\Psi\rangle|+ |\langle \delta,\Psi\rangle|\\
&\leq \sqrt{F_{\cD^\epsilon}(\Psi)}+
|\langle \delta,\Psi\rangle|\\
&\leq \sqrt{F_{\cD^\epsilon}(\Psi)}+\|\Psi\|\cdot \|\delta\|\\
&\leq \sqrt{F_{\cD^\epsilon}(\Psi)}+\|\Psi\|\cdot \epsilon
\end{align}
where we used the definition of~$F_{\cD^\epsilon}(\Psi)$ and the Cauchy-Schwarz inequality (in the penultimate step). The claim follows.
\end{proof}

\begin{lemma}\label{lem:boundednessyops}We have 
\begin{align}
\|y\|^2\leq d^2\cdot F_{\cD^\epsilon}(y)\qquad\textrm{ for every }\qquad y\in\cH\ ,
\end{align}
where~$d=\dim\cH$. 
\end{lemma}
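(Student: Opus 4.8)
The plan is to exploit the hypothesis inherited from Lemma~\ref{lem:Xiapproximate} that the net $\cD^\epsilon$ contains an orthonormal basis $\{e_k\}_{k=1}^d$ of $\cH$, where $d=\dim\cH$. Since each $e_k$ is in particular an element of $\cD^\epsilon$, the definition~\eqref{eq:defmaxoverlap} of the fidelity immediately yields the lower bound $F_{\cD^\epsilon}(y)\geq |\langle e_k,y\rangle|^2$ for every $k\in[d]$, and hence $F_{\cD^\epsilon}(y)\geq \max_{k\in[d]}|\langle e_k,y\rangle|^2$.

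Next I would expand $y$ in this orthonormal basis, writing $\|y\|^2=\sum_{k=1}^d |\langle e_k,y\rangle|^2$. Bounding each of the $d$ summands by their maximum gives $\|y\|^2\leq d\cdot \max_{k\in[d]}|\langle e_k,y\rangle|^2\leq d\cdot F_{\cD^\epsilon}(y)$. Since $d=\dim\cH\geq 1$, this in particular implies the (slightly weaker) claimed bound $\|y\|^2\leq d^2\cdot F_{\cD^\epsilon}(y)$. The inequality is homogeneous of degree two in $y$, so there is no need to normalize $y$.

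There is no real obstacle here: the statement is a soft consequence of the presence of an orthonormal basis inside the net $\cD^\epsilon$. The only point requiring minor care is to record explicitly that we work under the same hypotheses as Lemma~\ref{lem:Xiapproximate}, so that such a basis $\{e_k\}_{k=1}^d\subset\cD^\epsilon$ is indeed available; the looser factor $d^2$ in place of the sharper $d$ is presumably retained only for convenience in the subsequent estimates and costs nothing.
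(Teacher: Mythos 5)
Your proof is correct and follows essentially the same route as the paper: both arguments rest on the orthonormal basis $\{e_k\}_{k=1}^d\subset\cD^\epsilon$ together with the bound $|\langle e_k,y\rangle|^2\leq F_{\cD^\epsilon}(y)$. The only difference is cosmetic — the paper passes through $\|y\|_{}\leq\sum_k|\langle e_k,y\rangle|\leq d\sqrt{F_{\cD^\epsilon}(y)}$ via the $\ell_2$-versus-$\ell_1$ inequality and so ends with the factor $d^2$, whereas your direct Parseval estimate yields the slightly sharper constant $d$, which of course still implies the stated bound.
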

\begin{proof}
Let~$\{e_k\}_{k=1}^d$ be an orthonormal basis contained in~$\cD^\epsilon$.  Then 
\begin{align}
|\langle e_k,y\rangle|^2 & \leq F_{\cD^\epsilon}(y)\qquad\textrm{ for every }\qquad k\in[d]
\end{align}
because~$e_k\in\cD^\epsilon$ for every~$k\in [n]$. 
We have
\begin{align}
\|y\|&=\left(\sum_{k=1}^d
|\langle e_k,y\rangle|^2\right)^{1/2}\\
&\leq \sum_{k=1}^d
|\langle e_k,y\rangle|\\
&\leq d\cdot \sqrt{F_{\cD^\epsilon}(y)}
\end{align}
where we used that~$\|v\|_2 \leq \|v\|_1$ for~$v\in \mathbb{C}^d$. The claim follows.
\end{proof}

\begin{lemma}\label{lem:maincombinationlemma}
Let~$d_j=\dim \cH_j$ for~$j\in[2]$. Assuming that the property~$\mult^F(\cD_1,\cD_2,\cD_3)$ holds, we have 
\begin{align}
\xi_{\cD_3}(\Psi_1\otimes\Psi_2) &\geq g(d_1,d_2,d_3,\epsilon) \xi_{\cD_1}(\Psi_1)\xi_{\cD_2}(\Psi_2)
\end{align}
for a function~$g$ which satisfies
\begin{align}
\lim_{\epsilon\rightarrow 0} g(d_1,d_2,d_3,\epsilon)=1\ .
\end{align}
\end{lemma}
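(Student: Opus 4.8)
The strategy is to run the duality argument of Theorem~\ref{thm:MultFimpMultXiFinite} at the level of the finite $\epsilon$-nets $\cD_j^\epsilon$ (which we assume exist with $\cD_j^\epsilon\subset\cD_j$ and containing an orthonormal basis of $\cH_j$, exactly as in Lemmas~\ref{lem:Xiapproximate}--\ref{lem:boundednessyops}; for Gaussian states this holds since $\cG_n$ is compact and contains the number states), and then transfer the resulting inequality back to the full dictionaries $\cD_j$ using the net-approximation estimates of those lemmas. Since each $\cD_j^\epsilon$ is finite and spanning, the dual formulation~\eqref{eq:dualformulationxi} of the extent applies to it and the maximum is attained.

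First I would invoke~\eqref{eq:dualformulationxi} for $\cD_1^\epsilon$ and $\cD_2^\epsilon$ to pick, for $j\in[2]$, vectors $y_j\in\cH_j$ with
\begin{align}
F_{\cD_j^\epsilon}(y_j)\le 1\qquad\text{and}\qquad \xi_{\cD_j^\epsilon}(\Psi_j)=|\langle\Psi_j,y_j\rangle|^2\ .
\end{align}
The next step is to show that a suitably rescaled version of $y_1\otimes y_2$ is dual-feasible for $\xi_{\cD_3^\epsilon}(\Psi_1\otimes\Psi_2)$. By the first inequality of Lemma~\ref{lem:fapproximate} (for $\cD_3^\epsilon\subset\cD_3$) and the hypothesis $\mult^F(\cD_1,\cD_2,\cD_3)$,
\begin{align}
F_{\cD_3^\epsilon}(y_1\otimes y_2)\ \le\ F_{\cD_3}(y_1\otimes y_2)\ =\ F_{\cD_1}(y_1)\,F_{\cD_2}(y_2)\ .
\end{align}
To bound the right-hand side I would use Lemma~\ref{lem:boundednessyops} to get $\|y_j\|^2\le d_j^2\,F_{\cD_j^\epsilon}(y_j)\le d_j^2$, and then the second inequality of Lemma~\ref{lem:fapproximate} to obtain $\sqrt{F_{\cD_j}(y_j)}\le\sqrt{F_{\cD_j^\epsilon}(y_j)}+\|y_j\|\epsilon\le 1+d_j\epsilon$. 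Writing $c=c(\epsilon)=(1+d_1\epsilon)(1+d_2\epsilon)$, this yields $F_{\cD_3^\epsilon}(y_1\otimes y_2)\le c^2$, so by homogeneity $F_{\cD_3^\epsilon}\!\big(c^{-1}(y_1\otimes y_2)\big)\le 1$, i.e.\ $c^{-1}(y_1\otimes y_2)$ is a genuine feasible point of the dual program~\eqref{eq:dualformulationxi} for $\xi_{\cD_3^\epsilon}$.

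From here the conclusion mirrors~\eqref{eq:multProofFiniteDicGeq}: dual-feasibility gives
\begin{align}
\xi_{\cD_3^\epsilon}(\Psi_1\otimes\Psi_2)\ \ge\ \Big|\big\langle\Psi_1\otimes\Psi_2,\ \tfrac{1}{c}(y_1\otimes y_2)\big\rangle\Big|^2\ =\ \frac{1}{c^2}\,\xi_{\cD_1^\epsilon}(\Psi_1)\,\xi_{\cD_2^\epsilon}(\Psi_2)\ \ge\ \frac{1}{c^2}\,\xi_{\cD_1}(\Psi_1)\,\xi_{\cD_2}(\Psi_2)\ ,
\end{align}
using the first inequality of Lemma~\ref{lem:Xiapproximate} ($\xi_{\cD_j}\le\xi_{\cD_j^\epsilon}$) in the last step. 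Finally, the second inequality of Lemma~\ref{lem:Xiapproximate}, $\xi_{\cD_3^\epsilon}(\Psi_1\otimes\Psi_2)\le(1+\sqrt{d_3}\,\epsilon)^2\,\xi_{\cD_3}(\Psi_1\otimes\Psi_2)$, rearranges to give the claim with
\begin{align}
g(d_1,d_2,d_3,\epsilon)=\frac{1}{(1+d_1\epsilon)^2\,(1+d_2\epsilon)^2\,(1+\sqrt{d_3}\,\epsilon)^2}\ ,
\end{align}
which clearly tends to $1$ as $\epsilon\to 0$.

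I do not expect a deep obstacle here; the proof is a careful chaining of the three net-approximation lemmas. The two points that need attention are (i) getting the directions of all the net inequalities right, and (ii) noticing that, unlike in the finite-dictionary case of Theorem~\ref{thm:MultFimpMultXiFinite}, the tensor product $y_1\otimes y_2$ is only \emph{approximately} dual-feasible, so it must be divided by $c(\epsilon)$ before it can be plugged into the dual program — this is exactly where the factor $(1+d_1\epsilon)^2(1+d_2\epsilon)^2$ in $g$ comes from. One should also make sure at the outset that the dual formula~\eqref{eq:dualformulationxi} is legitimately applied to $\cD_j^\epsilon$, which is justified by finiteness of $\cD_j^\epsilon$ and the presence of an orthonormal basis inside it.
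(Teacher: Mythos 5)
Your proof is correct and follows essentially the same route as the paper's: choose dual-optimal $y_j$ for the finite nets, use $\mult^F$ together with Lemmas~\ref{lem:fapproximate} and~\ref{lem:boundednessyops} to show that $(1+d_1\epsilon)^{-1}(1+d_2\epsilon)^{-1}\,y_1\otimes y_2$ is dual-feasible for $\xi_{\cD_3^\epsilon}$, and then transfer back to $\cD_3$ via Lemma~\ref{lem:Xiapproximate}, arriving at the identical function $g$. The only cosmetic difference is that you rescale the tensor product by a single constant $c(\epsilon)$ whereas the paper rescales each factor separately, which is equivalent.
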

\begin{proof}
Suppose~$y_j\in\cH_j$ for~$j\in[2]$ is such that 
\begin{align}
F_{\cD_j^\epsilon}(y_j) &\leq 1 \label{eq:conditionyjupperbound}
\end{align}
and
\begin{align}
\xi_{\cD_j^\epsilon}(\Psi_j)&=|\langle y_j,\Psi_j\rangle|^2\ \label{eq:Xiychoice}
\end{align}
for~$j\in[2]$. Such a pair~$(y_1,y_2)$ exists since~$\cD^\epsilon_1$ and~$\cD^\epsilon_2$ are finite sets and the dual definition definition of the extent in terms of a maximum applies (see Eq.~\eqref{eq:dualformulationxi}).  Equation~\eqref{eq:conditionyjupperbound} implies that
\begin{align}
\|y_j\| &\leq d_j\qquad\textrm{ for }\qquad j\in[2]\ ,
\end{align}
see Lemma~\ref{lem:boundednessyops}.

We have 
\begin{align}
\sqrt{F_{\cD_3^\epsilon}(y_1\otimes y_2)} &\leq \sqrt{F_{\cD_3}(y_1\otimes y_2)}\qquad\qquad\qquad\qquad\qquad\qquad\,\qquad\textrm{ by Lemma~\ref{lem:fapproximate}}\\
&=\sqrt{F_{\cD_1}(y_1)}\cdot\sqrt{F_{\cD_2}(y_2)} \,\,\textrm{ by the assumption that~$\mult^F(\cD_1,\cD_2,\cD_3)$ holds}\\
&\leq \left(\sqrt{F_{\cD^\epsilon_1}(y_1)}+\|y_1\|\cdot\epsilon\right)\left(\sqrt{F_{\cD^\epsilon_2}(y_2)}+\|y_2\|\cdot\epsilon\right)\quad\textrm{ by Lemma~\ref{lem:fapproximate}}\\
&\leq \sqrt{F_{\cD^\epsilon_1}(y_1)} \cdot\sqrt{F_{\cD^\epsilon_2}(y_2)}
\cdot
\left(1+d_1\cdot\epsilon\right)\left(1+d_2\cdot\epsilon\right)\,\quad\textrm{ by Lemma~\ref{lem:boundednessyops}}\\
&\leq (1+d_1\epsilon)(1+d_2\epsilon)\quad\qquad\qquad\qquad\qquad\qquad\qquad\,\,\textrm{by Eq.~\eqref{eq:conditionyjupperbound}}\ . \label{eq:y1y2upperbound}
\end{align}
Defining
\begin{align}
\tilde{y}_j&=\left(1+d_j\epsilon\right)^{-1}y_j\qquad\textrm{ for }\qquad j\in[2]
\end{align}
it follows from Eq.~\eqref{eq:y1y2upperbound}
that
\begin{align}
F_{\cD_3^\epsilon}(\tilde{y}_1\otimes\tilde{y}_2)&\leq 1\ .
\end{align}
By the dual formulation~\eqref{eq:dualformulationxi} of the quantity~$\xi_{\cD^\epsilon_3}(\Psi_1\otimes\Psi_2)$  this implies that 
\begin{alignat}{2}
\xi_{\cD^\epsilon_3}(\Psi_1\otimes\Psi_2)&\geq 
|\langle \tilde{y}_1\otimes\tilde{y}_2,\Psi_1\otimes\Psi_2\rangle|^2\\
&=\left(\prod_{j=1}^2 \left(1+d_j\epsilon\right)^{-2}|\langle y_j,\Psi_j\rangle|^2\right)\\
&=\left(\prod_{j=1}^2 \left(1+d_j\epsilon\right)^{-2}\right)
\xi_{\cD^\epsilon_1}(\Psi_1)\xi_{\cD^\epsilon_2}(\Psi_2)&&\,\,\textrm{ because of Eq.~\eqref{eq:Xiychoice}}\\
&\geq \left(\prod_{j=1}^2 \left(1+d_j\epsilon\right)^{-2}\right)
\xi_{\cD_1}(\Psi_1)\xi_{\cD_2}(\Psi_2)&&\,\,\textrm{ according to Lemma~\ref{lem:Xiapproximate}.}\label{eq:inequalityam}
\end{alignat}
Lemma~\ref{lem:Xiapproximate} also implies that
\begin{align}
\left(1+\sqrt{d_3}\epsilon\right)^2\xi_{\cD_3}(\Psi_1\otimes\Psi_2)\geq \xi_{\cD^\epsilon_3}(\Psi_1\otimes\Psi_2)\ .\label{eq:inequalitybm}
\end{align}
Combining Eqs.~\eqref{eq:inequalityam} and~\eqref{eq:inequalitybm} gives the claim, with 
\begin{align}
g(d_1,d_2,d_3\epsilon)=\left(1+\sqrt{d_3}\epsilon\right)^{-2}\left(\prod_{j=1}^2 \left(1+d_j\epsilon\right)^{-2}\right)\ .
\end{align}
\end{proof}

\begin{theorem}
\label{thm:MultFimpMultXiInfinite}
Let~$\cH_1,\cH_2$ and~$\cH_3$ be a triple of Hilbert spaces and let~$\{\cD_j\}_{j\in[3]}$ be a family of dictionaries, where~$\cD_j\subset\cH_j$. Assume that~$\cD_j$ contains an orthonormal basis of~$\cH_j$, for~$j\in[3]$, and that~$\cD_1 \otimes \cD_2 \subset \cD_3$. 
Assume further that for any~$\epsilon>0$ there is an~$\epsilon$-net~$\cD_j^\epsilon$ for~$\cD_j$
 such that~$\cD_j^\epsilon\subset\cD_j$, i.e., the net consists of elements of~$\cD_j$. 
 Finally, assume that 
 \begin{align}
 F_{\cD_3}(\Psi_1\otimes\Psi_2)&=F_{\cD_1}(\Psi)F_{\cD_2}(\Psi_2)\qquad\textrm{ for all }\qquad \Psi_1\in\cH_1,\Psi_2\in\cH_2\ .
 \end{align}
Then
\begin{align}
\xi_{\cD_3}(\Psi_1\otimes\Psi_2) &= \xi_{\cD_1}(\Psi_1)\xi_{\cD_2}(\Psi_2)\qquad\textrm{ for all }\qquad\Psi_1\in\cH_1\textrm{ and }\Psi_2\in\cH_2\ .
\end{align}
\end{theorem}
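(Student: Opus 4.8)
The plan is to obtain the statement as a short wrap-up of the lemmas already established in this section, letting the net parameter tend to zero. Throughout, fix $\Psi_1\in\cH_1$ and $\Psi_2\in\cH_2$.

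First I would record the trivial direction. Because $\cD_1\otimes\cD_2\subset\cD_3$, any decompositions $\Psi_1=\sum_i c_i\varphi_i$ and $\Psi_2=\sum_j c'_j\varphi'_j$ with $\varphi_i\in\cD_1$, $\varphi'_j\in\cD_2$ give a decomposition $\Psi_1\otimes\Psi_2=\sum_{i,j} c_ic'_j\,(\varphi_i\otimes\varphi'_j)$ into elements of $\cD_3$, whose $L^1$-norm of coefficients factorizes as $\|c\|_1\|c'\|_1$. Taking infima and using definition~\eqref{eq:Loneminimization} yields the submultiplicativity bound
\begin{align}
\xi_{\cD_3}(\Psi_1\otimes\Psi_2)&\leq \xi_{\cD_1}(\Psi_1)\,\xi_{\cD_2}(\Psi_2)\ .
\end{align}

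For the converse, I would invoke Lemma~\ref{lem:maincombinationlemma}. For each $\epsilon>0$, choose finite $\epsilon$-nets $\cD_j^\epsilon\subset\cD_j$, and (enlarging each net if needed, which preserves finiteness and the inclusion $\cD_j^\epsilon\subset\cD_j$ since $\cD_j$ already contains an orthonormal basis) arrange that $\cD_j^\epsilon$ contains a fixed orthonormal basis of $\cH_j$. With this choice, and using the assumed multiplicativity of the $\cD$-fidelity, Lemma~\ref{lem:maincombinationlemma} gives
\begin{align}
\xi_{\cD_3}(\Psi_1\otimes\Psi_2)&\geq g(d_1,d_2,d_3,\epsilon)\,\xi_{\cD_1}(\Psi_1)\,\xi_{\cD_2}(\Psi_2)\ ,
\end{align}
with $d_j=\dim\cH_j$ and $g(d_1,d_2,d_3,\epsilon)\to 1$ as $\epsilon\to 0$. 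Since the left-hand side does not depend on $\epsilon$, taking $\epsilon\to 0$ yields $\xi_{\cD_3}(\Psi_1\otimes\Psi_2)\geq \xi_{\cD_1}(\Psi_1)\xi_{\cD_2}(\Psi_2)$, and combining with the submultiplicativity bound gives the claimed equality.

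The substantive work has in fact already been done in Lemma~\ref{lem:maincombinationlemma}, so that is where I expect the main obstacle to lie rather than in the theorem itself: one must pick a near-optimal dual feasible pair $(y_1,y_2)$ for the \emph{net} extents $\xi_{\cD_1^\epsilon},\xi_{\cD_2^\epsilon}$ (using the finite-dictionary dual formula~\eqref{eq:dualformulationxi}), bound $\|y_j\|$ via Lemma~\ref{lem:boundednessyops}, transport multiplicativity of the fidelity from $\cD_j$ to its net $\cD_j^\epsilon$ through Lemma~\ref{lem:fapproximate} --- which is precisely what forces the small rescaling $\tilde y_j=(1+d_j\epsilon)^{-1}y_j$ needed to restore dual feasibility of $\tilde y_1\otimes\tilde y_2$ for $\cD_3^\epsilon$ --- and finally compare the net extents back to the original extents via Lemma~\ref{lem:Xiapproximate}. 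Once all these $\epsilon$-dependent factors are collected into $g$, the theorem is just the observation that they converge to $1$.
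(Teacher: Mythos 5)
Your proposal is correct and follows essentially the same route as the paper's proof: the submultiplicativity direction from $\cD_1\otimes\cD_2\subset\cD_3$, and the converse by enlarging the nets to contain an orthonormal basis, applying Lemma~\ref{lem:maincombinationlemma}, and letting $\epsilon\rightarrow 0$. Your remark that the substantive work lives in Lemma~\ref{lem:maincombinationlemma} (dual feasibility of the rescaled $\tilde y_1\otimes\tilde y_2$ via Lemmas~\ref{lem:fapproximate}, \ref{lem:boundednessyops} and \ref{lem:Xiapproximate}) accurately reflects the structure of the paper's argument.
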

\begin{proof}
By (if necessary) replacing~$\cD^\epsilon_j$ by~$\cD^\epsilon_j\cup \{e^{(j)}_k\}_{k=1}^d$, for~$j\in[3]$, where~$\{e^{(j)}_k\}_{k=1}^d$ is an orthonormal basis of~$\cH_j$ with~$e^{(j)}_k\in \cD_j$, for~$n\in[3]$, we have that each~$\cD^\epsilon_j$ is finite and contains an orthonormal basis of the respective space. The inequality
\begin{align}
\xi_{\cD_3}(\Psi_1\otimes\Psi_2) &\geq \xi_{\cD_1}(\Psi_1)\xi_{\cD_2}(\Psi_2)\qquad\textrm{ for all }\qquad\Psi_1\in\cH_1\textrm{ and }\Psi_2\in\cH_2\ 
\end{align} now
follows immediately from Lemma~\ref{lem:maincombinationlemma} by taking the limit~$\epsilon\rightarrow 0$.  The converse inequality is trivial because~$\cD_1 \otimes \cD_2 \subset \cD_3$.
\end{proof}

\section{Multiplicativity of the Gaussian extent for four fermions\label{sec:multextent}}

In this section we prove that the Gaussian extent is multiplicative for the tensor product of any two 4-fermion pure states with positive parity.

\begin{theorem}[Multiplicativity of the Gaussian extent for 4-fermion pure states.]
\label{thm:multiplicativityGaussianextent}
Let~$\cH^4_+$ be the set of pure~$4$-fermion states with positive parity and let~$\cG_n$ be the set of Gaussian states on~$n$ fermions.
Then
\begin{align}
    \xi_{\cG_8}(\Psi_A \tilde{\otimes} \Psi_B) = 
    \xi_{\cG_4}(\Psi_A)\xi_{\cG_4}(\Psi_B)\qquad\textrm{ for all }\qquad \Psi_A, \Psi_B \in \cH^4_+\ .
\end{align}
\end{theorem}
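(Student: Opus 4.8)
The statement follows by combining the two main ingredients assembled in the preceding sections: the multiplicativity of the fermionic Gaussian \emph{fidelity} for $4$-mode positive-parity states (Theorem~\ref{thm:multiplicativityGaussianoverlap}) and the general principle that multiplicativity of the $\cD$-fidelity implies multiplicativity of the $\cD$-extent for (possibly infinite) dictionaries (Theorem~\ref{thm:MultFimpMultXiInfinite}). The plan is to instantiate Theorem~\ref{thm:MultFimpMultXiInfinite} with the triple of dictionaries $\cD_1 = \cD_2 = \cGeven_4 \subset \cHeven^4$ and $\cD_3 = \cGeven_8 \subset \cHeven^8$, using the fermionic tensor product $\tilde\otimes$ as the relevant product operation. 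Before doing so, I would first reduce the claim about $\cG_n$ to the corresponding claim about $\cGeven_n$: exactly as in the proof of Theorem~\ref{thm:multiplicativityGaussianoverlap}, the space $\cH^8$ decomposes as a direct sum of $\cH^4_{\pm}\motimes\cH^4_{\pm}$, and since $\Psi_A\motimes\Psi_B\in\cH^4_+\motimes\cH^4_+$, the optimal extent decomposition can be taken within $\cGeven_8$; likewise $\xi_{\cG_4}(\Psi)=\xi_{\cGeven_4}(\Psi)$ for $\Psi\in\cH^4_+$. So it suffices to prove
\begin{align}
\xi_{\cGeven_8}(\Psi_A\motimes\Psi_B)=\xi_{\cGeven_4}(\Psi_A)\xi_{\cGeven_4}(\Psi_B)\qquad\textrm{ for all }\qquad \Psi_A,\Psi_B\in\cHeven^4\ .
\end{align}

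The next step is to verify the hypotheses of Theorem~\ref{thm:MultFimpMultXiInfinite} for this triple. First, $\cGeven_4$ contains an orthonormal basis of $\cHeven^4$ (e.g.\ the number states $\{\ket{x}\}_{x\in\sbineven^4}$, which are Gaussian), and similarly $\cGeven_8$ contains an orthonormal basis of $\cHeven^8$. Second, the inclusion $\cGeven_4\motimes\cGeven_4\subset\cGeven_8$ holds by the lemma in Section~\ref{sec:tensorproductevenf} stating that the fermionic tensor product of two positive-parity Gaussian states is an even Gaussian state. Third, for every $\epsilon>0$ one needs a finite $\epsilon$-net $\cGeven_{n}{}^{\epsilon}\subset\cGeven_n$; this exists because $\cGeven_n$ is compact — it is the continuous image of the compact group $SO(2n)$ (together with the phase circle $[0,2\pi)$) under the map $(\theta,R)\mapsto e^{i\theta}U_R\ket{0_F}$, and a compact subset of a finite-dimensional Hilbert space admits finite $\epsilon$-nets consisting of its own elements. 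Fourth, the multiplicativity of the fidelity $F_{\cGeven_8}(\Psi_A\motimes\Psi_B)=F_{\cGeven_4}(\Psi_A)F_{\cGeven_4}(\Psi_B)$ is precisely Theorem~\ref{thm:multiplicativityGaussianoverlap}. With all four hypotheses in place, Theorem~\ref{thm:MultFimpMultXiInfinite} yields the desired equality, and unwinding the reduction gives the theorem.

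\textbf{Main obstacle.} The genuinely substantive content has already been discharged in Sections~\ref{sec:gaussianoverlap} and~\ref{sec:multoverlapimpmultextent}, so the remaining work in this section is mostly bookkeeping. The one point deserving care is that Theorem~\ref{thm:MultFimpMultXiInfinite} is phrased for the ordinary tensor product $\otimes$ on $\cH_1\otimes\cH_2$, whereas here the relevant product is the fermionic tensor product $\tilde\otimes$ on $\cHeven^8\cong\cHeven^4\otimes\cHeven^4$. One must check that the identification $\Psi_A\otimes\Psi_B\leftrightarrow\Psi_A\motimes\Psi_B$ is a (phase-compatible) unitary isomorphism of the relevant subspaces under which $\cGeven_4\otimes\cGeven_4$ maps onto $\cGeven_4\motimes\cGeven_4$ and $\cGeven_8$ restricted to $\cHeven^4\motimes\cHeven^4$ corresponds to Gaussian states of $\cHeven^8$; this is exactly the content of the isometry $U$ and the formula~\eqref{eq:phaseintensorproduct} recalled in Section~\ref{sec:tensorproductevenf}, so it poses no real difficulty but should be stated explicitly. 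A secondary point is the restriction to positive parity: one cannot drop it, since only there do the structural results of Section~\ref{sec:4fermionstates} (the map $\theta$, Theorem~\ref{thm:generalGaussianstateorbit}) apply, which is why the statement is confined to $\Psi_A,\Psi_B\in\cH^4_+$. Finally, I would remark, as the paper does in Section~\ref{sec:ourcontribution}, that the two-factor case proved here is a special case of the multi-factor result of~\cite{StrelchukCudby}, and that extending our fidelity-based route to more than two $4$-mode factors remains open.
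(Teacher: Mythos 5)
Your proposal is correct and follows essentially the same route as the paper: the paper's proof likewise verifies compactness of $\cG_n$ (via the metaplectic map from $[0,2\pi]\times SO(2n)$), notes that the number states give an orthonormal basis inside $\cG_n$ and that $\cG_{n_1}\otimes\cG_{n_2}\subset\cG_{n_1+n_2}$, and then invokes Theorem~\ref{thm:MultFimpMultXiInfinite} together with Theorem~\ref{thm:multiplicativityGaussianoverlap}. Your additional remarks on the explicit reduction to the positive-parity sectors and on identifying $\motimes$ with $\otimes$ via the isometry of Section~\ref{sec:tensorproductevenf} are sound points of care that the paper leaves implicit, but they do not change the argument.
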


\begin{proof}
Since the  metaplectic representation defines a surjective, continuous map 
\begin{align}
    \begin{matrix}
    f: & [0,2\pi]\times SO(2n) & \rightarrow & \cG_n\\
     & (\varphi,R) & \mapsto &e^{i\varphi} U_R\ket{0_F}
    \end{matrix}
\end{align}
from the compact set~$[0,2\pi]\times SO(2n)$ to~$\cG_n$, the set~$\cG_n\subset\cH^n$ is compact. We also observe that the occupation number states
\begin{align}
    \left\{\ket{x}\ |\ x\in \{0,1\}^n\right\}
\end{align}
form an orthonormal basis contained in~$\cG_n$. By compactness, we conclude that for any~$\epsilon>0$, there is a 
finite~$\epsilon$-net~$\cG_n^\epsilon\subset \cG_n$ consisting of Gaussian states and containing an orthonormal basis of~$\cH^n$. Finally, we note that we also have the inclusion~$\cG_{n_1}\otimes\cG_{n_2}\subset \cG_{n_1+n_2}$ for~$n_1,n_2\in\mathbb{N}$ arbitrary.

Let us now specialize to~$n_1=n_2=4$. In this case, we have multiplicativity of the fermionic Gaussian fidelity by Theorem~\ref{thm:multiplicativityGaussianoverlap}. In particular, the conditions for Theorem~\ref{thm:MultFimpMultXiInfinite} apply and the claim follows.
\end{proof}

\section*{Acknowledgements}
BD and RK gratefully acknowledge support by the European Research Council under grant agreement no.~101001976 (project EQUIPTNT).

\appendix

\section{Alternative Gaussianity condition for $4$-fermion states\label{sec:thetamapGaussianity}}

In the following, we prove Lemma~\ref{lem:thetamapGaussianity}.

\begin{proof}
Consider~$\Psi\in\cHeven^4$. We will show that~$\langle \Psi,\theta\Psi\rangle =0$ is equivalent to~$\Lambda \left(\ket{\Psi}\otimes \ket{\Psi}\right) = 0$. According to Lemma~\ref{lem:gaussianitylambdapurestates}, this is a sufficient and necessary condition for~$\Psi$ to be Gaussian.

Let~$\Gammaset\subset\sbineven^4$, $|\Gammaset|=4$ be a subset of even-weight strings such that~$\Gammaset\cup \overline{\Gammaset}=\sbin^4_+$. 
We first compute~$\lrangle{\Psi, \theta \Psi}$. We have
\begin{align}
	\lrangle{\Psi, \theta \Psi} &= \sum_{x,y\in\{0,1\}^4_+}  \lrangle{\Psi,x} \bra{x} \theta \left( \ket{y} \lrangle{y,\Psi} \right) \\
	&= \sum_{x,y\in\{0,1\}^4_+} \lrangle{\Psi, x} \lrangle{x, \theta y}  \overline{\lrangle{y, \Psi}} \\
	&= \sum_{x,y\in\{0,1\}^4_+} (-1)^{\vartheta(y)} \lrangle{x, \overline{y}} \lrangle{\Psi, x} \lrangle{ \Psi,y} \\
	&= \sum_{x\in\{0,1\}^4_+} (-1)^{\vartheta(x)} \lrangle{\Psi, x} \lrangle{\Psi,\overline{x}} \\
	&= \sum_{x\in\Gammaset} (-1)^{\vartheta(x)} \lrangle{\Psi, x} \lrangle{\Psi,\overline{x}} + (-1)^{\vartheta(\overline{x})} \lrangle{\Psi, \overline{x}} \lrangle{\Psi,x} \\
	&= 2 \sum_{x\in\Gammaset} (-1)^{\vartheta(x)} \lrangle{\Psi, x} \lrangle{\Psi,\overline{x}} 
	\label{eq:thetaMapEquivCond} \ ,
\end{align}
where the second step follows from~$\theta u = \overline{u} \theta$ for all~$u\in\bbC$ due to the antiunitarity of~$\theta$, the third step from Eq.~\eqref{eq:mapThetaDef} and the fourth and final steps from~$\vartheta(\overline{x}) = \vartheta(x)$ for~$x\in \{0,1\}^4_+$.
Thus, 
\begin{align}
\label{eq:gaussianityAuxEquiv}
\lrangle{\Psi, \theta \Psi} = 0 
\qquad \textrm{ if and only if }\qquad 
\sum_{x\in\{0,1\}^4_+} (-1)^{\vartheta(x)} \lrangle{\Psi, x} \lrangle{\Psi,\overline{x}} = 0 \ . 
\end{align}

We proceed to prove that~$\Lambda (\ket{\Psi} \otimes \ket{\Psi}) = 0$ is equivalent to Eq.~\eqref{eq:gaussianityAuxEquiv}.
We start by using Eq.~\eqref{eq:MajoranaToDiracFermion} to write the operator~$\Lambda$ in terms of creation and annihilation operators:
\begin{align}
	\Lambda &= \sum_{j=1}^{8} c_j \otimes c_j \\
    &= \sum_{j=1}^4 \left( c_{2j-1} \otimes c_{2j-1} + c_{2j} \otimes c_{2j}  \right) \\
	&= \sum_{j=1}^4 \left( (a_j + a_j^\dagger) \otimes (a_j + a_j^\dagger) + i (a_j - a_j^\dagger) \otimes i (a_j - a_j^\dagger) \right) \\
	&= 2 \sum_{j=1}^4 \left( a_j \otimes a_j^\dagger + a_j^\dagger \otimes a_j  \right).
\end{align}
Applying this expression to~$ \ket{\Psi} \otimes \ket{\Psi}$ and using Eq.~\eqref{eq:ajOnState} gives
\begin{equation}
\begin{aligned}
	\Lambda \left(\ket{\Psi} \otimes \ket{\Psi}\right) &= 2 \sum_{x,y \in \{0,1\}^4_+} \sum_{j=1}^4  \left( a_j \otimes a_j^\dagger + a_j^\dagger \otimes a_j  \right) \left(\ket{x} \otimes \ket{y} \right)  \lrangle{x,\Psi} \lrangle{y, \Psi}\\
	&= 2  \sum_{x,y \in \{0,1\}^4_+} \sum_{j=1}^4  (-1)^{\eta_j(x+y)} (x_j \overline{y_j} + \overline{x_j} y_j)  \left( \ket{x \oplus e_j} \otimes \ket{y \oplus e_j}  \right) \lrangle{x,\Psi} \lrangle{y, \Psi} \\
	&= 2  \sum_{x,y \in \{0,1\}^4_+} \sum_{j=1}^4  (-1)^{\eta_j(x+y)} (y_j \oplus \overline{x_j}) \left( \ket{x \oplus e_j} \otimes \ket{y \oplus e_j}  \right) \lrangle{x,\Psi} \lrangle{y, \Psi} \\
	&= 2  \sum_{x,y \in \{0,1\}^4_-} \left( \sum_{j=1}^4  (-1)^{\eta_j(x+ y)} (y_j \oplus \overline{x_j}) \lrangle{x\oplus e_j,\Psi} \lrangle{y\oplus e_j, \Psi} \right)  \left( \ket{x} \otimes \ket{y}  \right) 
\end{aligned}
 \label{eq:LambdaHamming2is0Aux}
\end{equation}
where in the third line we used~$u \overline{v} + \overline{u} v = u \oplus \overline{v}$ for all~$u,v\in\{0,1\}$ and in the last line we used~$\left\{ x \oplus e_j  \,|\, x \in \{0,1\}^4_+ \right\} =  \{0,1\}^4_-$ , $\eta_j(x\oplus e_j) = \eta_j(x)$ and~$u\oplus v \oplus 2e_j = u\oplus v$ valid for~$x\in\{0,1\}_+^4$,~$j\in[4]$ and~$u,v\in\sbin$.
It follows that~$\Lambda \left(\ket{\Psi} \otimes \ket{\Psi}\right) = 0$ if and only if
\begin{align}
\label{eq:LambdaZeroConditionN}
	 \sum_{j=1}^4  (-1)^{\eta_j(x+ y)} (y_j \oplus \overline{x_j}) \lrangle{x\oplus e_j,\Psi} \lrangle{y\oplus e_j, \Psi} = 0  \quad\text{ for all }\quad x,y\in \{0,1\}^4_+\ .
\end{align}

Since~$x$ and~$y$ have the same parity, either~$|x-y| = 4$ (i.e., $y = \overline{x}$), $|x-y| = 2$ or~$|x-y| = 0$ (i.e., $y = x$). The expression~\eqref{eq:LambdaZeroConditionN} is non-zero only if~$|x-y| = 4$ (we argue below why). For~$|x-y| = 4$ Eq.~\eqref{eq:LambdaZeroConditionN} becomes
\begin{align}
	\sum_{j=1}^4  (-1)^{\eta_j(x + \bar{x})} \lrangle{x\oplus e_j,\Psi} \lrangle{\bar{x}\oplus e_j, \Psi} = 0 
	\quad\text{ if and only if }\quad
	\sum_{z \in \Gammaset}^4  (-1)^{\vartheta(z)} \lrangle{z,\Psi} \lrangle{z, \Psi} = 0 \ ,
\end{align}
where we used~$\{x \oplus e_j \,|\, j\in[4] \} = \Gammaset$ with~$\Gammaset\subset\sbineven^4$, $|\Gammaset|=4$ a subset of even-weight strings such that~$\Gammaset\cup \overline{\Gammaset}=\sbin^4_+$. We also used that~$(-1)^{\eta_j(x + \overline{x})} = (-1)^{j-1}$ can be replaced by~$(-1)^{\vartheta(x) + \vartheta(z)}$ upon changing the summation over~$j\in[4]$ to a summation over~$z\in\Gammaset$. We recovered the right hand side of Eq.~\eqref{eq:gaussianityAuxEquiv}, proving the claim.

It remains to argue that Eq.~\eqref{eq:LambdaZeroConditionN} is zero for~$|x-y| \in\{0,2\}$. For~$|x-y| = 0$, i.e., $x=y$, Eq.~\eqref{eq:LambdaZeroConditionN} is zero because~$x_j \otimes \overline{x_j} = 0$ for~$j\in[4]$.
We exemplify that terms~$\propto \ket{x} \otimes \ket{y}$ with~$|x-y| = 2$ are zero by considering~$x=1000$ and~$y=0100$. Starting from Eq.~\eqref{eq:LambdaHamming2is0Aux} we obtain
\begin{align}
	\left(\bra{1000} \otimes\bra{0100} \right)
 \Lambda 
 \left(\ket{\Psi} \otimes \ket{\Psi}\right)
	&=  ((-1)^{\eta_1(1100)} + (-1)^{\eta_2(1100)}) \lrangle{0000,\Psi} \lrangle{1100, \Psi} = 0 \ ,
\end{align}
where~$\eta_1(1100) = 0$ and~$\eta_2(1100) = 1$. The remaining cases with~$|x-y| = 2$ proceed similarly.

\end{proof}

\section{Commutativity of the map~$\theta$ and quadratic Majorana monomials\label{sec:thetaCCcommute}}

In the following, we prove Lemma~\ref{lem:thetaCCcommute}.

\begin{proof}
We start by showing that~$\theta= c_1 c_3 c_5 c_7 K$, where~$K$ denotes the antiunitary given by complex conjugation in the number state basis.
For this, it suffices to show that~$\theta$ is antiunitary, which directly follows from unitarity of~$c_1 c_3 c_5 c_7$, and that it satisfies Eq.~\eqref{eq:mapThetaDef}. We show the later using Eqs.~\eqref{eq:MajoranaToDiracFermion} and \eqref{eq:ajOnState}: We have
\begin{align}
\theta \ket{x} &= c_1 c_3 c_5 c_7 K\ket{x} \\
	&= (a_1 + a_1^\dagger)(a_2 + a_2^\dagger)(a_3 + a_3^\dagger)(a_4 + a_4^\dagger) \ket{x} \\
	&= (-1)^{\eta_4(x) + \eta_3(x)+\eta_2(x)+\eta_1(x)} (x_1 + \overline{x_1})(x_2 + \overline{x_2})(x_3 + \overline{x_3})(x_4 + \overline{x_4}) \ket{\overline{x}}\\
	&= (-1)^{\vartheta(x)} \ket{\overline{x}} \ ,
\end{align}
where we used~$ (-1)^{\eta_4(x) + \eta_3(x)+\eta_2(x)+\eta_1(x)} = (-1)^{3 x_1 + 2x_2 + x_3} = (-1)^{x_1+x_3} = (-1)^{\vartheta(x)}$ and~$x_j + \overline{x_j} = 1$ for~$j\in[4]$. 

The result~$\theta c_j c_k = c_j c_k \theta~$ follows from simple algebra considering~$c_{2j} K = K c_{2j}$ and~$c_{2j-1} K = - K c_{2j-1}$. We show these last two equalities by explicitly computing their action on~$x \in \{0,1\}_+^4$:
\begin{align}
	K c_{2j} \ket{x} &= K (a_j + a_j^\dagger) \ket{x}
	= (x_j + \overline{x_j}) \ket{x \oplus e_j}
	= \ket{x\oplus e_j} \ , \\ 
	c_{2j} K \ket{x} &= (a_j + a_j^\dagger) \ket{x} 
	= (x_j + \overline{x_j}) \ket{x \oplus e_j} 
	= K c_{2j} \ket{x}
\end{align}
and 
\begin{align}
	K c_{2j+1} \ket{x} &= K i (a_j - a_j^\dagger) \ket{x}= -i(x_j - \overline{x_j}) \ket{x \oplus e_j} \ , \\
	c_{2j+1} K \ket{x} &= i(a_j - a_j^\dagger) \ket{x} = i(x_j - \overline{x_j}) \ket{x \oplus e_j} = -K c_{2j+1} \ket{x} \ .
\end{align}
\end{proof}

\begin{proof}
We will prove that~$\theta c_j = - c_j \theta$ for~$j\in[8]$, which implies the result. 

We prove this for~$j$ odd, the proof for~$j$ even proceeds similarly.
We use Eq.~\eqref{eq:MajoranaToDiracFermion} to write the Majorana operators as creation and annihilation operators which act on basis states according to Eq.~\eqref{eq:ajOnState}, and we apply~$\theta$ according to Eq.~\eqref{eq:mapThetaDef}:
\begin{align}
\theta c_{2j-1} \ket{x} &= \theta (a_j + a_j^\dagger) \ket{x} \\
	&=  \theta (-1)^{\eta_j(x)}  (x_j + \overline{x_j}) \ket{x \oplus e_j} \\
	&= (-1)^{\eta_j(x)} (-1)^{\vartheta(x\oplus e_j)}  (x_j + \overline{x_j}) \ket{\overline{x \oplus e_j}} \ , \\ \\
c_{2j-1} \theta \ket{x} &= (-1)^{\vartheta(x)} (a_j + a_j^\dagger) \ket{\overline{x}} \\
	&=  (-1)^{\eta_j(\overline{x})}   (-1)^{\vartheta(x)} (\overline{x_j} + x_j)   \ket{\overline{x} \oplus e_j} \ .
\end{align}
The equality~$\theta c_{2j-1} = - c_{2j-1} \theta$ follows from~$\ket{\overline{x} \oplus e_j} = \ket{\overline{x \oplus e_j}}$ for~$j\in[4]$, from~$(-1)^{\eta_j(\overline{x})} = (-1)^{j+1} (-1)^{\eta_j(x)}$ and from~$(-1)^{\vartheta(x)} = (-1)^{j} (-1)^{\vartheta(x\oplus e_j)}$.
\end{proof}

\bibliographystyle{unsrturl}
\bibliography{references}

\end{document}